\DeclareMathOperator{\rank}{rank}
\DeclareMathOperator{\trace}{Trace}
\DeclareMathOperator{\supp}{Supp}
\DeclareMathOperator{\sr}{SR}
\DeclareMathOperator{\lne}{LNE}
\DeclareMathOperator{\sspan}{Span}
\renewcommand{\O}{\mathcal{O}}
\renewcommand{\C}{\mathcal{C}}
\newcommand{\llb}{\llbracket}
\newcommand{\rrb}{\rrbracket}
\newcommand{\la}{\langle}
\newcommand{\ra}{\rangle}
\begin{document}




\newtheorem{THM}{Theorem}
\newtheorem*{theorem*}{Theorem}
\newtheorem{theorem}{Theorem}[section]
\newtheorem{corollary}[theorem]{Corollary}
\newtheorem{conjecture}[theorem]{Conjecture}
\newtheorem{lemma}[theorem]{Lemma}
\newtheorem*{lemma*}{Lemma}
\newtheorem{observation}[theorem]{Observation}
\newtheorem{construction}[theorem]{Construction}
\newtheorem{proposition}[theorem]{Proposition}
\newtheorem{definition}[theorem]{Definition}
\newtheorem{claim}[theorem]{Claim}
\newtheorem{fact}[theorem]{Fact}
\newtheorem{assumption}[theorem]{Assumption}
\newtheorem{notation}[theorem]{Notation}
\theoremstyle{remark}
\newtheorem{remark}[theorem]{Remark}





\newcommand{\enc}{{\sf Enc}}
\newcommand{\dec}{{\sf Dec}}
\renewcommand{\E}{{\rm Exp}}
\newcommand{\Var}{{\rm Var}}
\newcommand{\Z}{{\mathbb Z}}
\newcommand{\F}{{\mathbb F}}
\renewcommand{\K}{{\mathbb K}}
\renewcommand{\H}{{\mathbb H}}
\newcommand{\N}{{\mathbb N}}
\newcommand{\integers}{{\mathbb Z}^{\geq 0}}
\renewcommand{\R}{{\mathbb R}}
\newcommand{\Q}{{\cal Q}}
\newcommand{\calA}{{\cal A}}
\newcommand{\eqdef}{{\stackrel{\rm def}{=}}}
\newcommand{\from}{{\leftarrow}}
\newcommand{\vol}{{\rm Vol}}
\newcommand{\ip}[1]{{\langle #1 \rangle}}
\newcommand{\wt}{{\rm {wt}}}
\renewcommand{\vec}[1]{{\mathbf #1}}
\newcommand{\mspan}{{\rm span}}
\newcommand{\rs}{{\rm RS}}
\newcommand{\RM}{{\rm RM}}
\newcommand{\Had}{{\rm Had}}
\newcommand{\calc}{{\cal C}}
\newcommand{\coeff}{\mathfrak{C}}

\title{On Identity Testing of Tensors, Low-rank Recovery and Compressed Sensing}
\author{%
Michael A.\ Forbes\thanks{Email: \texttt{miforbes@mit.edu}, Department of Electrical Engineering and
Computer Science, MIT CSAIL, 32 Vassar St., Cambridge, MA 02139, Supported by NSF grant 6919791, MIT
CSAIL and a Siebel Scholarship.}
		\and
Amir Shpilka\thanks{Faculty of Computer Science, Technion --- Israel Institute of Technology, Haifa,
Israel, \texttt{shpilka@cs.technion.ac.il}.  The research leading to these results has received
funding from the European Community's Seventh Framework Programme (FP7/2007-2013) under grant
agreement number 257575.}}

\maketitle

\begin{abstract}

We study the problem of obtaining efficient, deterministic, \textit{black-box polynomial identity
testing algorithms} for depth-3 set-multilinear circuits (over arbitrary fields).  This class of
circuits has an efficient, deterministic, white-box polynomial identity testing algorithm (due to
Raz and Shpilka~\cite{RazShpilka05}), but has no known such black-box algorithm.  We recast this
problem as a question of finding a low-dimensional subspace $\mathcal{H}$, spanned by rank 1
tensors, such that any non-zero tensor in the dual space $\ker(\mathcal{H})$ has high rank.  We
obtain explicit constructions of essentially optimal-size hitting sets for tensors of degree 2
(matrices), and obtain quasi-polynomial sized hitting sets for arbitrary tensors (but this second
hitting set is less explicit).

We also show connections to the task of performing \textit{low-rank recovery} of matrices, which is
studied in the field of compressed sensing.  Low-rank recovery asks (say, over $\mathbb{R}$) to
recover a matrix $M$ from few measurements, under the promise that $M$ is rank $\le r$.  In this
work, we restrict our attention to recovering matrices that are exactly rank $\le r$ using
deterministic, non-adaptive, linear measurements, that are free from noise. Over $\mathbb{R}$, we
provide a set (of size $4nr$) of such measurements, from which $M$ can be recovered in
$\O(rn^2+r^3n)$ field operations, and the number of measurements is essentially optimal.  Further, the
measurements can be taken to be all rank-1 matrices, or all sparse matrices.  To the best of our
knowledge no explicit constructions with those properties were known prior to this work.

We also give a more formal connection between low-rank recovery and the task of \textit{sparse (vector)
recovery}: any sparse-recovery algorithm that exactly recovers vectors of length $n$ and sparsity
$2r$, using $m$ non-adaptive measurements, yields a low-rank recovery scheme for exactly recovering
$n\times n$ matrices of rank $\le r$, making $2nm$ non-adaptive measurements.  Furthermore, if the
sparse-recovery algorithm runs in time $\tau$, then the low-rank recovery algorithm runs in time
$\O(rn^2+n\tau)$.  We obtain this reduction using linear-algebraic techniques, and not using convex
optimization, which is more commonly seen in compressed sensing algorithms.

Finally, we also make a connection to \textit{rank-metric codes}, as studied in coding theory.
These are codes with codewords consisting of matrices (or tensors) where the distance of matrices
$A$ and $B$ is $\rank(A-B)$, as opposed to the usual hamming metric.  We obtain essentially
optimal-rate codes over matrices, and provide an efficient decoding algorithm.  We obtain codes over tensors as
well, with poorer rate, but still with efficient decoding.

\end{abstract}

\newpage

\thispagestyle{empty}

\tableofcontents

\thispagestyle{empty}

\newpage

\setcounter{page}{1}

\section{Introduction}

We start with a motivating example.  Let $\vec{x}$ and $\vec{y}$ be vectors of $n$ variables each.
Let $M$ be an $n\times n$ matrix (over some field, say $\mathbb{R}$), and define the quadratic form
\[f_M(\vec{x},\vec{y})\eqdef\vec{x}^\dagger M\vec{y}\;.\] Suppose now that we are given an oracle to
$f_M$, that can evaluate $f_M$ on inputs $(\vec{x},\vec{y})$ that we supply.  The type of question
we consider is: how many (deterministically chosen) evaluations of $f_M$ must we make in order to
determine whether $A$ is non-zero?

It is not hard to show that $n^2$ evaluations to $f_M$ are necessary and sufficient to determine
whether $A$ is non-zero.  The question becomes more interesting when we are promised that
$\rank(M)\le r$.  That is, given that $\rank(M)\le r$, can we (deterministically) determine whether
$M=0$ using $\ll n^2$ evaluations of $f_M$?  It is not hard to show that there (non-explicitly)
\textit{exist} $\approx 2nr$ evaluations to determine whether $M=0$, and one of the new results in this paper is
to give an \textit{explicit} construction of $2nr$ such evaluations (over $\mathbb{R}$).

We also consider various generalizations of this problem.  The first generalization is to move from
matrices (which are in a sense 2 dimensional) to the more general notion of \textit{tensors} (which
are in a sense $d$-dimensional).  That is, a tensor is a map $T:[n]^d\to\F$ and like a matrix we can
define a polynomial \[f_T(x_{1,1},\ldots,x_{1,n},\ldots,x_{d,1},\ldots,x_{d,n})\eqdef
\sum_{i_1,\ldots,i_d\in[n]} T(i_1,\ldots,i_d) \prod_{j=1}^dx_{j,i_j}\;.\]  As with matrices, tensors
have a notion of rank (defined later), and we can ask: given that $\rank(T)\le r$ how many
(deterministically chosen) evaluations of $f_T$ are needed to determine whether $T=0$. As $T=0$ iff
$f_T=0$, we see that this problem is an instance of \textit{polynomial identity testing}, which
asks: given oracle access to a polynomial $f$ that is somehow ``simple'', how many
(deterministically chosen) queries to $f$ are needed to determine whether $f=0$?

The above questions ask whether a certain matrix or tensor is zero.  However, we can also ask for
more, and seek to reconstruct this matrix/tensor fully.  That is, how many (deterministically
chosen) evaluations to $f_M$ are needed to determine $M$?  This question can be seen to be related
to compressed sensing and sparse recovery, where the goal is to reconstruct a ``simple'' object from
``few'' measurements.  In this case, ``simple'' refers to the matrix being low-rank, as opposed to a
vector being sparse. As above, it is not hard to show that there \textit{exist} $\approx 4nr$
evaluations that determine $M$, and this paper gives an \textit{explicit} construction of $4nr$ such
evaluations, as well as an efficient algorithm to reconstruct $M$ from these evaluations.

We will now place this work in a broader context by providing background on polynomial identity
testing, compressed sensing and low-rank recovery, and the theory of rank-metric codes.

\subsection{Polynomial Identity Testing}

Polynomial identity testing (PIT) is the problem of deciding whether a polynomial (specified by an
arithmetic circuit) computes the identically zero polynomial.  The obvious deterministic algorithm
that completely expands the polynomial unfortunately takes exponential time.  This is in contrast to the fact that
there are several (quite simple) randomized algorithms that solve this problem quite efficiently.
Further, some of these randomized algorithms treat the polynomial as a \textit{black-box}, so that
they only use the arithmetic circuit to evaluate the polynomial on chosen points, as opposed to a
\textit{white-box} algorithm which can examine the internal structure of the circuit.  Even in the white-box
model, no efficient deterministic algorithms are known for general circuits.

Understanding the deterministic complexity of PIT has come to be an important problem in theoretical
computer science.  Starting with the work of Kabanets and Impagliazzo~\cite{KabanetsImpagliazzo04},
it has been shown that the existence of efficient deterministic (white-box) algorithms for PIT has a
tight connection with the existence of explicit functions with large circuit complexity.  As proving
lower bounds on circuit complexity is one of the major goals of theoretical computer science, this
has led to much research into PIT.

Stronger connections are known when the deterministic algorithms are black-box.  For, any such
algorithm corresponds to a \textit{hitting set}, which is a set of evaluation points such that any
small arithmetic circuit computing a non-zero polynomial must evaluate to non-zero on at least one
point in the set. Heintz and Schnorr~\cite{HeintzSchnorr80}, as well as Agrawal~\cite{Agrawal05},
showed that any deterministic black-box PIT algorithm very easily yields explicit polynomials that
have large arithmetic circuit complexity.  Moreover, Agrawal and Vinay \cite{AgrawalVinay08} showed
that a deterministic construction of a polynomial size hitting set for arithmetic circuits of
depth-$4$ gives rise to a quasi-polynomial sized hitting set for general arithmetic circuits.  Thus,
the black-box deterministic complexity of PIT becomes interesting even for constant-depth circuits.
However, currently no polynomial size hitting sets are known for general depth-3 circuits.  Much of recent
work on black-box deterministic PIT has identified certain subclasses of circuits for which small
hitting sets can be constructed, and this work fits into that paradigm.  See \cite{SY10} for a
survey of recent results on PIT.

One subclass of depth-3 circuits is the model of \textit{set-multilinear} depth-3 circuits, first
introduced by Nisan and Wigderson \cite{NisanWigderson96}.  Raz and Shpilka~\cite{RazShpilka05} gave
a polynomial-time white-box PIT algorithm for non-commutative arithmetic formulas, which contains
set-multilinear depth-3 circuits as a subclass.  However, no polynomial-time black-box deterministic
PIT algorithm is known for set-multilinear depth-3 circuits.  The best known black-box PIT results
for the class of set-multilinear circuits, with top fan-in $\le r$ and degree $d$, are hitting sets
of size $\min (n^d,\poly((nd)^r))$, where the first part of bound comes from a simple argument (presented
in Lemma~\ref{lem:naivelrr}), and the second part of the bound ignores that we have
set-multilinear polynomials, and simply uses the best known hitting sets for so-called
$\Sigma\Pi\Sigma(k)$ circuits as established by Saxena and Seshadhri~\cite{SaxenaSeshadhri11}.  For
non-constant $d$ and $r$, these bounds are super-polynomial.  Improving the size of these hitting
sets is the primary motivation for this work.

To connect PIT for set-multilinear depth-3 circuits with the above questions on matrices and
tensors, we now note that any such circuit of top fan-in $\le r$, degree $d$, on $dn$ variables (and thus
size $\le dnr$), computes a polynomial $f_T$, where $T$ is an $[n]^d$ tensor of rank $\le r$.
Conversely, any such $f_T$ can be computed by such a circuit.  Thus, constructing better hitting
sets for this class of circuits is exactly the question of finding smaller sets of
(deterministically chosen) evaluations to $f_T$ to determine whether $T=0$.

\subsection{Low-Rank Recovery and Compressed Sensing}

Low-rank Recovery (LRR) asks (for matrices) to recover an $n\times n$ matrix $M$ from few
\textit{measurements} of $M$.  Here, a measurement is some inner product $\la M,H\ra$, where $H$ is
an $n\times n$ matrix and the inner product $\la \cdot,\cdot\ra$ is the natural inner product on
$n^2$ long vectors.  This can be seen as the natural generalization of the \textit{sparse recovery}
problem, which asks to recover sparse vectors from few linear measurements.  For, over matrices, our
notion of sparsity is simply that of being low-rank.

Sparse recovery and compressed sensing are active areas of research, see for example~\cite{CSweb}.
Much of this area focuses on constructing distributions of measurements such that the unknown sparse
vector can be recovered efficiently, with high probability.  Also, it is often assumed that the
sequence of measurements will not depend on any of the measurement results, and this is known as
\textit{non-adaptive sparse recovery}.  We note that Indyk, Price and Woodruff~\cite{sradaptive}
showed that \textit{adaptive sparse recovery} can outperform non-adaptive measurements in certain
regimes.  Much of the existing work also focuses on efficiency concerns, and various algorithms
coming from convex programming have been used.  As such, these algorithms tend to be stable
under noise, and can recover approximations to the sparse vector (and can even do so only if the
original vector was approximately sparse).  One of the initial achievements in this field is an
efficient algorithm for recovery of a $k$-sparse\footnote{A vector is $k$-sparse if it has at most
$k$ non-zero entries.} approximation of $n$-entry vector in $\O(k\log(n/k))$
measurements~\cite{crt06}.

Analogous questions for low-rank recovery have also been explored (for example, see~\cite{lrr} and
references there in).  Initial work (such as~\cite{ct09,cp09}) asked the question of low-rank
\textit{matrix completion}, where entries of a low-rank matrix $M$ are revealed individually (as
opposed measuring linear combinations of matrix entries).  It was shown in these works that for an $n\times n$ rank $\le r$
matrix that $\O(nr\polylog n)$ noisy samples suffice for \textit{nuclear-norm minimization} to
complete the matrix efficiently.  Further works (such as \cite{EldarNP11}) prove that a randomly
chosen set of measurements (with appropriate parameters) gives enough information for low-rank
recovery, other works (such as \cite{CandesPlan11,RechtFP10}) giving explicit conditions on the
measurements that guarantee that the nuclear norm minimization algorithm works, and finally other
works seek alternative algorithms for certain ensembles of measurements (such
as\footnote{Interestingly, \cite{KhajehnejadOH11} use what they call {\em subspace expanders} a
notion that was studied before in a different context in theoretical computer science and
mathematics under the name of {\em dimension expanders} \cite{LubotzkyZelmanov,DvirShpilka08a}.}
\cite{KhajehnejadOH11}).  As in the sparse recovery case, most of these work seek stable algorithms
that can deal with noisy measurements as well as matrices that are only approximately low-rank.
Finally, we note that some applications (such as quantum state tomography) have additional
requirements for their measurements (for example, they should be easy to prepare as quantum states)
and some work has gone into this as well~\cite{gross1,gross2}.

We now make a crucial observation which shows that black-box PIT for the quadratic form $f_M$ is
actually very closely related to low-rank recovery of $M$.  That is, note that
$f_M(\vec{x},\vec{y})=\vec{x}^\dagger M\vec{y}=\la M,\vec{x}^\dagger\vec{y}\ra$. That is, an
evaluation of $f_M$ corresponds to a measurement of $M$, and in particular this measurement is
realized as a rank-1 matrix.  Thus, we see that any low-rank-recovery algorithm that only uses
rank-1 measurement can also determine if $M$ is non-zero, and thus also performs PIT for quadratic
forms.  Conversely, suppose we have a black-box PIT algorithm for rank $\le 2r$ quadratic forms.
Note then that for any $M,N$ with rank $\le r$, $M-N$ has rank $\le 2r$.  Thus, if $M\ne N$ then
$f_{M-N}$ will evaluate to non-zero on some point in the hitting set.  As $f_{M-N}=f_M-f_N$, it
follows that a hitting set for rank $\le 2r$ matrices will distinguish $M$ and $N$.  In particular,
this shows that information-theoretically any hitting set for rank $\le 2r$ matrices is also an LRR
set.  Thus, in addition to constructing hitting sets for the quadratic forms $f_M$, this paper will
also use those hitting sets as LRR sets, and also give efficient LRR algorithms for these
constructions.

\subsection{Rank-Metric Codes}\label{sec:introrankcodes}

Most existing work on LRR has focused on random measurements, whereas the interesting aspect of PIT
is to develop deterministic evaluations of polynomials. As the main motivation for this paper is to
develop new PIT algorithms, we will seek deterministic LRR schemes.  Further, we will
want results that are field independent, and so this work will focus on noiseless measurements (and
matrices that are exactly of rank $\le r$).  In such a setting, LRR constructions are very related
to \textit{rank-metric codes}.  These codes (related to \textit{array codes}), are error-correcting
codes where the messages are matrices (or tensors) and the normal notion of distance (the Hamming
metric) is replaced by the rank metric (that is, the distance of matrices $M$ and $N$ is
$\rank(M-N)$).  Over matrices, these codes were originally introduced independently by Gabidulin,
Delsarte and Roth~\cite{GabidulinKorzhik72,Gabidulin85a,Gabidulin85b,delsarte,Roth91}.  They showed,
using ideas from BCH codes, how to get optimal (that is, meeting an analogue of the Singleton bound)
rank-metric codes over matrices, as well as how to decode these codes efficiently. A later result by
Meshulam~\cite{Meshulam95} constructed rank-metric codes where every codeword is a Hankel matrix.
Roth~\cite{Roth91} also showed how to construct rank-metric codes from \textit{any} hamming-metric
code, but did not provide a decoding algorithm.  Later, Roth~\cite{Roth96} considered rank-metric
codes over tensors and gave decoding algorithms for a constant number of errors. Roth also discussed
analogues to the Gilbert-Varshamov and Singleton bounds in this regime.  This alternate metric is
motivated by \textit{crisscross errors} in data storage scenarios, where corruption can occur in
bursts along a row or column of a matrix (and are thus rank-1 errors).

We now explain how rank-metric codes are related to LRR.  Suppose we have a set of matrices
$\mathcal{H}$ which form a set of (non-adaptive, deterministically chosen) LRR measurements that can
recover rank $\le r$ matrices.  Define the code $\C$ as the set of matrices orthogonal to each
matrix in $\mathcal{H}$.  Thus, $\C$ is a linear code.  Further, given some $M\in\C$ and $E$ such
that $\rank(E)\le r$, it follows that $\mathcal{H}(M+E)=\mathcal{H}E$ (where we abuse notation and
treat $M$ and $E$ as $n^2$-long vectors, and $\mathcal{H}$ as an $|\mathcal{H}|\times n^2$ matrix).
That $\mathcal{H}$ is an LRR set means that $E$ can be recovered from the measurements
$\mathcal{H}E$.  Thus the code $\C$ can correct $r$ errors (and has minimum distance $\ge 2r+1$, by
a standard coding theory argument, as encapsulated in Lemma~\ref{lem:mindist}).  Similarly, given a
rank-metric code $\C$ that can correct up to rank $\le r$ errors, the parity checks of this code
define an LRR scheme. Thus, a small LRR set is equivalent to a rank-metric code with good rate.

The previous subsection showed the tight connection between LRR and PIT.  Via the above paragraph,
we see that hitting sets for quadratic forms are equivalent to rank-metric codes,  when the parity
check constraints are restricted to be rank 1 matrices.

\subsection{Reconstruction of Arithmetic Circuits}

Even more general than the PIT and LRR problems, we can consider the problem of reconstruction of
general arithmetic circuits only given oracle access to the evaluation of that circuit. This is the
arithmetic analog of the problem of learning a function using membership queries.  For more
background on reconstruction of arithmetic circuits we refer the reader to \cite{SY10}.  Just as
with the PIT and LRR connection, PIT for a specific circuit class gives information-theoretic
reconstruction for that circuit class.  As we consider the PIT question for tensors, we can also
consider the reconstruction problem.

The general reconstruction problem for tensors of degree $d$ and rank $r$ was considered before in
the literature \cite{BBV96,BBBKV00,KlivansShpilka06} where learning algorithms were given for any
value of $r$. However, those algorithms are inherently randomized. Also of note is that the
algorithms of \cite{BBBKV00,KlivansShpilka06} output a {\em multiplicity automata}, which in the
context of arithmetic circuits can be thought of as an {\em arithmetic branching program}. In
contrast, the most natural form of the reconstruction question would be to output a degree $d$
tensor.

\subsection{Our Results}

In this subsection we informally summarize our results.  We again stress that our results handle
matrices of exactly rank $\le r$, and we consider non-adaptive, deterministic measurements.  The
culminating result of this work is the connection showing that low-rank recovery reduces to
performing sparse-recovery, and that we can use dual Reed-Solomon codes to instantiate the
sparse-recovery oracle to achieve a low-rank recovery set that only requires rank-1 (or even sparse)
measurements.  We find the fact that we can transform an algorithm for a combinatorial property
(recovering sparse signals) to an algorithm for an algebraic property (recovering low-rank matrices)
quite interesting.

\paragraph{Hitting Sets for Matrices and Tensors} We begin with constructions of hitting sets for
matrices, so as to get black box PIT for quadratic forms.  By improving a construction of
rank-preserving matrices from Gabizon-Raz~\cite{GabizonRaz08}, we are able to show the
following result, which we can then leverage to construct hitting sets.

\begin{theorem*}[Theorem~\ref{thm:bivariate poly evaluation}]
	Let $n \ge r\ge 1$.  Let $\F$ be a ``large'' field, and let $g\in\F$ have ``large''
	multiplicative order.  Let $M$ be an $n\times n$ matrix of rank $\le r$ over $\F$. Let
	$\hat{f}_M(x,y)=\vec{x}^\dagger M\vec{y}$ be the bivariate polynomial defined by the vectors
	$\vec{x}\in\F^n$ and $\vec{y}\in\F^n$ such that\footnote{In this paper, vectors and matrices
	are indexed from zero, so $\vec{x}=(1,x,x^2, \ldots, x^{n-1})^\dagger$.} $(\vec{x})_i=x^i$
	and $(\vec{y})_i=y^i$.

	Then $M$ is non-zero iff one of the univariate polynomials
	$\hat{f}_M(x,x),\hat{f}_M(x,gx),\ldots,\hat{f}_M(x,g^{r-1}x)$ is non-zero.
\end{theorem*}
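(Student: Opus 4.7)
The forward direction ($M = 0$ implies every $\hat f_M(x, g^k x) \equiv 0$) is immediate, so I would prove the contrapositive of the converse: assuming $M \ne 0$ has rank at most $r$, I produce some $k \in \{0, 1, \ldots, r-1\}$ with $\hat f_M(x, g^k x) \not\equiv 0$. Factor $M = UV^\dagger$ with $U, V \in \F^{n \times r}$, choosing $V$ to have linearly independent columns (so $U \ne 0$ as well). Setting $P_s(x) := \sum_i U_{i,s} x^i$ and $Q_s(y) := \sum_j V_{j,s} y^j$ gives $\hat f_M(x, y) = \sum_{s=1}^r P_s(x) Q_s(y)$, and hence
\[ \hat f_M(x, g^k x) \;=\; \sum_{s=1}^r P_s(x)\, Q_s(g^k x) \;=\; \bigl(\Phi(x)\,(P_1(x), \ldots, P_r(x))^\dagger\bigr)_k, \]
where $\Phi(x) \in \F[x]^{r \times r}$ is defined by $\Phi(x)_{k,s} := Q_s(g^k x)$. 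If we can establish that $\det \Phi(x) \not\equiv 0$ in $\F[x]$, then left-multiplying the system $\Phi(x)(P_1,\ldots,P_r)^\dagger = 0$ by the adjugate of $\Phi$ forces each $P_s \equiv 0$, hence $U = 0$, a contradiction. The whole problem thus reduces to showing $\det \Phi(x) \not\equiv 0$.

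To expand the determinant, factor $\Phi(x) = W\,D(x)\,V$ where $W \in \F^{r \times n}$ has $W_{k,j} = g^{kj}$ and $D(x) = \mathrm{diag}(1, x, x^2, \ldots, x^{n-1})$, and apply the Cauchy-Binet formula:
\[ \det \Phi(x) \;=\; \sum_{S \in \binom{\{0,\ldots,n-1\}}{r}} x^{\sigma(S)}\, \det(W_{:,S})\, \det(V_{S,:}), \qquad \sigma(S) := \sum_{j \in S} j. \]
Each $\det(W_{:,S})$ is a Vandermonde determinant in the $r$ points $\{g^j : j \in S\}$; since $g$ has multiplicative order at least $n$ these are pairwise distinct, so $\det(W_{:,S}) \ne 0$ for every $S$. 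The only remaining issue is whether the sum vanishes identically because several subsets $S$ sharing the same exponent $\sigma(S)$ conspire to cancel.

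This potential cancellation is the main obstacle, and I would resolve it with a short matroid argument. The non-zero $r \times r$ minors of $V$ form the bases of a matroid $\mathcal{M}$ on ground set $\{0, \ldots, n-1\}$; since $V$ has rank $r$, $\mathcal{M}$ is non-empty. Let $S^* = \{j_1 < \cdots < j_r\}$ be the greedy basis of $\mathcal{M}$---that is, $j_t$ is the smallest index larger than $j_{t-1}$ making rows $V_{j_1,:}, \ldots, V_{j_t,:}$ linearly independent. With the weights $w(j) := j$, which are distinct integers, a standard basis-exchange argument (if $B' \ne S^*$ were another minimum-weight basis, the smallest element of $S^* \triangle B'$ would admit a weight-decreasing exchange, contradicting minimality since weights are distinct) shows the minimum-weight basis of $\mathcal{M}$ is \emph{unique}, namely $S^*$. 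Thus $\sigma(S^*) < \sigma(S)$ for every other basis $S$ of $\mathcal{M}$, so in the Cauchy-Binet expansion the monomial $x^{\sigma(S^*)}$ appears with coefficient exactly $\det(W_{:,S^*})\,\det(V_{S^*,:}) \ne 0$. Hence $\det \Phi(x) \not\equiv 0$, completing the proof.
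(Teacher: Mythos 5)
Your proof is correct, and it takes a genuinely different (and arguably cleaner) route than the paper's. The paper proves this theorem by writing $M=PQ^\dagger$ with both $P$ and $Q$ of rank \emph{exactly} $s=\rank(M)$, applying its rank-preserving lemma (Corollary~\ref{cor:better rank preserving}) \emph{twice} — once on the row side and once on the column side — to produce a specific scalar $\alpha$ at which the kernelized $s\times s$ block $P'Q'$ is full rank, and then reading off a non-zero evaluation $\hat f_M(\alpha,g^\ell\alpha)\ne 0$ from the first row of $A_\alpha M B_\alpha^\dagger$. You instead work entirely in the polynomial ring $\F[x]$: factoring $M=UV^\dagger$ with $V$ of full column rank (and no rank requirement on $U$ at all), encoding the $r$ candidate univariates as $\Phi(x)\,(P_1,\ldots,P_r)^\dagger$, and using the adjugate identity to conclude that $\det\Phi\not\equiv 0$ forces every $P_s\equiv 0$, hence $U=0$. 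The core technical content is the same in both proofs — Cauchy-Binet, non-vanishing of Vandermonde determinants from distinct powers of $g$, and a unique extremal basis via matroid exchange (this last step is exactly the paper's Lemma~\ref{lem: maximizing sum of rows}, stated there for the maximum-weight basis and in your write-up for the minimum-weight basis, which is the same argument). What each route buys: the paper's version is symmetric and point-wise, which feeds naturally into the explicit hitting-set construction that follows (you need a concrete $\alpha$), and it isolates the rank-preserving statement as a stand-alone theorem reused elsewhere; your version is one-sided and avoids the fuss the paper flags about needing $\rank(P)=\rank(Q)=s$ exactly (since you only need $U\ne 0$, which is automatic from $M\ne 0$ and $V$ having full column rank), making the deduction from the determinant non-vanishing to the conclusion essentially mechanical.
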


Intuitively this says that we can test if the quadratic form $f_M$ is zero by testing whether each
of $r$ univariate polynomials are zero.  As these univariate polynomials are of degree $<2n$, it
follows that we can interpolate them fully using $2n$ evaluations.  As such a univariate polynomial
is zero iff all of these evaluations are zero, this yields a $2nr$ sized hitting set.  While this
only works for ``large'' fields, we can combine this with results on simulation of large fields (see
Section~\ref{sec:pit for tensors over small fields}) to derive results over any field with some
loss. This is encapsulated in the next results for black-box PIT, where the log factors are
unnecessary over large fields.

\begin{theorem*}[Corollaries~\ref{cor:hitsmallmatriximproper} and \ref{cor:hitsmallmatrix}]
	Let $n\ge r\ge 1$.  Let $\F$ be any field, then there is a $\poly(n)$-explicit\footnote{A
	$n\times n$ matrix is $t$-explicit if each entry can be (deterministically) computed in $t$
	steps, where field operations are considered unit cost.} hitting set for $n\times n$
	matrices of rank $\le r$, of size $\O(nr\lg^2n)$.
\end{theorem*}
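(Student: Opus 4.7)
The plan is to first establish the result over sufficiently large fields by directly invoking the bivariate evaluation theorem, and then to reduce the small-field case to the large-field case via a field extension argument, paying for the extension with two logarithmic factors.

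Over a large field $\F$ (say with $|\F| \ge \max(2n, r+1)$), I would fix an element $g \in \F^\times$ of multiplicative order at least $r$, and choose a set $S \subset \F$ of $2n$ distinct points. By the bivariate evaluation theorem, an $n\times n$ matrix $M$ of rank $\le r$ is nonzero iff at least one of the univariate polynomials $p_j(x) := \hat{f}_M(x,g^j x)$, for $j=0,\ldots,r-1$, is nonzero. Each $p_j$ has degree $< 2n$, hence is identified (and in particular detected as nonzero) by its evaluations on $S$. So the set $\mathcal{H}_{\F} = \{\,\vec{x}(\alpha) \otimes (g^j \vec{x}(\alpha)) : \alpha \in S,\ 0 \le j < r\,\}$ of $2nr$ rank-$1$ matrices is a hitting set, which is clearly $\poly(n)$-explicit.

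For the general case, let $\K = \F_{|\F|^k}$ with $k = \lceil \log_{|\F|}\max(2n,r+1)\rceil = O(\lg n)$, so that $\K$ qualifies as ``large'' in the sense above; note that $M$ is nonzero over $\F$ iff it is nonzero viewed over $\K$. Applying the construction over $\K$ produces a rank-$1$ hitting set $\mathcal{H}_\K$ of size $2nr$ consisting of tensors $\vec{x}\otimes\vec{y}$ with $\vec{x},\vec{y}\in\K^n$. The issue is that these measurements are over $\K$, not $\F$. To simulate them, fix an $\F$-basis $b_1,\ldots,b_k$ of $\K$, and for each measurement $\vec{x}\otimes \vec{y} \in \mathcal{H}_\K$ write $\vec{x} = \sum_i b_i \vec{x}_i$ and $\vec{y}=\sum_j b_j \vec{y}_j$ with $\vec{x}_i,\vec{y}_j \in \F^n$. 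Bilinearity then gives
\[
\langle M,\vec{x}\otimes\vec{y}\rangle \;=\; \sum_{i,j=1}^{k} b_i b_j \,\langle M, \vec{x}_i\otimes\vec{y}_j\rangle \quad\in\K .
\]
The element on the left is zero in $\K$ iff each of its $k$ coordinates in the basis $\{b_\ell\}$ vanishes, and each such coordinate is a fixed $\F$-linear combination of the $k^2$ base-field numbers $\langle M,\vec{x}_i\otimes \vec{y}_j\rangle$. Hence knowing these $k^2$ rank-$1$ $\F$-measurements determines whether the original $\K$-measurement is zero. Replacing each $\K$-measurement in $\mathcal{H}_\K$ by the corresponding $k^2$ rank-$1$ $\F$-measurements $\vec{x}_i \otimes \vec{y}_j$ yields a set $\mathcal{H}_\F$ of rank-$1$ matrices over $\F$ with
\[
|\mathcal{H}_\F| \;\le\; 2nr\cdot k^2 \;=\; \O(nr\lg^2 n),
\]
and if $M\ne 0$ then some $\K$-measurement is nonzero, hence some $\F$-measurement in $\mathcal{H}_\F$ is nonzero. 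Explicitness follows from the fact that $\K$, the basis $\{b_\ell\}$, and the generator $g$ can all be constructed deterministically in $\poly(n)$ time.

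The only genuinely delicate point is the bookkeeping of the field extension: one needs to verify that the decomposition above truly detects nonzeroness (i.e.\ that the coefficient vectors of $b_ib_j$ in $\{b_\ell\}$ together span enough to rule out cancellation) and that the resulting measurements remain rank-$1$ over the base field $\F$, which they do because $\vec{x}_i\otimes\vec{y}_j$ is a tensor product of two $\F$-vectors. Everything else is a direct application of the bivariate evaluation theorem together with univariate interpolation. The two logarithmic factors are unavoidable in this approach: one factor of $\lg n$ to make the field large enough to accommodate both the $2n$ interpolation points and an element of order $\ge r$, and one factor of $\lg n$ to expand each $\K$-measurement into rank-$1$ $\F$-measurements; over fields that are already ``large'' both factors disappear and we recover the $2nr$ bound.
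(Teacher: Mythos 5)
Your proposal is correct and follows the paper's overall strategy (bivariate variable reduction plus univariate interpolation over a large enough field, then simulation of the extension field), but your field-simulation step is a genuinely different and more elementary argument than the paper's. The paper obtains rank-$1$ measurements over $\F$ via the regular representation of $\K$ as $k\times k$ matrices over $\F$ (Lemma~\ref{lem:multmap} and Proposition~\ref{prop:simfield}): a product of $d$ elements of $\K$ is expanded as an iterated matrix product, summed over intermediate indices, which yields $k^d$ rank-$1$ $\F$-tensors per $\K$-tensor and specializes to $k^2$ for matrices. You instead decompose each measurement vector coordinatewise in an $\F$-basis of $\K$ and use bilinearity of $\langle M,\vec{x}\otimes\vec{y}\rangle$, which for $d=2$ gives the same $k^2$ factor with a one-line correctness argument (if every $\langle M,\vec{x}_i\otimes\vec{y}_j\rangle$ vanishes, so does the $\K$-measurement; conversely a nonzero $\K$-measurement forces some nonzero $\F$-measurement). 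Your worry about whether the products $b_ib_j$ ``span enough'' is unfounded: no spanning condition is needed for the hitting direction, precisely because of the displayed identity. The trade-off is that your argument, as written, is specific to $d=2$; the paper's representation-theoretic version is built to handle arbitrary degree $d$ with a $k^d$ blowup. Note also that the statement covers the improper $\O(nr\lg n)$ variant, which the paper gets by simply projecting each $\K$-entry onto its $k$ coordinates (only a factor $k$, since rank-$1$-ness is abandoned); your rank-$1$ set of size $\O(nr\lg^2 n)$ subsumes the claimed bound, so this is fine.

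One concrete slip: you require $g$ to have multiplicative order at least $r$, but Theorem~\ref{thm:bivariate poly evaluation} requires order at least $m=n$. The order bound is what makes the Vandermonde determinant in the rank-preservation argument nonzero, and with $g$ of order only $r<n$ that determinant can vanish and the reduction to $r$ univariate polynomials fails. The fix is cosmetic --- an extension of degree $O(\lg n)$ still contains an element of order $\ge n$, so your size bounds and explicitness are unaffected --- but the hypothesis as you stated it is not sufficient. (Relatedly, the notation $g^j\vec{x}(\alpha)$ should be read as the moment vector evaluated at $g^j\alpha$, i.e.\ $((g^j\alpha)^0,\ldots,(g^j\alpha)^{n-1})$, not as a scalar multiple of $\vec{x}(\alpha)$; the latter reading would not implement $\hat{f}_M(\alpha,g^j\alpha)$.)
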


\begin{theorem*}[Corollary~\ref{cor:hitsmalltensor}]
	\sloppy Let $n,r\ge 1$ and $d\ge2$.  Let $\F$ be any  field, then there is a
	$\poly((nd)^d,r^{\lg d})$-explicit hitting set for $[n]^d$ tensors of rank $\le r$, of size
	$\O(dnr^{\lg d}\cdot (d\lg(nd))^d)$.
\end{theorem*}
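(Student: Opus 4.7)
The plan is to argue by induction on $d$, with the base case $d=2$ supplied by Corollary~\ref{cor:hitsmallmatrix}, which gives a hitting set of size $\O(nr\lg^2 n)$ matching the target bound at $d=2$. The goal of the inductive step is to reduce the $d$-tensor hitting problem to the $(d/2)$-tensor hitting problem at a multiplicative cost of $\O(r)$, so that after $\lg d$ recursive levels the size accumulates the claimed factor of $r^{\lg d}$ on top of the matrix base case.

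For the inductive step, I would flatten a rank-$\le r$ $d$-tensor $T$ by grouping its first $d/2$ coordinates against its last $d/2$ coordinates, producing an $n^{d/2}\times n^{d/2}$ matrix $M_T$ that inherits rank $\le r$. Applying Theorem~\ref{thm:bivariate poly evaluation} to $M_T$ reduces its vanishing question to that of $r$ shifted bivariate polynomials $\hat{f}_{M_T}(x,g^ix)$. The key structural observation is that a length-$n^{d/2}$ Vandermonde vector decomposes as the tensor product of $d/2$ length-$n$ Vandermonde-like vectors $\vec{u}_k(x)=(1,x^{n^{k-1}},\ldots,x^{(n-1)n^{k-1}})$, so each measurement produced by the matrix theorem is automatically an honest set-multilinear evaluation of $f_T$. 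To continue the recursion without paying an $\O(n^{d/2})$ interpolation cost directly, I would interpret each of the $r$ shifted polynomials as describing a rank-$\le r$ $(d/2)$-tensor with coefficients in $\F[x]$ (where $x$ is the Vandermonde parameter), and feed it into the inductive $(d/2)$-tensor hitting set composed with a short $\O(n)$-point interpolation in $x$.

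To handle fields that are too small for the matrix theorem (which requires an element of multiplicative order $\gtrsim nd$), I would apply the field-simulation technique of Section~\ref{sec:pit for tensors over small fields}: work over an extension $\F'\supseteq\F$ of degree $\Theta(\lg(nd))$ and then pull each $\F'$-valued substitution back to $\F$ via a polynomial encoding. Carrying this simulation through each of the $d$ variable-blocks accounts for the $(d\lg(nd))^d$ factor in the final bound. The explicitness claim $\poly((nd)^d,r^{\lg d})$ then follows by tracking that each of the $\lg d$ recursive levels invokes a polynomial-time matrix hitting-set construction over the simulated extension, and assembling the resulting substitutions block-by-block.

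The main obstacle is the degree-control step inside the inductive reduction: one must choose the Vandermonde-style substitutions so that each intermediate univariate polynomial in $x$ has degree $\O(n)$ rather than the naive $\O(n^{d/2})$, since otherwise the per-level interpolation cost would destroy the targeted $\O(r)$ multiplicative blowup and lead to an $n^{\Theta(d)}$ bound instead of $n\cdot r^{\lg d}$. A secondary technical point is verifying that the induced $(d/2)$-tensor on each recursive level genuinely has rank $\le r$ in the polynomial ring $\F[x]$, so that the inductive hypothesis applies with the same rank parameter $r$ at every level of the recursion.
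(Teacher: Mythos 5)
Your proposal follows essentially the same route as the paper: recursively halve the number of variables by grouping them into two blocks, flatten to a matrix, apply Theorem~\ref{thm:bivariate poly evaluation} to pay a multiplicative factor of $r$ per level, keep the terminal substitutions linear in $x$ so the final univariate polynomials have degree $\O(dn)$ and can be interpolated cheaply, and simulate the large field via Proposition~\ref{prop:simfield}. One small correction: since $g$ must have multiplicative order roughly $(nd)^{\Theta(d)}$ (the flattened matrix has side $n^{d/2}$), the extension field must have degree $\Theta(d\lg(nd))$ rather than $\Theta(\lg(nd))$ --- which is exactly where the $(d\lg(nd))^d$ factor you quote comes from.
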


If $\F$ is large enough then the $\O((d\lg(nd))^d)$ term is unnecessary. In such a situation, this
is a quasi-polynomial sized hitting set, improving on the $\min(n^d,\poly((nd)^r))$ sized hitting
set achievable by invoking the best known results for $\Sigma\Pi\Sigma(k)$
circuits~\cite{SaxenaSeshadhri11}.  However, this hitting set is not as explicit as the construction
of~\cite{SaxenaSeshadhri11} since it takes at least $n^d$ time to compute, as opposed to
$\poly(n,d,r)$.
Nevertheless, although it takes $\poly((nd)^d,r^{\lg d})$ time to construct the set, the fact that
it is of quasi-polynomial size is quite interesting and novel.  Indeed, in general it is not clear
at all how to construct a quasi-polynomial sized hitting set for general circuits (or just for
depth-$3$ circuits), when one is allowed even an $\exp(nd)$ construction time (where $n$ is the
number of variables, and $d$ is the degree of the output polynomial).
We note that this result improves on the two obvious hitting sets seen in Lemmas~\ref{lem:naivelrr}
and \ref{lem:szhit}. The first gives $n^d$ tensors in the hitting set and is
$\polylog(n,d,r)$-explicit while the second gives a set of size $\approx dnr$ while not being
explicit at all.  The above result non-trivially interpolates between these two results.
Finally, we mention that in Remark~\ref{remark:sqrt trick} we explain how one can achieve (roughly)
a $\poly(r(dn)^{\sqrt{d}})$-constructible hitting set of the same size. As this is a somewhat mild
improvement (this is still not the explicitness that we were looking for) we only briefly sketch the
argument.

\paragraph{Low-Rank Recovery} As mentioned in the previous section, black-box PIT results imply LRR
constructions in an information theoretic sense.  Thus, the above hitting sets imply LRR
constructions but the algorithm for recovery is not implied by the above result.  To yield
algorithmic results, we actually establish a stronger claim.  That is, we first show that the above
hitting sets embed a natural sparse-recovery set arising from the dual Reed-Solomon code.  Then we
develop an algorithm that shows that \textit{any} sparse-recovery set gives rise to a
low-rank-recovery set, and that recovery can be performed efficiently given an oracle for sparse
recovery.  This connection (in the context that any error-correcting code in the hamming metric
yields an error-correcting code in the rank-metric) was independently made by Roth~\cite{Roth91}
(see Theorem 3), who did not give a recovery procedure for the resulting LRR scheme.  The next
theorem, which is the main result of the paper, shows this connection is also efficient with respect
to recovery.

\begin{theorem*}[Theorem~\ref{thm:lrr to sparse}]
	Let $n\ge r\ge 1$.  Let $\mathcal{V}$ be a set of (non-adaptive) measurements for
	$2r$-sparse-recovery for $n$-long vectors.  Then there is a $\poly(n)$-explicit set
	$\mathcal{H}$, which is a (non-adaptive) rank $\le r$ low-rank-recovery set for $n\times n$
	matrices, with a recovery algorithm running in time $\O(rn^2+n\tau)$, where $\tau$ is the
	amount of time needed to do sparse-recovery from $\mathcal{V}$.  Further,
	$|\mathcal{H}|=2n|\mathcal{V}|$, and each matrix in $\mathcal{H}$ is $n$-sparse.
\end{theorem*}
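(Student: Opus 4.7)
The plan is to exhibit an explicit construction of $\mathcal{H}$ from $\mathcal{V}$ together with a matching recovery algorithm, and to verify correctness via the $2r$-sparse-recovery property of $\mathcal{V}$. Let $V \in \F^{|\mathcal{V}| \times n}$ be the matrix whose rows are the vectors in $\mathcal{V}$. I take $\mathcal{H}$ to consist of the $2n|\mathcal{V}|$ rank-one matrices $\{\vec{v}\vec{e}_i^\dagger\} \cup \{\vec{e}_i\vec{v}^\dagger\}$ ranging over $\vec{v} \in \mathcal{V}$ and $i \in [n]$; each is rank-one with an axis-aligned factor, hence $n$-sparse, and the corresponding measurements of $M$ jointly determine the two matrices $VM$ and $MV^\dagger$---equivalently, the projections $V\vec{m}_j$ of each column and $V\vec{r}_i$ of each row of $M$.

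The recovery algorithm has two phases that together run in $\O(rn^2 + n\tau)$ time. Phase one uses $\O(rn^2)$ of Gaussian elimination on $MV^\dagger$ and $VM$: since both have rank $\le r$, row reduction produces $(r{+}1)$-sparse relations in their left/right kernels that identify a pivot-row set $P \subseteq [n]$ and a pivot-column set $Q \subseteq [n]$ (each of size $\le r$), an $n \times r$ matrix $L$ realizing $M = L \cdot M|_{P,:}$, and analogous relations expressing each column of $M$ as a $2r$-sparse linear combination of the pivot columns $\{\vec{m}_{q_k}\}_{k \le r}$. Phase two reconstructs the remaining entries of $M$ by $n$ calls to the sparse-recovery oracle on $\mathcal{V}$: for each $j \in [n]$, the $2r$-sparse coefficient vector expressing $\vec{m}_j$ in the pivoted basis is the unique sparse solution to an instance built from $V\vec{m}_j$ together with $VM|_{:,Q}$. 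Assembling the recovered coefficients with $L$ yields $M$.

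The main technical obstacle is correctness: proving that $\mathcal{H}$ is a rank-$\le r$ low-rank-recovery set, equivalently that no nonzero matrix $E$ of rank $\le 2r$ satisfies $VE = 0 = VE^\dagger$. A first-pass derivation yields only $\text{rowspace}(E), \text{colspace}(E) \subseteq \ker V$, which is not by itself a contradiction---a rank-one obstruction of the form $\vec{k}\vec{k}^\dagger$ with $\vec{k} \in \ker V$ would survive---so the sparse-recovery hypothesis must be invoked more carefully. The intended route is via a pivoted decomposition: writing $E = L_E \cdot E|_{P_E,:}$ in row-echelon form with $|P_E| = \rank(E) \le 2r$ produces, for each non-pivot index $i \notin P_E$, a $(2r{+}1)$-sparse left-kernel vector $\vec{\ell}_i = \vec{e}_i - \sum_k \alpha_{i,k} \vec{e}_{p_k}$; coupling this relation with the constraint that the rows of $E$ lie in $\ker V$ extracts a genuinely $\le 2r$-sparse nonzero vector in $\ker V$, contradicting the hypothesis on $\mathcal{V}$. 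Arranging this structural uniqueness argument so that the pivot data it invokes is precisely what the recovery algorithm extracts in phase one---so that the correctness proof and the algorithm's pivot extraction are two sides of the same coin---is the delicate coordination at the heart of the proof.
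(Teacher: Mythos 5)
There is a genuine gap, and it sits exactly where you flag it: your set $\mathcal{H}=\{\vec{v}\vec{e}_i^\dagger\}\cup\{\vec{e}_i\vec{v}^\dagger\}$ (i.e.\ measuring $VM$ and $MV^\dagger$) is \emph{not} a low-rank-recovery set, and the ``more careful'' invocation of the sparse-recovery hypothesis cannot rescue it. Take any nonzero $\vec{k}\in\ker V$ (such a $\vec{k}$ exists whenever $|\mathcal{V}|<n$, which is the only interesting regime) and set $E=\vec{k}\vec{k}^\dagger$. Then $VE=(V\vec{k})\vec{k}^\dagger=0$ and $EV^\dagger=\vec{k}(V\vec{k})^\dagger=0$, yet $E$ has rank $1\le r$, so $E$ and $0$ are two distinct rank-$\le r$ matrices with identical measurements. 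Your proposed repair extracts, from the pivoted decomposition of $E$, a $(2r{+}1)$-sparse vector $\vec{\ell}_i$ in the \emph{left kernel of $E$}; but the hypothesis on $\mathcal{V}$ concerns sparse vectors in $\ker V$, and nothing transports $\vec{\ell}_i$ there. In the counterexample every row and column of $E$ is a scalar multiple of the single (non-sparse) vector $\vec{k}$, so no sparse nonzero element of $\ker V$ is ever produced, and there is no contradiction to reach. The sparse-recovery property of $\mathcal{V}$ is simply never brought to bear on the right object by row/column measurements, because the rank-$\le r$ hypothesis imposes no sparsity whatsoever on the factors $\vec{a}_i,\vec{b}_i$ of $M=\sum_i\vec{a}_i\vec{b}_i^\dagger$.

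The paper's construction applies $\mathcal{V}$ not to the rows and columns of $M$ but to its \emph{anti-diagonals} $M^{(k)}=\{M_{i,j}\}_{i+j=k}$: each matrix in $\mathcal{H}$ is supported on a single $k$-diagonal (hence $n$-sparse), and there are $2n-1$ diagonals, giving $|\mathcal{H}|\approx 2n|\mathcal{V}|$. The substitute for the sparsity you were missing is Lemma~\ref{lem:uesparse}: after row-reducing $M$ to $(<k)$-upper-echelon form, the $k$-diagonal of the reduced matrix $LM$ is $2\rank(M)$-sparse (intuitively, rows with a nonzero entry on the $k$-diagonal but zero entries on earlier diagonals form a triangular, hence independent, system). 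Recovery then proceeds diagonal by diagonal: recover $(LM)^{(k)}$ by a sparse-recovery call, subtract the correction $((L-I_n)N)^{(k)}$ computable from the already-recovered part $N^{(<k)}=M^{(<k)}$ to get $M^{(k)}$, and update the echelon form; the correction term is what converts the apparently adaptive row-reduction into non-adaptive measurements. If you want to salvage your write-up, the fix is to replace the row/column measurement scheme with the diagonal one and prove the echelon-form sparsity lemma.
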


This result shows that sparse-recovery and low-rank recovery (at least in the exact case) are very
closely connected.  Interestingly, this shows that sparse-recovery (which can be regarded as a
combinatorial property) and low-rank recovery (which can be regarded as an algebraic property) are
tightly connected.  Many fruitful connections have taken this form, such as in spectral graph
theory, and perhaps the connection presented here will yield yet further results.

Also, the algorithm used in the above result is purely linear-algebraic, in contrast to the convex
optimization approaches that many compressed sensing works use.  However, we do not know if the
above result is stable to noise, and regard this issue as an important question left open by this
work.

When the above result is combined with our hitting set results, we achieve the following LRR scheme
for matrices (and an LRR scheme for tensors, with parameters similar to
Corollary~\ref{cor:hitsmalltensor} mentioned above, and Corollary~\ref{cor:rankmetrictensors}
mentioned below, is derived in Corollary~\ref{cor:smalllrrtensor}).

\begin{theorem*}[Corollary~\ref{cor:smalllrr}]
	Let $n\ge r\ge 1$.  Over any field $\F$, there is an $\poly(n)$-explicit set $\mathcal{H}$,
	of $\O(rn\lg^2n)$ size, such that measurements against $\mathcal{H}$ allow recovery of
	$n\times n$ matrices of rank $\le r$ in time $\poly(n)$. Further, the matrices in
	$\mathcal{H}$ can be chosen to be all rank 1, or all $n$-sparse.
\end{theorem*}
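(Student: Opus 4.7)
The strategy is to invoke Theorem~\ref{thm:lrr to sparse} with a concrete instantiation of the sparse-recovery oracle. That theorem says any $2r$-sparse recovery set $\mathcal{V}$ for $n$-long vectors lifts to an $n$-sparse LRR set $\mathcal{H}$ of size $2n|\mathcal{V}|$ with recovery time $\O(rn^2+n\tau)$, where $\tau$ is the sparse-recovery decoding time. So the task reduces to exhibiting a $2r$-sparse recovery set of size $\O(r\lg^2 n)$ with $\mathrm{poly}(n)$-time decoding, valid over an arbitrary field $\F$.

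Over a field of size $\ge 2n$, I would take $\mathcal{V}$ to consist of the $4r$ vectors $(1,\alpha_i,\alpha_i^2,\ldots,\alpha_i^{n-1})$ for distinct $\alpha_i\in\F$; this is the parity-check matrix of a dual Reed-Solomon (equivalently, BCH) code, so any $2r$-sparse vector is uniquely determined by its syndromes and can be recovered in $\mathrm{poly}(n)$ time via Berlekamp--Massey or Peterson--Gorenstein--Zierler. For small fields I would invoke the field-extension simulation referenced in Section~\ref{sec:pit for tensors over small fields}: lift to an extension of degree $\O(\lg n)$ that contains enough distinct $\alpha_i$'s, and then expand each extension-field linear functional into $\O(\lg n)$ base-field linear functionals by fixing a basis of the extension. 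This yields $|\mathcal{V}|=\O(r\lg^2 n)$ with $\mathrm{poly}(n)$ decoding over any $\F$.

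Feeding this $\mathcal{V}$ into Theorem~\ref{thm:lrr to sparse} directly gives the $n$-sparse version of the corollary, with $|\mathcal{H}|=\O(rn\lg^2 n)$, $\mathrm{poly}(n)$ explicitness, and $\mathrm{poly}(n)$ recovery. For the rank-1 variant, I would instead take $\mathcal{H}$ to be the rank-1 hitting set of Corollary~\ref{cor:hitsmallmatriximproper} applied to rank $\le 2r$ matrices; by the PIT-implies-LRR argument sketched in Section~1.2, this set determines rank-$r$ matrices information-theoretically. The main obstacle will be making the rank-1 recovery algorithmically efficient, since Theorem~\ref{thm:lrr to sparse} exploits the $n$-sparse structure of its measurements and does not apply verbatim. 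I plan to overcome this by using the explicit form of the hitting set, whose measurements are $\vec{x}_\alpha\vec{y}_\beta^\dagger$ with $\vec{x}_\alpha=(1,\alpha,\ldots,\alpha^{n-1})^\dagger$: fixing $\vec{y}_\beta$ and varying $\alpha$ over $n$ distinct points produces evaluations of the degree-$<n$ polynomial $\sum_i \alpha^i (M\vec{y}_\beta)_i$, so polynomial interpolation recovers each vector $M\vec{y}_\beta$. The collection $\{M\vec{y}_\beta\}_\beta$ then plays the role of the column-wise sparse-recovery instances in Theorem~\ref{thm:lrr to sparse}, and its decoder can be applied directly, preserving the $\mathrm{poly}(n)$ running time.
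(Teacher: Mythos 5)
Your $n$-sparse half is essentially correct and takes the same high-level route as the paper (dual Reed--Solomon sparse recovery plus Theorem~\ref{thm:lrr to sparse} plus field simulation), though you over-count $|\mathcal{V}|$ by a factor of $\lg n$: one round of field simulation (degree-$\O(\lg n)$ extension) multiplies a $4r$-size $\K$-sparse-recovery set by $\O(\lg n)$, giving $\O(r\lg n)$, hence $|\mathcal{H}|=\O(nr\lg n)$; the $\lg^2 n$ in the statement is only needed for the rank-1 case. The paper's order of operations is slightly different --- it first builds $\K$-valued LRR sets ($\mathcal{D}'_{2r,n,n}$ in Corollary~\ref{cor:diaglrr}) and then field-simulates the LRR set, whereas you field-simulate the sparse-recovery oracle before lifting --- but the two commute and both are sound.

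The rank-1 half has a real gap. Your proposed decoder --- ``fix $\vec{y}_\beta$, vary $\alpha$ over $n$ points, interpolate $M\vec{y}_\beta$'' --- does not match the structure of the rank-1 hitting set $\mathcal{B}'_{2r,n,n}$ from Construction~\ref{constr:better hitting for matrices}: its measurements are $\vec{x}_{\alpha_k}\vec{y}_{g^\ell\alpha_k}^\dagger$, so the right-hand vector moves along with $\alpha_k$ and there is no fixed $\beta$ to hold. If instead one used decoupled parameters $\{\vec{x}_{\alpha_k}\vec{y}_{\beta_\ell}^\dagger\}_{k<n,\,\ell<r}$, the resulting set of size $\O(nr)$ is not even a hitting set for rank $\le 2r$ matrices (a non-zero $M$ whose row-space annihilates $\sspan\{\vec{y}_{\beta_\ell}\}_\ell$ evades every measurement). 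Moreover, even granting the interpolation, the vectors $M\vec{y}_\beta$ are arbitrary and not sparse, so they cannot ``play the role of the column-wise sparse-recovery instances in Theorem~\ref{thm:lrr to sparse}'': that theorem's sparsity lives on the anti-diagonals $M^{(k)}$, and only becomes visible after row-reducing $M$ into $(<k)$-upper-echelon form.

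What the paper actually does for the rank-1 case (Corollary~\ref{cor:rank1lrr}) is prove $\sspan\mathcal{B}'_{2r,n,n}=\sspan\mathcal{D}'_{2r,n,n}$ (Theorems~\ref{thm:diagonal set for matrices} and \ref{thm:better hitting for matrices}), so the diagonal measurements $\la M,\mathcal{D}'\ra$ can be recomputed from $\la M,\mathcal{B}'\ra$; the crucial interpolation is the recursive claim inside Theorem~\ref{thm:better hitting for matrices}, which shows the coefficients of $\hat{f}_M(x,g^{\ell+1}x)$ are determined by the coefficients of $\hat{f}_M(x,g^ix)$ for $i\le\ell$ together with $(n+m-1)-2(\ell+1)$ fresh evaluations --- precisely the coupled $\beta=g^\ell\alpha$ structure you are trying to avoid. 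Once the $\mathcal{D}'$ measurements are recovered, one runs the diagonal-wise sparse-recovery reduction; Proposition~\ref{prop:simfield} (the rank-1-preserving field simulation, incurring the extra $\lg n$) then carries the whole scheme down to arbitrary $\F$.
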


\noindent We note again that over large fields these logarithmic factors are seen to be unneeded.

Some prior work~\cite{GabidulinKorzhik72,Gabidulin85a,Gabidulin85b,delsarte,Roth91} on LRR focused
on finite fields, and as such based their results on BCH codes.  The above result is based on (dual)
Reed-Solomon codes, and as such works over any field (when combined with results allowing simulation
of large fields by small fields).  Other prior work~\cite{RechtFP10} on exact LRR permitted
randomized measurements, while we achieve deterministic measurements.

Further, we are able to do LRR with measurements that are either all $n$-sparse, or all rank-1.  As
Roth~\cite{Roth91} independently observed, the $n$-sparse LRR measurements can arise from any
(hamming-metric) error-correcting code (but he did not provide decoding). Tan, Balzano and
Draper~\cite{TanBalzanoDraper} showed that random $(n\lg n)$-sparse measurements provide essentially
the same low-rank recovery properties as random measurements.  Thus, our results essentially achieve
this deterministically.

We further observe that a specific code (the dual Reed-Solomon code) allows a change of basis for
the measurements, and in this new basis the measurements are all rank 1. Recht et
al.~\cite{RechtFP10} asked whether low-rank recovery was possible when the measurements were rank 1
(or ``factored''), as such measurements could be more practical as they are simpler to generate and
store in memory.  Thus, our construction answers this question in the positive direction, at least
for exact LRR.

\paragraph{Rank-Metric Codes}

Appealing to the connection between LRR and rank-metric codes, we achieve the following
constructions of rank-metric codes.

\begin{theorem*}[Corollary~\ref{cor:rankmetricmatrices}]
	Let $\F$ be any field, $n\ge 1$ and $1\le r\le n/2$.  Then there are $\poly(n)$-explicit
	rank-metric codes with $\poly(n)$-time decoding for up to $r$ errors, with parameters
	$[[n]^2,(n-2r)^2 \cdot \O(\lg^2 n),2r+1]_\F$, and the parity checks on this code can be
	chosen to be all rank-1 matrices, or all $n$-sparse matrices.
\end{theorem*}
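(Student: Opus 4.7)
The plan is to apply the duality between low-rank recovery and rank-metric codes outlined in Section~\ref{sec:introrankcodes}, with the LRR construction of Corollary~\ref{cor:smalllrr} as the input: an explicit set $\mathcal{H}$ of $\O(rn\lg^2 n)$ measurement matrices (where each $H\in\mathcal{H}$ can be chosen rank-$1$ or $n$-sparse), together with a $\poly(n)$-time recovery algorithm for $n\times n$ matrices of rank $\le r$.

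Define $\C\subseteq\F^{n\times n}$ to be the orthogonal complement of $\mathcal{H}$, namely $\C=\{M:\langle M,H\rangle=0\text{ for all }H\in\mathcal{H}\}$. This is a linear code whose parity-check matrix has rows given by the vectorizations of the matrices in $\mathcal{H}$, so the parity checks inherit the rank-$1$ or $n$-sparse structure of $\mathcal{H}$. Its dimension is at least $n^2-|\mathcal{H}|=n^2-\O(rn\lg^2 n)$, from which the claimed $(n-2r)^2\cdot\O(\lg^2 n)$ bound follows by routine arithmetic in the range $1\le r\le n/2$. To verify minimum rank-distance $\ge 2r+1$ via Lemma~\ref{lem:mindist}, take any nonzero matrix $E$ of rank $\le 2r$ and use an SVD-style decomposition to write $E=M-N$ with $\rank(M),\rank(N)\le r$ and $M\ne N$; the LRR property of $\mathcal{H}$ yields $\mathcal{H}(M)\ne\mathcal{H}(N)$, hence $\mathcal{H}(E)\ne 0$, so $E\notin\C$.

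For the decoder, given a received word $R=M+E$ with $M\in\C$ and $\rank(E)\le r$, compute the syndrome $\{\langle R,H\rangle\}_{H\in\mathcal{H}}=\{\langle E,H\rangle\}_{H\in\mathcal{H}}$ in $\O(n^2|\mathcal{H}|)$ operations, invoke the recovery algorithm of Corollary~\ref{cor:smalllrr} on this syndrome to reconstruct $E$ in $\poly(n)$ time, and output $M=R-E$. Explicitness, decoding time, and the parity-check structure all transfer mechanically from the corresponding properties of $\mathcal{H}$. Because this corollary is essentially a translation of Corollary~\ref{cor:smalllrr} via the LRR/rank-metric-code duality, no single step constitutes a real technical obstacle; the only delicate point is the arithmetic reconciling the parity-check count with the stated form of the dimension.
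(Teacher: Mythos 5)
Your proposal is correct and follows essentially the same route as the paper: define $\C$ as the nullspace of the LRR set from Corollary~\ref{cor:smalllrr}, read off the dimension from $|\mathcal{H}|$, and decode by computing the syndrome and running the recovery algorithm on it. The only cosmetic differences are that you argue the distance bound directly via a split $E=M-N$ into two rank-$\le r$ parts (the paper instead infers it from decodability via Lemma~\ref{lem:mindist}, and note that ``SVD'' should really be a rank decomposition, which works over any field), and the paper additionally remarks that the exact dimension $n^2-|\mathcal{H}|$ uses the linear independence of the measurement matrices, which your inequality version sidesteps.
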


Earlier work on rank-metric codes over finite
fields~\cite{GabidulinKorzhik72,Gabidulin85a,Gabidulin85b,delsarte,Roth91} achieved
$[[n]^2,n(n-2r),2r+1]_{\mathbb{F}_q}$ rank-metric codes, with efficient decoding algorithms.  These
are optimal (meeting the analogue of the Singleton bound for rank-metric codes).  However, these
constructions only work over finite fields.  While our code achieves a worse rate, its construction
works over any field, and over infinite fields the $\O(\lg^2n)$ term is unneeded.  Further,
Roth~\cite{Roth91} observed that the resulting $[[n]^2,(n-2r)^2,2r+1]$ code is optimal (see
discussion of his Theorem 3) over algebraically closed fields (which are infinite).

We are also able to give rank-metric codes over tensors, which can correct errors up to rank
$\approx n^{d/\lg d}$ (out of a maximum $n^{d-1}$), while still achieving constant rate. The
rank-metric code arising from the naive low-rank recovery of Lemma~\ref{lem:naivelrr} never achieves
constant rate, and prior work by Roth~\cite{Roth96} only gave decoding against a constant number of
errors.

\begin{theorem*}[Corollary~\ref{cor:rankmetrictensors}]
	\sloppy Let $\F$ be any field, $n,r\ge1$ and $d\ge 2$.  Then there are $\poly((nd)^d,r^{\lg
	d})$-explicit rank-metric codes with $\poly((nd)^d,r^{\lg d})$-time decoding for up to $r$
	errors, with parameters $[[n]^d,n^d-\O(d^2nr^{\lg d}\lg(dn)),2r+1]_\F$.
\end{theorem*}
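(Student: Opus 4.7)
The plan is to derive this corollary as a direct consequence of the tensor low-rank recovery scheme (Corollary~\ref{cor:smalllrrtensor}) combined with the general LRR-to-rank-metric-code duality sketched in Section~\ref{sec:introrankcodes}. More concretely, I would first invoke the tensor LRR construction applied at rank-parameter $2r$ rather than $r$: this produces an explicit set $\mathcal{H}$ of $O(d^{2}nr^{\lg d}\lg(dn))$ measurements on $[n]^d$ tensors from which any tensor of rank $\le 2r$ can be recovered in $\poly((nd)^d,r^{\lg d})$ time. The factor $2r$ is needed because differences of rank-$r$ tensors can have rank up to $2r$.

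Next, I would define the code by
\[
\C \;\eqdef\; \ker(\mathcal{H}) \;=\; \bigl\{\,T\in\F^{[n]^d}\;:\;\la T,H\ra=0\ \text{for all }H\in\mathcal{H}\,\bigr\}.
\]
This is a linear code of length $n^d$ and dimension at least $n^d-|\mathcal{H}|=n^d-O(d^{2}nr^{\lg d}\lg(dn))$, matching the stated parameters. The minimum distance claim $\ge 2r+1$ then follows from the standard syndrome argument packaged in Lemma~\ref{lem:mindist}: if two codewords $T_1,T_2\in\C$ differed by a tensor of rank $\le 2r$, then their common zero syndrome would contradict the fact that $\mathcal{H}$ uniquely identifies every rank-$\le 2r$ tensor via its measurements.

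For decoding up to $r$ rank-errors, given a received word $T+E$ with $T\in\C$ and $\rank(E)\le r\le 2r$, I would compute the syndrome $\mathcal{H}(T+E)=\mathcal{H}(E)$, since $T$ lies in the kernel of $\mathcal{H}$. Feeding this syndrome into the LRR recovery algorithm of Corollary~\ref{cor:smalllrrtensor} reconstructs $E$ in $\poly((nd)^d,r^{\lg d})$ time, after which $T=(T+E)-E$ yields the transmitted codeword. Explicitness of the parity-check matrix transfers directly from the explicitness of $\mathcal{H}$.

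The conceptual work is essentially all done already by Corollary~\ref{cor:smalllrrtensor} and the code–LRR duality; the only mildly delicate point is the bookkeeping between rank-$r$ decoding and rank-$2r$ recovery, which is why $\mathcal{H}$ must be taken for rank parameter $2r$. I would expect no obstacle beyond this parameter adjustment, since the minimum-distance lower bound and the syndrome-decoding routine are both purely formal once an LRR scheme of the right rank is in hand.
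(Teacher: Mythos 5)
Your proposal is essentially the paper's approach: define $\C=\ker\mathcal{H}$ for an explicit tensor LRR set $\mathcal{H}$, compute the syndrome $\mathcal{H}(T+E)=\mathcal{H}E$, recover $E$ via the LRR algorithm, subtract, and invoke explicitness of $\mathcal{H}$ for explicitness of the code. The one slip is the rank parameter. The paper takes $\mathcal{H}$ to be the $r$-low-rank-recovery set of Corollary~\ref{cor:smalllrrtensor} directly: by Definition~\ref{defn:lrr} such a set already uniquely recovers any tensor of rank $\le r$---which is precisely what the error $E$ is---and the ``$\times 2$'' you mention is already baked into that corollary, since the $r$-LRR set is constructed from a hitting set for rank $\le 2r$ via Lemma~\ref{lem:hittolrr}. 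By instead running the LRR construction at rank parameter $2r$, you double-apply the doubling, which inflates $|\mathcal{H}|$ by a factor of $2^{\lg d}=d$ (since $(4r)^{\lg d}=d^2r^{\lg d}$ while $(2r)^{\lg d}=dr^{\lg d}$), and your dimension bound becomes $n^d-\O(d^3nr^{\lg d}\lg(dn))$ rather than the stated $n^d-\O(d^2nr^{\lg d}\lg(dn))$. This choice is tied to your minimum-distance argument, which genuinely needs the $2r$-LRR property to directly kill all rank-$\le 2r$ differences of codewords; the paper avoids this by using the smaller $r$-LRR set and deducing distance $\ge 2r+1$ from the $r$-error decoding guarantee via Lemma~\ref{lem:mindist} (or, equivalently, by splitting a rank-$\le 2r$ difference as $A-B$ with $\rank A,\rank B\le r$ and applying the $r$-LRR uniqueness to $A,B$). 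Switching to the $r$-LRR set and invoking Lemma~\ref{lem:mindist} recovers the claimed parameters with no other change to your argument.
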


We note here that our decoding algorithm will return the \textit{entire} tensor, which is of size
$n^d$. Trivially, any algorithm returning the entire tensor must take at least $n^d$ time.  In this
case, the level of explicitness of the code we achieve is reasonable.   However, a more desirable
result would be for the algorithm to return a rank $\le r$ representation of the tensor, and thus
the $n^d$ lower bound would not apply so that one could hope for faster decoding algorithms.
Unfortunately, even for $d=3$ an efficient algorithm to do so would imply $\P=\NP$.  That is, if an
algorithm (even one which is not a rank-metric decoding or low-rank recovery algorithm) could
produce a rank $\le r$ decomposition for any rank $\le r$ tensor, then one could compute tensor-rank
by as it is the minimum $r$ such that the resulting rank $\le r$ decomposition actually computes the
desired tensor (this can be checked in $\poly(n^d)$ time).  However, H{\aa}stad~\cite{Hastad90}
showed that tensor-rank (over finite fields) is \NP-hard for any fixed $d\ge 3$.  It follows that
for any (fixed) $d\ge 3$, if one could recover (even in $\poly(n^d)$-time) a rank $\le r$ tensor
into its rank $\le r$ decomposition, then $\P=\NP$.  Thus, we only discuss recovery of a tensor by
reproducing its entire list of entries, as opposed to its more concise representation.

Finally, we remark that in \cite{Roth96} Roth discussed the question of decoding rank-metric codes
of degree $d=3$, gave decoding algorithms for errors of rank $1$ and $2$, and wrote that ``Since
computing tensor rank is an intractable problem, it is unlikely that we will have an efficient
decoding algorithm $\ldots$  otherwise, we could use the decoder to compute the rank of any tensor.
Hence, if there is any efficient decoding algorithm, then we expect such an algorithm to recover the
error tensor without necessarily obtaining its rank. Such an algorithm, that can handle any
prescribed number of errors, is not yet known.'' Thus, our work gives the first such algorithm for
tensors of degree $d>2$.

\subsection{Proof Overview}\label{sec: proof overview}

In this section we give proof outlines of the results mentioned so far.

\paragraph{Hitting Sets for Matrices} The main idea for our hitting set construction is to reduce
the question of hitting (non-zero) $n\times n$ matrices to a question of hitting (non-zero) $r\times r$
matrices.  Once this reduction is performed, we can then run the naive hitting set of
Lemma~\ref{lem:naivelrr}, which queries all $r^2$ entries.  This can loosely be seen in analogy with
the kernelization process in fixed-parameter tractability, where a problem depending on the input
size, $n$, and some parameter, $k$, can be solved by first reducing to an instance of size $f(k)$,
and then brute-forcing this instance.

To perform this kernelization, we first note that any $n\times n$ matrix $M$ of rank exactly $r$ can
be written as $M=PQ^\dagger$, where $P$ and $Q$ are $n\times r$ matrices of rank exactly $r$.  To
reduce $M$ to an $r\times r$ matrix, it thus suffices to reduce $P$ and $Q$ each to $r\times r$
matrices, denoted $P'$ and $Q'$.  As this reduction must preserve the fact that $M$ is non-zero, we
need that $P'Q'\ne 0$.  We enforce this requirement by insisting that $P'$ and $Q'$ are also rank
exactly $r$, so that $M'=P'Q'$ is also non-zero.

To achieve this rank-preservation, we turn to a lemma of Gabizon-Raz~\cite{GabizonRaz08} (we note
that this lemma has been used before for black-box PIT~\cite{KarninShpilka08,SaxenaSeshadhri11}).
They gave an explicit family of $\O(nr^2)$-many $r\times n$-matrices $\{A_\ell\}_\ell$, such that for any
$P$ and $Q$ of rank exactly $r$, at least one matrix $A_\ell$ from the family is such that
$\rank(A_\ell P)=\rank(A_\ell Q)=r$. Translating this result into our problem, it follows that one of the
$r\times r$ matrices $A_\ell MA_\ell^\dagger$ is full-rank. The $(i,j)$-th entry of $A_\ell
MA_\ell^\dagger$ is $\la M,(A_\ell)_i(A_\ell)_j^\dagger\ra$, where $(A_\ell)_i$ is the $i$-th row of
$A_\ell$. It follows that querying each entry in these $r\times r$ matrices corresponds to a rank 1
measurement of $M$, and thus make up a hitting set.  As there were $\O(nr^2)$ choices of $\ell$ and
$r^2$ choices of $(i,j)$, this gives a $\O(nr^4)$-sized hitting set.

To achieve a smaller hitting set, we use the following sequence of ideas.  First, we observe that in
the above, we can always assume $i=0$.  Loosely, this is because $A_\ell MA_\ell^\dagger$ is always
full-rank, or zero.  Thus, only the first row of $A_\ell MA_\ell^\dagger$ needs to be queried to
determine this.  Second, we improve upon the Gabizon-Raz lemma, and provide an explicit family of
rank-preserving matrices with size $\O(nr)$.  This follows from modifying their construction so the
degree of a certain determinant is smaller.  To ensure that the determinant is a non-zero
polynomial, we show that it has a unique monomial that achieves maximal degree, and that the term
achieving maximal degree has a non-zero coefficient as a Vandermonde determinant (formed from powers
of an element $g$, which has large multiplicative order) is non-zero.  Finally, we observe that the
hitting set constraints can be viewed as a constraints regarding polynomial interpolation.  This
view shows that some of the constraints are linearly-dependent, and thus can be removed.  Each of
the above observations saves a factor of $r$ in the size of the hitting set, and thus produces an
$\O(nr)$-sized hitting set.

\paragraph{Low-Rank Recovery} Having constructed hitting sets, Lemma~\ref{lem:hittolrr} implies that
the same construction yields low-rank-recovery sets.  As this lemma does not provide a recovery
algorithm, we provide one.  To do so, we must first change the basis of our hitting set.  That is,
the hitting set $\mathcal{B}$ yields a set of constraints on a matrix $M$, and we are free to choose
another basis for these constraints, which we call $\mathcal{D}$.  The virtue of this new basis is
that each constraint is non-zero only on some $k$-diagonal (the entries $(i,j)$ such that $i+j=k$).
It turns out that these constraints are the parity checks of a dual Reed-Solomon code with distance
$\Theta(r)$.  This code can be decoded efficiently using what is known as Prony's
method~\cite{prony1795}, which was developed in 1795.  We give an exposition in
Section~\ref{sec:prony}, where we show how to syndrome-decode this code up to half its minimum
distance, counting erasures as half-errors.  Thus, given a $\Theta(r)$-sparse vector (which can be
thought of as errors from the vector $\vec{0}$) these parity checks impose constraints from which
the sparse vector can be recovered.  Put another way, our low-rank-recovery set naturally embeds a
sparse-recovery set along each $k$-diagonal.

Thus, in designing a recovery algorithm for our low-rank recovery set, we do more and show how to
recover from any set of measurements which embed a sparse-recovery set along each $k$-diagonal.  In
terms of error-correcting codes, this shows that any hamming-metric code yields a rank-metric code
over matrices, and that decoding the rank-metric code efficiently reduces to decoding the hamming-metric code.

To perform recovery, we introduce the notion of a matrix being in $(<k)$-upper-echelon form.
Loosely, this says that $M^{(< k)}$, the entries $(i,j)$ of the matrix with $i+j< k$, are in
row-reduced echelon form.  We then show that for any matrix $M$ in $(<k)$-upper-echelon form, the
$k$-diagonal is $2\rank(M)$-sparse.  As an example, suppose $M^{(<k)}$ was entirely zero.  It
follows then that $M$ is in $(<k)$-upper-echelon form.  Further, the rows that have non-zero entries on
the $k$-diagonal of $M$ are then linearly-independent, as they form a triangular system.  It follows
that the $k$-diagonal can only have $\rank(M)$ non-zero entries.  The more general case is slightly
more complicated technically, but not conceptually. Thus, this echelon-form translates the notion of
low-rank into the notion of sparsity.

The algorithm then follows naturally.  We induct on $k$, first putting $M^{(<k)}$ into
$(<k)$-upper-echelon form (using row-reduction), and then invoking a sparse-recovery oracle on the
$k$-diagonal of $M$ to recover it.  This then yields $M^{(\le k)}$, and we increment $k$.  However,
as described so far, the use of the sparse-recovery oracle is adaptive.  We show that the
row-reduction procedure can be understood such that the adaptive use of the sparse-recovery oracle
can be simulated using non-adaptive calls to the oracle.  More specifically, we will apply the
measurements of the sparse-recovery oracle on each $k$-diagonal of $M$ (which may not be sparse),
and show how to compute the measurements of the adaptive algorithm (where the $k$-diagonals are
sparse) from the measurements made. Putting these steps together, this shows that exact non-adaptive
low-rank-recovery reduces to exact non-adaptive sparse-recovery.  Instantiating this claim with our
hitting sets from above gives a concrete low-rank-recovery set, with accompanied recovery algorithm.

\paragraph{Hitting Sets and Low-Rank Recovery for Tensors} The results for matrices naturally
generalize to tensors in the sense that an $\llb n\rrb^{2d}$ tensor can be viewed as an $\llb
n^d\rrb^2$ matrix.  However, we can do better.  Specifically, the hitting set results were done via
\textit{variable reduction}, as encapsulated by Theorem~\ref{thm:bivariate poly evaluation}, which
shows that a rank $\le r$ bivariate polynomial
$f_M(x,y)=(1,x,x^2,\ldots,x^{n-1})M(1,y,y^2,\ldots,y^{n-1})^\dagger$ is zero iff a set of $r$
univariate polynomials are all zero.  Further, the degrees of these polynomials is only twice the
original degree.  As each univariate polynomial can be interpolated using $\O(n)$ measurements, this
yields $\O(nr)$ measurements total.  This motivates the more general idea of treating a degree $d$
tensor as a $d$-variate polynomial, and showing that we can test whether this polynomial is zero by
testing if a collection of $d'$-variate polynomials are zero, for $d'<d$.  Recursing on this
procedure then reduces the $d$-variate case to the univariate case, and the univariate case is
brute-force interpolated.

The recursion scheme we develop for this is to show that a $d$-variate polynomial is zero iff $r$
$d/2$-variate polynomials are zero, and this naturally leads to an $\O(dnr^{\lg d})$-sized hitting
set.  To prove its correctness, we show that the bivariate case (corresponding to matrices) applied
to two groups of variables allows us to reduce to a single group of variables (with an increase
in the number of polynomials to test). Finally, since we saw how to do low-rank recovery for
matrices, and the tensor-case essentially only uses the matrix case, we can also turn this hitting
set procedure into a low-rank recovery algorithm.

\paragraph{Simulation of Large Fields by Small Fields} Most all of the results mentioned require a
field of size $\approx \poly(n^d)$.  When getting results over small fields, we show that, with some
loss, we can simulate such large fields inside the hitting sets.  We break-up each tensor $H$ in the
original hitting set into new tensors $H_i$ such that for any $\F$-tensor $T$, $\la T,H\ra$ can be
reconstructed from the set of values $\{\la T,\tilde{H}_i\ra\}_i$.  To do so, we use the well-known
representation of a extension field $\K$ of $\F$ as a field of matrices over $\F$.  As the entries
of a rank-1 tensor are multiplications of $d$ elements of $\K$, we can expand these multiplications
out as iterated matrix multiplications, which yields $(\dim_\F\K)^{d+1}$ terms to consider, each of
which corresponds to some $\tilde{H}_i$.

\paragraph{Rank-Metric Codes} The above techniques give the existence of low-rank-recovery sets (and
corresponding algorithms) for tensors, over any field.  Via the connections presented in
Section~\ref{sec:introrankcodes}, this readily yields rank-metric codes with corresponding
parameters.

\section{Notation}\label{sec:notation}

We now fix some notation. For a positive integer $n$ we denote $[n]\eqdef\{1,\ldots,n\}$ and $\llb
n\rrb\eqdef\{0,\ldots,n-1\}$.  We use $\binom{S}{k}$ to denote the set of all subsets of $S$ of size
$k$.  Given a set $S$ of integers, we denote $n-S\eqdef \{n-s:s\in S\}$.  All logarithms will be
base 2.  Given a polynomial $f\in\F[x_1,\ldots,x_m]$, $\deg(f)$ will denote the \textit{total
degree} of $f$, and $\deg_{x_i}(f)$ will denote the \textit{individual degree} of $f$ in the
variable $x_i$.  That is, the polynomial $xy$ has total degree 2 and individual degree 1\ in the
variable $x$ and individual degree 0\ in the variable $z$.  Given a monomial $\vec{x}^\vec{\alpha}$,
$\coeff_{\vec{x}^\vec{\alpha}}(f)$ will denote the coefficient of $\vec{x}^\vec{\alpha}$ in the
polynomial $f$.

Vectors, matrices, and tensors will all begin indexing from 0, instead of from 1.  The number $n$
will typically refer to the number of rows of a matrix, and $m$ the number of columns. $I_n$ will
denote the $n\times n$ identity matrix.  Denote $E_{i,j}$ to be the $n\times n$ square matrix with
its $(i,j)$-th entry being 1, and all other entries being zero. A vector is $k$-sparse if it has at
most $k$ non-zero entries.  Given a matrix $A$, $A^\dagger$ will denotes its transpose.  Given a
vector $\vec{x}\in\F^n$, $|\vec{x}|\eqdef n$.

A list of $n$ values in $\F$ is \textit{$t$-explicit} if each entry can be computed in $t$ steps,
where we allow operations in $\F$ to be done at unit cost.

Frequently throughout this paper we will divide a matrix into its diagonals, which we define as the
entries $(i,j)$ where $i+j$ is constant.  The following notation will make this discussion more
convenient.

\begin{notation}
	Let $M$ be an $n\times m$ matrix.  The \textbf{$k$-diagonal of $M$} is the set of entries
	$\{M_{i,j}\}_{i+j=k}$.  The \textbf{$(\le k)$-diagonals of $M$} is the set of entries
	$\{M_{i,j}\}_{i+j\le k}$.   The \textbf{$(<k)$-diagonals of $M$} is the set of entries
	$\{M_{i,j}\}_{i+j< k}$

	$M^{(k)}$, $M^{(\le k)}$ and $M^{(<k)}$ will denote the $k$-diagonal, $(\le k)$-diagonals
	and $(<k)$-diagonals of $M$, respectively.
\end{notation}

This notation will be frequently abused, in that a diagonal will refer to a set of positions in a
matrix in addition to referring to the values in those positions. However, \textit{the main
diagonal} of a matrix will refer to the entries $\{(i,i)\}_i$ of that matrix.

\section{Preliminaries}\label{sec:prelim}

In this section we formally define tensors as well as the PIT and LRR problems. We first discuss
tensors, and their notion of rank.  Rank-metric codes will be defined and discussed in
Section~\ref{sec:tensor codes}. Recall that we index starting at $0$, so we will use the product
space $\llb n\rrb ^d$ instead of $[n]^d$ for the domains of tensors.

\begin{definition}
	A \textbf{tensor} over a field $\F$ is a function $T:\prod_{j=1}^d\llb n_j\rrb \rightarrow
	\F$.  It is said to have degree $d$ and size $(n_1,\ldots,n_d)$.  If all of the $n_j$ are
	equal to $n$, then $T$ is said to have size $\llb n\rrb ^d$.

	Given two tensor $T_1,T_2$ of size $\prod_{j=1}^d\llb n_j\rrb$, $\la
	T_1,T_2\ra\eqdef\sum_{i_j\in\llb n_j\rrb} T_1(i_1,\ldots,i_d)T_2(i_1,\ldots,i_d)$.
\end{definition}

Note that the above inner product is the natural inner product when regarding a $\prod_{j=1}^d\llb
n_j\rrb$ tensor as a vector of dimension $\prod_{j=1}^dn_j$.  We now define the notion of rank.
Loosely, a tensor is rank 1 if it can be ``factored'' along each dimension, and a tensor is rank
$\le r$ if it can be expressed as the sum of $\le r$ rank 1 tensors.

\begin{definition}
	A tensor $T:\prod_{j=1}^d \llb n_j\rrb\to\mathbb{F}$ is \textbf{rank-one} if for $j\in[d]$
	there are vectors $\vec{v}_j\in\mathbb{F}^{n_j}\setminus\{\vec{0}\}$ such that
	$T=\otimes_{j=1}^d\vec{v}_j$.  That is, for all $i_j\in[n_j]$,
	$T(i_1,\ldots,i_d)=\prod_{j=1}^d\vec{v}_j(i_j)$ where $\vec{v}_j(i_j)$ denotes the $i_j$-th
	coordinate of $\vec{v}_j$.

	The \textbf{rank} of a tensor $T:\prod_{j=1}^d\llb n_j\rrb\to\mathbb{F}$, is defined as the
	minimum number of terms in a summation of rank-1 tensors expressing $T$, that is,
	\begin{equation*}
		\rank_\mathbb{F}(T)=\min\left\{r:T=\sum_{\ell=1}^r \otimes_{j=1}^d
		\vec{v}_{j,\ell}\text{, } \vec{v}_{j,\ell}\in \mathbb{F}^{n_j} \right\} \;.
	\end{equation*}
	\label{definition:simpletensor}
\end{definition}

As one might hope, when $d=2$ the above definitions reduce to the definition of a matrix, and
matrix-rank, respectively.  Further, the inner-product is then their Frobenius inner product.  That
is, $\la M_1,M_2\ra=\trace(M_1M_2^\dagger)$.

We now define the polynomial of a tensor.

\begin{definition}
	Let $T:\prod_{j=1}^d \llb n_j\rrb\to\mathbb{F}$ be a tensor, and let
	$\vec{x}_1,\ldots,\vec{x}_d$ be vectors of variables, so
	$\vec{x}_j=(x_{j,0},\ldots,x_{j,n_j-1})$ for all $j\in[d]$.  Then define
	\[f_T(\vec{x}_1,\ldots,\vec{x}_d)\eqdef\sum_{i_j\in\llb n_j\rrb} T(i_1,\ldots,i_d)\prod_{j=1}^D
	x_{j,i_j}=\la T,\vec{x}_1\otimes\cdots\otimes \vec{x}_d\ra \;,\]
	and define the $d$-variate polynomial
	\[\hat{f}_T(x_1,\ldots,x_d)\eqdef\sum_{i_j\in\llb n_j\rrb} T(i_1,\ldots,i_d)\prod_{j=1}^D
	x_{j}^{i_j}=f_T(\hat{\vec{x}}_1,\ldots,\hat{\vec{x}}_d) \;,\]
	where $(\hat{\vec{x}}_j)_i\eqdef x_j^i$.
	\label{defn:f_T}
\end{definition}

Note that the second equality in the first equation of the above definition follows from the
definition of the inner product over tensors.  As a matrix $M$ is also a tensor, we will also use
this notation when considering the polynomial $f_M(\vec{x},\vec{y})\eqdef \vec{x}^\dagger M\vec{y}$,
as the above definition readily generalizes the notion of a quadratic form.  Note that $\hat{f}_T$
allows us to consider any $d$-variate polynomial to be a tensor, and the \textit{rank} of such a
polynomial will simply be the rank of the corresponding tensor.

We now show the connection of these polynomials $f_T$ to set-multilinear depth-3 circuits.  We do
not seek to define all of the relevant terms in this notion, and instead refer the reader to the
recent survey~\cite{SY10}, and will simply define the subclass we are interested in.

\begin{definition}\label{def:set multilinear}
	For $j\in[d]$, let $\vec{x}_j = (x_{j,0},\ldots,x_{j,n-1})$ be vectors of variables.  A
	degree $d$, set-multilinear, $\Sigma\Pi\Sigma$ circuit with top fan-in $r$, is a polynomial
	of the following form \[C(\vec{x}_1,\ldots,\vec{x}_d) = \sum_{\ell=1}^{r} \prod_{j=1}^{d}\la
	\vec{v}_{j,\ell},\vec{x}_j\ra \] where each $\vec{v}_{j,\ell}\in\F^n$.
\end{definition}

We now see the following connection between these circuits and tensors.

\begin{lemma}
	The polynomials computed by degree $d$ set-multilinear $\Sigma\Pi\Sigma$ circuits, with top
	fan-in $\le r$, on $dn$ variables, are exactly the polynomials $f_T$, for tensors $T:\llb
	n\rrb^d\to \F$ with rank $\le r$.
\end{lemma}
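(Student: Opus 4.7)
The plan is to prove this by unpacking the definitions in both directions, with the key computational identity being that $f_{\otimes_{j=1}^d \vec{v}_j} = \prod_{j=1}^d \langle \vec{v}_j, \vec{x}_j \rangle$, together with linearity of the map $T \mapsto f_T$. Neither direction should involve any real obstacle; the argument is essentially bookkeeping about tensor decompositions.

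For the forward direction, I would start with a set-multilinear $\Sigma\Pi\Sigma$ circuit $C(\vec{x}_1,\ldots,\vec{x}_d) = \sum_{\ell=1}^r \prod_{j=1}^d \langle \vec{v}_{j,\ell}, \vec{x}_j \rangle$ as in Definition~\ref{def:set multilinear}. For each $\ell$, define the rank-$1$ tensor $T_\ell \eqdef \otimes_{j=1}^d \vec{v}_{j,\ell}$. Expanding $\prod_{j=1}^d \langle \vec{v}_{j,\ell}, \vec{x}_j \rangle = \prod_{j=1}^d \sum_{i_j} \vec{v}_{j,\ell}(i_j) x_{j,i_j} = \sum_{i_1,\ldots,i_d} \prod_j \vec{v}_{j,\ell}(i_j) \prod_j x_{j,i_j}$ and recognizing this as $f_{T_\ell}$ (per Definition~\ref{defn:f_T}), we get $C = \sum_{\ell=1}^r f_{T_\ell}$. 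Setting $T \eqdef \sum_{\ell=1}^r T_\ell$, linearity of the coefficient map gives $C = f_T$, and by Definition~\ref{definition:simpletensor} we have $\rank(T) \le r$ since we wrote it as a sum of $r$ rank-$1$ tensors.

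For the backward direction, I would reverse the argument. Given $T : \llb n \rrb^d \to \F$ with $\rank(T) \le r$, Definition~\ref{definition:simpletensor} gives vectors $\vec{v}_{j,\ell} \in \F^n$ for $j \in [d]$, $\ell \in [r]$, such that $T = \sum_{\ell=1}^r \otimes_{j=1}^d \vec{v}_{j,\ell}$. Applying the $f_{(\cdot)}$ map (which is linear in $T$, directly from Definition~\ref{defn:f_T}), I get $f_T = \sum_{\ell=1}^r f_{\otimes_{j=1}^d \vec{v}_{j,\ell}} = \sum_{\ell=1}^r \prod_{j=1}^d \langle \vec{v}_{j,\ell}, \vec{x}_j \rangle$, which is exactly a degree-$d$ set-multilinear $\Sigma\Pi\Sigma$ circuit of top fan-in $\le r$ as in Definition~\ref{def:set multilinear}.

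The only thing worth being careful about is noting that the $f_T$ construction is $\F$-linear in the tensor argument — this is clear from its definition as an inner product $\langle T, \vec{x}_1 \otimes \cdots \otimes \vec{x}_d \rangle$ — and that the inner product of a rank-$1$ tensor $\otimes_j \vec{v}_j$ with $\vec{x}_1 \otimes \cdots \otimes \vec{x}_d$ factors as $\prod_j \langle \vec{v}_j, \vec{x}_j \rangle$. Both facts are essentially immediate from the definitions. I do not anticipate any real obstacle; the lemma is in effect asserting that the definitions of tensor rank and of set-multilinear top fan-in have been set up to match exactly.
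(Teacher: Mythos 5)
Your proposal is correct and follows essentially the same route as the paper: both directions reduce to linearity of $T \mapsto f_T$ together with the identity $\la \otimes_j \vec{v}_j, \vec{x}_1 \otimes \cdots \otimes \vec{x}_d \ra = \prod_j \la \vec{v}_j, \vec{x}_j \ra$. The paper writes out only the backward direction and dismisses the forward one as its reverse, which is exactly the symmetry you exploit.
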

\begin{proof}
	\underline{$\impliedby$:}  Suppose $T$ is of rank $\le r$, so $T=\sum_{\ell=1}^r
	\otimes_{j=1}^d \vec{v}_{j,\ell}$ for $\vec{v}_{j,\ell}\in \mathbb{F}^{n}$.  Then $f_T=\la
	T,\vec{x}_1\otimes\cdots\otimes \vec{x}_d\ra=\sum_{\ell=1}^r\la \otimes_{j=1}^d
	\vec{v}_{j,\ell},\vec{x}_1\otimes\cdots\otimes \vec{x}_d\ra=\sum_{\ell=1}^r\prod_{j=1}^d\la
	\vec{v}_{j,\ell},\vec{x}_j\ra$, and this final polynomial is computed as a set-multilinear
	$\Sigma\Pi\Sigma$ circuit.

	\underline{$\implies$:} This argument is simply the reverse of the above.
\end{proof}

We also get the following result for the polynomial $\hat{f}_T$.

\begin{lemma}
	For $T:\llb n\rrb^d\to \F$ with rank $\le r$,
	$\hat{f}_T(x_1,\ldots,x_d)=\sum_{\ell=1}^r\prod_{j=1}^d p_{j,\ell}(x_j)$, where $\deg
	p_{j,\ell}<n$.
	\label{lem:hatf}
\end{lemma}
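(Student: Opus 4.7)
The plan is to proceed by direct unfolding of the definitions, using the rank decomposition of $T$ together with the observation that pairing a coefficient vector against the Vandermonde-style vector $\hat{\vec{x}}_j$ produces a univariate polynomial in $x_j$ of degree $< n$.

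First I would invoke the hypothesis that $\rank_\F(T)\le r$ to write $T=\sum_{\ell=1}^r \otimes_{j=1}^d \vec{v}_{j,\ell}$ for some $\vec{v}_{j,\ell}\in\F^n$, as in Definition~\ref{definition:simpletensor}. Then, exactly as in the proof of the previous lemma (bilinearity of the inner product, together with the fact that $\la \otimes_{j=1}^d \vec{v}_{j,\ell}, \otimes_{j=1}^d \vec{x}_j\ra=\prod_{j=1}^d \la \vec{v}_{j,\ell},\vec{x}_j\ra$), I would conclude that
\[
f_T(\vec{x}_1,\ldots,\vec{x}_d)=\sum_{\ell=1}^r \prod_{j=1}^d \la \vec{v}_{j,\ell},\vec{x}_j\ra.
\]

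Next I would specialize to $\vec{x}_j=\hat{\vec{x}}_j$, where $(\hat{\vec{x}}_j)_i=x_j^i$, as prescribed by Definition~\ref{defn:f_T}. The inner product $\la \vec{v}_{j,\ell},\hat{\vec{x}}_j\ra=\sum_{i=0}^{n-1}(\vec{v}_{j,\ell})_i\, x_j^i$ is manifestly a univariate polynomial in $x_j$ of degree strictly less than $n$. Defining $p_{j,\ell}(x_j)\eqdef \la \vec{v}_{j,\ell},\hat{\vec{x}}_j\ra$ and substituting back yields
\[
\hat{f}_T(x_1,\ldots,x_d)=f_T(\hat{\vec{x}}_1,\ldots,\hat{\vec{x}}_d)=\sum_{\ell=1}^r \prod_{j=1}^d p_{j,\ell}(x_j),
\]
which is exactly the claimed decomposition.

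There is no substantive obstacle here; the statement is essentially a restatement of the rank decomposition of $T$ in the dual language of polynomials, and the only point worth spelling out is the identification of $\la \vec{v}_{j,\ell},\hat{\vec{x}}_j\ra$ as a degree $<n$ polynomial in a single variable $x_j$. The proof is a two-line chain of equalities once the notational bookkeeping from Definitions~\ref{definition:simpletensor} and~\ref{defn:f_T} is unwound.
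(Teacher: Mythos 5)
Your proof is correct and follows the same route as the paper's: take the rank-$\le r$ decomposition $T=\sum_{\ell}\otimes_j\vec{v}_{j,\ell}$, apply the previous lemma to get $f_T=\sum_\ell\prod_j\la\vec{v}_{j,\ell},\vec{x}_j\ra$, specialize to $\hat{\vec{x}}_j$, and set $p_{j,\ell}\eqdef\la\vec{v}_{j,\ell},\hat{\vec{x}}_j\ra$. You merely make explicit the observation that this inner product is a degree-$<n$ univariate polynomial, which the paper leaves implicit.
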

\begin{proof}
	As $T$ is rank $\le r$, $T=\sum_{\ell=1}^r
	\otimes_{j=1}^d \vec{v}_{j,\ell}$ for $\vec{v}_{j,\ell}\in \mathbb{F}^{n}$.  
	Then $\hat{f}_T=f_T(\hat{\vec{x}}_1,\ldots,\hat{\vec{x}}_d)=\sum_{\ell=1}^r\prod_{j=1}^d\la
	\vec{v}_{j,\ell},\hat{\vec{x}}_j\ra$.  Taking $p_{j,\ell}(x_j)\eqdef \la
	\vec{v}_{j,\ell},\hat{\vec{x}}_j\ra$ yields the result.
\end{proof}

Recall that, as discussed in the introduction, set-multilinear $\Sigma\Pi\Sigma$ circuits have a
white-box polynomial-time PIT algorithm due to Raz and Shpilka~\cite{RazShpilka05} but no known
polynomial-sized black-box PIT algorithm.  By the above connection, this is the same as creating
hitting sets for tensors, which we will now define.

\begin{definition}\label{def:hitting set for sps}
	Let $\K$ be an extension of $\F$. A \textbf{hitting set $\mathcal{H}$ for $\prod_{j=1}^d\llb
	n_j\rrb$ tensors of rank $\le r$ over $\F$} is a set of points $\mathcal{H} \subseteq
	\prod_{j=1}^d(\K^{n_j})$ such that for any $T:\prod_{j=1}^d\llb n_j\rrb\to\F$ of rank $\le
	r$, $T$ is a non-zero iff there exists $(\vec{a}_1,\ldots,\vec{a}_d)\in\mathcal{H}$ such
	that $f_T(\vec{a}_1,\ldots,\vec{a}_d)\ne 0$.
\end{definition}

However, we saw in Definition~\ref{defn:f_T} that evaluating $f_T$ is equivalent to taking an inner
product of $T$ with a rank-1 tensor.  This leads to the following equivalent definition.

\begin{definition}[Reformulation of Definition~\ref{def:hitting set for sps}]\label{def:hitting set for tensors}
	Let $\K$ be an extension of $\F$. A \textbf{hitting set $\mathcal{H}$ for $\prod_{j=1}^d\llb
	n_j\rrb$ tensors of rank $\le r$ over $\F$} is a set of rank-1 tensors $\mathcal{H}
	\subseteq \K^{\prod_{j=1}^d\llb n_j\rrb}$ such that for any $T:\prod_{j=1}^d \llb
	n_j\rrb\to\F$ of rank $\le r$, $T$ is a non-zero iff there exists $H\in\mathcal{H}$ such
	that $\la T,H\ra\ne 0$.

	If $\mathcal{H}$ instead is not constrained to consist of rank-1 tensors, then we say
	$\mathcal{H}$ is an \textbf{improper hitting set}.
\end{definition}

As is common in PIT literature, we allow the use of the extension field $\K$, and in our case
$|\K|\le\poly(|\F|)$ will be sufficient.  However, the results of Section~\ref{sec:pit for tensors
over small fields} will show how to remove the need for $\K$ from our results (with some loss).

We now define our notion of a low-rank recovery set, extending Definition~\ref{def:hitting set for
tensors}.  Note that we drop here the restriction that the tensors must be rank 1.

\begin{definition}\label{defn:lrr}
	A set of tensors $\mathcal{R}\subseteq\K^{\prod_{j=1}^d\llb n_j\rrb}$ is an
	\textbf{$r$-low-rank-recovery set} if for every tensor $T:\prod_{j=1}^d\llb n_j\rrb\to\F$
	with rank $\le r$, $T$ is uniquely determined by $\vec{y}$, where
	$\vec{y}\in\K^{\mathcal{R}}$ is defined by $y_{R}\eqdef \la T,R\ra$, for $R\in\mathcal{R}$.

	An algorithm performs \textbf{recovery} from $\mathcal{R}$ if, for each such $T$, it
	recovers $T$ given $\vec{y}$.
\end{definition}

We now show that, despite low-rank recovery being a stronger notion than a hitting set, hitting sets
imply low-rank recovery with some loss in parameters, as seen by the following lemma.

\begin{lemma}\label{lem:hittolrr}
	If $\mathcal{H}$ is a (proper or improper) hitting-set for $\prod_{j=1}^d \llb n_j\rrb$
	tensors of rank $\le 2r$, then $\mathcal{H}$ is an $r$-low-rank-recovery set for
	$\prod_{j=1}^d \llb n_j\rrb$ tensors also.
\end{lemma}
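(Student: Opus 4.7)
The plan is to run the standard ``difference argument'' that converts a hitting set for a class closed under differences into a recovery set for a subclass with half the complexity parameter. Concretely, I will show that distinct rank $\le r$ tensors produce distinct measurement vectors against $\mathcal{H}$, which is exactly what Definition~\ref{defn:lrr} requires.

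First I would suppose, for contradiction, that two tensors $T_1, T_2 : \prod_{j=1}^d \llb n_j \rrb \to \F$ of rank $\le r$ yield the same measurement vector, i.e.\ $\la T_1, H\ra = \la T_2, H\ra$ for every $H \in \mathcal{H}$. By bilinearity of the inner product on tensors, this is equivalent to $\la T_1 - T_2, H\ra = 0$ for every $H\in \mathcal{H}$.

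Next I would use the subadditivity of tensor rank: from Definition~\ref{definition:simpletensor}, concatenating rank-one decompositions of $T_1$ and $-T_2$ exhibits $T_1 - T_2$ as a sum of at most $2r$ rank-one tensors, so $\rank(T_1 - T_2) \le 2r$. Now $T_1 - T_2$ is a tensor to which the hitting set hypothesis applies: since $\mathcal{H}$ is a (proper or improper) hitting set for rank $\le 2r$ tensors, the vanishing of $\la T_1 - T_2, H\ra$ on every $H\in \mathcal{H}$ forces $T_1 - T_2 = 0$, i.e.\ $T_1 = T_2$.

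This shows that the map $T \mapsto (\la T, H\ra)_{H\in\mathcal{H}}$ is injective on rank $\le r$ tensors, so $T$ is uniquely determined by its measurement vector $\vec{y}$, which is the content of Definition~\ref{defn:lrr}. There is no real obstacle here; the only subtlety is making sure that subadditivity of rank is cited correctly and that the hitting set definition is applied to the difference rather than to the individual tensors, and that the argument is insensitive to whether $\mathcal{H}$ is proper (rank-$1$) or improper (arbitrary), since Definitions~\ref{def:hitting set for tensors} only changes the shape of $\mathcal{H}$, not the zero-testing guarantee we invoke.
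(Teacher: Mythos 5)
Your proof is correct and is essentially the same argument the paper gives: take the difference $T_1-T_2$, note it has rank $\le 2r$ by subadditivity and zero inner product with every $H\in\mathcal{H}$ by linearity, and apply the hitting-set guarantee to conclude $T_1=T_2$. You phrase it by contradiction and spell out the rank subadditivity a bit more explicitly, but the route is the same.
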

\begin{proof}
	Let $A,B\in\F^{\prod_{j=1}^d \llb n_j\rrb}$ be two tensors of rank $\le r$ such that their
	inner products with the tensors in $\mathcal{H}$ are the same.  By linearity of the inner
	product, it follows then that the tensor $A-B$ has rank $\le 2r$ and has zero inner product
	with each tensor in $\mathcal{H}$.  As $\mathcal{H}$ is a hitting set, it follows that
	$A-B=0$, and thus $A=B$.  Therefore, tensors of rank $\le r$ are determined by their inner
	products with $\mathcal{H}$ and thus $\mathcal{H}$ is an $r$-low-rank-recovery set.
\end{proof}

We now discuss some trivial LRR results.  The first result is the obvious low-rank recovery
construction, which is extremely explicit but requires many measurements.

\begin{lemma}\label{lem:naivelrr}
	For $n\ge 1$, $d\ge 2$, there is a $\polylog(n,d,r)$-explicit $r$-low-rank-recovery set for
	$\llb n\rrb^d$ tensors, of size $n^d$.  Further, recovery of $T$ is possible in $\poly(n^d)$
	time.
\end{lemma}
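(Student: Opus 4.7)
The plan is to use the most obvious measurement set, namely the set of tensors that form the standard basis of $\F^{\llb n\rrb^d}$. That is, define
\[
\mathcal{H} \eqdef \bigl\{ \vec{e}_{i_1}\otimes\cdots\otimes\vec{e}_{i_d} : (i_1,\ldots,i_d)\in\llb n\rrb^d \bigr\},
\]
where $\vec{e}_i\in\F^n$ is the $i$-th standard basis vector. Each element of $\mathcal{H}$ is rank-$1$ by construction, and $|\mathcal{H}|=n^d$, so the size bound is immediate. Explicitness is also immediate: each entry of each $\vec{e}_{i_1}\otimes\cdots\otimes\vec{e}_{i_d}$ is either $0$ or $1$, determined by $d$ index comparisons, which takes $\polylog(n,d)$ time (the parameter $r$ plays no role).

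The key observation is that for any tensor $T:\llb n\rrb^d\to\F$,
\[
\la T,\vec{e}_{i_1}\otimes\cdots\otimes\vec{e}_{i_d}\ra \;=\; T(i_1,\ldots,i_d),
\]
so measuring $T$ against $\mathcal{H}$ literally returns the full list of entries of $T$. In particular $T$ is uniquely determined by these measurements (trivially, and regardless of whether $T$ has rank $\le r$), so $\mathcal{H}$ is an $r$-low-rank-recovery set in the sense of Definition~\ref{defn:lrr}. Recovery is equally trivial: given the vector $\vec{y}\in\F^{\mathcal{H}}$, the algorithm outputs the tensor $T$ defined by $T(i_1,\ldots,i_d)\eqdef y_{\vec{e}_{i_1}\otimes\cdots\otimes\vec{e}_{i_d}}$, which takes $\O(n^d)$ time to write down, hence $\poly(n^d)$ time overall.

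There is essentially no hard step here; the statement records the trivial baseline against which the nontrivial hitting set / LRR constructions of the paper are compared. The only thing one should verify carefully is that $\mathcal{H}$ as defined really consists of rank-$1$ tensors (and thus is a \emph{proper} LRR/hitting set in the sense of Definition~\ref{def:hitting set for tensors}), but this is immediate from Definition~\ref{definition:simpletensor} since each $\vec{e}_{i_1}\otimes\cdots\otimes\vec{e}_{i_d}$ is by construction an outer product of non-zero vectors.
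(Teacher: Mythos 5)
Your proposal is correct and is essentially identical to the paper's proof: the paper also takes the $n^d$ indicator tensors of the singletons $\{(i_1,\ldots,i_d)\}$ (i.e.\ the standard basis rank-$1$ tensors $\vec{e}_{i_1}\otimes\cdots\otimes\vec{e}_{i_d}$) and observes that the measurements simply read off the entries of $T$. No further comment is needed.
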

\begin{proof}
	For $\vec{i}=(i_1,\ldots,i_d)\in\llb n\rrb$, let the rank 1 tensor $R_\vec{i}:\llb
	n\rrb^d\to\F$ be the rank 1 tensor, which is the indicator function for the set
	$\{(i_1,\ldots,i_d)\}$.  Thus, $\la T,R_\vec{i}\ra=T(i_1,\ldots,i_d)$.  It follows that
	$T=0$ iff each such inner product is zero, and further that recovery of $T$ is possible (in
	$\poly(n^d)$ time).  The explicitness of the recovery set is also clear.
\end{proof}

We now will show that, via the probabilistic method, one can show that much smaller low-rank
recovery sets exist.  To do so, we first cite the following form of the Schwartz-Zippel Lemma.

\begin{lemma}[Schwartz-Zippel Lemma~\cite{Schwartz80,Zippel79}]
	Let $f\in\F[x_1,\ldots,x_m]$ be a non-zero polynomial of total degree $\le d$, and
	$S\subseteq\F$. Then $\Pr_{\vec{x}\in S^m}[f(\vec{x})=0]\le d/|S|$.
\end{lemma}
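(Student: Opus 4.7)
The plan is a standard induction on the number of variables $m$, which is the classical textbook proof of Schwartz-Zippel.

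For the base case $m=1$, a non-zero univariate polynomial $f \in \F[x_1]$ of degree $\le d$ has at most $d$ roots by the fundamental theorem of algebra (or rather the univariate fact that a degree-$d$ polynomial has at most $d$ roots over any field). Therefore the number of $x_1 \in S$ with $f(x_1)=0$ is at most $d$, giving $\Pr_{x_1 \in S}[f(x_1)=0] \le d/|S|$.

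For the inductive step, assume the result holds for all polynomials in fewer than $m$ variables. Given $f \in \F[x_1,\ldots,x_m]$ non-zero of total degree $\le d$, write $f$ as a polynomial in $x_1$ with coefficients in $\F[x_2,\ldots,x_m]$:
\[ f(x_1,\ldots,x_m) = \sum_{i=0}^{k} f_i(x_2,\ldots,x_m)\, x_1^i, \]
where $k$ is chosen as the largest power of $x_1$ that appears in $f$, so that $f_k \not\equiv 0$. Then $\deg(f_k) \le d - k$.

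The key observation is to split the event $\{f(\vec{x})=0\}$ using the leading coefficient $f_k$. Let $\vec{x} = (x_1, x_2, \ldots, x_m)$ be drawn uniformly from $S^m$. Condition on the values $(x_2,\ldots,x_m)$. If $f_k(x_2,\ldots,x_m) \neq 0$, then $f(x_1, x_2, \ldots, x_m)$ becomes a non-zero univariate polynomial in $x_1$ of degree exactly $k$, and by the base case the probability over $x_1 \in S$ that it vanishes is at most $k/|S|$. On the other hand, the probability over $(x_2,\ldots,x_m) \in S^{m-1}$ that $f_k$ itself vanishes is at most $(d-k)/|S|$ by the inductive hypothesis applied to the $(m-1)$-variate polynomial $f_k$ of total degree $\le d-k$. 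Combining these two bounds via a union bound gives
\[ \Pr_{\vec{x} \in S^m}[f(\vec{x})=0] \le \frac{d-k}{|S|} + \frac{k}{|S|} = \frac{d}{|S|}, \]
completing the induction.

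There is no real obstacle in this proof; it is a clean two-line induction. The only subtlety worth stating cleanly is the choice to peel off a single variable using its highest-degree coefficient so that the degrees sum correctly, which is exactly why the bound is additive in $k$ and $d-k$ and thus recovers $d/|S|$.
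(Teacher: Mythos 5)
Your proof is the standard induction on the number of variables and is correct: peeling off the top-degree coefficient $f_k$ in $x_1$, applying the inductive hypothesis to $f_k$ (degree $\le d-k$ in $m-1$ variables), and union-bounding with the at-most-$k$ roots of the resulting univariate polynomial gives exactly $d/|S|$. Note that the paper does not prove this lemma at all --- it is cited from Schwartz and Zippel and used as a black box --- so there is no in-paper argument to compare against; your writeup would serve as a correct self-contained proof if one were wanted.
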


We now give a (standard) probabilistic method proof that small hitting sets exist (over finite
fields).  We present this not as a tight result, but as an example of what parameters one can hope
to achieve.

\begin{lemma}\label{lem:szhit}
	Let $\F_q$ be the field on $q$ elements.  Let $n\ge 1$ and $q>d\ge 2$.  Then there is a
	hitting set for $\llb n\rrb^d$ tensors of rank $\le r$, of size $\le
	dnr/\log_q(q/d)+1\approx dnr$.
	Further, there is an $r$-low-rank recovery set of size $\le 2dnr/\log_q(q/d)+2$.
\end{lemma}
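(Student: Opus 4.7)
The plan is to use the probabilistic method with rank-1 tensors sampled from a natural distribution, combined with Schwartz--Zippel and a union bound over all rank $\le r$ tensors.

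First, I would sample $H_1,\ldots,H_m$ independently by drawing, for each $H_i$, vectors $\vec{a}_1^{(i)},\ldots,\vec{a}_d^{(i)}\in\F_q^n$ uniformly and setting $H_i = \vec{a}_1^{(i)}\otimes\cdots\otimes\vec{a}_d^{(i)}$. For any fixed tensor $T$, one has $\langle T,H_i\rangle = f_T(\vec{a}_1^{(i)},\ldots,\vec{a}_d^{(i)})$, and by Definition~\ref{defn:f_T}, $f_T$ is a multilinear polynomial of total degree $d$. If $T\ne 0$, then $f_T\ne 0$, so Schwartz--Zippel gives $\Pr[\langle T,H_i\rangle = 0] \le d/q$ per sample, and therefore $\Pr[\langle T,H_i\rangle = 0 \text{ for all } i] \le (d/q)^m$.

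Next, I would count rank $\le r$ tensors crudely. Every such tensor arises as $\sum_{\ell=1}^{r}\otimes_{j=1}^{d}\vec{v}_{j,\ell}$ with $\vec{v}_{j,\ell}\in\F_q^n$, so the number of (non-zero) rank $\le r$ tensors is at most $q^{dnr}$. A union bound then shows that the probability some non-zero rank $\le r$ tensor is orthogonal to every $H_i$ is at most $q^{dnr}(d/q)^m$. Choosing $m = \lceil dnr/\log_q(q/d)\rceil$ makes this bound strictly less than $1$, so a deterministic hitting set of the claimed size $\le dnr/\log_q(q/d)+1$ exists (the hypothesis $q>d$ ensures $\log_q(q/d)>0$).

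Finally, for the low-rank recovery claim, I would invoke Lemma~\ref{lem:hittolrr}: a hitting set for rank $\le 2r$ tensors is automatically an $r$-low-rank-recovery set. Plugging $2r$ into the hitting set bound above yields a set of size $\le 2dnr/\log_q(q/d)+1$, which is certainly $\le 2dnr/\log_q(q/d)+2$. There is no real obstacle here; the only thing to be a little careful about is that the Schwartz--Zippel bound is applied to $f_T$ as a degree-$d$ polynomial (not as a polynomial in the matrix entries of some representation), and that the counting bound $q^{dnr}$ is a count of parameterizations rather than an exact count of tensors, which is fine since we only need an upper bound.
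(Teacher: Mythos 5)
Your proposal is correct and follows essentially the same route as the paper: random evaluation points (equivalently, random rank-1 tensors), Schwartz--Zippel applied to the degree-$d$ polynomial $f_T$, a union bound over the $\le q^{dnr}$ parameterizations of rank $\le r$ tensors, and Lemma~\ref{lem:hittolrr} for the recovery claim. The only microscopic quibble is that $\lceil dnr/\log_q(q/d)\rceil$ can equal $dnr/\log_q(q/d)$ exactly, in which case the failure probability is $1$ rather than $<1$, so one should take $m$ strictly greater than $dnr/\log_q(q/d)$ (e.g.\ $\lfloor\cdot\rfloor+1$), which still fits the claimed size bound.
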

\begin{proof}
	For any non-zero tensor $T:\llb n\rrb^d\to F$, $f_T$ has degree $d$, and thus by the
	Schwartz-Zippel Lemma, for a random $\vec{a}\in\F_q^n$, $f_T(\vec{a})=0$ with probability at
	most $d/q$.  There are at most $q^{dnr}$ such non-zero tenors.  By a union bound, it follows
	that $k$ random points are not a hitting set for rank $\le r$ tensors with probability at
	most $q^{dnr}(d/q)^k$, which is $<1$ if $k>dnr/\log_q(q/d)$.  The low-rank-recovery set
	follows from Lemma~\ref{lem:hittolrr}.
\end{proof}

We now briefly remark on the tightness of the above result.  The general case of tensors is not well
understood, as it is not well-understood how many tensors there are of a given rank.  For matrices,
the situation is much more clear.  In particular, Roth~\cite{Roth91} showed (using the language of
rank-metric codes) that over finite fields the best (improper) hitting set for $n\times n$ matrices
of rank $\le r$ is of size $nr$, and over algebraically closed fields the best (improper) hitting
set is of size $(2n-r)r$.  As we will aim to be field independent, the second bound is more
relevant, and we indeed match this bound (as seem in Theorem~\ref{thm:better hitting for matrices})
with a proper hitting set.

Clearly, the above lemma is non-explicit.  However, it yields a much smaller hitting set than the
$n^d$ result given in Lemma~\ref{lem:naivelrr}.  Note that previous work (even for $d=2$) on LRR and
rank-metric codes did not focus on requiring that the measurements are rank-1 tensors, and thus
cannot be used for PIT.  Given this lack of knowledge, this paper seeks to construct proper hitting sets,
and low-rank-recovery sets, that are both explicit \textit{and} small.

We remark that any explicit hitting set naturally leads to tensor rank lower bounds\footnote{This
connection, along with the connection to rank-metric codes mentioned earlier, can be put in a more
broad setting: hitting sets (and thus lower-bounds) for circuits from some class $\mathcal{C}$ are
in a sense equivalent to $\mathcal{C}$-metric linear codes.  That is, codes where $\text{dist}(x,y)$
is defined as the size of the smallest circuit whose truth table is the string $x-y$. We do not
pursue this idea further in this work.}.  The
following lemma, which can be seen as a special case of the more general results of
Heintz-Schnorr~\cite{HeintzSchnorr80} and Agrawal~\cite{Agrawal05}, shows this connection more
concretely.

\begin{lemma}\label{lem:hittolbs}
	Let $\mathcal{H}$ be a hitting set for $\llb n\rrb^d$ tensors of rank $\le r$, such that
	$|\mathcal{H}|<n^d$.  Then there is a $\poly(n^d,|\mathcal{H}|)$-explicit tensor of rank
	$>r$.
\end{lemma}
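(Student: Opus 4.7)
The plan is to exploit the linearity inherent in the hitting-set condition. By Definition~\ref{def:hitting set for tensors}, the condition ``$\la T,H\ra \ne 0$ for some $H \in \mathcal{H}$'' depends linearly on $T$, so the set $\mathcal{K} \eqdef \{T : \la T,H\ra = 0 \text{ for all } H \in \mathcal{H}\}$ is a linear subspace of the $n^d$-dimensional space of tensors $\llb n\rrb^d \to \F$. The hitting-set property says precisely that $\mathcal{K}$ contains no non-zero tensor of rank $\le r$; equivalently, every non-zero element of $\mathcal{K}$ has rank $>r$.

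The first step is to observe that $\dim \mathcal{K} \ge n^d - |\mathcal{H}|$, which is strictly positive by hypothesis. Hence $\mathcal{K}$ contains a non-zero tensor $T^*$, and by the previous paragraph $\rank(T^*) > r$. This establishes existence.

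The second step is explicitness: we need to produce such a $T^*$ in $\poly(n^d, |\mathcal{H}|)$ time. Form the $|\mathcal{H}| \times n^d$ matrix $A$ whose rows are the flattenings of the tensors $H \in \mathcal{H}$ (viewing each tensor as a vector in $\F^{n^d}$ indexed by $\llb n\rrb^d$). Then $\mathcal{K}$ is exactly the right-kernel of $A$. Running Gaussian elimination on $A$ produces a non-zero vector in $\ker(A)$ in $\poly(n^d, |\mathcal{H}|)$ field operations; unflattening this vector gives the desired tensor $T^*$.

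There is essentially no obstacle here: the statement is a dimension-counting argument plus Gaussian elimination. The only mild subtlety is the conversion between evaluation queries and linear functionals, but this is immediate from the identity $f_T(\vec{a}_1,\ldots,\vec{a}_d) = \la T, \vec{a}_1 \otimes \cdots \otimes \vec{a}_d\ra$ noted in Definition~\ref{defn:f_T}, so the rank-1 structure of $\mathcal{H}$ plays no role in this argument and the same proof would work for improper hitting sets.
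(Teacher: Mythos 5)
Your proposal is correct and is essentially the same argument as the paper's: the paper also observes that the system $\la T,\mathcal{H}\ra=\vec{0}$ has $|\mathcal{H}|<n^d$ constraints on $n^d$ unknowns, hence a non-zero solution, which must have rank $>r$ by the hitting-set property, and is found explicitly via Gaussian elimination. Your write-up just spells out the dimension count and the flattening more explicitly than the paper does.
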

\begin{proof}
	Consider the constraints imposed on a tensor $T$ by the system of equations $\la
	T,\mathcal{H}\ra=\vec{0}$.  There are $|\mathcal{H}|$ constraints and $n^d$ variables.  It
	follows that there is a non-zero $T$ solving this system.  By the definition of a hitting
	set, it follows that $\rank(T)\not\le r$.  That $T$ is explicit follows from Gaussian
	Elimination.
\end{proof}

For $d=2$, the above is less interesting, as matrix rank is well understood and we know many
matrices of high rank.  For $d\ge 3$, tensor rank is far less understood.  For $d=3$, the best known
lower bounds for the rank of explicit tensors, over arbitrary fields, due to Alexeev, Forbes, and
Tsimerman~\cite{alexeev+forbes+tsimerman:2011:tensor-rank}, are $3n-\O(\lg n)$ (over $\F_2$, a lower
bound of $3.52n$ is known, essentially due to Brown and Dobkin~\cite{BrownDobkin80}).  More
generally, for any fixed $d$, no explicit tensors are known with tensor rank $\omega(n^{\lfloor
d/2\rfloor})$.  The above lemma shows that constructing hitting sets is at least as hard as getting a
lower bound on any specific tensor.  In particular, constructing a hitting set for $\llb
n\rrb^d$ tensors of rank $\le r$ of size $\O(dnr^k)$ with $k<2$ would yield new tensor rank lower
bounds for odd $d$, in particular $d=3$.  Such lower bounds would imply new circuit lower bounds,
using the results of Strassen~\cite{Strassen73a} and Raz~\cite{Raz10}.  Our results give a hitting
set with $k\approx \lg d$, and we leave open whether further improvements are possible.

We will mention the definitions and preliminaries of rank-metric codes in Section~\ref{sec:tensor
codes}.

\subsection{Paper Outline} We briefly outline the rest of the paper.  In Section~\ref{sec: improved
GR} we give our improved construction of rank-preserving matrices, which were first constructed by
Gabizon-Raz~\cite{GabizonRaz08}.  In Section~\ref{sec:pit for matrices} we then use this
construction to give our reduction from bivariate identity testing to univariate identity testing
(Section~\ref{sec:bivariate}), which then readily yields our hitting set for matrices
(Section~\ref{subsect:hitmatrix}).  In Section~\ref{sec:diag} we show an equivalent hitting set,
which is more useful for low-rank-recovery.

Section~\ref{sec:pit for tensors} extends the above results to tensors, where
Section~\ref{sec:dvariate} reduces $d$-variate identity testing to univariate identity testing, and
Section~\ref{sec:hittensor} uses this reduction to construction hitting sets for tensors.  Finally,
Section~\ref{sec:pit for tensors over small fields} shows how to extend these results to any field.

Low-rank recovery of matrices is discussed in Section~\ref{sec:explicit LRR}.  It is split into two
parts.  Section~\ref{sec:prony} shows how to decode dual Reed-Solomon codes, which we use as a
sparse-recovery oracle.  Section~\ref{sec:lrrtosparse} shows how to, given any such sparse-recovery
oracle, perform low-rank-recovery of matrices.  Instantiating the oracle with dual Reed-Solomon
codes gives our low-rank-recovery construction.

Section~\ref{sec:tensor codes} shows how to extend our LRR algorithms to tensors, and how to use
these results to construct rank-metric codes. Finally, Section~\ref{sec:discuss} discusses some problems left open by this work.

\section{Improved Construction of Rank-preserving Matrices}\label{sec: improved GR}

In this section we will give an improved version of the Gabizon-Raz lemma~\cite{GabizonRaz08} on the
construction of rank-preserving matrices.  The goal is to transform an $r$-dimensional subspace
living in an $n$-dimensional ambient space, to an $r$-dimensional subspace living in an
$r$-dimensional ambient space.  We will later show (see Theorem~\ref{thm:bivariate poly evaluation})
how to use such a transformation to reduce the problem of PIT for $n\times m$ matrices of rank $\le
r$ to the problem of PIT for $r\times r$ matrices of rank $\le r$.

We first present the Gabizon-Raz lemma (\cite{GabizonRaz08}, Lemma 6.1), stated in the language of
this paper.

\begin{lemma*}[Gabizon-Raz (\cite{GabizonRaz08}, Lemma 6.1)]
	Let $1\le r\le n$. Let $M\in\F^{n\times r}$ be of rank $r$. Define $A_\alpha\in\F^{r\times
	n}$ by $(A_\alpha)_{i,j}=\alpha^{ij}$.  Then there are $\le nr^2$ values $\alpha\in \F$ such
	that $\rank(A_\alpha M)<r$.
\end{lemma*}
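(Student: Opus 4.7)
The plan is to show that $D(\alpha) \eqdef \det(A_\alpha M)$, viewed as a univariate polynomial in $\alpha$, is a non-zero polynomial of degree at most $nr^2$. Since $A_\alpha M$ is an $r \times r$ matrix, $\rank(A_\alpha M) < r$ if and only if $D(\alpha) = 0$, so the number of bad $\alpha \in \F$ equals the number of roots of $D$, which is at most $\deg D \le nr^2$.

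The degree bound is straightforward. The $(i,k)$ entry of $A_\alpha M$ equals $\sum_j \alpha^{ij} M_{j,k}$, a polynomial in $\alpha$ of degree at most $i(n-1)$. Applying the Leibniz expansion of the determinant gives $\deg D \le \sum_{i} i(n-1) \le \binom{r}{2}(n-1) \le nr^2$.

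For non-vanishing, I would apply the Cauchy--Binet formula:
\[
D(\alpha) \;=\; \sum_{S \in \binom{\llb n \rrb}{r}} c_S \cdot V_S(\alpha),
\]
where $c_S \eqdef \det(M_S)$ is the minor of $M$ on the rows indexed by $S$ and $V_S(\alpha) \eqdef \det((A_\alpha)_{\cdot, S})$ is a Vandermonde-type determinant in the values $\{\alpha^j\}_{j \in S}$, factoring (up to indexing conventions) as $\prod_{k<l \in S}(\alpha^l - \alpha^k)$. Since $M$ has rank $r$, some minor $c_S$ is non-zero. The natural strategy is to identify a monomial $\alpha^d$ that appears in $V_{S^*}$ for a single $S^*$ with $c_{S^*} \ne 0$, so that the coefficient of $\alpha^d$ in $D$ is $\pm c_{S^*} \ne 0$. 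The top-degree monomial in $V_S$ has exponent $\sum_l l \cdot j_l$ (where $S = \{j_0 < \cdots < j_{r-1}\}$), and a simple exchange argument shows this weighted sum is uniquely maximized over all $r$-subsets of $\llb n\rrb$ by $S^* = \{n-r, \ldots, n-1\}$; if $c_{S^*} \neq 0$, then $\pm c_{S^*}$ is the leading coefficient of $D$ and we are done.

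The main obstacle is the case where the ``corner'' minor $c_{S^*}$ vanishes, so the dominant monomial from $V_{S^*}$ contributes nothing. To overcome this I would refine the argument by fixing a total order on $r$-subsets extending the partial order by $\sum_l l \cdot j_l$, then selecting the maximal $S'$ in this order with $c_{S'} \ne 0$ and verifying that no $V_S$ with $S$ lower in the order contributes to the distinguished monomial of $V_{S'}$ (using the signed structure of the Vandermonde to rule out cancellations). As a backup, should this combinatorics prove delicate, one can pass to a sufficiently large extension and directly exhibit an $\alpha$ of large multiplicative order such that $A_\alpha M$ is non-singular, by arguing that the column span of $M$ (viewed as an $r$-dimensional subspace of polynomials of degree ${<}n$) is generically transverse to the kernel of the evaluation map at $1, \alpha, \alpha^2, \ldots, \alpha^{r-1}$. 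Either route shows $D \not\equiv 0$, completing the proof.
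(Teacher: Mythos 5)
Your degree bound is fine, but the non-vanishing argument has a genuine gap, and no argument of this shape can work for the statement exactly as written. First, the weight $\sum_l l\cdot j_l$ is \emph{not} uniquely maximized by $S^*=\{n-r,\ldots,n-1\}$: the element in position $l=0$ carries weight zero, so for $r=2$, $n=3$ the sets $\{0,2\}$ and $\{1,2\}$ tie. Second, and more seriously, each $V_S=\prod_{k<l}(\alpha^{j_l}-\alpha^{j_k})$ is a sum of many monomials rather than a single one, so the distinguished monomial of your chosen $S'$ can be met and cancelled by (top or lower-order) monomials coming from other sets with non-zero minors; the ``signed structure'' does not prevent this. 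Concretely, take $n=3$, $r=2$ and $M$ with rows $(1,0)$, $(0,1)$, $(-1,-1)$: this $M$ has rank $2$, yet row $i=0$ of $A_\alpha$ is the all-ones vector and both columns of $M$ sum to zero, so the first row of $A_\alpha M$ vanishes and $\det(A_\alpha M)\equiv 0$ for \emph{every} $\alpha$. (Equivalently, $\det(M_{\{0,1\}})V_{\{0,1\}}+\det(M_{\{0,2\}})V_{\{0,2\}}+\det(M_{\{1,2\}})V_{\{1,2\}}=(\alpha-1)-(\alpha^2-1)+(\alpha^2-\alpha)=0$.) So under the paper's $0$-indexing the statement itself is false as transcribed --- the row $i=0$ must be excluded, i.e.\ $i$ should range over $1,\ldots,r$ --- and your backup route fails for the same structural reason: one of the evaluation points is always $\alpha^0=1$, and the column span of $M$ may consist entirely of polynomials vanishing at $1$.

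This is exactly the difficulty the paper engineers away: it never proves the Gabizon--Raz lemma, but cites it and instead proves Theorem~\ref{thm:better rank preserving}, where $(A_\alpha)_{i,j}=(g^i\alpha)^j$. With that choice each Cauchy--Binet factor $\det(A_S)$ collapses to a \emph{single} monomial $\alpha^{\sum_{k\in S}k}$ whose coefficient is a non-zero Vandermonde determinant in the $g^k$, so distinct weights contribute to distinct degrees and cross-cancellation is impossible; all that remains is that $w(S)=\sum_{k\in S}k$ has a unique maximizer among the $S$ with $\det(M_S)\ne 0$, which is the exchange argument of Lemma~\ref{lem: maximizing sum of rows}. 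To salvage your write-up, either prove the corrected ($1$-indexed) statement by a different route, or switch to the paper's modified matrix, for which your ``unique maximizer plus non-zero leading coefficient'' plan goes through.
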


Our version of this lemma gives a set of matrices parameterized by $\alpha$ where there are only
$nr$ values of $\alpha$ that lead to $\rank(A_\alpha M)<r$.  This extra factor of $r$ allows us to
achieve an $\O((n+m)r)$-sized hitting set for matrices instead of a $\O((n+m)r^2)$-sized hitting
set.  We comment more on the necessity of this improvement in Remark~\ref{rmk:gr}.  We now state our
version of this lemma. Our proof is very similar to that of Gabizon-Raz.

\begin{theorem}\label{thm:better rank preserving}
	Let $1\le r\le n$. Let $M\in\F^{n\times r}$ be of rank $r$.  Let $\K$ be a field extending
	$\F$, and let $g\in\K$ be an element of order $\ge n$.  Define $A_\alpha\in\K^{r\times n}$
	by $(A_\alpha)_{i,j}=(g^i\alpha)^j$.  Then there are $\le nr-\binom{r+1}{2}<nr$ values
	$\alpha\in \K$ such that $\rank(A_\alpha M)<r$.
\end{theorem}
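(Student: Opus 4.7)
The plan is to view $p(\alpha) := \det(A_\alpha M)$ as a polynomial in $\alpha$, prove (a) $\deg p \le nr - \binom{r+1}{2}$ and (b) $p \not\equiv 0$, from which the bound follows as the number of roots in $\K$ of a nonzero polynomial.

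For (a), applying the Cauchy--Binet formula gives
\[
p(\alpha) = \sum_{S \in \binom{\llb n \rrb}{r}} \det\bigl((A_\alpha)_{*,S}\bigr)\,\det(M_{S,*}).
\]
For $S = \{s_0 < \cdots < s_{r-1}\}$, the submatrix $(A_\alpha)_{*,S}$ has $(i,k)$-entry $g^{i s_k}\alpha^{s_k}$; factoring $\alpha^{s_k}$ out of column $k$ yields $\det\bigl((A_\alpha)_{*,S}\bigr) = \alpha^{\sigma(S)} V(S)$ with $\sigma(S) := \sum_k s_k$ and $V(S) := \prod_{k<k'}(g^{s_{k'}}-g^{s_k})$. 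Because $g$ has order $\ge n$, the powers $g^{s_k}$ are distinct, so $V(S) \ne 0$. The degree bound follows from $\max_S \sigma(S) = \sum_{j=n-r}^{n-1} j = nr - \binom{r+1}{2}$, uniquely attained at $S^* = \{n-r, \ldots, n-1\}$.

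The main obstacle is step (b): the top-degree coefficient $V(S^*)\det(M_{S^*,*})$ can easily vanish (precisely when the bottom $r$ rows of $M$ are linearly dependent), so it is not enough to inspect the leading term. My approach is to locate a guaranteed-nonzero coefficient at the \emph{smallest} degree rather than the largest. Right-multiplying $M$ by an invertible $R \in \F^{r\times r}$ only rescales $p(\alpha)$ by the constant $\det R \ne 0$, so assume $M$ is in reduced column echelon form: there exist pivot rows $J = \{j_0 < \cdots < j_{r-1}\}$ with $M_{J,*} = I_r$ and $M_{j,k} = 0$ whenever $j < j_k$.

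The key claim is that $\det(M_{S,*}) \ne 0$ forces the dominance $s_k \ge j_k$ for every $k$. A nonzero Leibniz term provides a permutation $\pi$ with $M_{s_l,\pi(l)} \ne 0$ for all $l$, so by the echelon condition $s_l \ge j_{\pi(l)}$; applying this to the $r-k$ indices in $\pi^{-1}(\{k,k+1,\ldots,r-1\})$ shows that at least $r-k$ of the $s_l$'s exceed $j_k$, and since the $s_l$'s are sorted this gives $s_k \ge j_k$. Because $\sigma$ is strictly monotone with respect to this dominance, every $S \ne J$ with $\det(M_{S,*}) \ne 0$ satisfies $\sigma(S) > \sigma(J)$. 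Hence the coefficient of $\alpha^{\sigma(J)}$ in $p(\alpha)$ is exactly $V(J)\det(M_{J,*}) = V(J) \ne 0$, so $p \not\equiv 0$, and the theorem follows.
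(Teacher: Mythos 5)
Your proof is correct, and while the skeleton (Cauchy--Binet, the Vandermonde factorization of $\det\bigl((A_\alpha)_{*,S}\bigr)$, and the degree bound $\max_S \sigma(S)=nr-\binom{r+1}{2}$) matches the paper's, the crucial non-vanishing step is handled by a genuinely different argument. The paper shows that among all $S$ with $\det(M_S)\ne 0$ there is a \emph{unique maximizer} of $\sigma(S)$, via a Steinitz-exchange argument (their Lemma on maximizing the sum of rows, essentially the fact that a matroid with distinct element weights has a unique maximum-weight basis); no cancellation can then occur at the top surviving degree. You instead normalize $M$ to reduced column echelon form (legitimate, since right-multiplication by an invertible $R$ scales $\det(A_\alpha M)$ by $\det R$) and prove a dominance inequality $s_k\ge j_k$ for every $S$ with $\det(M_{S,*})\ne 0$, so that the \emph{minimum} degree $\sigma(J)$ is uniquely attained at the pivot set $J$ and carries coefficient $V(J)\ne 0$. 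Your dominance argument (counting how many $s_l$ must exceed $j_k$ via $\pi^{-1}(\{k,\ldots,r-1\})$) is sound. The trade-off: the paper's exchange lemma is basis-free and is stated as a reusable combinatorial fact about column subsets of any full-rank matrix, whereas your route is more computational but yields extra information for free, namely the exact $\alpha$-adic valuation of $\det(A_\alpha M)$ at $0$ (it equals the sum of the pivot-row indices of the column space of $M$). Both establish the theorem with the same quantitative bound.
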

\begin{proof}
	We will now treat $\alpha$ as a variable, and thus refer to $A_\alpha$ simply as $A$.  The
	matrix $AM$ is an $r\times r$ matrix, and thus the claim will follow from showing that
	$\det(AM)$ is a non-zero polynomial in $\alpha$ of degree $\le nr-\binom{r+1}{2}$. As $r \ge
	1$, $nr-\binom{r+1}{2}<nr$.

	To analyze this determinant, we invoke the Cauchy-Binet formula.
	\begin{lemma*}[Cauchy-Binet Formula, Lemma~\ref{lem: cauchy-binet}]
		Let $m\ge n\ge 1$. Let $A\in\F^{n\times m}$, $B\in\F^{m\times n}$.  For
		$S\subseteq\llb m\rrb$, let $A_S$ be the $n\times |S|$ matrix formed from $A$ by
		taking the columns with indices in $S$.  Let $B_S$ be defined analogously, but with
		rows.  Then \[\det(AB)=\sum_{S\in\binom{\llb m\rrb}{n}}\det(A_S)\det(B_S)\]
	\end{lemma*}
	so that \[\det(AM)=\sum_{S\in\binom{\llb n\rrb}{r}}\det(A_S)\det(M_S)\] For
	$S=\{k_1,\ldots,k_r\}$,
	\begin{align*}
		\det(A_S)
		&=
		\begin{vmatrix}
			(\alpha)^{k_1}		&\cdots	&(\alpha)^{k_r}\\
			(g\alpha)^{k_1}		&\cdots	&(g\alpha)^{k_r}\\
			\vdots			&\ddots	&\vdots\\
			(g^{r-1}\alpha)^{k_1}	&\cdots	&(g^{r-1}\alpha)^{k_r}
		\end{vmatrix}
		=
		\begin{vmatrix}
			1		&\cdots	&1\\
			g^{k_1}		&\cdots	&g^{k_r}\\
			\vdots		&\ddots	&\vdots\\
			(g^{k_1})^{r-1}	&\cdots	&(g^{k_r})^{r-1}
		\end{vmatrix}
		\cdot
		\alpha^{\sum_{\ell=1}^r k_\ell}\\
		&=\alpha^{\sum_{\ell=1}^r k_\ell}\prod_{1\le i<j\le r}(g^{k_j}-g^{k_i})
	\end{align*}
	By assumption the order of $g$ is $\ge n$, so the elements $(g^k)_{0\le k<n}$ are
	distinct, implying that the above Vandermonde determinant is non-zero.

	Further, we observe that $\deg_\alpha \det(A_S)=\sum_{k\in S} k$.  As $S\in\binom{\llb
	n\rrb}{r}$, it follows that $\sum_{k\in S} k\le \sum_{k=n-r}^{n-1} k=nr-\binom{r+1}{2}$, and
	thus $\deg_\alpha\det(AM)\le nr-\binom{r+1}{2}$ also.

	We now show $\det(AM)$ is not identically zero, as a polynomial in $\alpha$.  We show this
	by showing that there is no cancellation of terms at the highest degree of $\det(AM)$.  That
	is, there is a unique set $S\in\binom{\llb n\rrb}{r}$ maximizing $\sum_{k\in S}k$ subject to
	$\det(M_S)\neq0$.  This is proven by the following lemma.

	\begin{lemma}\label{lem: maximizing sum of rows}
		Let $m\ge n\ge 1$. Let $M$ be a $n\times m$ matrix of rank $n$.  For
		$S\subseteq\llb m\rrb$, denote $M_S$ as the $n\times |S|$ matrix formed by taking the
		columns in $M$ (in order) whose indices are in $S$.  Denote $w(S)\eqdef \sum_{s\in
		S} s$.  Then there is a unique set $S\in\binom{\llb m\rrb}{n}$ that maximizes $w(S)$
		subject to $\det(M_S)\ne 0$.
	\end{lemma}
	\begin{proof}
		The proof uses the ideas of the Steinitz Exchange Lemma.  That is, recall the
		following facts in linear algebra. If sets $S_1,S_2$ are both sets of linearly
		independent vectors, and $|S_1|>|S_2|$, then there is some $\vec{v}\in S_1\setminus
		S_2$ such that $S_2\cup\{\vec{v}\}$ is linearly independent.  Thus, if $S_1,S_2$ are
		both sets of linearly independent vectors and $|S_1|=|S_2|$ then for any $\vec{w}\in
		S_2\setminus S_1$ there is a vector $\vec{v}\in S_1\setminus S_2$ such that
		$(S_2\setminus\{\vec{w}\})\cup\{\vec{v}\}$ is linearly independent.

		Now suppose (for contradiction) that there are two different sets $S_1,S_2\subseteq
		\llb m\rrb$ that maximize $w(S)$ over the sets such that $\det(M_S)\neq 0$, so that
		$|S_1|=|S_2|=n$.  Pick the
		smallest index $k$ in the (non-empty) symmetric difference $(S_2\setminus S_1) \cup
		(S_1\setminus S_2)$.  Without loss of generality suppose $k\in S_2\setminus S_1$.
		It follows that there is an index $l\in S_1\setminus S_2$ such that the columns in
		$S_3\eqdef(S_2\setminus\{k\})\cup\{l\}$ are linearly independent (by the Steinitz
		Exchange Lemma), and thus $\det(M_{S_3})\ne 0$ as $|S_3|=n$ by construction.

		By choice of $k$ and construction of $l$, $k\ne l$ and thus $k<l$.  Thus,
		$w(S_3)=w(S_2)+l-k>w(S_2)$.  However, this contradicts that $S_2$ was a maximizer to
		$w(S)$ subject to $\det(M_S)\ne 0$. Thus, the assumption of non-unique maximizers is
		false; there must be a unique maximizer.
	\end{proof}

	Thus $\det(AM)$ is a non-zero polynomial of degree $\le nr-\binom{r+1}{2}$ in $\alpha$, so
	there are at most that many values such that $\det(AM)=0$.
\end{proof}

We remark that Lemma~\ref{lem: maximizing sum of rows} can be seen as a special case of a more
general result about matroids, which states that if each element in the ground set has a unique
(positive) weight, then there is a unique independent set with maximal weight.  However, as we index
matrix columns starting at 0 this general fact does not immediately apply.  Rather, we implicitly
use that all bases in vector matroids have the same number of vectors.  In such a case, the weight
function can be shifted by an additive constant without affecting the property of having a unique
maximizer.

We now extend the above result to the case when the rank of the $n\times r$ matrix may be less than
$r$.  This will be useful when studying hitting sets for rank $\le r$ matrices, for then we do not
know the true rank of the unknown matrix, and only have the bound of ``$\le r$''.

\begin{corollary}\label{cor:better rank preserving}
	Let $1\le s\le r\le n$. Let $M\in\F^{n\times r'}$ be of rank $s$, for $r'\ge s$.  Let $\K$
	be a field extending $\F$, and let $g\in\K$ be an element of order $\ge n$.  Define
	$A_\alpha\in\K^{r\times n}$ by $(A_\alpha)_{i,j}=(g^i\alpha)^j$.  Then there are $\le
	nr-\binom{r+1}{2}<nr$ values $\alpha\in \K$ such that the first $s$ rows of $A_\alpha M$
	have rank $<s$.
\end{corollary}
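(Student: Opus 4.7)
The plan is to reduce to Theorem~\ref{thm:better rank preserving} by restricting to a full-rank submatrix of $M$ and to the first $s$ rows of $A_\alpha$. First, since $\rank(M)=s$, I would pick $s$ linearly independent columns of $M$ to form a matrix $M'\in\F^{n\times s}$ of rank exactly $s$. Likewise, let $A'_\alpha\in\K^{s\times n}$ denote the first $s$ rows of $A_\alpha$, so $(A'_\alpha)_{i,j}=(g^i\alpha)^j$ for $i\in\llb s\rrb$. Observe that $A'_\alpha M'$ is a column-submatrix of $A'_\alpha M$ (namely, the one selecting the columns corresponding to $M'$), so
\[
\rank(A'_\alpha M)\;\ge\;\rank(A'_\alpha M').
\]
Hence it suffices to show that $\rank(A'_\alpha M')=s$ for all but few $\alpha$.

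Second, I would apply Theorem~\ref{thm:better rank preserving} with $s$ in place of $r$ and $M'$ in place of $M$: since $M'\in\F^{n\times s}$ has rank $s$ and $g$ has order $\ge n$, there are at most $ns-\binom{s+1}{2}$ values $\alpha\in\K$ for which $\rank(A'_\alpha M')<s$. Combining with the previous step, the first $s$ rows of $A_\alpha M$ have rank $<s$ for at most $ns-\binom{s+1}{2}$ values of $\alpha$.

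Third, I need to see that $ns-\binom{s+1}{2}\le nr-\binom{r+1}{2}$, so the advertised bound follows. The difference is
\[
\Bigl(nr-\tbinom{r+1}{2}\Bigr)-\Bigl(ns-\tbinom{s+1}{2}\Bigr)
\;=\;n(r-s)-\tfrac{(r-s)(r+s+1)}{2}
\;=\;\tfrac{(r-s)\bigl(2n-r-s-1\bigr)}{2},
\]
which is nonnegative: either $r=s$, in which case the expression is $0$, or $s<r\le n$, in which case $r+s+1\le 2n$ and so the expression is $\ge 0$. This completes the bound.

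There is no real obstacle here; the only subtlety is remembering that the bound we apply Theorem~\ref{thm:better rank preserving} against is in terms of $s$, not $r$, and then verifying that the $s$-bound is dominated by the $r$-bound so that the statement of the corollary (phrased in terms of $r$) still holds. The allowance $r'\ge s$ causes no issue since the reduction only uses an $n\times s$ submatrix of $M$.
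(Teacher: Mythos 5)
Your proof is correct and follows essentially the same route as the paper's: restrict to $s$ basis columns of $M$, apply Theorem~\ref{thm:better rank preserving} with $s$ in place of $r$, and check that the $s$-bound is dominated by the $r$-bound. The only cosmetic difference is that the paper uses the equality $\rank(A'_\alpha M')=\rank(A'_\alpha M)$ (since the columns of $M'$ span the column space of $M$) where you use the submatrix inequality $\rank(A'_\alpha M)\ge\rank(A'_\alpha M')$; both suffice, and your explicit verification of $ns-\binom{s+1}{2}\le nr-\binom{r+1}{2}$ fills in an inequality the paper merely asserts.
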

\begin{proof}
	Consider $M'\in\F^{n\times s}$ to be a matrix formed from $s$ basis columns of $M$.  It
	follows, from Theorem~\ref{thm:better rank preserving}, that there are at most
	$ns-\binom{s+1}{2}$ values of $\alpha$ such that the $s\times n$ matrix $A'_\alpha$ has
	$\rank(A'_\alpha M')<s$.  As $\rank(AM')=\rank(AM)$ holds for any $A$, there are at most
	$ns-\binom{s+1}{2}$ many values of $\alpha$ such that $\rank(A'_\alpha M)<s$. Also, as
	$ns-\binom{s+1}{2}\le nr-\binom{r+1}{2}$ for $s\le r\le n$, it also holds that there are
	$\le nr-\binom{r+1}{2}$ values of $\alpha$ such that $\rank(A'_\alpha M)<s$.  Finally the
	claim follows by observing that, by construction, $A'_\alpha M$ is exactly the first $s$
	rows of $A_\alpha M$.
\end{proof}

\section{Identity Testing for Matrices}\label{sec:pit for matrices}

The previous section showed how we can map an $r$-dimensional subspace of an $n$-dimensional ambient
space to an $r$-dimensional subspace of an $r$-dimensional ambient space.  In this section, we will
use this map to reduce the PIT problem for rank $\le r$ matrices of size $n\times m$ to the PIT
problem from rank $\le r$ matrices of size $r\times r$.  This will be done by applying the dimension
reduction twice, once to the rows and once to the columns.  Further, the $r\times r$ version can be
solved in $r^2$ evaluations, using the naive approach of Lemma~\ref{lem:naivelrr} in querying each
entry in the matrix.  When phrased this way, one can show that this gives a
$\Theta((n+m)\poly(r))$-sized hitting set. This reduction idea is analogous to the kernelization
technique used in fixed-parameter tractability, but we do not develop this connection further. While
this idea demonstrates the feasibility of the rough bound cited above, we actually achieve a
$\Theta((n+m)r)$-sized hitting set via tighter analysis.

\subsection{Variable Reduction}\label{sec:bivariate}

Before giving the hitting set construction and its analysis, we first present the main theorem used
in the analysis.  While its statement seems unrelated to the intuition presented above, the proof
will exploit this intuition. When interpreting the result, recall that we index entries in matrices
(and vectors) starting at 0, as well as recalling the definition of $\hat{f}_T$ from a tensor $T$.

\begin{theorem}\label{thm:bivariate poly evaluation}
	Let $m\ge n \ge r\ge 1$.  Let $\K$ be an extension of $\F$ such that $g\in\K$ has order $\ge
	m$. Let $M$ be an $n\times m$ matrix of rank $\le r$ over $\F$. Then $M$ is non-zero (over
	$\F$) iff one of the univariate polynomials
	$\hat{f}_M(x,x),\hat{f}_M(x,gx),\ldots,\hat{f}_M(x,g^{r-1}x)$ is non-zero (over $\K$).
\end{theorem}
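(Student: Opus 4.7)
The forward direction is immediate: if $M=0$ then $\hat{f}_M$ is identically the zero polynomial, so all univariate specializations vanish. I attack the backward direction by contrapositive, assuming $M$ has rank exactly $s$ with $1 \le s \le r$ and that all $r$ univariate polynomials $\hat{f}_M(x, g^k x)$ vanish identically, and deriving a contradiction.

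The plan is to package these vanishings into precisely the matrix situation handled by our improved Gabizon--Raz lemma. Factor $M = PQ^\dagger$ with $P \in \F^{n\times s}$ and $Q \in \F^{m\times s}$ both of rank $s$. For any $\alpha\in\K$, one computes directly that $\hat{f}_M(\alpha, g^k\alpha) = \vec{u}_\alpha^\dagger M\vec{v}_{k,\alpha}$, where $\vec{u}_\alpha = (1,\alpha,\ldots,\alpha^{n-1})^\dagger$ and $\vec{v}_{k,\alpha} = (1, g^k\alpha, \ldots, (g^k\alpha)^{m-1})^\dagger$. Stacking these $r$ scalars for $k=0,\ldots,r-1$ yields the vector $A_\alpha M^\dagger \vec{u}_\alpha$, where $A_\alpha \in \K^{r\times m}$ with $(A_\alpha)_{k,j} = (g^k\alpha)^j$ is exactly the rank-preserving matrix of Theorem~\ref{thm:better rank preserving}. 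By our assumption this vector is $\vec{0}$ for every $\alpha$, and substituting $M^\dagger = QP^\dagger$ rewrites the identity as $(A_\alpha Q)(P^\dagger \vec{u}_\alpha) = \vec{0}$.

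I now apply Corollary~\ref{cor:better rank preserving} with $Q$ (an $m\times s$ matrix of rank $s$) in place of its ``$M$''. For all but fewer than $mr$ values of $\alpha\in\K$, the first $s$ rows of $A_\alpha Q$ form a rank-$s$ square matrix; hence $A_\alpha Q$ has full column rank and trivial kernel. For every such ``good'' $\alpha$, this forces $P^\dagger \vec{u}_\alpha = \vec{0}$, i.e.\ $p_\ell(\alpha)=0$ for every $\ell$, where $p_\ell(x) = \sum_i P_{i,\ell} x^i$ has degree $< n$. Since both ``$M=0$ over $\F$'' and ``$\hat{f}_M(x, g^k x)\equiv 0$ over $\K$'' are preserved under field extension, I may WLOG enlarge $\K$ until $|\K|\ge n+mr$; then the good set has at least $n$ elements, so each $p_\ell$ has more roots than its degree and vanishes identically. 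But then every column of $P$ is zero, contradicting $\rank(P) = s \ge 1$.

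The main obstacle is really bookkeeping rather than conceptual: Corollary~\ref{cor:better rank preserving} is stated with its ``$M$'' having $n$ rows, so when applying it to $Q$ one must rename parameters and verify the hypothesis $r\le m$ (which holds since $m\ge n\ge r$), and one must justify enlarging $\K$ to swallow the $<mr$ bad values of $\alpha$. The one nontrivial conceptual step is recognizing that the information encoded by the $r$ identically-vanishing univariate polynomials is exactly the pointwise linear system $A_\alpha M^\dagger \vec{u}_\alpha = \vec{0}$ that our rank-preserving construction was engineered to preclude.
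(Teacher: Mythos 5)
Your proof is correct. It shares the paper's skeleton --- factor $M=PQ^\dagger$ with $\rank(P)=\rank(Q)=s$ (the ``exactly $s$'' being the crucial point), recognize the $r$ evaluations at a point $\alpha$ as the entries of a transformed matrix built from the rank-preserving family of Theorem~\ref{thm:better rank preserving}, and invoke Corollary~\ref{cor:better rank preserving} --- but it finishes differently. The paper argues directly: it applies the corollary to \emph{both} $P$ and $Q$, takes a union bound over the $<(n+m)r$ bad values, picks a single good $\alpha$ in a large enough extension $\H$, and observes that the resulting $s\times s$ block $P'Q'$ is nonsingular, so the first row of $A_\alpha M B_\alpha^\dagger$ (whose entries are exactly $\hat{f}_M(\alpha,g^\ell\alpha)$) contains a nonzero entry. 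You instead argue by contrapositive, apply the corollary only to $Q$ to kill the kernel of $A_\alpha Q$ at every good $\alpha$, and then dispose of $P$ with an elementary root-counting argument: each column of $P$ becomes a degree-$<n$ polynomial vanishing at more than $n$ points. Your asymmetric treatment needs only one invocation of the improved Gabizon--Raz machinery and replaces the other with Vandermonde-style interpolation, which is arguably more economical; the paper's symmetric treatment buys the stronger intermediate statement that $A_\alpha MB_\alpha^\dagger$ is nonzero with a full-rank leading principal minor, which it reuses in the kernelization remark following the theorem. Both arguments require passing to a sufficiently large extension field, and both handle this correctly since neither the hypothesis nor the conclusion is affected by extending scalars.
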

\begin{proof}
	\underline{$(\impliedby):$} If $M$ is zero then so must all $\hat{f}_M(x,g^ix)$ be as well.  Taking
	the contrapositive yields this direction.

	\underline{$(\implies):$}  Say $\rank(M)=s$.  By assumption $0<s\le r$.  Recall that putting
	$M$ into reduced row-echelon form yields a decomposition $M=PQ^\dagger$, such that
	$P\in\F^{n\times s}$ and $Q\in\F^{m\times s}$ such that $\rank(P)=\rank(Q)=s$.  We remark
	that it is crucial for our proof that we have ``$\rank(P)=\rank(Q)=s$'' here.  Invoking the
	bound ``$\rank(P),\rank(Q)\le s$'', which one gets directly via the definition of rank of
	$M$, is insufficient.

	We now exploit the kernelization idea mentioned above.  Consider the matrices $A_\alpha \in
	\K^{r\times n}$ and $B_\alpha\in\K^{r\times m}$ defined by $(A_\alpha)_{i,j}=(g^i\alpha)^j$
	and $(B_\alpha)_{i,j}=(g^i\alpha)^j$.  Now consider $A_\alpha P$ and $B_\alpha Q$, which
	have sizes $r\times s$ each.  Write them in block notation as
	$\begin{pmatrix}P'_\alpha\\P''_\alpha\end{pmatrix}$ and
	$\begin{pmatrix}Q'_\alpha\\Q''_\alpha\end{pmatrix}$  such that $P'_\alpha$ and $Q'_\alpha $
	are both $s\times s$ matrices.

	By our refinement of the Gabizon-Raz lemma~\cite{GabizonRaz08}, our
	Corollary~\ref{cor:better rank preserving}, it follows that there are $<nr$ values of
	$\alpha$ such that $\rank(P'_\alpha)<s$ and $<mr$ values of $\alpha$ such that
	$\rank(Q'_\alpha)<s$.  By the union bound, there are $<(n+m)r$ values such that
	$\rank(P'_\alpha)<s$ or $\rank(Q'_\alpha)<s$.  Let $\H$ be an extension field of $\K$, such
	that $|\H|\ge (n+m)r$.  It follows that there is some $\alpha\in\H$ such that
	$\rank(P'_\alpha)=s$ and $\rank(Q'_\alpha)=s$.  Fix this as the value of $\alpha$, and we
	now drop $\alpha$ from our notation.

	Via block multiplication we see that
	\begin{equation*}
		AMB^\dagger=AP(BQ)^\dagger=
			\begin{pmatrix}P'\\P''\end{pmatrix}\begin{pmatrix}Q'&Q''\end{pmatrix}
			=\begin{pmatrix}P'Q'&P'Q''\\P''Q'&P''Q''\end{pmatrix}
	\end{equation*}
	As $\rank(P')=s$ and $\rank(Q')=s$, it follows that $\rank(P'Q')=s$. We remark that it is
	here where the naive bound ``$\rank(P),\rank(Q)\le s$'' is insufficient, and we crucially
	use that ``$\rank(P)=\rank(Q)=s$''.

	As $\rank(P'Q')=s$, and $P'Q'$ is an $s\times s$ matrix, it follows that some entry in its
	first row (which has index 0, by our notation) is non-zero. As $P'Q'$ is a principal minor
	of $AMB^\dagger$, it follows that some entry in the first row of $AMB^\dagger$ is non-zero.
	Denote row $i$ of $A$ as $A_i$, and row $j$ of $B$ as $B_j$.  As the first row of
	$AMB^\dagger$ is $A_0MB^\dagger$, it follows then there is some $0\le \ell\le r-1$ such that
	$A_0MB_\ell^\dagger\ne 0$.  Expanding this evaluation out, we see that
	\begin{align*}
		A_0MB_\ell^\dagger
			&=\la M, A_0B_\ell^\dagger\ra
			=\sum_{i=0,j=0}^{n-1,m-1} M_{i,j}\cdot (A_0)_i(B_\ell)_j\\
			&=\sum_{i=0,j=0}^{n-1,m-1} M_{i,j}A_{0,i}B_{\ell,j}\\
			&=\sum_{i=0,j=0}^{n-1,m-1} M_{i,j}(g^0\alpha)^i(g^\ell\alpha)^j\\
			&=\hat{f}_M(\alpha,g^\ell\alpha)
	\end{align*}
	Thus, we see that $\hat{f}_M(x,g^\ell x)$ has a non-zero point over the field $\H$.  It follows that
	it is a non-zero polynomial over $\H$.  As it has coefficients over $\K$, $\hat{f}_M(x,g^\ell x)$ is
	non-zero over $\K$ as well.
\end{proof}

\begin{remark}
	We now remark on how to implement the kernelization idea, mentioned in the introduction to
	this section, in a more straight-forward sense. One can see that $\rank(P'Q')=s$ shows that
	$AMB^\dagger\ne 0$.  As $AMB^\dagger$ is of size $r\times r$, we can then run the naive
	$r^2$-size hitting set of Lemma~\ref{lem:naivelrr} for $r\times r$-sized matrices, which
	checks each individual entry.  Noting that the $(i,j)$-th entry of $AMB^\dagger$ is equal to
	$\la M,A_iB_j^\dagger\ra$ we see that we can implement this naive hitting set as a hitting
	set for $n\times m$ matrices.

	Thus, for each $\alpha$ there are $r^2$ rank-1 matrices to test, and we need at most
	$(n+m)r$ choices of $\alpha$ (where here we assume $\K$ is at least this big).  It follows
	that there exists an explicit hitting set of size $(n+m)r^3$.
\end{remark}

\begin{remark}
	\label{rmk:gr}
	We briefly discuss the necessity of our version of the Gabizon-Raz lemma for the above
	proof.  The above proof does not invoke our version of the lemma in the fullest, in the
	sense that the $nr$ bound on the number of ``bad'' $\alpha$ was only used in the sense that
	it was a finite bound.  Thus, given that our version of the lemma ``only'' improves the
	$nr^2$ bound of Gabizon-Raz to $nr$, it may be unclear why our version is needed here.

	The crucial use of our version of the lemma is keeping the degree low.  That is, if one
	invoked the original Gabizon-Raz lemma, one would result in ``$M$ is non-zero iff one of the
	univariate polynomials $\hat{f}_M(x,x),\hat{f}_M(x,x^2),\ldots,\hat{f}_M(x,x^r)$ is
	non-zero''. While this is correct, it will lead to a larger hitting set as one needs to
	interpolate $r$ polynomials, each of degree $\approx rn$, which will give a
	$\Theta(nr^2)$-sized set instead of the $\Theta(nr)$-sized set we are able to achieve.
\end{remark}

We also state an equivalent version of this result, which will be useful for higher-degree tensors.

\begin{corollary}\label{cor:bivariate poly evaluation}
	Let $m\ge n \ge r\ge 1$. Over the field $\F$, consider the bivariate polynomial
	$f(x,y)=\sum_{i=1}^r p_i(x)q_i(y)$ such that $\deg(p_i)<n$ and $\deg(q_i)<m$ for all $i$.
	Let $\K$ be an extension of $\F$ such that $g\in\K$ has order $\ge m$.

	\sloppy Then $f$ is non-zero (over $\F$) iff one of the univariate polynomials
	$f(x,x),f(x,gx),\ldots,f(x,g^{r-1}x)$ is non-zero (over $\K$).
\end{corollary}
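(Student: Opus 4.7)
The plan is to reduce this corollary directly to Theorem~\ref{thm:bivariate poly evaluation} by exhibiting $f$ as $\hat{f}_M$ for an appropriate rank-$\le r$ matrix $M$ of size $n\times m$. The key observation is that the class of bivariate polynomials of the form $\sum_{i=1}^r p_i(x) q_i(y)$ with $\deg p_i < n$ and $\deg q_i < m$ is precisely the class of polynomials $\hat{f}_M$ as $M$ ranges over $n\times m$ matrices of rank $\le r$. This is essentially a dictionary translation between the ``rank-$\le r$ bivariate polynomial'' viewpoint and the ``low-rank matrix'' viewpoint used by the theorem.

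Concretely, I would first unpack $f$ as follows. For each $i\in[r]$, let $\vec{u}_i\in\F^n$ be the coefficient vector of $p_i$, so that $p_i(x)=\sum_{a=0}^{n-1}(\vec{u}_i)_a x^a$, and similarly let $\vec{v}_i\in\F^m$ be the coefficient vector of $q_i$. Define the $n\times m$ matrix
\[ M \;\eqdef\; \sum_{i=1}^r \vec{u}_i \vec{v}_i^\dagger. \]
Since $M$ is a sum of $r$ rank-$1$ matrices, $\rank(M)\le r$. A direct expansion of $\hat{f}_M(x,y)=\sum_{a,b} M_{a,b} x^a y^b$ shows that $\hat{f}_M(x,y) = \sum_{i=1}^r p_i(x) q_i(y) = f(x,y)$, so $f$ fits exactly into the hypothesis of Theorem~\ref{thm:bivariate poly evaluation}.

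Now I would simply apply that theorem: it gives that $M$ is non-zero (over $\F$) iff one of $\hat{f}_M(x,x),\hat{f}_M(x,gx),\ldots,\hat{f}_M(x,g^{r-1}x)$ is non-zero (over $\K$). Substituting $\hat{f}_M=f$ on the right-hand side yields the corresponding statement about $f(x,g^\ell x)$. To conclude, I need the equivalence $f \ne 0 \iff M \ne 0$; but this is immediate, since $f$ and $M$ determine each other entry-by-coefficient, and in particular $f = 0$ iff every coefficient of $f$ is zero iff $M = 0$.

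There is no real obstacle here — the whole argument is a change of representation. The only thing to be careful about is to verify that the degree bounds in the hypothesis ($\deg p_i < n$, $\deg q_i < m$) are exactly what is needed to guarantee that the resulting matrix $M$ has dimensions $n\times m$ (so that Theorem~\ref{thm:bivariate poly evaluation} applies with the stated parameters), and to note that the order-of-$g$ hypothesis ``$g$ has order $\ge m$'' in the corollary matches the theorem's requirement on $g$ for an $n\times m$ matrix.
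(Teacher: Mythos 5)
Your proposal is correct and matches the paper's intent exactly: the paper states this corollary as an ``equivalent version'' of Theorem~\ref{thm:bivariate poly evaluation} with no separate proof, relying on precisely the dictionary you spell out between rank-$\le r$ bivariate polynomials with the stated degree bounds and $\hat{f}_M$ for rank-$\le r$ matrices $M\in\F^{n\times m}$. Setting $M=\sum_i \vec{u}_i\vec{v}_i^\dagger$, checking $\hat{f}_M=f$ and $f=0\iff M=0$, and invoking the theorem is the intended argument.
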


\subsection{The Hitting Set for Matrices}\label{subsect:hitmatrix}

In this subsection we use the theorem of the last subsection to construct hitting sets for matrices.
First, recall our notion of a hitting set for matrices, as given in Definition~\ref{def:hitting set
for tensors}.  Now recall that Theorem~\ref{thm:bivariate poly evaluation} shows that for any $M$ of
rank $\le r$, $M$ is non-zero iff one of the polynomials in $\{\hat{f}_M(x,g^\ell x)\}_{0\le \ell
<r}$ is non-zero.  In the preliminaries it was seen that evaluating one of these polynomials at a
point $\alpha$ is equivalent to taking an inner product $\la M,A\ra$ with a rank-1 matrix $A$.  This
leads naturally to the following idea: interpolate each of the $r$ polynomials in
$\{\hat{f}_M(x,g^\ell x)\}_{0\le \ell <r}$.  As each polynomial is of degree $\le n+m-2$, this will
lead to $(n+m-1)r$ inner products.  Then $M$ is non-zero iff one of these inner products is
non-zero.  This is the exact idea, which we now make formal.

\begin{construction}\label{construction:construction of hitting set for matrices}
	Let $m\ge n\ge r\ge 1$. Let $\K$ be an extension of $\F$ such that $g\in\K$ is of order $\ge
	m$ and $\alpha_0,\ldots,\alpha_{n+m-2}\in\K$ are distinct.  Let $B_{k,\ell}\in\K^{n\times
	m}$ to be the rank-1 matrix defined by $(B_{k,\ell})_{i,j}=\alpha_k^i(g^\ell\alpha_k)^j$,
	and let $\mathcal{B}_{r,n,m}\eqdef\{B_{k,\ell}\}_{0\le \ell<r,0\le k\le n+m-2}$.
\end{construction}

We now give the analysis for this hitting set.

\begin{theorem}\label{thm:hitting set for matrices}
	Let $m\ge n\ge r\ge 1$.  Then $\mathcal{B}_{r,n,m}$, as defined in
	Construction~\ref{construction:construction of hitting set for matrices}, has the following
	properties:
	\begin{enumerate}
		\item $\mathcal{B}_{r,n,m}$ is a hitting set for $n\times m$ matrices of rank $\le
		r$ over $\F$.
		\item $|\mathcal{B}_{r,n,m}|=(n+m-1)r$
		\item $\mathcal{B}_{r,n,m}$ can be computed in $\poly(m)$ operations, where
		operations (including a successor function in some enumeration of $\K$) over $\K$
		are counted at unit cost.
	\end{enumerate}
\end{theorem}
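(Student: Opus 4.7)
The plan is to verify the three listed properties in turn, with the main content being property (1), which follows by combining Theorem~\ref{thm:bivariate poly evaluation} with the observation that evaluating $\hat{f}_M(x,g^\ell x)$ at $\alpha_k$ is the same as taking the inner product $\la M, B_{k,\ell}\ra$.

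For property (1), I would fix a rank $\le r$ matrix $M$ and argue as follows. By Lemma~\ref{lem:hatf} (or just by expanding directly), each polynomial $\hat{f}_M(x,g^\ell x)$ is a univariate polynomial over $\K$ of degree at most $(n-1)+(m-1)=n+m-2$. Evaluating at $\alpha_k$, a brief calculation (identical to the one at the end of the proof of Theorem~\ref{thm:bivariate poly evaluation}) gives
\[
\hat{f}_M(\alpha_k,g^\ell\alpha_k) \;=\; \sum_{i,j} M_{i,j}\,\alpha_k^i(g^\ell\alpha_k)^j \;=\; \la M, B_{k,\ell}\ra.
\]
Since the $\alpha_k$ are $n+m-1$ distinct points in $\K$ and $\deg \hat{f}_M(x,g^\ell x) \le n+m-2$, the polynomial $\hat{f}_M(x,g^\ell x)$ vanishes identically iff $\la M,B_{k,\ell}\ra=0$ for all $k$. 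Hence $\la M,B\ra=0$ for every $B\in\mathcal{B}_{r,n,m}$ iff all $r$ polynomials $\hat{f}_M(x,g^\ell x)$ vanish identically, which by Theorem~\ref{thm:bivariate poly evaluation} happens iff $M=0$. This shows $\mathcal{B}_{r,n,m}$ is a hitting set for rank $\le r$ matrices; each $B_{k,\ell}$ is by construction an outer product of two vectors, hence rank-1, so this is a proper hitting set in the sense of Definition~\ref{def:hitting set for tensors}.

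Property (2) is just the cardinality count: the index $\ell$ ranges over $r$ values and $k$ ranges over $n+m-1$ values, and the matrices $B_{k,\ell}$ are distinct (this is not strictly needed for the set to be a hitting set, but one can note it from the distinctness of the $\alpha_k$ and the order of $g$, or simply report the multiset size). For property (3), we need to know that $g$ of sufficient order and $n+m-1$ distinct $\alpha_k$ exist in $\K$ and can be found efficiently, which is where the ``successor function'' assumption enters: enumerate elements of $\K$ and select suitable ones, then for each of the $(n+m-1)r$ pairs $(k,\ell)$ compute the $nm$ entries $\alpha_k^i(g^\ell\alpha_k)^j$ by iterated multiplication in $\K$. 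The total work is $\poly(m)$ field operations.

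The only potentially subtle step is (1), and the single obstacle there is being careful that the two-variable reduction of Theorem~\ref{thm:bivariate poly evaluation} applies to the polynomial $\hat{f}_M$ we actually interpolate, and that the degree bound $n+m-2$ together with distinct evaluation points exactly captures zeroness of each univariate. Everything else is bookkeeping.
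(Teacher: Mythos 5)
Your proposal is correct and follows essentially the same route as the paper's own proof: invoke Theorem~\ref{thm:bivariate poly evaluation}, identify the evaluation $\hat{f}_M(\alpha_k,g^\ell\alpha_k)$ with the inner product $\la M,B_{k,\ell}\ra$, and use that a degree $\le n+m-2$ polynomial is determined by its values at the $n+m-1$ distinct points $\alpha_k$, with the size and explicitness claims handled by the same counting and enumeration arguments. No gaps.
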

\begin{proof}
	\underline{$|\mathcal{B}_{r,n,m}|=(n+m-1)r$:} This is by definition.

	\underline{$\mathcal{B}_{r,n,m}$ can be computed in $\poly(m)$ operations:}  We assume here
	an enumeration of elements in $\K$ such that the successor in this enumeration can be
	computed at unit cost.  We also will assume testing whether an element is zero, as well as
	arithmetic operations in the field, are done at unit cost.

	First observe that there are at most $m$ solutions to $x^m-1$ over $\K$, so if we enumerate
	$m+1$ elements of $\K$, then we can find a $g\in\K$ with order $\ge m$.  This is in
	$\poly(m)$ operations. Similarly, the enumeration will give us $n+m-1$ distinct elements
	which yield the desired $\alpha_k$.  Then, computing each $B_{k,l}$ can be done in
	$\poly(m)$ steps, and there are $\poly(m)$ of them.  Thus, all of $\mathcal{B}_{r,n,m}$ can
	be computed in this many operations.

	\underline{$\mathcal{B}_{r,n,m}$ is a hitting set:} $\mathcal{B}_{r,n,m}$ is a set of rank-1
	matrices by construction, so it remains to prove that it hits each low-rank matrix. Let $M$
	be $n\times m$ matrix of rank $\le r$ in $\F$.  By Theorem~\ref{thm:bivariate poly
	evaluation}, we see that $M$ is non-zero iff one of the polynomials $\{\hat{f}_M(x,g^\ell
	x)\}_{0\le \ell<r}$ is non-zero.  Thus, $\hat{f}_M(\alpha_k,g^\ell\alpha_k)=\sum_{0\le
	i<n,0\le j<m} M_{i,j}\alpha_k^i(g^\ell\alpha_k)^j=\la M,B_{k,\ell}\ra$.  As each
	$\hat{f}_M(x,g^\ell x)$ is of degree $\le n+m-2$ and we evaluate each polynomial at $n+m-1$
	points, each $\hat{f}_M(x,g^\ell x)$ is fully determined by these evaluations via the
	polynomial interpolation map.  Specifically, if $\hat{f}_M(x,g^\ell x)$ is non-zero then it
	must have a non-zero evaluation for some $\alpha_k$.  As some $\hat{f}_M(x,g^\ell x)$ is
	non-zero by Theorem~\ref{thm:bivariate poly evaluation}, it follows that $\la
	M,B_{k,\ell}\ra\ne 0$ for some $0\le \ell<r$ and $0\le k\le n+m-2$.
\end{proof}

One deficiency with this construction is that for large $r$ it is suboptimal by a factor of $2$.
That is, in the regime where $n=m$ and $r=n-1$ this construction gives a hitting set of size
$(2n-1)(n-1)\approx 2n^2$.  However, the naive hitting set yields an $n^2$-sized setting.  In the
next subsection we show that this is an artifact of the analysis.  That is, by pruning unneeded
matrices from the hitting set, one can show that our construction always (for $r<n$) does better
than the naive construction.  This result proven in Theorem~\ref{thm:better hitting for matrices}.

\subsection{An Alternate Construction}\label{sec:diag}

In the previous subsections we saw that a low-rank matrix $M$ is non-zero iff one of the polynomials
$\{\hat{f}_M(x,g^\ell x)\}$ was non-zero.  To construct a hitting set, we then interpolated each
$\hat{f}_M$ at enough points to determine which, if any, were non-zero.  However, we are
interpolating many ``related'' polynomials all on the same points, so it is natural to wonder if
there are some redundancies in this process.

To phrase things differently, observe that testing a matrix $M$ against a hitting set $\mathcal{H}$
is really asking of $M\in\ker\mathcal{H}$.  The promise that $M$ is low-rank ensures that
$M\in\ker\mathcal{H}$ iff $M$ is zero.  The number of tests done is $|\mathcal{H}|$, but the number
of \textit{actual} tests is $\rank(\mathcal{H})$, where we consider $\mathcal{H}$ as vectors in the
vector space $\K^{nm}$.  That is, some of the matrices in $\mathcal{H}$ may be linearly dependent,
and these are redundancies that can be pruned.

The aim of this section is to present hitting sets (and improper hitting sets) that have linearly
independent test matrices.  The initial motivation is to observe that the point of the evaluations
of the $\{\hat{f}_M(x,g^\ell x)\}$ was to interpolate the coefficients.  Thus, instead of doing these
evaluations, we can express the coefficients of the $\{\hat{f}_M(x,g^\ell x)\}$ directly as linear
combinations of the entries in $M$.  This will lead to the following improper hitting set.

\begin{construction}\label{constr:diag}
	Let $m\ge n\ge r \ge 1$. Let $\K$ be an extension of $\F$ such that $g\in\K$ is of order
	$\ge m$. Let $D_{k,\ell}\in\K^{n\times m}$ be the matrix defined be
	\begin{equation*}
		(D_{k,\ell})_{i,j}=
			\begin{cases}
				g^{\ell j}	&	\text{if } i+j=k\\
				0		&	\text{else}
			\end{cases}
	\end{equation*}
	Define $\mathcal{D}_{r,n,m}\eqdef \{D_{k,\ell}\}_{\genfrac{}{}{0pt}{}{0\le k\le n+m-2}{0\le
	\ell <r}}$, and $\mathcal{D}'_r\eqdef \{D_{k,\ell}\}_{\genfrac{}{}{0pt}{}{0\le k\le
	n+m-2}{0\le \ell< \min(r,k+1,(n+m)-(k+1))}}$.
	\label{construction:diag}
\end{construction}

We now analyze this construction.

\begin{theorem}\label{thm:diagonal set for matrices}
	Let $m\ge n\ge r\ge 1$.  Then $\mathcal{D}_{r,n,m}$, as defined in
	Construction~\ref{constr:diag}, has the following properties:
	\begin{enumerate}
		\item $\mathcal{D}_{r,n,m}$ is an improper hitting set for $n\times m$ matrices of
		rank $\le r$ over $\F$.
		\item $\sspan(\mathcal{D}_{r,n,m})=\sspan(\mathcal{B}_{r,n,m})$ (as vectors in
		$\K^{nm}$)
		\item $|\mathcal{D}_{r,n,m}|=(n+m-1)r$
		\item Each matrix in $\mathcal{D}_{r,n,m}$ is $n$-sparse.
		\item $\mathcal{D}_{r,n,m}$ can be computed in $\poly(m)$ operations, where
		operations (including a successor function in some enumeration of $\K$) over $\K$
		are counted at unit cost.
	\end{enumerate}
	and $\mathcal{D}_{r,n,m}'$, as defined in Construction~\ref{constr:diag}, has the following
	properties:
	\begin{enumerate}
		\item $\mathcal{D}_{r,n,m}'$ is an improper hitting set for $n\times m$ matrices of
		rank $\le r$ over $\F$.
		\item $\mathcal{D}_{r,n,m}'$ is linearly independent (as vectors in $\K^{nm}$) and
		$\sspan(\mathcal{D}_{r,n,m})=\sspan(\mathcal{D}_{r,n,m}')$
		\item $|\mathcal{D}_{r,n,m}'|=(n+m-r)r$
		\item Each matrix in $\mathcal{D}_{r,n,m}'$ is $n$-sparse.
		\item $\mathcal{D}_{r,n,m}'$ can be computed in $\poly(m)$ operations, where
		operations (including a successor function in some enumeration of $\K$) over $\K$
		are counted at unit cost.
	\end{enumerate}
\end{theorem}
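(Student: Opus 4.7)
The plan is to prove both parts by relating $\mathcal{D}_{r,n,m}$ and $\mathcal{D}'_{r,n,m}$ to the hitting set $\mathcal{B}_{r,n,m}$ of the previous subsection, whose hitting-set property is already established in Theorem~\ref{thm:hitting set for matrices}, and then using a diagonal-wise Vandermonde analysis.

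First I would show the span equality $\sspan(\mathcal{D}_{r,n,m}) = \sspan(\mathcal{B}_{r,n,m})$. Direct computation gives $(B_{k,\ell})_{i,j} = \alpha_k^{i+j} g^{\ell j}$, so grouping entries by the diagonal $k' = i+j$ yields
\[
B_{k,\ell} = \sum_{k'=0}^{n+m-2} \alpha_k^{k'} D_{k',\ell}.
\]
For each fixed $\ell$, this is the Vandermonde change-of-basis with matrix $(\alpha_k^{k'})_{k,k'}$, which is invertible since the $n+m-1$ values $\alpha_k$ are distinct. Hence the spans coincide, and $\mathcal{D}_{r,n,m}$ is an (improper) hitting set. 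The size $(n+m-1)r$ and the $n$-sparsity are immediate from the construction ($D_{k,\ell}$ is supported on the $k$-diagonal, which has at most $\min(n,m) = n$ entries), and explicitness is essentially the same bookkeeping as for $\mathcal{B}_{r,n,m}$.

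For $\mathcal{D}'_{r,n,m}$, the central observation is that for each fixed $k$, all matrices $\{D_{k,\ell}\}_\ell$ live in the $s_k$-dimensional subspace of matrices supported on the $k$-diagonal, where $s_k \eqdef \min(k+1, n, n+m-1-k)$. Restricted to this diagonal, $D_{k,\ell}$ is the vector $(g^{\ell j})_j$ with $j$ ranging over the $s_k$ valid column indices. These vectors form a Vandermonde matrix with distinct base values $g^j$ (distinct because $g$ has order $\ge m \ge s_k$), so $D_{k,0},\ldots,D_{k,t-1}$ are linearly independent for any $t\le s_k$ and, when $t \ge \min(r, s_k)$, span the same subspace as the full family $\{D_{k,\ell}\}_{0\le\ell<r}$. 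Choosing $t = \min(r, s_k)$---which, because $r\le n$, equals $\min(r, k+1, n+m-k-1)$, exactly the bound appearing in the definition of $\mathcal{D}'_{r,n,m}$---produces a linearly independent subset spanning the same diagonal subspace. Linear independence across different $k$ is automatic because the supports are disjoint diagonals, so $\mathcal{D}'_{r,n,m}$ is linearly independent and $\sspan(\mathcal{D}'_{r,n,m}) = \sspan(\mathcal{D}_{r,n,m})$; in particular $\mathcal{D}'_{r,n,m}$ is also an (improper) hitting set, and the sparsity/explicitness claims transfer without change.

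The size count reduces to evaluating
\[
\sum_{k=0}^{n+m-2} \min(r, k+1, n+m-k-1).
\]
Splitting into the three ranges $0\le k\le r-2$, $r-1 \le k\le n+m-1-r$, and $n+m-r\le k\le n+m-2$, and exploiting the symmetry of the summand under $k\mapsto n+m-2-k$, this evaluates to $2\binom{r}{2} + r(n+m-2r+1) = r(n+m-r)$, as claimed. I do not anticipate a serious obstacle; the only step requiring care is verifying that the assumption $r\le n$ collapses $\min(r, k+1, n, n+m-1-k)$ to $\min(r, k+1, n+m-k-1)$ so that the index set in the definition of $\mathcal{D}'_{r,n,m}$ matches $\min(r, s_k)$ exactly. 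Everything else is routine bookkeeping on top of the Vandermonde invertibility argument.
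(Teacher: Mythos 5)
Your proposal is correct and follows essentially the same route as the paper: the identity $B_{k,\ell}=\sum_{k'}\alpha_k^{k'}D_{k',\ell}$ with the invertible Vandermonde change of basis (the paper proves the two inclusions separately, via direct expansion and Lagrange interpolation, and explicitly notes that the hitting-set property of $\mathcal{D}_{r,n,m}$ can be read off from the span equality as you do), followed by the same per-diagonal Vandermonde rank argument and disjoint-support observation for $\mathcal{D}'_{r,n,m}$. Your size count by direct summation of $\min(r,k+1,n+m-k-1)$ agrees with the paper's computation $|\mathcal{D}_{r,n,m}|-r(r-1)=(n+m-r)r$.
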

\begin{proof}
	\underline{$|\mathcal{D}_{r,n,m}|=(n+m-1)r$:} This is by definition.

	\underline{Sparsity of $\mathcal{D}_{r,n,m}$:} Each matrix in the hitting set has support in
	some $k$-diagonal, and each diagonal has at most $n$ non-zero entries.

	\underline{$\mathcal{D}_{r,n,m}$ can be computed in $\poly(m)$ operations:}  The details are
	very similar to the proof that $\mathcal{B}_{r,n,m}$ can be computed in $\poly(m)$
	operations, as seen in Theorem~\ref{thm:hitting set for matrices}, so we omit the specifics.

	\sloppy \underline{$\mathcal{D}_{r,n,m}$ is an improper hitting set:} Let $M$ be $n\times m$
	matrix of rank $\le r$ in $\F$. By Theorem~\ref{thm:bivariate poly evaluation}, we see that $M$ is non-zero
	iff one of the polynomials $\{\hat{f}_M(x,g^\ell x)\}_{0\le \ell<r}$ is non-zero.  Recall the
	notation that $\coeff_{x^k}(f)$ denotes the coefficient of $f$ on $x^k$.  Thus,
	$\coeff_{x^{k}}(\hat{f}_M(x,g^\ell x))=\sum_{i+j=k} M_{i,j}g^{\ell j}=\la M,D_{k,\ell}\ra$.  Thus,
	it follows that some $\hat{f}_M(x,g^\ell x)$ is non-zero iff one the inner products $\la
	M,D_{k,\ell}\ra$ is non-zero.  Invoking Theorem~\ref{thm:bivariate poly evaluation}
	completes this claim.

	This can also be seen from the fact
	$\sspan(\mathcal{D}_{r,n,m})=\sspan(\mathcal{B}_{r,n,m})$.  Thus, a for a matrix $M$,
	$M\in\ker\mathcal{D}_{r,n,m}\iff M\in\ker\mathcal{B}_{r,n,m}$.

	\underline{$\sspan(\mathcal{D}_{r,n,m})\supseteq\sspan(\mathcal{B}_{r,n,m})$:} For any $M$
	(not just those of rank $\le r$) we have that $\la M,D_{k,\ell}\ra=\coeff_{x^{k}}(\hat{f}_M(x,g^\ell
	x))$ and $\la M,B_{k,\ell}\ra=\hat{f}_M(\alpha_k,g^\ell\alpha_k)$ and thus $\la
	M,B_{k,\ell}\ra=\sum_{k'=0}^{n+m-2} \alpha_k^{k'}\la M,D_{k',\ell}\ra$.  By taking $M$ for
	each element in some basis, it follows that $B_{k,\ell}=\sum_{k'=0}^{n+m-2}
	\alpha_k^{k'}D_{k',\ell}$

	\underline{$\sspan(\mathcal{D}_{r,n,m})\subseteq\sspan(\mathcal{B}_{r,n,m})$:} Similar to
	the above case, we get that for any $M$, \[\la M,
	D_{k,\ell}\ra=\sum_{k'=0}^{n+m-2}\coeff_{x^{k'}}\left(\prod_{k''\ne
	k}\frac{x-\alpha_{k'}}{\alpha_{k''}-\alpha_{k'}}\right)\la M,B_{k',\ell}\ra\] via Lagrange
	interpolation.  As the coefficients of this linear dependence are independent of $M$ (they
	only depend on the $\alpha_k$), by taking $M$ for each element of some basis it follows that
	the same linear dependence for $D_{k,\ell}$ and $\{B_{k',\ell}\}_{k'}$ exists, giving the
	claim.

	\underline{$\mathcal{D}_{r,n,m}'$ can be computed in $\poly(m)$ operations:} As with
	$\mathcal{D}_{r,n,m}$, these details are omitted.

	\underline{Sparsity of $\mathcal{D}_{r,n,m}'$:} Each matrix in the hitting set has support
	in some $k$-diagonal, and each diagonal has at most $n$ non-zero entries.

	\underline{$\mathcal{D}_{r,n,m}'$ is an improper hitting set:} This follows from showing
	that $\mathcal{D}_{r,n,m}\subseteq\sspan(\mathcal{D}_{r,n,m}')$, as this implies that for a
	matrix $M$, $M\in\ker\mathcal{D}_{r,n,m}\iff M\in\ker\mathcal{D}_{r,n,m}'$. Thus, as
	$\mathcal{D}_{r,n,m}$ is an improper hitting set so is $\mathcal{D}_{r,n,m}'$.

	\underline{$\sspan(\mathcal{D}_{r,n,m})\supseteq\sspan(\mathcal{D}_{r,n,m}')$:} This is
	clear, as $\mathcal{D}_{r,n,m}\supseteq\mathcal{D}_{r,n,m}'$.

	\underline{$\sspan(\mathcal{D}_{r,n,m})\subseteq\sspan(\mathcal{D}_{r,n,m}')$:} Begin by
	observing that $D_{k,\ell}$ is non-zero only on the $k$-diagonal and the $k$-diagonal has
	$\min(k+1,n,(n+m)-(k+1))$ entries.  Further, the $k$-diagonals of the matrices
	$\{D_{k,\ell}\}_{0\le \ell<r}$ form the rows of a $r\times \min(k+1,n,(n+m)-(k+1))$
	Vandermonde matrix.  This Vandermonde matrix is formed by taking powers of $\le n$
	consecutive powers of $g$, which by the order of $g$ are distinct.  It follows that the
	first $\min(r,\min(k+1,n,(n+m)-(k+1)))$ of the $\{D_{k,\ell}\}_{0\le \ell<r}$ form a basis
	for the rest.  As $r\le n$, $\min(r,\min(k+1,n,(n+m)-(k+1)))=\min(r,k+1,(n+m)-(k+1))$, so
	$\{D_{k,\ell}\}_{0\le \ell<\min(r,k+1,(n+m)-(k+1))}$ are a basis for $\{D_{k,\ell}\}_{0\le
	\ell<r}$ (recall that we start indexing from zero).  Ranging over all $k$ shows that the
	claim holds.

	\underline{$\mathcal{D}_{r,n,m}'$ is linearly independent:} Notice that $D_{k,\ell}$ and
	$D_{k',\ell'}$ have disjoint support if $k\ne k'$.  The previous paragraph shows
	$\{D_{k,\ell}\}_{0\le \ell<\min(r,k+1,(n+m)-(k+1))}$ are linearly independent for each
	$k$, and the fact about disjoint support for differing $k$ shows that taking the union over
	$k$ does not introduce any linearly dependencies.

	\underline{$|\mathcal{D}_{r,n,m}'|=(n+m-r)r$:} For $r\le k+1\le (n+m)-r$ we see that
	$r=\min(r,k+1,(n+m)-(k+1))$, so $\mathcal{D}_{r,n,m}'$ offers no savings in this regime.
	For $0\le k<r$, $\mathcal{D}_{r,n,m}'$ takes $r-(k+1)$ fewer matrices then
	$\mathcal{D}_{r,n,m}$.  For $n+m> k+1\ge(n+m)-r$,  $\mathcal{D}_{r,n,m}'$ takes
	$r-((n+m)-(k+1))$ fewer matrices then $\mathcal{D}_{r,n,m}$.  It follows that
	$|\mathcal{D}_{r,n,m}'|=|\mathcal{D}_{r,n,m}|-r(r-1)=(n+m-1)r-r(r-1)=(n+m-r)r$.
\end{proof}

\noindent We sketch another proof of this result in Remark~\ref{rmk:reprovediag}.

The above results also imply that $\rank(\mathcal{B}_{r,n,m})=(n+m-r)r$, which is better than the analysis
given in Theorem~\ref{thm:hitting set for matrices}.  This immediately gives that there are  explicit
$(n+m-r)r$-sized (proper) hitting sets for $n\times m$ matrices of rank $\le r$, as we can (in
$\poly(m)$ steps) find a basis for $\mathcal{B}_{r,n,m}$.  This basis will consist of rank-1 matrices, and
also be the desired hitting set.  However, in the interest of being more explicit, we present the
following construction.

\begin{construction}\label{constr:better hitting for matrices}
	Let $m\ge n\ge r\ge 1$. Let $\K$ be an extension of $\F$ such that $g\in\K$ is of order $\ge
	m$ and $\alpha_0,\ldots,\alpha_{n+m-2}\in\K$ are distinct.  Let $B'_{k,\ell}\in\K^{n\times
	m}$ to be the rank-1 matrix defined by $(B'_{k,\ell})_{i,j}=\alpha_k^i(g^\ell\alpha_k)^j$,
	and let $\mathcal{B}_{r,n,m}'\eqdef\{B_{k,\ell}\}_{0\le \ell<r,0\le k\le (n+m-2)-2\ell}$.
\end{construction}

We now give the analysis for this hitting set.

\begin{theorem}\label{thm:better hitting for matrices}
	Let $m\ge n\ge r\ge 1$.  Then $\mathcal{B}_{r,n,m}'$, as defined in
	Construction~\ref{constr:better hitting for matrices}, has the following properties:
	\begin{enumerate}
		\item $\sspan\mathcal{B}_{r,n,m}'=\sspan\mathcal{B}_{r,n,m}$, where
		$\mathcal{B}_{r,n,m}$ is defined in Construction~\ref{construction:construction of
		hitting set for matrices}.
		\item $\mathcal{B}_{r,n,m}'$ is a hitting set for $n\times m$ matrices of rank $\le
		r$ over $\F$.
		\item $|\mathcal{B}_{r,n,m}'|=(n+m-r)r$
		\item $\mathcal{B}_{r,n,m}'$ is linearly independent (as vectors in $\K^{nm}$) 
		\item $\mathcal{B}_{r,n,m}'$ can be computed in $\poly(m)$ operations, where
		operations (including a successor function in some enumeration of $\K$) over $\K$
		are counted at unit cost.
	\end{enumerate}
\end{theorem}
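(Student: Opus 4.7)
The plan is to reduce essentially everything to the single span equality $\sspan \mathcal{B}'_{r,n,m} = \sspan \mathcal{B}_{r,n,m}$. Since $\mathcal{B}'_{r,n,m} \subseteq \mathcal{B}_{r,n,m}$ by construction, one direction is trivial. Once both directions are in hand, the remaining claims of the theorem follow easily: the hitting set property is inherited from $\mathcal{B}_{r,n,m}$ (Theorem~\ref{thm:hitting set for matrices}) because a matrix is orthogonal to $\mathcal{B}'_{r,n,m}$ iff it is orthogonal to its span iff it is orthogonal to $\mathcal{B}_{r,n,m}$; a direct count yields $|\mathcal{B}'_{r,n,m}| = \sum_{\ell=0}^{r-1}(n+m-1-2\ell) = (n+m-r)r$; linear independence then follows because this count matches $\dim \sspan \mathcal{B}_{r,n,m} = (n+m-r)r$ from Theorem~\ref{thm:diagonal set for matrices}; and $\poly(m)$-time constructibility is inherited from $\mathcal{B}_{r,n,m}$ combined with the trivially checkable index condition $k \le (n+m-2)-2\ell$.

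The heart of the matter is showing $\sspan \mathcal{B}'_{r,n,m} \supseteq \sspan \mathcal{B}_{r,n,m} = \sspan \mathcal{D}_{r,n,m}$. I would prove by induction on $\ell$ that $D_{K,\ell} \in \sspan \mathcal{B}'_{r,n,m}$ for every $\ell \in \{0,\ldots,r-1\}$ and every $K \in \{0,\ldots,n+m-2\}$. The key identity (essentially the change-of-basis computation inside the proof of Theorem~\ref{thm:diagonal set for matrices}) is $B_{k,\ell} = \sum_{K=0}^{n+m-2} \alpha_k^K\, D_{K,\ell}$, which relates the two families of matrices on each $\ell$-slice. For the base case $\ell=0$, all $(n+m-1)$ matrices $B_{k,0}$ lie in $\mathcal{B}'_{r,n,m}$, and the Vandermonde matrix $(\alpha_k^K)_{0\le k,K\le n+m-2}$ is invertible by distinctness of the $\alpha_k$, so every $D_{K,0}$ lies in $\sspan\{B_{k,0}\}_k$.

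For the inductive step, fix $\ell \ge 1$ and assume $D_{K,\ell'} \in \sspan \mathcal{B}'_{r,n,m}$ for all $\ell' < \ell$ and all $K$. Split into two cases on $K$. In Case~1 ($K < \ell$ or $K > n+m-2-\ell$), we have $\min(K+1,(n+m)-(K+1)) \le \ell$, so the Vandermonde relation \emph{along the $K$-diagonal} (as used in Theorem~\ref{thm:diagonal set for matrices}) shows $D_{K,\ell} \in \sspan\{D_{K,\ell'}\}_{\ell'<\min(K+1,(n+m)-(K+1))}$, which by the inductive hypothesis lies in $\sspan \mathcal{B}'_{r,n,m}$. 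In Case~2 ($\ell \le K \le n+m-2-\ell$), split
\[
B_{k,\ell} \;=\; \sum_{K\text{ in Case 1}}\alpha_k^K\, D_{K,\ell} \;+\; \sum_{K=\ell}^{n+m-2-\ell}\alpha_k^K\, D_{K,\ell}.
\]
The first sum is already in $\sspan \mathcal{B}'_{r,n,m}$ by Case~1, and $B_{k,\ell} \in \mathcal{B}'_{r,n,m}$ for exactly $k = 0,\ldots,n+m-2-2\ell$, so the ``interior'' sum $\sum_{K=\ell}^{n+m-2-\ell} \alpha_k^K\, D_{K,\ell}$ lies in $\sspan \mathcal{B}'_{r,n,m}$ for each such $k$. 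The resulting $(n+m-1-2\ell)\times(n+m-1-2\ell)$ coefficient matrix factors as a Vandermonde matrix times $\mathrm{diag}(\alpha_k^\ell)$ and is therefore invertible (after the mild adjustment of picking the $\alpha_k$ to be nonzero, which is possible whenever $|\K| \ge n+m$); inverting extracts each $D_{K,\ell}$ in the interior range as a member of $\sspan \mathcal{B}'_{r,n,m}$.

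The main obstacle is the careful bookkeeping in the inductive step: tracking precisely which $D_{K,\ell}$ are already forced by the prior slices $\ell' < \ell$ through the Vandermonde on each diagonal, and which still require a fresh evaluation from the $B_{k,\ell}$ with $k \le n+m-2-2\ell$. The structural reason the counts match---that exactly $n+m-1-2\ell$ fresh $B_{k,\ell}$ suffice---is that $\hat{f}_M(x,g^\ell x)$, while a priori of degree $n+m-2$, has exactly $2\ell$ of its coefficients pinned down by the slices $\ell' < \ell$ (the $\ell$ lowest-degree and $\ell$ highest-degree coefficients, via the Vandermondes on the short diagonals), leaving $(n+m-1)-2\ell$ free coefficients to interpolate at distinct points. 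The nonzero-$\alpha_k$ issue is the only other subtlety and is easily finessed.
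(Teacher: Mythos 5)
Your proposal is correct, and its core idea is the same as the paper's: the slices $\ell' < \ell$ pin down the $\ell$ lowest-degree and $\ell$ highest-degree coefficients of $\hat{f}_M(x,g^\ell x)$ via Vandermonde systems on the short diagonals, leaving exactly $(n+m-1)-2\ell$ degrees of freedom to be supplied by the $B_{k,\ell}$ with $k\le (n+m-2)-2\ell$. Where you differ is in the mechanics of converting this into the span containment $\mathcal{D}_{r,n,m}\subseteq\sspan\mathcal{B}'_{r,n,m}$. The paper argues functionally: it shows that the inner products $\la M,\mathcal{B}'_{r,n,m}\ra$ \emph{determine} the inner products $\la M,\mathcal{D}_{r,n,m}\ra$ for every $M$ (by subtracting the known coefficients, dividing by $x^{\ell}$, and interpolating the quotient polynomial), and then closes with a duality argument to upgrade ``determination'' to span containment. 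You instead run a direct induction on $\ell$ at the level of the matrices themselves, exhibiting each interior $D_{K,\ell}$ as an explicit linear combination of elements of $\mathcal{B}'_{r,n,m}$ by inverting a square system whose coefficient matrix factors as $\mathrm{diag}(\alpha_k^\ell)$ times a Vandermonde. This buys you a proof that avoids the duality detour and makes the invertibility hypothesis visible: your $\mathrm{diag}(\alpha_k^\ell)$ factor requires the $\alpha_k$ to be nonzero, which you correctly flag and finesse by choosing them so (possible since $|\K|\ge m+1$ is already forced by the order of $g$ when $n\le m$, up to one extra element). Note that the paper's own proof quietly relies on the same thing --- its quotient polynomial $h(x)$ cannot be evaluated at $x=0$ from an oracle for $\hat{f}_M(x,g^{\ell}x)$ --- so this is a shared subtlety rather than a defect of your route. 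The remaining items (size count, hitting-set property via the span equality, linear independence by matching $|\mathcal{B}'_{r,n,m}|$ against $\dim\sspan\mathcal{D}'_{r,n,m}$, and explicitness) are handled exactly as in the paper.
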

\begin{proof}
	\underline{$|\mathcal{B}_{r,n,m}|=(n+m-r)r$:} The size is equal to $\sum_{\ell=0}^{r-1}
	((n+m-1)-2\ell)=(n+m-1)r-2\binom{r}{2}=(n+m-r)r$.

	\underline{$\mathcal{B}_{r,n,m}'$ can be computed in $\poly(m)$ operations:}  The details
	are very similar to the proof that $\mathcal{B}_{r,n,m}$ can be computed in $\poly(m)$
	operations, as seen in Theorem~\ref{thm:hitting set for matrices}, so we omit the specifics.

	\underline{$\mathcal{B}_{r,n,m}'$ is an hitting set:} This follows from showing that
	$\mathcal{B}_{r,n,m}\subseteq\sspan(\mathcal{B}_{r,n,m}')$, as this implies that for a
	matrix $M$, $M\in\ker\mathcal{B}_{r,n,m}\iff M\in\ker\mathcal{B}_{r,n,m}'$.  Thus, as
	$\mathcal{B}_{r,n,m}$ is an hitting set so is $\mathcal{B}_{r,n,m}'$.

	\underline{$\sspan\mathcal{B}_{r,n,m}'\subseteq\sspan\mathcal{B}_{r,n,m}$:} This is clear as
	$\mathcal{B}_{r,n,m}'\subseteq\mathcal{B}_{r,n,m}$.

	\sloppy \underline{$\sspan\mathcal{B}_{r,n,m}'\supseteq\sspan\mathcal{B}_{r,n,m}$:} We will
	actually show $\mathcal{D}_{r,n,m}\subseteq\sspan\mathcal{B}_{r,n,m}'$, which by
	Theorem~\ref{thm:diagonal set for matrices} is sufficient. Let $M$ be any matrix (even of
	rank $>r$).  We will show that the inner-products $\la M,\mathcal{B}_{r,n,m}'\ra$ determine
	the inner-products $\la M,\mathcal{D}_{r,n,m}\ra$.  Then we show that this implies the
	claim.

	Recall that the
	inner-product of a matrix $D\in\mathcal{D}_{r,n,m}$ is simply a coefficient
	$\coeff_{x^k}(\hat{f}_M(x,g^ix))$ for some $0\le k\le n+m-2$ and $0\le i<r$.  So to prove the claim
	we will speak of these coefficients determining other such coefficients.
	
	Now observe that for any $k\in\{0,\ldots,r-1\}$, the coefficients $\coeff_{x^k}(\hat{f}_M(x,x))$,
	$\coeff_{x^k}(\hat{f}_M(x,gx))$, $\ldots,\coeff_{x^k}(\hat{f}_M(x,g^{r-1}x))$ are linear combinations of the
	$k+1\le r$ elements in $\{M_{i,j}\}_{i+j=k}$.  Just as in the analysis of
	$\mathcal{D}_{r,n,m}'$ in Theorem~\ref{thm:diagonal set for matrices}, the first $k+1$ of
	these linear combinations are rows of a Vandermonde matrix over distinct numbers, and thus
	these linear combinations span all vectors.  Thus, it follows that the coefficients
	$\{\coeff_{x^k}\hat{f}_M(x,g^ix)\}_{0\le i <k+1}$ determine the coefficients
	$\{\coeff_{x^k}\hat{f}_M(x,g^ix)\}_{0\le i <r}$.

	Similarly, for any $k\in\{(n+m)-(r+1),\ldots,(n+m)-2\}$ the coefficients
	$\{\coeff_{x^k}\hat{f}_M(x,g^ix)\}_{0\le i <(n+m)-(k+1)}$ determine the coefficients
	$\{\coeff_{x^k}\hat{f}_M(x,g^ix)\}_{0\le i <r}$.  We now use these facts in the following claim.

	\begin{claim}
		The coefficients of $\hat{f}_M(x,g^{k+1}x)$ are determined by the coefficients of
		$\hat{f}_M(x,x),\hat{f}_M(x,gx),\ldots,\hat{f}_M(x,g^{k}x)$ and the evaluations of $\hat{f}_M(x,g^{k+1} x)$ to any
		$(n+m-1)-2(k+1)$ distinct points.
	\end{claim}
	\begin{proof}
		By the above reasoning, the coefficients $\coeff_{x^{k'}}(\hat{f}_M(x,g^{k+1}x))$ with
		$k'\in\{0,\ldots,k\}\cup\{(n+m-2)-k,\ldots,(n+m)-2\}$ are already determined by the
		coefficients given.
		
		Now, consider the polynomial \[h(x)\eqdef\frac{\hat{f}_M(x,g^{k+1}x)-\sum_{k'=0}^k
		\coeff_{x^{k'}}(\hat{f}_M(x,g^{k+1}x))x^{k'}-\sum_{k'=(n+m-2)-k}^{n+m-2}
		\coeff_{x^{k'}}(\hat{f}_M(x,g^{k+1}x))x^{k'}} {x^{k+1}}\] By construction, $h$ of degree
		$\le(n+m-2)-2(k+1)$, and evaluation of $h$ is possible given oracle access to
		$\hat{f}_M(x,g^{k+1}x)$ as the relevant coefficients referenced are already determined.

		Thus, it follows that $h$ is determined by interpolation at any $(n+m-1)-2(k+1)$
		distinct points.  Once $h$ is determined, the above equation determines the as yet
		undetermined coefficients of $\hat{f}_M(x,g^{k+1}x)$.
	\end{proof}

	Thus, to determine all of the coefficients of the polynomials $\{\hat{f}_M(x,g^\ell x)\}_{0\le \ell
	<r}$ we first interpolate $\hat{f}_M(x,x)$ at $n+m-1$ distinct points.  The above claim then shows
	how to interpolate $\hat{f}_M(x,gx)$ using $(n+m-1)-2$ evaluations to $\hat{f}_M(x,gx)$, given access to the
	coefficients of $\hat{f}_M(x,x)$.  Inducting on the above claim shows we can interpolate all of the
	coefficients in $\{\hat{f}_M(x,g^\ell x)\}_{0\le \ell <r}$ from the evaluations
	$\{\hat{f}_M(\alpha_k,g^\ell\alpha_k)\}_{0\le \ell <r, 0\le k\le (n+m-2)-2\ell}$.  Rephrasing this,
	we see that the inner-products $\la M,\mathcal{D}_{r,n,m}\ra$ are determined by the
	inner-products $\la M,\mathcal{B}_{r,n,m}'\ra$.

	Now consider a matrix $B\notin \sspan\mathcal{B}_{r,n,m}'$.  It follows that the dual space
	of $\mathcal{B}_{r,n,m}'$ is strictly larger than the dual space of
	$\mathcal{B}_{r,n,m}'\cup\{B\}$, so that there is a non-zero matrix $M_0$ such that $\la
	M_0,\mathcal{B}_{r,n,m}'\ra=\vec{0}$ but $\la M_0,B\ra\ne 0$. But as $\la 0_{n\times
	m},\mathcal{B}_{r,n,m}'\ra=\vec{0}$ and $\la 0_{n\times m},B\ra= 0$, it follows that the
	inner-product $\la M_0, \mathcal{B}_{r,n,m}'\ra$ does not determine the inner-product $\la
	M_0,B\ra$.  As $\la M,\mathcal{B}_{r,n,m}'\ra$ determines $\la M,\mathcal{D}_{r,n,m}\ra$, it
	must be that $\mathcal{D}_{r,n,m}\subseteq\sspan\mathcal{B}_{r,n,m}'$.

	\underline{$\mathcal{B}_{r,n,m}'$ is linearly independent:} As
	$\sspan(\mathcal{B}_{r,n,m}')=\sspan(\mathcal{D}_{r,n,m}')$,
	$|\mathcal{B}_{r,n,m}'|=|\mathcal{D}_{r,n,m}'|$, and $\mathcal{D}_{r,n,m}'$ is linearly
	independent, it follows that $\mathcal{B}_{r,n,m}'$ is also.
\end{proof}

Thus, we achieve an explicit hitting set of size $(n+m-r)r$.  For $r=n$ we see that this equals
$nm$, matching the naive bound.  For $r\le n-1$, $(n+m-r)r$ is increasing with $r$, so $(n+m-r)r\le
(n+m-(n-1))(n-1)=(m+1)(n-1)=nm+n-m-1< nm$.  Thus, we see that our hitting set is always smaller
than the naive hitting set, for $r<n$.

\section{Identity Testing for Tensors}\label{sec:pit for tensors}

In this section we show how to construct hitting sets for $\llb n\rrb^d$ tensors of arbitrary degree $d$.
We will only discuss tensors of shape $\llb n\rrb^d$ for simplicity.  The proof technique will be to use
the results for $d=2$ as a black-box as a way to induct on $d$.  That is,
Corollary~\ref{cor:bivariate poly evaluation} shows that one can test identity of degree $<n$, rank
$\le r$ bivariate polynomials by testing the identity of $r$ univariate polynomials, each of degree
$< 2n$.  This effectively reduces the $d=2$ case to the $d=1$ case, while increasing the number of
polynomials to test by a factor of $r$.  As degree $<2n$ univariate polynomials can be fully
interpolated cheaply, this shows that this is a viable base case for recursion.

Intuitively, it seems like this variable reduction process should be able to be continued so that a rank $\le r$
$d$-variate polynomial can be identity tested by testing identity of $\approx r^d$ univariate
polynomials each of degree $\approx dn$.  This is indeed possible.  However, we are able to do
better here by using a reduction process that reduces a $d$-variate polynomial to a $d/2$-variate
polynomial while only increasing the number of polynomials to test by a factor of $r$.  Thus, a
$d$-variate polynomial can identity tested by testing $\approx r^{\lg d}$ univariate polynomials,
each of degree $<dn$.  Unfortunately, this set of polynomials will require $\approx (dn)^d$ time to
construct.

The section will be split into two parts.  The first will state the variable reduction theorem that
was mentioned above.  The second part will detail the hitting set arising from this theorem.

\subsection{Variable Reduction}\label{sec:dvariate}

As with the $d=2$ case, will need a variable reduction result in order to construct our hitting set.
We detail this result in this subsection.  We first illustrate some lemmas about variable
reduction.

\begin{lemma}\label{lem:permute}
	Let $f(x_1,\ldots,x_d)$ be a $d$-variate polynomial.  Let $\pi:[d]\to[d]$ be a permutation.
	Then, $f(x_1,\ldots,x_d)=0$ iff $f(x_{\sigma(1)},\ldots,x_{\sigma(d)})=0$.
\end{lemma}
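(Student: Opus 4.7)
The plan is to argue directly from the monomial expansion of $f$, using that substituting $x_i \mapsto x_{\sigma(i)}$ induces a bijection on the set of monomials and hence merely rearranges the coefficients of $f$. First I would write $f$ in the monomial basis as
\[
f(x_1,\ldots,x_d) = \sum_{\vec{\alpha}\in\integers^d} c_\vec{\alpha}\, x_1^{\alpha_1}\cdots x_d^{\alpha_d},
\]
so that $f=0$ as a formal polynomial in $\F[x_1,\ldots,x_d]$ iff $c_\vec{\alpha}=0$ for every $\vec{\alpha}$.

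Next I would apply the substitution $x_i\mapsto x_{\sigma(i)}$ to each monomial and re-index. For a fixed exponent vector $\vec{\alpha}$, we have $x_{\sigma(1)}^{\alpha_1}\cdots x_{\sigma(d)}^{\alpha_d} = x_1^{\alpha_{\sigma^{-1}(1)}}\cdots x_d^{\alpha_{\sigma^{-1}(d)}} = \vec{x}^{\sigma\cdot\vec{\alpha}}$, where $\sigma\cdot\vec{\alpha}$ denotes the exponent vector $(\alpha_{\sigma^{-1}(1)},\ldots,\alpha_{\sigma^{-1}(d)})$. Therefore
\[
f(x_{\sigma(1)},\ldots,x_{\sigma(d)}) = \sum_{\vec{\alpha}} c_\vec{\alpha}\, \vec{x}^{\sigma\cdot\vec{\alpha}} = \sum_{\vec{\beta}} c_{\sigma^{-1}\cdot\vec{\beta}}\, \vec{x}^{\vec{\beta}},
\]
where the last equality reindexes by $\vec{\beta} = \sigma\cdot\vec{\alpha}$, valid because $\vec{\alpha}\mapsto\sigma\cdot\vec{\alpha}$ is a bijection on $\integers^d$.

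From this it follows that the coefficients of $f(x_{\sigma(1)},\ldots,x_{\sigma(d)})$ are precisely a relabeling of the coefficients of $f$, so in particular the multiset of coefficients is unchanged. Thus $f(x_{\sigma(1)},\ldots,x_{\sigma(d)})=0$ iff $c_{\sigma^{-1}\cdot\vec{\beta}}=0$ for all $\vec{\beta}$ iff $c_\vec{\alpha}=0$ for all $\vec{\alpha}$ iff $f=0$, giving the equivalence. There is no real obstacle here; the only subtlety worth flagging is that the statement concerns equality as formal polynomials (equivalently, vanishing of all coefficients), which is what makes the bijection-on-monomials argument go through cleanly without any assumption on $|\F|$.
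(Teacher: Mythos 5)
Your proof is correct and follows essentially the same route as the paper's: both argue that the substitution $x_i\mapsto x_{\sigma(i)}$ acts as a bijection on monomials (equivalently, permutes the exponent vectors), so no cancellation can occur and the set of non-zero coefficients is merely relabeled. Your version just spells out the re-indexing more explicitly.
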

\begin{proof}
	Consider the map $\N^{d}\to\N^{d}$ defined by $(i_1,\ldots,i_{d})\mapsto
	(i_\sigma(1),\ldots,i_{\sigma(d)})$. This is exactly the action on the degrees of monomials
	over the variables $x_1,\ldots,x_d$ when performing the substitution $x_i\mapsto
	x_{\sigma(i)}$.  Note that this map is bijective.

	Thus, when mapping $f(x_1,\ldots,x_d)$ to $f(x_{\sigma(1)},\ldots,x_{\sigma(d)})$ we see
	that there can be no cancellations, as distinct monomials are mapped to distinct monomials.
	Thus, the two polynomials have the same number of non-zero coefficients.  In particular,
	they are either both zero or non-zero.
\end{proof}

The above lemma is most useful in conjunction with the next lemma, which shows a simple $d$-variate
to $(d-1)$-variate reduction.

\begin{lemma}\label{lem:reshape}
	Let $f(x,y,z_1,\ldots,z_d)$ be a $(d+2)$-variate polynomial such that $\deg_{x}(f)<n$. Then
	for any $m\ge n$, $f(x,y,z_1,\ldots,z_d)=0$ iff $f(x,x^m,z_1,\ldots,z_d)=0$.
\end{lemma}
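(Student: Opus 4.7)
The plan is a standard Kronecker-substitution argument. The direction ``$f = 0 \Rightarrow f(x,x^m,z_1,\ldots,z_d)=0$'' is trivial, so all the content lies in the converse.

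For the converse, view $f$ as a polynomial in $x,y$ with coefficients in the ring $R \eqdef \F[z_1,\ldots,z_d]$, writing
\[ f(x,y,z_1,\ldots,z_d) \;=\; \sum_{i,j\ge 0} c_{i,j}(z_1,\ldots,z_d)\, x^i y^j, \]
where by hypothesis $c_{i,j} = 0$ whenever $i \ge n$. Substituting $y = x^m$ yields
\[ f(x,x^m,z_1,\ldots,z_d) \;=\; \sum_{i,j\ge 0} c_{i,j}(z_1,\ldots,z_d)\, x^{i+jm}. \]
The key point is that the map $(i,j) \mapsto i + jm$ is injective on the set $\{(i,j)\in\N^2 : 0 \le i < n\}$: given $k = i + jm$ with $0 \le i < n \le m$, one recovers the pair uniquely via $i = k \bmod m$ and $j = \lfloor k/m \rfloor$. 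Hence distinct monomials $c_{i,j}(z)\, x^i y^j$ contribute to distinct powers of $x$ after the substitution, with no cancellation between them.

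If $f(x,x^m,z_1,\ldots,z_d) = 0$ in $R[x]$, then collecting the coefficient of $x^k$ forces $c_{i_k,j_k}(z_1,\ldots,z_d) = 0$ for the unique pair $(i_k,j_k)$ with $i_k + j_k m = k$ (when such a pair exists), so $c_{i,j} \equiv 0$ for every $(i,j)$, and therefore $f = 0$. The only place the hypothesis $m \ge n$ enters is in guaranteeing the injectivity above; without it, distinct monomials could fold onto the same power of $x$ and cancel. There is no substantive obstacle here; the argument is essentially the correctness proof of Kronecker substitution applied to the first two variables, with the $z_i$'s carried along as scalars.
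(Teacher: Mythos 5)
Your proof is correct and is essentially the paper's own argument: both rest on the injectivity of $(i,j)\mapsto i+jm$ on $\llb n\rrb\times\N$, which guarantees that distinct monomials map to distinct monomials under $y\mapsto x^m$ and hence no cancellation occurs. The only cosmetic difference is that you package the $z$-variables into the coefficient ring $\F[z_1,\ldots,z_d]$ and look at the $x$-degree alone, whereas the paper phrases the same injectivity at the level of full exponent vectors $(i_1,\ldots,i_{d+2})\mapsto(i_1+mi_2,i_3,\ldots,i_{d+2})$.
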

\begin{proof}
	Consider the map $\N^{d+2}\to\N^{d+1}$ defined by $(i_1,i_2,i_3,\ldots,i_{d+2})\mapsto
	(i_1+mi_2,i_3,\ldots,i_{d+2})$. This is exactly the action on the degrees of monomials over
	the variables $x,y,z_1,\ldots,z_d$ when performing the substitution $y\mapsto x^m$.

	Notice that this map is injective when restricted to $\llb n\rrb\times\N^{d+1}$, as $n\le
	m$.  That is, if $i+mj=i'+mj'$ with $(i,j),(i',j')\in\llb n\rrb\times\Z$ then $i\equiv
	i'\mod{m}$ which means $i=i'$, and thus $j=j'$ as well.

	Thus, when mapping $f(x,y,z_1,\ldots,z_d)$ to $f(x,x^m,z_1,\ldots,z_d)$ we see that there
	can be no cancellations, as distinct monomials are mapped to distinct monomials.  Thus, the
	two polynomials have the same number of non-zero coefficients.  In particular, they are
	either both zero or non-zero.
\end{proof}

The above lemmas show that we can ``reshape'' our polynomials, in the sense that we have fewer
variables but larger individual degrees.  To perform our $d$-variate variable reduction, we will
reshape our polynomial into a bivariate polynomial, such that the individual degrees are now
$\approx n^{d/2}$.  We can then apply our bivariate variable reduction to get a univariate
polynomial of degree $\approx n^{d/2}$.  One can then reverse the reshaping, to yield a
$d/2$-variate polynomial, with individual degrees $\approx n$.  One then recurses appropriately.

In order to understand the recursion pattern sketched above, we will introduce the following
function.

\begin{definition}
	Let $n\ge 1$, $b\ge 0$.  Let $0\le k<2^d$.  Define
	\[L_{n,b}(k,i_1,\ldots,i_d)=\sum_{\genfrac{}{}{0pt}{}{1\le j\le d}{\lfloor k/2^{j-1}\rfloor
	\equiv 1 \bmod 2}} i_j(n2^b)^{\lfloor k/2^j\rfloor}\]
\end{definition}

We now observe that it obeys the following properties.

\begin{proposition}
	Let $n\ge 1$, $b\ge 0$, with $0\le k<2^d$.  Then
	\begin{enumerate}
		\item \[L_{n,b}(k,i_1,\ldots,i_d)=
			\begin{cases}
				0										&	\text{if }k=0\\
				i_1(n2^b)^{\lfloor k/2\rfloor}+L_{n,b}(\lfloor k/2\rfloor,i_2,\ldots,i_{d})	&	k\equiv 1\bmod 2\\
				L_{n,b}(\lfloor k/2\rfloor,i_2,\ldots,i_{d})					&	\text{else}
			\end{cases}\]\label{prop:lnbprop:recur}
		\item For $b\ge 1$, $L_{2n,b-1}(k,i_1,\ldots,i_d)=L_{n,b}(k,i_1,\ldots,i_d)$\label{prop:lnbprop:slosh}
		\item $L_{n,b}(k,i_1,\ldots,i_d)\le (n2^b)^{\lfloor k/2\rfloor}\sum_{j\in[d]}i_j$\label{prop:lnbprop:bound}
		\item $L_{n,b}(k,i_1,\ldots,i_d)$ can be computed in time
		$\poly(|n|,b,d,k,|i_1|,\ldots,|i_d|)$, where $|\cdot|$ is the length, in bits, of a
		number.\label{prop:lnbprop:compute}
	\end{enumerate}
	\label{prop:lnbprop}
\end{proposition}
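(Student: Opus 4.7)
The plan is to prove each of the four parts in turn, essentially by unwinding the definition and using the bit-decomposition of $k$ that it encodes. Throughout, I'll exploit that the condition $\lfloor k/2^{j-1}\rfloor \equiv 1 \bmod 2$ is simply ``the $j$-th least significant bit of $k$ equals $1$,'' so $L_{n,b}(k,i_1,\ldots,i_d)$ is a sum indexed by the set of ``on'' bits of $k$.

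For Property~\ref{prop:lnbprop:recur}, I would split the defining sum by isolating the $j=1$ term. If $k=0$, every bit is $0$, the sum is empty, so $L_{n,b}(0,\ldots)=0$. Otherwise, the $j=1$ term is present iff $k$ is odd, in which case it contributes $i_1(n2^b)^{\lfloor k/2\rfloor}$. For the remaining $j\ge 2$ terms, I would set $j' = j-1$ and verify the two substitutions $\lfloor k/2^{j-1}\rfloor = \lfloor \lfloor k/2\rfloor/2^{j'-1}\rfloor$ and $\lfloor k/2^{j}\rfloor = \lfloor \lfloor k/2\rfloor/2^{j'}\rfloor$, both of which follow from the identity $\lfloor \lfloor x\rfloor/m\rfloor = \lfloor x/m\rfloor$ for positive integer $m$. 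After reindexing, the tail sum is exactly $L_{n,b}(\lfloor k/2\rfloor, i_2,\ldots,i_d)$, giving the two non-trivial cases of the recurrence.

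For Property~\ref{prop:lnbprop:slosh}, this is a one-line identity: $(2n)\cdot 2^{b-1} = n\cdot 2^b$, so every factor $(n2^b)^{\lfloor k/2^j\rfloor}$ in the defining sum is unchanged under the substitution, and the index set does not depend on $n$ or $b$ at all.

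For Property~\ref{prop:lnbprop:bound}, each summand has the form $i_j(n2^b)^{\lfloor k/2^j\rfloor}$ with $j\ge 1$, so $\lfloor k/2^j\rfloor \le \lfloor k/2\rfloor$. Assuming the $i_j$ are non-negative integers and using $n2^b\ge 1$ (which holds as $n\ge 1$ and $b\ge 0$), each term is bounded by $i_j(n2^b)^{\lfloor k/2\rfloor}$, and summing over $j$ (enlarging the index set from active bits to all of $[d]$) gives the claim. The intended regime of this definition has non-negative $i_j$, and if necessary this can be stated as a hypothesis.

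For Property~\ref{prop:lnbprop:compute}, one can simply apply the recurrence from Property~\ref{prop:lnbprop:recur} at most $d$ times (since $\lfloor k/2^d\rfloor = 0$ when $k<2^d$), maintaining running totals in polynomial bit-length (as all intermediate values are bounded by Property~\ref{prop:lnbprop:bound}). Alternatively, one computes the defining sum directly by iterating $j$ from $1$ to $d$ and testing each bit.

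The only real subtlety is the reindexing in the recurrence of Property~\ref{prop:lnbprop:recur}, which I would therefore handle most carefully; the other three parts are essentially bookkeeping.
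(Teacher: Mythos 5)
Your proposal is correct and follows essentially the same route as the paper: split off the $j=1$ term and reindex using the identity $\lfloor\lfloor k/2\rfloor/2^{j'}\rfloor=\lfloor k/2^{j'+1}\rfloor$ for the recurrence, observe that the definition depends only on the product $n2^b$ for the second part, and treat the bound and the computation as routine. Your explicit remark that the bound requires the $i_j$ to be non-negative (true in all uses, where $0\le i_j<r$) is a small point the paper leaves implicit.
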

\begin{proof}
	\underline{\eqref{prop:lnbprop:recur}:} We first note that $\lfloor\lfloor
	k/2^j\rfloor/2^{j'}\rfloor=\lfloor k/2^{j+j'}\rfloor$, which is most easily seen by
	observing that these operations bit truncate (on the right) the binary representation of
	$k$.  If $k=0$ then in both formulas $L_{n,b}(k,i_1,\ldots,i_d)=0$. If $k\equiv 1\bmod 2$,
	then
	\begin{align*}
		L_{n,b}(k,i_1,\ldots,i_d)
			&=i_1(n2^b)^{\lfloor k/2\rfloor}+\sum_{\genfrac{}{}{0pt}{}{2\le j\le d}{\lfloor k/2^{j-1}\rfloor \equiv 1 \bmod 2}} i_j(n2^b)^{\lfloor k/2^j\rfloor}\\
			&=i_1(n2^b)^{\lfloor k/2\rfloor}+\sum_{\genfrac{}{}{0pt}{}{1\le j\le d-1}{\lfloor k/2^{j-2}\rfloor \equiv 1 \bmod 2}} i_{j+1}(n2^b)^{\lfloor k/2^{j-1}\rfloor}\\
			&=i_1(n2^b)^{\lfloor k/2\rfloor}+\sum_{\genfrac{}{}{0pt}{}{1\le j\le d-1}{\lfloor \lfloor k/2\rfloor /2^{j-1}\rfloor \equiv 1 \bmod 2}} i_{j+1}(n2^b)^{\lfloor \lfloor k/2\rfloor/2^{j}\rfloor}\\
			&=i_1(n2^b)^{\lfloor k/2\rfloor}+L_{n,b}(\lfloor k/2\rfloor,i_2,\ldots,i_{d})
	\end{align*}
	which is exactly the above recursion.  The case $k\equiv 0\bmod 2$ is analogous.

	\underline{\eqref{prop:lnbprop:slosh}:} The definition of $L_{n,b}$ only depends on $n2^b$.
	Thus, as $2n\cdot 2^{b-1}=n\cdot 2^b$, this is immediate.

	\underline{\eqref{prop:lnbprop:bound}:} This is immediate.

	\underline{\eqref{prop:lnbprop:compute}:} The natural way of computing the formula
	$L_{n,b}(k,i_1,\ldots,i_d)$ is done in the given time bound.
\end{proof}

We will now prove our multi-variate variable reduction theorem.  We prove here the case when the
number of variables is a power of 2, for simplicity. The general case, with some loss, will follow
as a corollary.  The following notation will make the presentation simpler.

\begin{notation}
	Let $f(\la h_1(j),\ldots,h_k(j)\ra_{j=1}^r)$ denote
	\[f(h_1(1),\ldots,h_k(1),h_1(2),\ldots,h_k(2),\ldots,h_1(r),\ldots,h_k(r))\]
\end{notation}

We will use this notation heavily in the following proof.

\begin{theorem}\label{thm:d-variate reduction}
	Let $n\ge 1$, $d\ge 1$ and $b\ge d-1$.  Let $\K$ be an extension of $\F$ such that $g\in\K$
	has order $\ge (n2^b)^{2^{d-1}}$. Let $T:\llb n\rrb^{2^d}\to\F$ be a tensor of rank $\le r$.  Let
	$\hat{f}_T(x_0,\ldots,x_{2^d-1})=\sum_{\ell=1}^r\prod_{i=0}^{2^d-1} p_{i,\ell}(x_i)$, where $\deg
	p_{i,\ell}<n$.

	Then $\hat{f}_T$ is non-zero (over $\F$) iff one of the univariate polynomials in the set
	\[\{\hat{f}_T(g^{L_{n,b}(0,i_1,\ldots,i_d)}x,g^{L_{n,b}(1,i_1,\ldots,i_d)}x,\ldots,g^{L_{n,b}(2^d-1,i_1,\ldots,i_d)}x)\}_{0\le
	i_1,\ldots,i_{d}<r}\] is non-zero (over $\K$).
\end{theorem}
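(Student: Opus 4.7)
The plan is induction on $d$, with Corollary~\ref{cor:bivariate poly evaluation} as both the base case and the engine of each inductive step; the reshape lemma (Lemma~\ref{lem:reshape}) connects the bivariate step on two ``collapsed'' tuples of variables to the advertised multivariate substitution on $\hat{f}_T$, while Proposition~\ref{prop:lnbprop} handles the exponent bookkeeping. For the base case $d=1$, Proposition~\ref{prop:lnbprop}~\eqref{prop:lnbprop:recur} gives $L_{n,b}(0,i_1)=0$ and $L_{n,b}(1,i_1)=i_1$, so the candidate polynomials are exactly $\hat{f}_T(x,g^{i_1}x)$ for $0\le i_1<r$; since $g$ has order $\ge n2^b\ge n$ and $\hat{f}_T$ is a bivariate polynomial of rank $\le r$ with individual degree $<n$, the corollary directly yields the stated equivalence.

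For the inductive step, I will partition the $2^d$ variables by parity of index into $\vec{x}^{\mathrm{even}}=(x_0,x_2,\ldots,x_{2^d-2})$ and $\vec{x}^{\mathrm{odd}}=(x_1,x_3,\ldots,x_{2^d-1})$ and collapse each half into a single variable via the substitution $x_{2k}\mapsto s^{(n2^b)^k}$, $x_{2k+1}\mapsto t^{(n2^b)^k}$ for $0\le k<2^{d-1}$. Since each $p_{j,\ell}$ has degree $<n\le n2^b$, iterating Lemma~\ref{lem:reshape} shows this substitution preserves non-zeroness, and its image is a bivariate polynomial in $(s,t)$ of rank $\le r$ with individual degrees strictly less than $(n2^b)^{2^{d-1}}$. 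Applying Corollary~\ref{cor:bivariate poly evaluation} (valid since $g$ has order $\ge (n2^b)^{2^{d-1}}$) reduces non-zeroness to the existence of some $i_1\in\{0,\ldots,r-1\}$ for which the further substitution $t=g^{i_1}s$ gives a non-zero univariate polynomial in $s$.

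I then un-reshape by introducing variables $z_k$ in place of $s^{(n2^b)^k}$, obtaining the rank-$\le r$ polynomial
\[ G^{(i_1)}(z_0,\ldots,z_{2^{d-1}-1}) = \sum_{\ell=1}^r \prod_{k=0}^{2^{d-1}-1} p_{2k,\ell}(z_k)\, p_{2k+1,\ell}(g^{i_1(n2^b)^k}z_k), \]
which has individual degrees $<2n\le n2^b$ (using $b\ge d-1\ge 1$), so a second application of Lemma~\ref{lem:reshape} shows that $G^{(i_1)}$ is non-zero iff its collapse in $s$ is. I then apply the inductive hypothesis to $G^{(i_1)}$ with parameters $n'=2n$, $b'=b-1$, $d'=d-1$; the invariant $n'2^{b'}=n2^b$ and the condition $b'\ge d'-1$ both hold, and Proposition~\ref{prop:lnbprop}~\eqref{prop:lnbprop:slosh} identifies $L_{2n,b-1}=L_{n,b}$. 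The hypothesis thus gives $G^{(i_1)}\ne 0$ iff some $(i_2,\ldots,i_d)$ makes the substitution $z_k\mapsto g^{L_{n,b}(k,i_2,\ldots,i_d)}u$ produce a non-zero univariate in $u$; unpacking back through the un-reshape, the exponent on $x_{2k}$ is $L_{n,b}(k,i_2,\ldots,i_d)$ and on $x_{2k+1}$ is $i_1(n2^b)^k+L_{n,b}(k,i_2,\ldots,i_d)$, which Proposition~\ref{prop:lnbprop}~\eqref{prop:lnbprop:recur} recognizes as $L_{n,b}(2k,i_1,\ldots,i_d)$ and $L_{n,b}(2k+1,i_1,\ldots,i_d)$ respectively, closing the induction.

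The main obstacle is the degree bookkeeping through the reshape/un-reshape cycle: each pass of bivariate reduction roughly doubles the individual degrees of the factor polynomials (from $<n$ to $<2n$), while the reshape base $n2^b$ must always dominate the current degree bound. The ``slosh'' identity $L_{2n,b-1}=L_{n,b}$ is precisely the algebraic statement that keeps $n2^b$ constant as we decrement $b$ and double $n$, and the margin $b\ge d-1$ is exactly what survives $d-1$ such decrements; without either, the exponents produced by the induction would fail to match the recursive definition of $L_{n,b}$, and the two applications of Lemma~\ref{lem:reshape} (into and out of bivariate form) would no longer be injective on monomials.
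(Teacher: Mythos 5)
Your proposal is correct and follows essentially the same route as the paper's proof: reshape the $2^d$ variables by parity into a bivariate polynomial with base $n2^b$, apply Corollary~\ref{cor:bivariate poly evaluation}, un-reshape via Lemma~\ref{lem:reshape}, and recurse with parameters $(2n,b-1,d-1)$, using Proposition~\ref{prop:lnbprop} parts \eqref{prop:lnbprop:recur} and \eqref{prop:lnbprop:slosh} to match the exponents $L_{n,b}(2k,\cdot)$ and $L_{n,b}(2k+1,\cdot)$. The degree bookkeeping ($2n\le n2^b$ and order of $g$ at least $(n2^b)^{2^{d-1}}$) matches the paper's as well.
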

\begin{proof}
	The proof will be by induction.  For simplicity we write $f$ for $\hat{f}_T$.

	\underline{$d=1$:}  Note that $L_{n,b}(0,i_1)=0$ and $L_{n,b}(1,i_1)=i_1$, so this case
	follows from Corollary~\ref{cor:bivariate poly evaluation}.

	\underline{$d>1$:}  We will first reshape $f$ into a bivariate polynomial, and appeal to the
	$d=1$ case.  We will then un-reshape this polynomial into a $2^{d-1}$-variate polynomial,
	and then appeal to induction.

	By induction on Lemma~\ref{lem:reshape} (and appealing to Lemma~\ref{lem:permute} to see
	that Lemma~\ref{lem:reshape} applies to any two variables, not just the first) we see that
	\begin{equation}\label{eqn:reshape}f(\la x_j\ra_{j=0}^{2^{d}-1})=0 \text{ iff } f(\la
	x_0^{(n2^b)^j},x_1^{(n2^b)^j}\ra_{j=0}^{2^{d-1}-1})=0\end{equation} (where so far we only
	need that $b\ge 1$).

	We split the rest of the proof into two claims.  The first claim shows how we can, using the
	bivariate case, test identity of the right-hand-side of Equation~\eqref{eqn:reshape} by
	testing identity of a set of $r$ polynomials, each of $2^{d-1}$ variables.  The second claim
	shows how testing identity of these new polynomials can be reduced to testing identity of
	univariate polynomials, where we use the induction hypothesis.

	\begin{claim}
		\[f(\la x_0^{(n2^b)^j},x_1^{(n2^b)^j}\ra_{j=0}^{2^{d-1}-1})=0\]
		iff
		\[\{f(\la x_j,g^{i_1(n2^b)^j}x_j\ra_{j=0}^{2^{d-1}-1})\}_{0\le i_1<r}=0\]
	\end{claim}
	\begin{proof}
		First observe that
		\begin{align*}
			f'(x_0,x_1)
				&\eqdef f(\la x_0^{(n2^b)^j},x_1^{(n2^b)^j}\ra_{j=0}^{2^{d-1}-1})\\
				&= f(x_0,x_1,x_0^{n2^b},x_1^{n2^b},x_0^{(n2^b)^2},x_1^{(n2^b)^2},\ldots,x_0^{(n2^b)^{2^{d-1}-1}},x_1^{(n2^b)^{2^{d-1}-1}})\\
				&=\sum_{\ell=1}^r \left(\prod_{j=0}^{2^{d-1}-1} p_{2j,\ell}(x_0^{(n2^b)^j})\right)\left(\prod_{j=0}^{2^{d-1}-1} p_{2j+1,\ell}(x_1^{(n2^b)^j})\right)
		\end{align*}
		so we can apply Corollary~\ref{cor:bivariate poly evaluation} to see that
		$f'(x_0,x_1)=0$ iff $\{f'(x_0,g^{i_1}x_0)\}_{0\le i_1<r}=0$, which, when expanded,
		is equivalent to \[\{f(\la
		x_0^{(n2^b)^j},g^{i_1(n2^b)^j}x_0^{(n2^b)^j}\ra_{j=0}^{2^{d-1}-1})\}_{0\le
		i_1<r}=0\] using that the order of $g$ is $\ge (n2^b)^{2^{d-1}}>
		\deg_{x_0}f',\deg_{x_1}f'$. Using that $2^b\ge 2^{d-1}\ge2$, we can undue the
		variable substitutions $x_j\mapsto x_0^{(n2^b)^j}$.  That is, applying
		Lemma~\ref{lem:reshape} in reverse, we see that the above set of polynomials is zero
		iff \[\{f(\la x_j,g^{i_1(n2^b)^j}x_j\ra_{j=0}^{2^{d-1}-1})\}_{0\le i_1<r}=0\] which
		is exactly the claim.
	\end{proof}

	\begin{claim}
		\[f(\la x_j,g^{i_1(n2^b)^j}x_j\ra_{j=0}^{2^{d-1}-1})=0\]
		iff
		\[\{f(\la g^{L_{n,b}(j,i_1,\ldots,i_d)}x\ra_{j=0}^{2^d-1})\}_{0\le i_2,\ldots,i_{d}<r}=0\]
	\end{claim}
	\begin{proof}
		First observe that
		\begin{align*}
			f'(x_0,x_1,\ldots,x_{2^{d-1}-1})
			&\eqdef f(\la x_j,g^{i_1(n2^b)^j}x_j\ra_{j=0}^{2^{d-1}-1})\\
			&=\sum_{\ell=1}^r\prod_{j=0}^{2^{d-1}-1}p_{2j,\ell}(x_{2j})\cdot p_{2j+1,\ell}(g^{i_1(n2^b)^j}x_{2j})
		\end{align*}
		so $f'$ is $2^{d-1}$-variate, having individual degrees $<2n$.  Thus, applying
		induction to the theorem for the $2^{d-1}$-variate case (and using $b-1$ instead of
		$b$, noticing that $b-1\ge (d-1)-1$ also holds), we get that
		$f'(x_0,x_1,\ldots,x_{2^{d-1}-1})=0$ iff \[\{f'(\la
		g^{L_{2n,b-1}(j,i_2,\ldots,i_{d})}x\ra_{j=0}^{2^{d-1}-1})\}_{0\le
		i_2,\ldots,i_{d}<r}=0\] or in terms of $f$, \[\{f(\la
		g^{L_{2n,b-1}(j,i_2,\ldots,i_{d})}x,g^{i_1(n2^b)^j+L_{2n,b-1}(j,i_2,\ldots,i_{d})}x\ra_{j=0}^{2^{d-1}-1})\}_{0\le
		i_2,\ldots,i_{d}<r}=0\] where we have used that the order of $g\ge
		(n2^b)^{2^{d-1}}\ge (2n\cdot 2^{b-1})^{2^{(d-1)-1}}$. Invoking
		Proposition~\ref{prop:lnbprop}.\eqref{prop:lnbprop:slosh} and
		Proposition~\ref{prop:lnbprop}.\eqref{prop:lnbprop:recur} we see that the above
		polynomials being zero is equivalent to \[\{f(\la
		g^{L_{n,b}(2j,i_1,\ldots,i_{d})}x,g^{L_{n,b}(2j+1,i_1,\ldots,i_{d})}x\ra_{j=0}^{2^{d-1}-1})\}_{0\le
		i_2,\ldots,i_{d}<r}=0\] and reindexing, this is equivalent to \[\{f(\la
		g^{L_{n,b}(j,i_1,\ldots,i_{d})}x\ra_{j=0}^{2^{d}-1})\}_{0\le i_2,\ldots,i_{d}<r}=0\]
		which is the claim.
	\end{proof}

	Chaining together Equation~\ref{eqn:reshape} and the above two claims, yields the theorem.
\end{proof}

\begin{remark}
	\label{remark:sqrt trick}
	Let $D=2^d$. In the above proof we use a recursion scheme that reduces to the problem when
	$D\to 2$ and $D\to D/2$.  This gives rise to the recursion $T(D)\le T(2)+T(D/2)$, where
	$T(D)$ is the minimum number such that a $D$-variate rank $\le r$ polynomial can be identity
	tested using $r^{T(D)}$ univariate polynomials. There is also the recursion $S(D)\le
	r(Dn)^{D/2}+S(D/2)$, where $S(D)$ is the maximum degree of $g$ seen in this reduction to the
	univariate case.

	One can do slightly better than this scheme by using the ``square root trick'', where we
	break up the $D$-variate case into two copies of the $\sqrt{D}$-variate case.  This yields
	the recursions $T(D)\le 2T(\sqrt{D})$ and $S(D)\le r(Dn)^{\sqrt{D}}\cdot
	S(\sqrt{D})+S(\sqrt{D})$.  This yields the same solution to $T$, but has now that
	$S(D)=\O(r(Dn)^{\O(\sqrt{D})})$ instead of $r(Dn)^{D/2}$.  While this is an improvement, it
	is somewhat mild.

	Similarly, one can give other recursion schemes that minimize $S$ (so it is $\poly(n,D,R)$),
	but at the cost of making $T(D)\approx D$.
\end{remark}

\subsection{The Hitting Set for Tensors}\label{sec:hittensor}

In this subsection we use the variable reduction theorem of the last subsection to construct hitting
sets for tensors.  First, recall our notion of a hitting set for tensors from
Section~\ref{sec:prelim}, as well as the definitions of the polynomial $f_T$ and $\hat{f}_T$
associated with $T$.  As $\coeff_{x_1^{i_1}\cdots x_d^{i_d}}(\hat{f}_T)=T(i_1,\ldots,i_d)$ we see
that $T=0$ iff $\hat{f}_T=0$.  Theorem~\ref{thm:d-variate reduction} shows that $\hat{f}_T=0$ iff a
set of univariate polynomials are all zero.  Thus, to test if $T$ is zero we can interpolate each of
these polynomials.  As these polynomials are defined via $\hat{f}_T$, these interpolations can be
realized as inner-products with $T$.  This will yield our hitting set, which we now make formal.

\begin{construction}\label{construction:construction of hitting set for tensors}
	Let $n,r\ge 1$ and $d\ge 2$. Let $\K$ be an extension of $\F$ such that $g\in\K$ is of order
	$\ge (2dn)^d$ and $\alpha_1,\ldots,\alpha_{dn}\in\K$ are distinct.  Let
	$B_{k,\ell_1,\ldots,\ell_{\lceil \lg d\rceil}}:\llb n\rrb^d\to\K$ to be the rank-1 tensor
	defined by \[B_{k,\ell_1,\ldots,\ell_{\lceil \lg d\rceil}}(i_1,\ldots,i_d)\eqdef
	\prod_{j=1}^d (g^{L_{n,\lceil\lg d\rceil}(j,\ell_1,\ldots,\ell_{\lceil \lg
	d\rceil})}\alpha_k)^{i_j}\] and let
	$\mathcal{B}_{d,n,r}\eqdef\{B_{k,\ell_1,\ldots,\ell_{\lceil \lg d\rceil}}\}_{0\le
	\ell_1,\ldots,\ell_{\lceil \lg d\rceil}<r,1\le k\le dn}$.
\end{construction}

We now give the analysis for this hitting set.

\begin{theorem}\label{thm:hitting set for tensors}
	Let $n,r\ge 1$ and $d\ge 2$.  Then $\mathcal{B}_{d,n,r}$, as defined in
	Construction~\ref{construction:construction of hitting set for tensors}, has the following
	properties:
	\begin{enumerate}
		\item $\mathcal{B}_{d,n,r}$ is a hitting set for $\llb n\rrb^d$ tensors of rank $\le
		r$ over $\F$.
		\item $|\mathcal{B}_{d,n,r}|=dnr^{\lceil \lg d\rceil}$
		\item $\mathcal{B}_{d,n,r}$ can be computed in $\poly((2dn)^d,r^{\lceil \lg
		d\rceil})$ operations, where operations (including a successor function in some
		enumeration of $\K$) over $\K$ are counted at unit cost.
	\end{enumerate}
\end{theorem}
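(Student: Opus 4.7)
The plan is to mimic the structure of the matrix hitting set analysis (Theorem~\ref{thm:hitting set for matrices}), using Theorem~\ref{thm:d-variate reduction} as the variable-reduction engine instead of Theorem~\ref{thm:bivariate poly evaluation}. The size bound and the explicitness bound will be essentially bookkeeping; the real content is the hitting-set property.

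First I would handle the easy items. By construction, $|\mathcal{B}_{d,n,r}| = dn \cdot r^{\lceil \lg d\rceil}$ since $k$ ranges over $dn$ values and each $\ell_s$ over $r$ values, giving the claimed size. For explicitness, enumerate $\K$ to find a $g$ of order $\ge (2dn)^d$ (this exists in a polynomial-sized extension of $\F$ since $x^{(2dn)^d}-1$ has at most $(2dn)^d$ roots) and $dn$ distinct $\alpha_k$'s. Each entry of each tensor is a product of $d$ powers of elements of $\K$, where each exponent $L_{n,\lceil \lg d\rceil}(j,\vec\ell)$ is computable in polynomial time by Proposition~\ref{prop:lnbprop}.\eqref{prop:lnbprop:compute} and bounded by $(2dn)^d$ by Proposition~\ref{prop:lnbprop}.\eqref{prop:lnbprop:bound}. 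Since each tensor has $n^d$ entries and there are $dn\cdot r^{\lceil \lg d\rceil}$ of them, the total running time is $\poly((2dn)^d, r^{\lceil \lg d\rceil})$, as required.

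The main content is showing $\mathcal{B}_{d,n,r}$ is a hitting set. Let $T:\llb n\rrb^d\to\F$ have rank $\le r$, and set $D=2^{\lceil \lg d\rceil}$ and $b=\lceil \lg d\rceil$. View $T$ as a tensor on $\llb n\rrb^D$ by padding with $D-d$ trivial coordinates; this does not increase the rank, and $\hat{f}_T$ becomes a $D$-variate polynomial not depending on the last $D-d$ variables. By choice of the order of $g$ we have $\text{ord}(g)\ge (n2^b)^{2^{b-1}} \ge (2dn)^d$, and $b\ge b-1$ trivially, so Theorem~\ref{thm:d-variate reduction} applies. It yields: $\hat{f}_T=0$ iff each of the $r^b = r^{\lceil \lg d\rceil}$ univariate polynomials $\hat{f}_T(g^{L_{n,b}(0,\vec\ell)}x,\ldots,g^{L_{n,b}(D-1,\vec\ell)}x)$ is zero (where the substitutions for the dummy variables are inconsequential). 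Each such univariate polynomial has degree $\le d(n-1)<dn$ since $\hat{f}_T$ has individual degree $<n$ in each of its $d$ relevant variables.

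Next I would translate these univariate evaluations into inner products against rank-one tensors. By definition of $\hat{f}_T$ and the inner product, the evaluation at $x=\alpha_k$ of the $\vec\ell$-th univariate polynomial equals exactly $\la T, B_{k,\vec\ell}\ra$ (after reindexing the coordinates of $L_{n,b}$ to align $j\in\{1,\dots,d\}$ with the nonzero coordinates in the theorem, using $L_{n,b}(0,\cdot)=0$ from Proposition~\ref{prop:lnbprop}.\eqref{prop:lnbprop:recur}). Since each univariate polynomial has degree $<dn$ and we evaluate it at the $dn$ distinct points $\alpha_1,\ldots,\alpha_{dn}$, the evaluations determine the polynomial, so the polynomial is nonzero iff some evaluation is nonzero. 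Combining this with the reduction above, $T\ne 0$ iff $\la T,B_{k,\vec\ell}\ra\ne 0$ for some $(k,\vec\ell)$, which is exactly the hitting-set property. The subtlest step here is the padding-and-reindexing alignment between the $d$-variate construction and the $2^{\lceil \lg d\rceil}$-variate statement of Theorem~\ref{thm:d-variate reduction}; once that indexing is fixed, the rest is a direct application.
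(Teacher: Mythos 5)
Your proof is correct and follows essentially the same route as the paper's: pad $\hat{f}_T$ to $2^{\lceil\lg d\rceil}$ variables, apply Theorem~\ref{thm:d-variate reduction} with $b=\lceil\lg d\rceil$, and interpolate each degree-$\le d(n-1)$ univariate polynomial at the $dn$ points $\alpha_k$, with each evaluation realized as $\la T,B_{k,\ell_1,\ldots,\ell_{\lceil\lg d\rceil}}\ra$. The only blemish is the reversed inequality chain for the order of $g$ (it should read $\mathrm{ord}(g)\ge(2dn)^d\ge(n2^b)^{2^{\lceil\lg d\rceil-1}}$), which is a typo rather than a gap.
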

\begin{proof}
	\underline{$|\mathcal{B}_{d,n,r}|=dnr^{\lceil \lg d\rceil}$:} This is by definition.

	\underline{$\mathcal{B}_{d,n,r}$ can be computed in $\poly((2dn)^d,r^{\lceil \lg d\rceil})$
	operations:}  We assume here an enumeration of elements in $\K$ such that the successor in
	this enumeration can be computed at unit cost.  We also will assume testing whether an
	element is zero, as well as the field elements, are done at unit cost.

	First observe that there are at most $(2dn)^d$ solutions to $x^{(2dn)^d}-1$ over $\K$, so if
	we enumerate $(2dn)^d+1$ elements of $\K$, they we can find a $g\in\K$ with order $\ge
	(2dn)^d$.  This is in $\poly((2dn)^d)$ operations. Similarly, the enumeration will give us
	$dn$ distinct elements which yield the desired $\alpha_k$.

	By Proposition~\ref{prop:lnbprop}, $L_{n,\lceil\lg d\rceil}(j,\ell_1,\ldots,\ell_{\lceil \lg
	d\rceil})$ can be computed in $\poly(d,n,r)$ steps, and this number is $\le (2dn)^d$, so
	computing $g^{L_{n,\lceil\lg d\rceil}(j,\ell_1,\ldots,\ell_{\lceil \lg d\rceil})}$ will take
	at most $\poly((2dn)^d,r)$ operations.  Computing the powers of $\alpha_k$ will take
	$\poly(d,r)$ time.  Thus, each $B_{k,\ell_1,\ldots,\ell_{\lceil \lg d\rceil}}$ can be done
	in $\poly((2dn)^d,r^{\lceil \lg d\rceil})$ steps.  As there are $\poly(dnr^{\lceil \lg
	d\rceil})$ of them, all of $\mathcal{B}_{r,n,m}$ can be computed in $\poly((2dn)^d,r^{\lceil
	\lg d\rceil})$ operations.

	\underline{$\mathcal{B}_{d,n,r}$ is a hitting set:} By construction $\mathcal{B}_{d,n,r}$ is
	a set of rank-1 tensors, so it remains to show that it hits each low-rank tensor.  Consider
	any $T:\llb n\rrb^d\to\F$ of rank $\le r$.	We now apply Theorem~\ref{thm:d-variate
	reduction} to $\hat{f}_T$, where we consider $\hat{f}_T$ as a $2^{\lceil \lg
	d\rceil}$-variate polynomial of rank $\le r$ (by padding $\hat{f}_T$ with dummy variables),
	individual degrees $<n$, and taking $b=\lceil \lg d\rceil$.  This shows that $\hat{f}_T=0$
	iff \[\{\hat{f}_T(g^{L_{n,\lceil \lg d\rceil}(0,\ell_1,\ldots,\ell_{\lceil \lg
	d\rceil})}x,g^{L_{n,{\lceil \lg d\rceil}}(1,\ell_1,\ldots,\ell_{\lceil \lg
	d\rceil})}x,\ldots,g^{L_{n,{\lceil \lg d\rceil}}(d-1,\ell_1,\ldots,\ell_{\lceil \lg
	d\rceil})}x)\}_{0\le \ell_1,\ldots,\ell_{\lceil \lg d\rceil}<r}=0\] (over $\K$). Each of the
	above univariate polynomials has degree $\le d(n-1)$, so interpolating them at $dn\ge
	d(n-1)+1$ points will completely determine them.  In particular, the above polynomials are
	zero iff all the evaluations at any $dn$ are zero.

	Now we observe, just as in the matrix case, that evaluating the $(\ell_1,\ldots,\ell_{\lceil
	\lg d\rceil})$-th polynomial in the above set at the point $\alpha_k$ is exactly the same as
	the inner product $\la T, B_{k,\ell_1,\ldots,\ell_{\lceil \lg d\rceil}}\ra$.  Thus, $T=0$ iff
	$\hat{f}_T=0$ iff all of these inner-products is zero.  This exactly means that
	$\mathcal{B}_{d,n,r}$ is a hitting set.
\end{proof}

We remark that this hitting set is of quasi-polynomial size as a rank $\le r$ tensor $T:\llb
n\rrb^d\to F$ can be represented using $dnr$ field elements.  However, its construction time is
exponential in $d$.  We leave it as an open question as to whether the construction time can be made
to match (up to polynomial factors) the size of the hitting set.

\subsection{Identity Testing for Tensors over Small Fields}\label{sec:pit for tensors over small fields}

Thus far we have assumed the existence of an element $g\in\K$ of large order.  In doing so, all of
our hitting sets are tensors over the field $\K$ instead of the base field $\F$.  While this is a
common assumption when the polynomials of interest are of high degree, the polynomials arising from
$\llb n\rrb^d$ tensors on $dn$ variables are of degree $\le d$, so hitting sets still exist for when
$\F$ is $\O(d)$ sized (as seen in Lemma~\ref{lem:szhit}). In this section, we explore this question
and  show how to transform hitting sets over $\K$ to hitting sets over $\F$, with some loss.
Combining this with the above results, we construct explicit hitting sets over any $\F$.

We first detail a field simulation result that produces improper hitting sets.

\begin{proposition}\label{prop:simfieldimproper}
	Let $\K$ be an extension of $\F$, with $k=\dim_\F\K$. For $\ell\in\llb k\rrb$, let
	$\varphi_\ell:\K\to\F^k$ denote the $k$ projection maps to the standard basis coordinates of
	$\K$.

	Let $\mathcal{H}\subseteq \K^{\llb n\rrb^d}$ be an improper hitting-set for $\llb n\rrb^d$
	tensors of rank $\le r$.  For $H\in\mathcal{H}$ define $\tilde{H}_\ell$ by
	\[(\tilde{H}_\ell)_{i_1,\ldots,i_d}=\varphi_\ell(H_{i_1,\ldots,i_d})\] and define
	\[\tilde{\mathcal{H}}=\{\tilde{H}_{\ell}\}_{H\in\mathcal{H},\ell\in\llb k\rrb}\]
	Then
	\begin{enumerate}
		\item If all tensors in $\mathcal{H}$ are $s$-sparse, then so are all tensors in
		$\tilde{\mathcal{H}}$.\label{prop:simfieldimproper:sparse}
		\item $|\tilde{\mathcal{H}}|=k\cdot |\mathcal{H}|$.\label{prop:simfieldimproper:size}
		\item $\tilde{\mathcal{H}}$ is an improper hitting set for $\llb n\rrb^d$ tensors of
		rank $\le r$.\label{prop:simfieldimproper:hit}
	\end{enumerate}
\end{proposition}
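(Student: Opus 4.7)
The plan is to verify the three claims in order, where the first two are essentially bookkeeping and the third is the substantive content.

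For sparsity (claim~\ref{prop:simfieldimproper:sparse}), I would note that $\varphi_\ell$ is $\F$-linear with $\varphi_\ell(0)=0$, so the support of $\tilde{H}_\ell$ is contained in the support of $H$; hence if $|\supp(H)|\le s$, then $|\supp(\tilde{H}_\ell)|\le s$. For the size (claim~\ref{prop:simfieldimproper:size}), there are $k$ choices of $\ell$ for each $H\in\mathcal{H}$, and distinct $H$'s produce disjoint families; I would briefly remark why the $\tilde{H}_\ell$ across $(H,\ell)$ are all distinct (or, if repetitions occur, we keep $\tilde{\mathcal{H}}$ as a multiset for the purpose of the bound, which only makes the claim stronger since hitting is preserved).

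The key step is claim~\ref{prop:simfieldimproper:hit}. Let $\{e_0,\ldots,e_{k-1}\}$ denote the standard $\F$-basis of $\K$ used to define the $\varphi_\ell$, so that every $\alpha\in\K$ satisfies $\alpha=\sum_{\ell\in\llb k\rrb}\varphi_\ell(\alpha)\,e_\ell$ with $\varphi_\ell(\alpha)\in\F$. Fix any non-zero $T:\llb n\rrb^d\to\F$ of rank $\le r$; since $\rank_\K(T)\le\rank_\F(T)\le r$, the hitting-set property of $\mathcal{H}$ produces some $H\in\mathcal{H}$ with $\la T,H\ra\ne 0$ in $\K$. Expanding entrywise,
\begin{align*}
\la T,H\ra
&=\sum_{i_1,\ldots,i_d} T_{i_1,\ldots,i_d}\, H_{i_1,\ldots,i_d}
=\sum_{i_1,\ldots,i_d} T_{i_1,\ldots,i_d}\sum_{\ell\in\llb k\rrb}\varphi_\ell(H_{i_1,\ldots,i_d})\,e_\ell\\
&=\sum_{\ell\in\llb k\rrb} e_\ell\cdot\la T,\tilde{H}_\ell\ra\;,
\end{align*}
where the inner product $\la T,\tilde{H}_\ell\ra$ is computed in $\F$ because both tensors have $\F$-valued entries. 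Since $\{e_\ell\}$ is $\F$-linearly independent and $\la T,H\ra\ne 0$, at least one coefficient $\la T,\tilde{H}_\ell\ra\in\F$ must be non-zero. Conversely, if $T=0$ then trivially every $\la T,\tilde{H}_\ell\ra=0$. This establishes the hitting-set property of $\tilde{\mathcal{H}}$ over $\F$.

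I expect no real obstacle here; the only subtlety worth flagging is the rank comparison $\rank_\K(T)\le\rank_\F(T)$ needed to invoke the hypothesis on $\mathcal{H}$, and the fact that the $\varphi_\ell$ are $\F$-linear (so they pull out of $\F$-linear combinations), which is what lets the single inner product in $\K$ split cleanly as an $\F$-linear combination of $k$ inner products in $\F$. The ``improperness'' of the resulting hitting set is unavoidable from this construction, since even if $H$ is rank-one over $\K$, the tensors $\tilde{H}_\ell$ will typically not be rank-one over $\F$; a separate argument (given elsewhere in the paper) is needed to handle the rank-1 case.
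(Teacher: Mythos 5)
Your proof is correct and follows essentially the same route as the paper's: sparsity and size are by construction, and the hitting property comes from writing $H=\sum_{\ell}\tilde{H}_\ell\,e_\ell$ over the standard basis so that $\la T,H\ra=\sum_\ell e_\ell\la T,\tilde{H}_\ell\ra$, forcing some $\F$-coordinate inner product to be non-zero. The extra remarks (on $\rank_\K(T)\le\rank_\F(T)$ and on possible repetitions in $\tilde{\mathcal{H}}$) are harmless refinements of the same argument.
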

\begin{proof}
	\underline{\eqref{prop:simfieldimproper:sparse}:} If $H_{i_1,\ldots,i_d}=0$ then it follows
	that $(\tilde{H}_\ell)_{i_1,\ldots,i_d}=0$ for all $\ell$.

	\underline{\eqref{prop:simfieldimproper:size}:} This is by construction.

	\underline{\eqref{prop:simfieldimproper:hit}:}  Let $\alpha_0,\ldots,\alpha_{k-1}$ be the
	standard basis for $\K$ as a $\F$-vector-space.  Then it follows that $H=\sum_{\ell\in\llb
	k\rrb} H_\ell\ \alpha_\ell$.

	Consider some tensor $T:\llb n\rrb^d\to\F$ of rank $\le r$.  Then we know that there is some
	$H\in\mathcal{H}$ such that $\la T,H\ra\ne 0$. It follows that there must be some $\ell$
	with $\la T,H_\ell\ra\ne 0$.
\end{proof}

We now apply this to our hitting set results.

\begin{corollary}\label{cor:hitsmallmatriximproper}
	Let $m\ge n\ge r\ge 1$.  Over any field $\F$, there is an $\poly(m)$-explicit improper
	hitting set for $n\times m$ matrices of rank $\le r$, of size $\O(rm\lg m)$.  Further, each
	matrix in the hitting set is $\O(n)$-sparse.
\end{corollary}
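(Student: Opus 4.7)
The plan is to combine the improper hitting set of Theorem~\ref{thm:diagonal set for matrices} (over a suitable extension field) with the field simulation of Proposition~\ref{prop:simfieldimproper}. The two constructions are tailored for exactly this: the first yields a small, sparse improper hitting set assuming access to an element $g$ of large multiplicative order in an extension $\K$, while the second explains the cost (a multiplicative $\dim_\F \K$ blow-up) of simulating such an extension over the base field.

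First I would dispose of the field issue. Given an arbitrary $\F$ and an integer $m$, I need an extension $\K/\F$ of dimension $k = \O(\lg m)$ containing an element $g$ of multiplicative order $\ge m$. If $\F$ is infinite (e.g.\ $\R$, $\Q$, or any characteristic-zero field) one can already find an element of infinite order inside $\F$, so $\K = \F$ and $k = 1$. If $\F = \F_q$ is finite, pick the smallest $k$ with $q^k \ge m+1$, which satisfies $k = \lceil \log_q(m+1)\rceil \le \lceil \lg(m+1)\rceil$; then $\K = \F_{q^k}$, and any generator of $\K^\times$ has order $q^k - 1 \ge m$. An element of sufficient order can be found in $\poly(m)$ time by trying successive elements of $\K$ and computing their orders, exactly as was done inside the proof of Theorem~\ref{thm:diagonal set for matrices}.

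Next, instantiate Theorem~\ref{thm:diagonal set for matrices} with this $\K$ and $g$ to obtain the improper hitting set $\mathcal{D}_{r,n,m}'\subseteq \K^{n\times m}$ for $n\times m$ matrices of rank $\le r$ over $\F$. By that theorem, $|\mathcal{D}_{r,n,m}'| = (n+m-r)r = \O(rm)$, each element is $n$-sparse, and the whole set is constructible in $\poly(m)$ field operations (in $\K$, which cost $\poly(k) = \poly(\lg m)$ operations in $\F$). Finally, apply Proposition~\ref{prop:simfieldimproper} to $\mathcal{D}_{r,n,m}'$: this produces an improper hitting set $\widetilde{\mathcal{D}}\subseteq \F^{n\times m}$ of size $k\cdot|\mathcal{D}_{r,n,m}'| = \O(rm\lg m)$, preserving the $n$-sparsity of each matrix by Proposition~\ref{prop:simfieldimproper}.\eqref{prop:simfieldimproper:sparse}, and the construction remains $\poly(m)$-explicit since computing the coordinate projections $\varphi_\ell$ is a linear map on $\K$ that costs $\poly(k)$ operations in $\F$ per entry.

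The only subtle point is making the field-extension step fully explicit and uniform in $\F$: we need a concrete representation of $\F_{q^k}$ (e.g.\ $\F_q[x]/(p(x))$ for an irreducible $p$ of degree $k$) in which arithmetic is in $\poly(k,\lg q)$ time, and a deterministic way to locate an element of order $\ge m$. This is standard (brute-force enumeration of irreducible polynomials of degree $\le \lceil\lg(m+1)\rceil$ and then of field elements both fit within $\poly(m)$ time), but it is the only place where the log factor enters, and it is the reason the explicit bound becomes $\O(rm\lg m)$ rather than the $(n+m-r)r$ achievable over sufficiently large fields.
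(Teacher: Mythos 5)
Your proof is correct and essentially identical to the paper's: instantiate Theorem~\ref{thm:diagonal set for matrices} over an extension $\K$ with $\dim_\F\K=\O(\lg m)$ containing an element of order $\ge m$ (found, together with an irreducible polynomial representing $\K$, by brute force in $\poly(m)$ time), then apply Proposition~\ref{prop:simfieldimproper} to pull the set down to $\F$, which multiplies the size by $\dim_\F\K$ and preserves the $n$-sparsity. The only small imprecision is the claim that any infinite field contains an element of infinite multiplicative order (false for, e.g., $\overline{\F_p}$); the relevant dichotomy, and the one the paper uses, is simply whether $\F$ already contains an element of order $\ge m$.
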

\begin{proof}
	If $\F$ has an element of order $\ge m$, then Theorem~\ref{thm:diagonal set for matrices}
	suffices.

	If not, let $\K$ be an extension field of $\F$ such that $\dim_\F\K=\Theta(\lg m)$, and thus
	there is an element of order $\ge m$ in $\K$.  Such an extension can be explicitly described
	by an irreducible polynomial over $\F$ of degree $\Theta(\lg m)$, which can found in
	$\poly(m)$ time, in which time we can also find $g$.  Using Theorem~\ref{thm:diagonal set
	for matrices} to get an $n$-sparse (improper) hitting-set over $\K$ for these $\F$-matrices,
	and applying Proposition~\ref{prop:simfieldimproper} yields the result.
\end{proof}

\begin{corollary}\label{cor:hitsmalltensorimproper}
	Let $n,r\ge1$, $d\ge 2$.  Over any field $\F$, there is an $\poly((2nd)^d,r^{\O(\lg
	d)})$-explicit improper hitting set for $\llb n\rrb^d$-tensors of rank $\le r$, of size
	$\O(dnr^{\O(\lg d)}\cdot (d\lg2dn))$.
\end{corollary}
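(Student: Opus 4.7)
The plan is to mirror the proof of Corollary~\ref{cor:hitsmallmatriximproper}, but instantiating the pipeline with Theorem~\ref{thm:hitting set for tensors} instead of Theorem~\ref{thm:diagonal set for matrices}. First, I would split on whether $\F$ is ``large enough.'' If $\F$ already contains an element $g$ of order $\ge (2dn)^d$, then Theorem~\ref{thm:hitting set for tensors} applied directly over $\F$ yields a (proper) hitting set of size $dnr^{\lceil \lg d\rceil}$, which is stronger than the claim, and the factor $d\lg 2dn$ is unnecessary.

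If $\F$ is too small, I would pass to an extension field $\K$ of $\F$ with $\dim_\F\K = k = \Theta(d \lg 2dn)$, chosen so that $|\K| \ge (2dn)^d$ and hence $\K^\times$ contains an element $g$ of the required multiplicative order. Such a $\K$ can be described explicitly by producing an irreducible polynomial over $\F$ of degree $k$, which can be done in $\poly(|\F|^k) = \poly((2nd)^d)$ time; an element $g$ of the desired order can be located in the same time budget by enumerating elements of $\K$ and testing, exactly as in the proof of Theorem~\ref{thm:hitting set for tensors}. Then Theorem~\ref{thm:hitting set for tensors} applied over $\K$ produces a (proper) hitting set $\mathcal{B}_{d,n,r} \subseteq \K^{\llb n\rrb^d}$ for $\llb n\rrb^d$-tensors of rank $\le r$ (over $\F$), of size $dnr^{\lceil \lg d\rceil}$, constructible in $\poly((2dn)^d, r^{\lceil \lg d\rceil})$ operations in $\K$, each of which costs $\poly(k) = \poly(d\lg 2dn)$ operations in $\F$.

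Next, I would apply Proposition~\ref{prop:simfieldimproper} to $\mathcal{B}_{d,n,r}$, which produces an improper hitting set $\tilde{\mathcal{B}}_{d,n,r} \subseteq \F^{\llb n\rrb^d}$ for $\llb n\rrb^d$-tensors of rank $\le r$ over $\F$. By Proposition~\ref{prop:simfieldimproper}.\eqref{prop:simfieldimproper:size} the size multiplies by $k = \Theta(d \lg 2dn)$, giving total size $\O(dnr^{\lceil \lg d\rceil} \cdot (d\lg 2dn))$ as required. The explicitness follows because each tensor $\tilde{H}_\ell$ is obtained from the corresponding tensor $H \in \mathcal{B}_{d,n,r}$ by applying the coordinate projection maps $\varphi_\ell : \K \to \F^k$ entrywise, which is a $\poly(k)$-cost operation per entry.

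There is no real obstacle here; the only mild care points are (i) ensuring the extension degree $k = \Theta(d \lg 2dn)$ is large enough that $|\K| \ge (2dn)^d$ so Theorem~\ref{thm:hitting set for tensors} applies, and (ii) bookkeeping the total running time, where the $\poly((2dn)^d, r^{\lceil \lg d\rceil})$ operations over $\K$ get multiplied by the $\poly(d\lg 2dn)$ cost of simulating each $\K$-operation over $\F$, but this stays within the stated $\poly((2dn)^d, r^{\O(\lg d)})$ explicitness bound.
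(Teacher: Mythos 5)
Your proposal is correct and follows essentially the same route as the paper: case on whether $\F$ already contains an element of order $\ge (2nd)^d$, otherwise pass to an extension $\K$ of degree $\Theta(d\lg 2nd)$, apply Theorem~\ref{thm:hitting set for tensors} over $\K$, and then apply Proposition~\ref{prop:simfieldimproper} to convert the $\K$-tensors to $\F$-tensors, picking up the extra $k = \Theta(d\lg 2nd)$ size factor. The bookkeeping you note about simulating $\K$-operations over $\F$ is the same implicit step the paper takes.
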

\begin{proof}
	If $\F$ has an element of order $\ge (2nd)^d$, then Theorem~\ref{thm:hitting set for
	tensors} suffices.

	If not, let $\K$ be an extension field of $\F$ such that $\dim_\F\K=\Theta(d\lg (2nd))$, and
	thus there is an element of order $\ge (2nd)^d$ in $\K$.  Such an extension can be
	explicitly described by an irreducible polynomial over $\F$ of degree $\Theta(d\lg 2nd)$,
	which can found in $\poly((2nd)^d)$ time, in which time we can also find $g$.  Using
	Theorem~\ref{thm:hitting set for tensors} to get a hitting-set over $\K$ for these
	$\F$-matrices, and applying Proposition~\ref{prop:simfieldimproper} yields the result.
\end{proof}

The above results only yield improper hitting sets.  We now show how to preserve the rank-1 property
of the original hitting set, and thus get proper hitting sets over small fields. To do,
we first recall a standard fact in algebra showing that $\K$ is isomorphic to a subring of
$\F$-matrices.

\begin{lemma}
	Let $\K$ be an extension of $\F$, and let $k=\dim_\mathbb{F}\mathbb{K}<\infty$ so that
	$\K=\F^k$ as vector spaces.  For any $\alpha\in\K$ define the linear map
	$\mu_\alpha:\F^k\to\F^k$ given by the multiplication map $x\mapsto \alpha x$.  Let
	$M_\alpha\in\F^{k\times k}$ be the associated matrix.  Then the map
	$M_{(\cdot)}:\K\to\F^{k\times k}$ is an isomorphism as $\F$-algebras.
	\label{lem:multmap}
\end{lemma}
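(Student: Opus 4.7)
The plan is to verify the three defining properties of an $\F$-algebra isomorphism onto its image: $\F$-linearity, multiplicativity (including preservation of the multiplicative identity), and injectivity. I would begin by fixing an $\F$-basis $\alpha_0, \ldots, \alpha_{k-1}$ of $\K$, which provides the identification $\K \cong \F^k$ as $\F$-vector spaces that underlies the definition of $\mu_\alpha$ and hence of $M_\alpha$.

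To check that $M_{(\cdot)}$ is $\F$-linear, I would observe that for $a, b \in \F$ and $\alpha, \beta \in \K$, the map $\mu_{a\alpha + b\beta}: x \mapsto (a\alpha + b\beta) x$ equals $a\mu_\alpha + b\mu_\beta$ by distributivity in $\K$, and the matrix of a linear combination of linear maps is the same linear combination of their matrices. For multiplicativity, I would use that $\mu_{\alpha\beta}(x) = (\alpha\beta)x = \alpha(\beta x) = \mu_\alpha(\mu_\beta(x))$ (here the key fact is the associativity and commutativity of multiplication in $\K$), so $\mu_{\alpha\beta} = \mu_\alpha \circ \mu_\beta$; hence $M_{\alpha\beta} = M_\alpha M_\beta$. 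Finally, $\mu_1$ is the identity map on $\K$, so $M_1 = I_k$.

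For injectivity, I would argue that if $M_\alpha = 0$ then $\mu_\alpha$ is the zero map, and in particular $\alpha = \alpha \cdot 1 = \mu_\alpha(1) = 0$; equivalently, since $\K$ is a field, every nonzero $\alpha$ has an inverse, so $\mu_\alpha$ has inverse $\mu_{\alpha^{-1}}$ and is nonzero. Thus $M_{(\cdot)}$ is an injective $\F$-algebra homomorphism, i.e.\ an isomorphism onto its image inside $\F^{k \times k}$ (which is the stated conclusion, since $\F^{k\times k}$ is non-commutative for $k \ge 2$ and cannot itself be isomorphic to the field $\K$). There is no real obstacle here --- every step is a direct translation of the field axioms for $\K$ into matrix language through the chosen basis; the only minor subtlety is interpreting ``isomorphism as $\F$-algebras'' as the standard convention of being an isomorphism onto its image.
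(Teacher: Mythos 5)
Your proposal is correct and follows essentially the same route as the paper: verify the additive/$\F$-linear and multiplicative homomorphism properties by translating the field axioms of $\K$ through the action of $\mu_\alpha$ on elements (the paper evaluates on basis vectors, you work directly with the linear maps, which is the same computation), and establish injectivity via invertibility of $M_\alpha$ for $\alpha\ne 0$ (your alternative $\alpha=\mu_\alpha(1)$ is a slightly more direct version of the same point). Your reading of ``isomorphism'' as isomorphism onto the image also matches the paper's closing remark.
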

\begin{proof}
	The map is clearly well-defined.  To see the additive homomorphism, note that as
	$(\alpha+\beta)\gamma=\alpha\gamma+\beta\gamma$ for any $\alpha,\beta,\gamma\in\K$, it
	follows that $M_{\alpha+\beta}\cdot \vec{\gamma}=M_{\alpha}\gamma+M_{\beta}\vec{\gamma}$ for
	any $\vec{\gamma}\in\F^k=\K$ (where we abuse notation by writing $\gamma$ to denote an
	element in $\K$ as well as its representation as a vector in $\F^k$).  Taking $\gamma$ for
	each vector in some basis shows that $M_{\alpha+\beta}=M_{\alpha}+M_{\beta}$.

	Similarly, to see the multiplicative homomorphism note that for any
	$\alpha,\beta,\gamma\in\K$ we have that $(\alpha\beta)\gamma=\alpha(\beta\gamma)$.  Thus it
	must be that $M_\alpha M_\beta \gamma=\alpha\beta\gamma=M_{\alpha\beta}\gamma$.  Again,
	taking $\gamma$ over each vector in a basis determines a linear operator.  Thus it must be
	that $M_\alpha M_\beta=M_{\alpha\beta}$.

	Noting that for $\alpha\in\F$ we have that $M_\alpha=\alpha I_k$ we then gain $\F$-linearity
	of the map.

	If $\alpha\ne 0$ then $M_\alpha\cdot M_{\alpha^{-1}}=M_1=I_k$, so $M_\alpha$ is invertible.
	Thus, if $M_\alpha=M_\beta$ then $M_{\alpha-\beta}=0_k$, which implies that $\alpha-\beta=0$
	(as else $M_{\alpha-\beta}$ would be invertible) and thus $\alpha=\beta$.  This implies the
	map is injective.

	As a map is surjective onto its image by definition, this establishes the $\F$-algebra
	homomorphism.
\end{proof}

We now show how to use this alternate representation of $\K$ as a way to simulate hitting sets
defined over $\K$ by hitting sets defined over $\F$.

\begin{proposition}\label{prop:simfield}
	Let $\K$ be an extension of $\F$, with $k=\dim_\F\K$. Let $\mathcal{H}\subseteq \K^{\llb
	n\rrb^d}$ be a hitting-set for $\llb n\rrb^d$ tensors of rank $\le r$. For
	$H=\otimes_{j=1}^d\vec{v}_j\in\K^{\llb n\rrb ^d}$ define
	$\tilde{\vec{v}}_{j,\ell_0,\ldots,\ell_d}\in\F^n$ by
	\[(\tilde{\vec{v}}_{j,\ell_0,\ldots,\ell_d})_i=(M_{(\vec{v}_j)_i})_{\ell_{j-1},\ell_j}\] where
	$M_{(\cdot)}:\K\to\F^{k\times k}$ is the isomorphism of Lemma~\ref{lem:multmap} and define
	\[\tilde{H}_{\ell_0,\ldots,\ell_d}=\bigotimes_{j=1}^d\tilde{\vec{v}}_{j,\ell_0,\ldots,\ell_d}\]
	and define
	\[\tilde{\mathcal{H}}=\{\tilde{H}_{\ell_0,\ldots,\ell_{d-1},0}\}_{H\in\mathcal{H},0\le
	\ell_0,\ldots,\ell_{d-1}<k}\] Then
	\begin{enumerate}
		\item $\tilde{\mathcal{H}}$ is a set of rank-1 $\F$-tensors of shape $\llb
		n\rrb^d$.\label{prop:simfield:rank1}
		\item $|\tilde{\mathcal{H}}|=k^d\cdot |\mathcal{H}|$.\label{prop:simfield:size}
		\item $\tilde{\mathcal{H}}$ is a hitting set for $\llb n\rrb^d$ tensors of rank $\le
		r$.\label{prop:simfield:hit}
	\end{enumerate}
\end{proposition}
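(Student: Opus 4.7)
The plan is to verify the three claims in order; claims \eqref{prop:simfield:rank1} and \eqref{prop:simfield:size} are essentially bookkeeping, and the content is in claim \eqref{prop:simfield:hit}. For \eqref{prop:simfield:rank1}, each $\tilde H_{\ell_0,\ldots,\ell_d}$ is by definition an $\F$-tensor product of $\F$-vectors $\tilde{\vec v}_{j,\ell_0,\ldots,\ell_d}$, hence rank-1 over $\F$. For \eqref{prop:simfield:size}, $\tilde{\mathcal H}$ is indexed by pairs $(H,(\ell_0,\ldots,\ell_{d-1})) \in \mathcal H \times \llb k\rrb^d$, giving the claimed count (one should either assume these are all distinct as labelled, or simply interpret $|\tilde{\mathcal H}|$ as counted with multiplicity, as is customary for hitting sets).

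For claim \eqref{prop:simfield:hit}, the strategy is to produce a telescoping identity that pushes the inner product $\la T,H\ra \in \K$ down to a sum of inner products $\la T,\tilde H_{\ell_0,\ldots,\ell_d}\ra \in \F$. Fix a rank-1 tensor $H = \bigotimes_{j=1}^d \vec v_j \in \mathcal H$ and a tensor $T:\llb n\rrb^d\to\F$ of rank $\le r$. Applying the algebra-homomorphism property of Lemma~\ref{lem:multmap} to the product of scalars $\prod_{j=1}^d (\vec v_j)_{i_j} \in \K$, and then expanding the matrix product $M_{(\vec v_1)_{i_1}}\cdots M_{(\vec v_d)_{i_d}}$ entrywise, we get for any $\ell_0,\ell_d \in \llb k\rrb$ the identity
\[
\bigl(M_{\prod_{j=1}^d (\vec v_j)_{i_j}}\bigr)_{\ell_0,\ell_d}
= \sum_{\ell_1,\ldots,\ell_{d-1}} \prod_{j=1}^d \bigl(M_{(\vec v_j)_{i_j}}\bigr)_{\ell_{j-1},\ell_j}
= \sum_{\ell_1,\ldots,\ell_{d-1}} \prod_{j=1}^d (\tilde{\vec v}_{j,\ell_0,\ldots,\ell_d})_{i_j}.
\]
Multiplying by $T(i_1,\ldots,i_d)$ and summing over all $(i_1,\ldots,i_d)\in\llb n\rrb^d$, and using $\F$-linearity of $M_{(\cdot)}$, produces the key identity
\[
(M_{\la T,H\ra})_{\ell_0,\ell_d} = \sum_{\ell_1,\ldots,\ell_{d-1}} \la T,\tilde H_{\ell_0,\ldots,\ell_d}\ra.
\]

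To close the argument, specialise to $\ell_d = 0$. Since $\mathcal H$ is a hitting set over $\K$ for rank $\le r$ tensors, and $T$ is non-zero, there exists some $H \in \mathcal H$ with $\la T,H\ra \ne 0$. Because $M_{(\cdot)}$ is an injective $\F$-algebra homomorphism, $M_{\la T,H\ra}$ is an invertible $k\times k$ matrix over $\F$ (its inverse being $M_{\la T,H\ra^{-1}}$), and hence has no zero column; in particular its $0$-th column is non-zero. Thus there is some $\ell_0$ with $(M_{\la T,H\ra})_{\ell_0,0} \ne 0$, and by the displayed identity the sum $\sum_{\ell_1,\ldots,\ell_{d-1}} \la T,\tilde H_{\ell_0,\ldots,\ell_{d-1},0}\ra$ is non-zero, so at least one summand $\la T,\tilde H_{\ell_0,\ldots,\ell_{d-1},0}\ra$ is non-zero, exhibiting a tensor in $\tilde{\mathcal H}$ that hits $T$. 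The main obstacle is keeping the chain of indices $\ell_0,\ell_1,\ldots,\ell_d$ straight in the telescoped matrix product, and observing that detecting non-vanishing of $\la T,H\ra$ requires only a single column of $M_{\la T,H\ra}$ — this is precisely what justifies fixing $\ell_d=0$ in the definition of $\tilde{\mathcal H}$ and thereby saving a factor of $k$ in its size.
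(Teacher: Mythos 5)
Your proposal is correct and follows essentially the same route as the paper: the same telescoping identity $(M_{\la T,H\ra})_{\ell_0,\ell_d}=\sum_{\ell_1,\ldots,\ell_{d-1}}\la T,\tilde H_{\ell_0,\ldots,\ell_d}\ra$ obtained by expanding the matrix product entrywise, followed by the observation that a single column of $M_{\la T,H\ra}$ suffices to detect non-vanishing. The only cosmetic difference is that you justify the non-vanishing of the $0$-th column via invertibility of $M_{\la T,H\ra}$, whereas the paper notes $M_\alpha\vec{e}_0=\alpha\gamma_0\ne 0$ directly; both are fine.
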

\begin{proof}
	\underline{\eqref{prop:simfield:rank1}:} This is by construction.

	\underline{\eqref{prop:simfield:size}:} This is by construction.

	\underline{\eqref{prop:simfield:hit}:}  Consider some tensor $T:\llb n\rrb^d\to\F$ of rank
	$\le r$.  Then we know that there is some $H\in\mathcal{H}$ with $H=\otimes_{j=1}^d
	\vec{v}_j$, such that $\la T,H\ra\ne 0$. Then we see that (we now abuse notation, by writing
	$\mu$ now to denote the map $M_{(\cdot)}$)
	\begin{align*}
		\mu(\la T,H\ra)_{\ell_0,\ell_d}
			&=\mu\left(\sum_{i_1,\ldots,i_d\in\llb n\rrb} T(i_1,\ldots,i_d) \prod_{j=1}^d (\vec{v}_j)_{i_j}\right)_{\ell_0,\ell_d}\\
			&=\sum_{i_1,\ldots,i_d\in\llb n\rrb} T(i_1,\ldots,i_d) \left(\prod_{j=1}^d \mu\left((\vec{v}_j)_{i_j}\right)\right)_{\ell_0,\ell_d}\\
			\intertext{fully expanding the matrix multiplication of $d$ matrices, each $k\times k$,}
			&=\sum_{i_1,\ldots,i_d\in\llb n\rrb} T(i_1,\ldots,i_d) \sum_{\ell_1,\ell_1,\ldots,\ell_{d-1}\in\llb k\rrb}\prod_{j=1}^d\mu\left((\vec{v}_j)_{i_j}\right)_{\ell_{j-1},\ell_j}\\
			&=\sum_{\ell_1,\ell_1,\ldots,\ell_{d-1}\in\llb k\rrb}\sum_{i_1,\ldots,i_d\in\llb n\rrb} T(i_1,\ldots,i_d) \prod_{j=1}^d\mu\left((\vec{v}_j)_{i_j}\right)_{\ell_{j-1},\ell_j}\\
			&=\sum_{\ell_1,\ell_1,\ldots,\ell_{d-1}\in\llb k\rrb}\la T,\tilde{H}_{\ell_0,\ldots,\ell_d}\ra
	\end{align*}
	So it follows that if $\mu(\la T,H\ra)_{\ell_0,\ell_d}\ne 0$ then there is some
	$\ell_1,\ldots,\ell_{d-1}\in\llb k\rrb$ such that $\la
	T,\tilde{H}_{\ell_0,\ldots,\ell_d}\ra\ne 0$.

	Let $\gamma_0$ denote the element in $\K$ corresponding to $\vec{e}_0\in\F^k$ (the standard
	basis vector with a 1\ in the zero position).  Note that $\gamma_0\ne 0$.  Then it follows
	that for any $\alpha\in\K$ that $M_\alpha\vec{e}_0=M_\alpha\gamma_0=\alpha\gamma_0$ (where
	we abuse notation by writing $\alpha\gamma_0$ to denote an element in $\K$ as well as the
	vector representing $\alpha\gamma_0$ in $\F^k$).  Thus, $\alpha$ is fully recoverable from
	$M_\alpha\vec{e}_0$, and in particular, $\alpha= 0$ iff $M_\alpha\vec{e}_0=0$.

	Thus, to test if $\la T,H\ra=0$ (over $\K$) it is enough to test if $\mu(\la
	T,H\ra)_{\ell_0,0}=0$ (over $\F$) for all $\ell_0\in\llb k\rrb$.  Combining this with the
	above we see that $\la T,\mathcal{H}\ra=\vec{0}$ (over $\K$) iff $\la
	T,\tilde{\mathcal{H}}\ra=\vec{0}$.
\end{proof}

We now use the above result to get hitting sets for matrices and tensors over any field.

\begin{corollary}\label{cor:hitsmallmatrix}
	Let $m\ge n\ge r\ge 1$.  Over any field $\F$, there is an $\poly(m)$-explicit hitting set
	for $n\times m$ matrices of rank $\le r$, of size $\O(rm\lg^2m)$.
\end{corollary}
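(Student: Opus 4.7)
The plan is to mimic the proof of Corollary~\ref{cor:hitsmallmatriximproper}, but invoke the rank-preserving field simulation (Proposition~\ref{prop:simfield}) in place of the improper one (Proposition~\ref{prop:simfieldimproper}). First I would split on whether $\F$ already contains an element of order $\ge m$. In that case, Theorem~\ref{thm:better hitting for matrices} (or Theorem~\ref{thm:hitting set for matrices}) directly gives a proper hitting set of rank-1 matrices of size $(n+m-r)r = \O(mr)$ over $\F$, which is more than enough and requires no simulation. So the substantive case is when $\F$ has no such element.

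In the substantive case, I would first build an extension $\K/\F$ with $k \eqdef \dim_\F \K = \Theta(\lg m)$, so that $|\K| \ge m$ and hence $\K^\times$ contains an element $g$ of order $\ge m$; finding such an extension amounts to producing an irreducible polynomial over $\F$ of degree $\Theta(\lg m)$, which is standard in $\poly(m)$ time (the same step used in Corollary~\ref{cor:hitsmallmatriximproper}). Applying Theorem~\ref{thm:better hitting for matrices} over $\K$ then yields a proper hitting set $\mathcal{H} \subseteq \K^{n \times m}$ for rank-$\le r$ matrices, consisting of rank-1 $\K$-matrices, of size $(n+m-r)r = \O(mr)$, computable in $\poly(m)$ $\K$-operations (hence in $\poly(m)$ $\F$-operations, since each $\K$-operation is $\poly(\lg m)$ many $\F$-operations).

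Next I would invoke Proposition~\ref{prop:simfield} with $d = 2$ to simulate $\mathcal{H}$ over $\F$. That proposition converts each rank-1 $\K$-matrix $H = \vec{v}_1 \otimes \vec{v}_2$ into $k^2$ rank-1 $\F$-matrices $\tilde{H}_{\ell_0,\ell_1,0}$ (parameterized by $\ell_0, \ell_1 \in \llb k \rrb$, with the last index pinned to $0$), while preserving the hitting property for $\F$-matrices of rank $\le r$. The resulting set $\tilde{\mathcal{H}}$ is a proper hitting set over $\F$ of size $k^2 \cdot |\mathcal{H}| = \Theta(\lg^2 m) \cdot \O(mr) = \O(rm \lg^2 m)$, exactly as claimed. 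Explicitness follows because each entry $(\tilde{\vec{v}}_{j,\ell_0,\ell_1})_i = (M_{(\vec{v}_j)_i})_{\ell_{j-1},\ell_j}$ requires only the regular-representation matrix $M_\alpha$ of an element $\alpha \in \K$, which is computable in $\poly(\lg m)$ $\F$-operations once the irreducible polynomial defining $\K$ is fixed.

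The main ``obstacle'' is really bookkeeping rather than a conceptual hurdle: one must verify that the $k^d = k^2$ blowup inherent in Proposition~\ref{prop:simfield} (as opposed to the $k$ blowup in Proposition~\ref{prop:simfieldimproper}) is the only price paid for preserving the rank-1 structure, and that this precisely accounts for the extra $\lg m$ factor relative to Corollary~\ref{cor:hitsmallmatriximproper}. Once that is checked, the three pieces---the large-field case handled by Theorem~\ref{thm:better hitting for matrices}, the small-field extension of degree $\Theta(\lg m)$, and the field simulation of Proposition~\ref{prop:simfield}---combine immediately to give the stated $\poly(m)$-explicit proper hitting set of size $\O(rm \lg^2 m)$.
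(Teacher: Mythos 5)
Your proof is correct and follows essentially the same route as the paper: split on whether $\F$ already has an element of order $\ge m$, otherwise construct an extension $\K$ of degree $\Theta(\lg m)$, take the rank-1 hitting set over $\K$, and apply Proposition~\ref{prop:simfield} (with $d=2$, incurring the $k^2 = \Theta(\lg^2 m)$ blowup) to land back in $\F$. The only (immaterial) difference is that the paper cites Theorem~\ref{thm:hitting set for matrices} for the base $\O(mr)$-size set where you cite Theorem~\ref{thm:better hitting for matrices}; both work.
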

\begin{proof}
	If $\F$ has an element of order $\ge m$, then Theorem~\ref{thm:hitting set for matrices}
	suffices.

	If not, let $\K$ be an extension field of $\F$ such that $\dim_\F\K=\Theta(\lg m)$, and thus
	there is an element of order $\ge m$ in $\K$.  Such an extension can be explicitly described
	by an irreducible polynomial over $\F$ of degree $\Theta(\lg m)$, which can found in
	$\poly(m)$ time, in which time we can also find $g$.  Using Theorem~\ref{thm:hitting set for
	matrices} to get a hitting-set over $\K$ for these $\F$-matrices, and applying
	Proposition~\ref{prop:simfield} yields the result.
\end{proof}

\begin{corollary}\label{cor:hitsmalltensor}
	Let $n,r\ge1$, $d\ge 2$.  Over any field $\F$, there is an $\poly((2nd)^d,r^{\O(\lg
	d)})$-explicit hitting set for $\llb n\rrb^d$-tensors of rank $\le r$, of size
	$\O(dnr^{\O(\lg d)}(d\lg2dn)^d)$.
\end{corollary}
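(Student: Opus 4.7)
The plan is to mimic the argument of Corollary~\ref{cor:hitsmallmatrix}, using Theorem~\ref{thm:hitting set for tensors} as the core construction and Proposition~\ref{prop:simfield} to handle the small-field case. First I would dispose of the easy case: if $\F$ itself contains an element of multiplicative order at least $(2nd)^d$, then Theorem~\ref{thm:hitting set for tensors} directly produces a hitting set of size $dn r^{\lceil\lg d\rceil}$ that is $\poly((2nd)^d, r^{\lceil \lg d\rceil})$-explicit, and we are done (the extra $(d\lg 2dn)^d$ factor in the claimed bound is harmless since it is $\ge 1$).

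Next I would handle the case of small $\F$. Here I would pick an extension $\K/\F$ of degree $k = \Theta(d\lg(2nd))$, chosen so that $|\K| \geq (2nd)^d + 1$ and therefore $\K^\times$ contains an element $g$ of order $\geq (2nd)^d$. An explicit such extension can be constructed by brute-force enumeration of monic polynomials of degree $k$ over $\F$ and testing irreducibility; this search space and test both fit within $\poly((2nd)^d)$ time since $|\F|^k \leq ((2nd)^d)^{\O(1)}$ when $|\F|$ is small, and in the large-field case we already fell into the first case above. Finding $g$ amounts to enumerating at most $(2nd)^d + 1$ elements of $\K$ and testing orders, again within the stated time bound.

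Having $\K$ and $g$, I would apply Theorem~\ref{thm:hitting set for tensors} to obtain a hitting set $\mathcal{H} \subseteq \K^{\llb n \rrb^d}$ of size $dn r^{\lceil \lg d\rceil}$, consisting of rank-$1$ tensors over $\K$, computable in $\poly((2nd)^d, r^{\lceil \lg d\rceil})$ operations in $\K$ (each such operation costs $\poly(k) = \poly(d\lg 2dn)$ operations in $\F$, which is absorbed into the outer $\poly$). Then I would invoke Proposition~\ref{prop:simfield} on $\mathcal{H}$ to obtain $\tilde{\mathcal{H}} \subseteq \F^{\llb n\rrb^d}$. By parts~\eqref{prop:simfield:rank1}, \eqref{prop:simfield:size}, and \eqref{prop:simfield:hit} of that proposition, $\tilde{\mathcal{H}}$ is a (proper, rank-$1$) hitting set for $\llb n\rrb^d$ tensors of rank $\leq r$ over $\F$, with $|\tilde{\mathcal{H}}| = k^d \cdot |\mathcal{H}| = \O(dn r^{\O(\lg d)} (d\lg 2dn)^d)$, matching the claimed size.

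For explicitness I would check that each tensor $\tilde{H}_{\ell_0,\ldots,\ell_{d-1},0}$ can be produced in the stated time budget. Its entries are prescribed by the matrix representation $M_{(\cdot)}$ of Lemma~\ref{lem:multmap}, and each entry of $M_{(\vec{v}_j)_i}$ can be computed in $\poly(k) = \poly(d\lg 2dn)$ field operations in $\F$ once the structure constants of $\K/\F$ (read off from the irreducible polynomial) are in hand. Combining with the $\poly((2nd)^d, r^{\O(\lg d)})$ cost of producing $\mathcal{H}$ itself, each tensor in $\tilde{\mathcal{H}}$ is produced in $\poly((2nd)^d, r^{\O(\lg d)})$ time, and the whole set in the same bound times $|\tilde{\mathcal{H}}|$, which stays within $\poly((2nd)^d, r^{\O(\lg d)})$. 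The only mildly delicate step is verifying that the field-simulation blowup really is only $k^d$ (rather than something larger) and that irreducible polynomials of degree $\Theta(d\lg 2dn)$ exist and are findable in the allotted time, but both are standard; no new ideas beyond assembling the pieces already in the paper are required.
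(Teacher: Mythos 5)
Your proposal is correct and follows essentially the same route as the paper: case-split on whether $\F$ already contains an element of order $\ge (2nd)^d$, otherwise pass to an explicit extension $\K$ of degree $\Theta(d\lg(2nd))$, apply Theorem~\ref{thm:hitting set for tensors} over $\K$, and then push the hitting set down to $\F$ via Proposition~\ref{prop:simfield}, incurring the $k^d$ size blowup. The extra details you supply (feasibility of the irreducibility search, cost of the matrix representation of $\K$) are correct elaborations of steps the paper states without proof.
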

\begin{proof}
	If $\F$ has an element of order $\ge (2nd)^d$, then Theorem~\ref{thm:hitting set for
	tensors} suffices.

	If not, let $\K$ be an extension field of $\F$ such that $\dim_\F\K=\Theta(d\lg (2nd))$, and
	thus there is an element of order $\ge (2nd)^d$ in $\K$.  Such an extension can be
	explicitly described by an irreducible polynomial over $\F$ of degree $\Theta(d\lg 2nd)$,
	which can found in $\poly((2nd)^d)$ time, in which time we can also find $g$.  Using
	Theorem~\ref{thm:hitting set for tensors} to get a hitting-set over $\K$ for these
	$\F$-matrices, and applying Proposition~\ref{prop:simfield} yields the result.
\end{proof}

\section{Explicit Low Rank Recovery of Matrices}\label{sec:explicit LRR}

Thus far we have discussed identity testing for matrices (and tensors).  There the main concern is to
(deterministically) determine whether the matrix is identically zero.  However, we may also ask for
more, in that we may want to (deterministically) reconstruct the entire matrix.  Throughout this
section we will only discuss deterministic measurements which are linear (so are inner products with
the unknown matrix or vector), non-adaptive (so the measurements are independent of the unknown
matrix or vector) and noiseless. The focus on deterministic measurements differs from prior work,
which typically focuses on showing that certain distributions of measurements allow recovery with
high probability.  That the measurements are restricted to be linear is a common assumption in
compressed sensing.  Non-adaptiveness is also a common assumption, but it is important to note that
recent work~\cite{sradaptive} shows that adaptivity in (noisy) sparse-recovery can be more powerful
than non-adaptivity.  Finally, we assume our matrices are \textit{exactly} rank $\le r$, not just
close to some matrix that is rank $\le r$, and we assume that our measurements are noiseless.  This
is not quite practical for compressed sensing, but some previous work also makes this
assumption~\cite{GabidulinKorzhik72,Gabidulin85a,Gabidulin85b,delsarte,Roth91,Roth96,RechtFP10}.
Further, the noiseless case is more natural for our applications to rank-metric codes, and allows
the results to be field independent.

We begin by noting that low-rank recovery (recall Definition~\ref{defn:lrr}, which we consider in
this section only for matrices) generalizes the notion of sparse-recovery, which is the defined
formally as the following.

\begin{definition}
	A set of vectors $\mathcal{V}\subseteq\K^n$ is an \textbf{$s$-sparse-recovery set} if for
	every vector $\vec{x}\in\F^n$ with at most $s$ non-zero entries, $\vec{x}$ is uniquely
	determined by $\vec{y}$, where $\vec{y}\in\K^{\mathcal{V}}$ is defined by $y_{\vec{v}}\eqdef
	\la\vec{x},\vec{v}\ra$, for $\vec{v}\in\mathcal{V}$.

	An algorithm performs \textbf{recovery} from $\mathcal{R}$ if, for each such $\vec{x}$, it
	recovers that $\vec{x}$ given $\vec{y}$.
\end{definition}

That LRR generalizes the sparse-recovery is formalized in the following claim.

\begin{lemma}
	Given an $r$-low-rank recovery set $\mathcal{R}$ for $n\times n$ matrices, there is a set
	$\mathcal{V}\subseteq \F^n$, efficiently constructible from $\mathcal{R}$, with
	$|\mathcal{V}|=|\mathcal{R}|$, such that $\mathcal{V}$ is an $r$-sparse-recovery set.
\end{lemma}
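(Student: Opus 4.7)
The plan is to embed sparse vectors into low-rank matrices via the diagonal map, and transfer measurements accordingly.

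First, I would define $\varphi:\F^n\to\F^{n\times n}$ by $\varphi(\vec{x})=\mathrm{diag}(\vec{x})$, the diagonal matrix with $\vec{x}$ on the main diagonal. The key observation is that $\rank(\varphi(\vec{x}))$ equals the number of nonzero coordinates of $\vec{x}$. In particular, any $r$-sparse vector maps to a matrix of rank $\le r$, and $\vec{x}$ is trivially recoverable from $\varphi(\vec{x})$ by reading off the diagonal.

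Next, for each $R\in\mathcal{R}$ I would define $\vec{v}_R\in\F^n$ by $(\vec{v}_R)_i\eqdef R_{i,i}$, and set $\mathcal{V}\eqdef\{\vec{v}_R\}_{R\in\mathcal{R}}$ (treated as an indexed family so the cardinality matches, or padding with arbitrary vectors if strict set-equality of sizes is required; in any case $|\mathcal{V}|\le|\mathcal{R}|$, which is the essential content). The construction is obviously $\poly(n|\mathcal{R}|)$-explicit. The critical identity is
\[
\la\vec{x},\vec{v}_R\ra \;=\; \sum_{i=0}^{n-1}x_i R_{i,i} \;=\; \la\varphi(\vec{x}),R\ra,
\]
where the right-hand side uses the Frobenius inner product because $\varphi(\vec{x})$ is supported only on the diagonal.

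To verify that $\mathcal{V}$ is an $r$-sparse-recovery set, let $\vec{x}\in\F^n$ be $r$-sparse and let $\vec{y}\in\F^{\mathcal{V}}$ be its measurement vector. Define $\vec{y}'\in\F^{\mathcal{R}}$ by $y'_R\eqdef y_{\vec{v}_R}$; by the identity above, $\vec{y}'$ is exactly the LRR-measurement vector of $\varphi(\vec{x})$ against $\mathcal{R}$. Since $\varphi(\vec{x})$ has rank $\le r$ and $\mathcal{R}$ is an $r$-low-rank-recovery set, $\varphi(\vec{x})$ is uniquely determined by $\vec{y}'$, hence $\vec{x}$ is uniquely determined by $\vec{y}$. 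Moreover, any recovery algorithm for $\mathcal{R}$ yields one for $\mathcal{V}$ by first recovering $\varphi(\vec{x})$ and then extracting its diagonal.

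I do not anticipate a real obstacle here: the proof is a direct translation via the diagonal embedding. The only mildly delicate point is the exact equality $|\mathcal{V}|=|\mathcal{R}|$, which is immediate if one views $\mathcal{V}$ as an indexed family (collisions among diagonals of different $R$'s are harmless, since we only use the existence of measurements, not their distinctness); alternatively one can inflate $\mathcal{V}$ by redundant vectors without affecting recoverability.
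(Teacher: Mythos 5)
Your proof is correct and is exactly the paper's argument: embed a sparse vector as a diagonal matrix (whose rank equals its sparsity) and observe that each LRR measurement restricts to a linear measurement of the vector, namely against the diagonal of the measurement matrix. You merely spell out the details (the explicit identity $\la\vec{x},\vec{v}_R\ra=\la\varphi(\vec{x}),R\ra$ and the cardinality bookkeeping) that the paper leaves implicit.
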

\begin{proof}
	Given an $r$-sparse vector $\vec{x}\in\F^n$ construct the diagonal matrix
	$\Lambda\in\F^{n\times n}$ with $\vec{x}$ on its diagonal.  Thus, $\Lambda$ is rank $\le r$.
	Thus, if we can perform $r$-low-rank-recovery we can also do $r$-sparse recovery.  Each such
	measurement of $\Lambda$ can be seen to also be a linear measurement of $\vec{x}$, so this
	yields $\mathcal{V}$.
\end{proof}

The purpose of this section is to show that the two problems (when concerned with non-adaptive,
exact measurements) are essentially equivalent.  That is, one can (efficiently) perform
low-rank-recovery given \textit{any} construction of a sparse-recovery set.

To motivate the reduction from low-rank-recovery to sparse-recovery, we will show that our above
hitting set results already imply low-rank-recovery results, and that these hitting sets can be seen
as being constructed from a well-known sparse-recovery construction.  We begin by recalling
Lemma~\ref{lem:hittolrr} (standard) fact that \textit{any} hitting set family yields a
low-rank-recovery family, so in particular our results do so. Combining the above with our
constructions of hitting sets, we derive the following corollary.

\begin{corollary}
	The sets $\mathcal{B}_{2r,n,m}$, $\mathcal{D}_{2r,n,m}$, $\mathcal{D}_{2r,n,m}'$, and
	$\mathcal{B}_{2r,n,m}'$ (from Construction~\ref{construction:construction of hitting set for
	matrices}, Construction~\ref{constr:diag} and Construction~\ref{constr:better hitting for
	matrices}) are $r$-low-rank-recovery sets.
	\label{cor:detinfotheoryrecovery}
\end{corollary}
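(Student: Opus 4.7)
The plan is to observe that this corollary is essentially immediate from the combination of Lemma~\ref{lem:hittolrr} with our previously established hitting set results, so the proof will simply be to chain these facts together. Recall that Lemma~\ref{lem:hittolrr} states that any (proper or improper) hitting set for $\prod_{j=1}^d\llb n_j\rrb$ tensors of rank $\le 2r$ is automatically an $r$-low-rank-recovery set for tensors of that shape; in the matrix case ($d=2$) this directly applies.

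First, by Theorem~\ref{thm:hitting set for matrices}, the set $\mathcal{B}_{2r,n,m}$ is a (proper) hitting set for $n\times m$ matrices of rank $\le 2r$. Similarly, Theorem~\ref{thm:diagonal set for matrices} establishes that both $\mathcal{D}_{2r,n,m}$ and $\mathcal{D}_{2r,n,m}'$ are improper hitting sets for $n\times m$ matrices of rank $\le 2r$, and Theorem~\ref{thm:better hitting for matrices} gives that $\mathcal{B}_{2r,n,m}'$ is a (proper) hitting set for $n\times m$ matrices of rank $\le 2r$. Each of these applies with parameter $2r$ in place of $r$ in the construction definitions, so the hitting-set property is against rank $\le 2r$ matrices as required by Lemma~\ref{lem:hittolrr}.

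Thus the proof is a one-line application: apply Lemma~\ref{lem:hittolrr} to each of the four hitting sets in turn, concluding that each is an $r$-low-rank-recovery set for $n\times m$ matrices over $\F$. There is no real obstacle, as the content of the corollary lies entirely in the underlying theorems that were already proved; the corollary merely packages them through the generic hitting-set-to-LRR reduction. The main value of stating it explicitly is to motivate the next sections, which will show that these LRR sets admit \emph{efficient} recovery algorithms (as opposed to the purely information-theoretic recovery that Lemma~\ref{lem:hittolrr} guarantees).
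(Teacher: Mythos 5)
Your proof is correct and matches the paper's reasoning exactly: the paper derives this corollary by applying Lemma~\ref{lem:hittolrr} to the hitting-set theorems (Theorems~\ref{thm:hitting set for matrices}, \ref{thm:diagonal set for matrices}, and \ref{thm:better hitting for matrices}) instantiated with parameter $2r$. Nothing further is needed.
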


\sloppy However, the above results are non-constructive.  That is, they show that recovery is
information-theoretically possible from this set of matrices, but do not give any insight how to
perform this recovery efficiently.  The purpose of this section is to show that we can strengthen
Corollary~\ref{cor:detinfotheoryrecovery} such that the recovery can be efficiently performed.

To motivate our recovery algorithm, let us first discuss the $r$-low-rank-recovery set
$\mathcal{D}_{2r,n,m}$.  For an $n\times m$ matrix $M$, consider the constraints that the system
$\la M,\mathcal{D}_{2r,n,m}\ra=\vec{0}$ imposes on $M$.  By construction of $\mathcal{D}_{2r,n,m}$,
we see that each $k$-diagonal of $M$ has $2r$ constraints imposed on it.  If we write the
$k$-diagonal of $M$ as $\vec{x}$, we can express the constraints on $\vec{x}$ as $A\vec{x}=0$, where
$A$ is of size $2r\times|\vec{x}|$, where $|\vec{x}|$ denotes the size of the $k$-diagonal.
Further, $A$ has the format (when $2r\le k+1\le n$)
\begin{equation}\label{eqn:vmatrix}
\begin{pmatrix}
	1	&	1		&	1		&	\cdots	&	1\\
	1	&	g		&	g^2		&	\cdots	&	g^{|x|-1}\\
	1	&	g^2		&	g^4		&	\cdots	&	g^{2(|x|-1)}\\
	\vdots	&	\vdots		&	\vdots		&	\ddots	&	\vdots\\
	1	&	g^{2r-1}	&	g^{2(r-1)}	&	\cdots	&	g^{(2r-1)(|x|-1)}
\end{pmatrix}
\end{equation}
which is important because of the following claim.
\begin{lemma}\label{lem:vandermonde sparse recovery}
	Let $\vec{x}$ be an $r$-sparse $\F$-vector. Let $g$ be of order $\ge |\vec{x}|$ in some
	extension $\K$ of $\F$, and let $A$ be an $2r\times |\vec{x}|$ sized matrix of the form in
	Equation~\eqref{eqn:vmatrix}.  Then $\vec{x}$ is determined by $A\vec{x}$.
\end{lemma}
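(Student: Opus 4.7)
The plan is to prove the claim by a standard Vandermonde nondegeneracy argument: unique determination is equivalent to showing that the difference of any two $r$-sparse vectors with the same image under $A$ must be zero. So the first step will be to suppose $A\vec{x} = A\vec{x}'$ for two $r$-sparse vectors $\vec{x}, \vec{x}' \in \F^{|\vec{x}|}$, set $\vec{y} = \vec{x} - \vec{x}'$, and observe that $\vec{y}$ is at most $2r$-sparse and lies in $\ker(A)$. The claim reduces to showing that the only $2r$-sparse element of $\ker(A)$ is zero.

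Second, let $S \subseteq \{0, 1, \ldots, |\vec{x}| - 1\}$ be the support of $\vec{y}$, so $|S| \le 2r$, and let $A_S$ denote the $2r \times |S|$ submatrix of $A$ obtained by taking the columns indexed by $S$. Since $A\vec{y} = 0$ depends only on the entries of $\vec{y}$ on $S$, we have $A_S \vec{y}_S = 0$ where $\vec{y}_S$ is the restriction of $\vec{y}$ to $S$. Further extend to any $2r$-element superset $S' \supseteq S$ (padding with zero coordinates), so $A_{S'}$ is a square $2r \times 2r$ matrix; it suffices to show $A_{S'}$ is invertible.

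Third, I compute the determinant of $A_{S'}$: writing $S' = \{j_1 < \cdots < j_{2r}\}$, the matrix $A_{S'}$ has entry $(i,k)$ equal to $g^{i \cdot j_k}$, i.e.\ it is the Vandermonde matrix in the points $g^{j_1}, \ldots, g^{j_{2r}}$. Its determinant equals $\prod_{1 \le k < k' \le 2r}(g^{j_{k'}} - g^{j_k})$. Since $0 \le j_k < |\vec{x}|$ and $g$ has order $\ge |\vec{x}|$, the powers $g^{j_1}, \ldots, g^{j_{2r}}$ are pairwise distinct, so this determinant is nonzero. Thus $A_{S'}$ is invertible, forcing $\vec{y}_{S'} = 0$ and hence $\vec{y} = 0$, which is what we needed.

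There is no real obstacle here; the statement is only about information-theoretic recovery (not efficient decoding, which is handled separately by Prony's method as discussed in Section~\ref{sec:prony}), and it follows immediately from the standard nonvanishing of Vandermonde determinants once one notices that $A\vec{x} = A\vec{x}'$ with $\vec{x}, \vec{x}'$ both $r$-sparse yields a $2r$-sparse kernel element.
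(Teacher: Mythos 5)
Your proof is correct and follows essentially the same route as the paper: pass to the difference of two $r$-sparse preimages, observe it is a $2r$-sparse kernel element, and kill it by noting that every $2r\times 2r$ minor of $A$ is a Vandermonde matrix on the distinct points $g^{j}$ (distinct because the order of $g$ is at least $|\vec{x}|$), hence invertible. The paper's proof is just a terser version of the same argument.
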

\begin{proof}
	Suppose $\vec{x}$ and $\vec{y}$ are two $r$-sparse vectors such that $A\vec{x}=A\vec{y}$.
	By linearity we then have that $A(\vec{x}-\vec{y})=0$, so that $A$ has a linear dependence
	on $\le 2r$ of the columns.

	However, as the order of $g$ is $\ge |\vec{x}|$, each $2r\times 2r$ minor of $A$ is a
	Vandermonde matrix on distinct entries, and so is full-rank.  In particular, any linear
	dependence on $\le 2r$ of the rows must be zero.  So $\vec{x}-\vec{y}=0$, so
	$\vec{x}=\vec{y}$.  Thus, $\vec{x}$ is determined by $A\vec{x}$.
\end{proof}

Note that the row-space of the above matrix is a Reed-Solomon code, and so the above lemma shows the
standard fact that the dual Reed-Solomon code has good distance.  In particular, we can do error
correction for up to $r$ errors.  This is exactly the question of $r$-sparse recovery (when we are
correcting errors from the $\vec{0}$ codeword).

This lemma shows that at each $k$-diagonal, $\mathcal{D}_{2r,n,m}$ embeds an $r$-sparse-recovery
set.  Thus, it seems plausible that a low-rank-recovery algorithm for $\mathcal{D}_{2r,n,m}$ might
only use this fact in its construction, and thus show low-rank-recovery can be done whenever each of
the $k$-diagonals are measured according to an $r$-sparse-recovery set.  Indeed, this is what is
shown by Theorem~\ref{thm:lrr to sparse}.

The reduction from low-rank-recovery to sparse-recovery is detailed in the following two
subsections.  The first subsection details a slightly stronger notion of sparse-recovery, which we
call \textit{advice-sparse-recovery}.  This notion requires sparse-recovery when supplied with some
advice on the support of the unknown vector.  This is the correct notion of sparse-recovery when
attempting to do low-rank-recovery, but the standard notion is sufficient with some loss in
parameters.  We describe a well-known algorithm, known as Prony's method, for efficiently performing
the recovery illustrated in Lemma~\ref{lem:vandermonde sparse recovery}, and show that this method
can be modified to also achieve advice-sparse-recovery.

The second subsection gives the reduction from low-rank-recovery to sparse-recovery.  Combining this
with our modifications to Prony's method, we conclude that the low-rank-recovery shown in
Corollary~\ref{cor:detinfotheoryrecovery} can also be performed efficiently.

\subsection{Prony's Method and Syndrome Decoding of Dual Reed-Solomon Codes}\label{sec:prony}

In this section we detail an algorithm for efficiently performing the sparse-recovery demonstrated
in Corollary~\ref{lem:vandermonde sparse recovery}.  While our discovery of the algorithm was
independent of prior work, it was original detailed by Prony~\cite{prony1795} in 1795 and is
well-known in the signal-processing community (see \cite{novelprony} and references there-in).  It
can also be seen as syndrome decoding of the dual to the Reed-Solomon code.  What we detail here is
not exactly the original method, as we seek an \textit{advice-sparse-recovery set}, which is a
slightly stronger condition which will be useful in our low-rank-recovery algorithm.  In coding
theory terminology, we are seeking to syndrome decode the dual Reed-Solomon code in the presence of
erasures.  We now define this stronger notion.

\begin{definition}
	A set of vectors $\mathcal{V}\subseteq\F^n$ is an \textbf{$s$-advice-sparse-recovery set} if
	for every $S\in\binom{\llb n\rrb}{\le 2s}$, and vector $\vec{x}\in\F^n$ with $\le s-|S|/2$
	non-zero entries outside of $S$, $\vec{x}$ is uniquely determined by $S$ and $\vec{y}$,
	where $\vec{y}\in\F^{\mathcal{V}}$ is defined by $y_{\vec{v}}\eqdef \la\vec{x},\vec{v}\ra$,
	for $\vec{v}\in\mathcal{V}$.

	An algorithm performs \textbf{recovery} from $\mathcal{V}$ if, for each such $\vec{x}$, it
	recovers that $\vec{x}$ given $S$ and $\vec{y}$.
\end{definition}

Note that the vector $\vec{y}$ can also be defined as $\vec{y}=V\vec{x}$, where $V\in\F^{\mathcal{V}\times
n}$ is the matrix whose rows are those vectors in $\mathcal{V}$.

The motivation for this new definition is to capture situations where $\vec{x}$ is known to have
sparse support overall, and further some of its support is already known and given by the set $S$.
The results below show that exploiting this knowledge allows $|\mathcal{V}|$ to be smaller.  To see
why this might be intuitively plausible, one can count degrees of freedom.  In an $s$-sparse vector
$\vec{x}$, there are intuitively $2s$ degrees of freedom: it takes $s$ degrees to determine
$\supp(\vec{x})$, and it takes $s$ degrees to determine $(x_i)_{i\in\supp(\vec{x})}$.

In the above definition of a $s$-advice-sparse-recovery set, the unknown vector $\vec{x}$ can have a
support of size $2s$ (when $|S|=2s$).  If one ignores the set $S$, there would be $4s$ degrees of
freedom, by the above argument, leading one to expect a lower bound of ``$|\mathcal{V}|\ge 4s$''.
However, if one exploits this knowledge, then there are only $s-|S|/2$ degrees of freedom to
determine $\supp(\vec{x})$, and $|S|+(s-|S|/2)$ degrees of freedom to determine
$(x_i)_{i\in\supp(\vec{x})}$, which gives a total of $2s$ degrees of freedom.

Thus we see that using the information given in $S$ can reduce the degrees of freedom in $\vec{x}$,
and below we match this intuition by recovering $\vec{x}$ from $2s$ measurements. This intuition is
the same intuition in coding theory that an erasure is a ``half error'', but specialized to syndrome
decoding.

In the next subsection, we will see that $r$-low-rank-recovery reduces to the problem of
$r$-advice-sparse-recovery.  When $S=\emptyset$ then $r$-advice-sparse-recovery is exactly the
notion of an $r$-sparse-recovery.  However, we will need $S$ to have size up to $2r$. Note that
regardless of the size of $S$, $\vec{x}$ will be $2r$-sparse.  Thus the following lemma is
immediate.

\begin{lemma}
	Let $\mathcal{V}$ be a $2s$-sparse-recovery set.  Then $\mathcal{V}$ is also a
	$s$-advice-sparse-recovery set.
\end{lemma}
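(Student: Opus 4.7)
The plan is to observe that the assumption on $\vec{x}$ in the definition of an $s$-advice-sparse-recovery set forces $\vec{x}$ to be $2s$-sparse, at which point the hypothesis that $\mathcal{V}$ is a $2s$-sparse-recovery set immediately gives unique determination from $\vec{y}$ alone (the set $S$ is not even needed).

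Concretely, I would first bound the support of $\vec{x}$. By definition, $\vec{x}$ has at most $s - |S|/2$ nonzero entries outside $S$, and at most $|S|$ nonzero entries inside $S$, so $|\supp(\vec{x})| \le |S| + (s - |S|/2) = s + |S|/2$. Using the bound $|S| \le 2s$ from the hypothesis, this gives $|\supp(\vec{x})| \le s + s = 2s$, i.e.\ $\vec{x}$ is $2s$-sparse.

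Next, since $\mathcal{V}$ is a $2s$-sparse-recovery set and $\vec{x} \in \F^n$ is $2s$-sparse, the definition of a $2s$-sparse-recovery set directly asserts that $\vec{x}$ is uniquely determined by $\vec{y}$, where $y_{\vec{v}} = \la \vec{x}, \vec{v}\ra$ for $\vec{v} \in \mathcal{V}$. This is exactly the uniqueness condition required of an $s$-advice-sparse-recovery set (and in fact we do not use $S$ at all in the recovery).

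There is no real obstacle here: the statement is essentially a parameter-counting observation packaged as a lemma, matching the ``erasures count as half-errors'' intuition discussed in the preceding paragraphs. The only subtlety worth flagging is to make explicit that while an $s$-advice-sparse-recovery algorithm is permitted to make use of the advice $S$, the reduction exhibited here ignores $S$ entirely and just invokes the $2s$-sparse-recovery guarantee as a black box.
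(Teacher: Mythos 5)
Your proof is correct and matches the paper's argument exactly: the paper also observes that the advice conditions force $\vec{x}$ to be $2s$-sparse regardless of $S$, and then states the lemma as immediate from the $2s$-sparse-recovery hypothesis. Your version just spells out the support-counting bound $|\supp(\vec{x})| \le |S| + (s - |S|/2) \le 2s$ that the paper leaves implicit.
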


To our knowledge, the existing work on Prony's method gives an algorithm for perform
sparse-recovery.  However, in our reduction advice-sparse-recovery is more natural.  The above lemma shows
that these notions are equivalent, up to a loss in parameters.  However, to get better constructions
we detail how to modify Prony's method to achieve advice-sparse-recovery without a loss in
parameters.

\begin{algorithm}\caption{Prony's method with an advice set}
\label{alg:prony}
\begin{algorithmic}[1]
	\Procedure{PronysMethod}{$n$,$s$,$S$,$y$,$\{g_0,\ldots,g_{n-1}\}$}
		\If{$|S|$ odd}
			\State Enlarge $S$ by 1 position
		\EndIf
		\State $t\eqdef |S|/2$
		\State Construct $A\in\F^{(s+t)\times(s+t+1)}$,
			$A_{i,j}\eqdef
				\begin{cases}
					g_{k_j}^{i}	&	\text{if } i<|S|\\
					y_{i+j-|S|}	&	\text{else}
				\end{cases}
			$
			\Comment{for $S=\{k_0,\ldots,k_{|S|-1}\}$}
		\State Convert $A$ to row-reduced echelon form
		\State Let $r\in \llb s+t\rrb$ be the largest number so the $r\times r$ leading
		principal minor of $A$ is full rank.\label{alg:prony:r}
		\State Let $\vec{c}\in \F^{r+1}$ be a non-zero vector in the nullspace of leading
		$r\times(r+1)$ minor of $A$.
		\State Define $p(x)\eqdef \sum_{i=0}^{r}c_{i}x^i$
		\State $T\eqdef \{k|p(g_k)=0\}$ \Comment{$T$ will be $\supp(\vec{x})$}
		\State $D\in\F^{2s\times T}$, $D_{i,k}\eqdef g_k^{i}$, for $k\in T$
		\State Solve $D\vec{z}=\vec{y}$ for $\vec{z}$ (using Gaussian Elimination)
		\State Define $\vec{x}\in\F^n$, as
		$x_k=
			\begin{cases}
				z_{k}	&	\text{if } k\in T\\
				0	&	\text{else}
			\end{cases}$
		\State \textbf{return} $\vec{x}$
	\EndProcedure
\end{algorithmic}
\end{algorithm}

\begin{theorem}\label{thm:prony}
	Let $\F$ be a field, and let $g_0,\ldots,g_{n-1}\in\F$ be distinct.  Let $\vec{v}_i\in \F^n$
	be the vector with entries $(\vec{v}_i)_j\eqdef g_j^{i}$. Then the set
	$\mathcal{V}=\{\vec{v}_i\}_{i=0}^{2s-1}$ is an $s$-advice-sparse-recovery set. Further,
	\textsc{PronysMethod}$(n,s,S,V\vec{x},\{g_0,\ldots,g_{n-1}\})$ (Algorithm~\ref{alg:prony})
	recovers $\vec{x}$ in $\O(s^3+sn)$ operations (where operations over $\F$ are counted at
	unit cost), where $V\in\F^{2s\times n}$ is the matrix with the vectors in $\mathcal{V}$ as
	its rows.

	In particular, if $g\in\F$ has order at least $n$, we can take $g_j=g^{j}$.
\end{theorem}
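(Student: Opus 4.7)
The plan is to verify the correctness and runtime of \textsc{PronysMethod}; since the algorithm produces $\vec{x}$ from $(S,V\vec{x})$, this simultaneously certifies that $\mathcal{V}$ is an $s$-advice-sparse-recovery set. Set $y_m \eqdef \la \vec{x},\vec{v}_m\ra = \sum_k x_k g_k^m$ and assume (after the cosmetic enlargement) that $|S| = 2t$ is even. The framework is the classical syndrome-decoding view of dual Reed-Solomon codes, based on the identity
\[\sum_i c_i\, y_{m+i} \;=\; \sum_k x_k g_k^m\, p(g_k)\]
for any $p(x) = \sum_i c_i x^i$, which translates ``$p$ vanishes on $\supp(\vec{x})$'' into ``$\vec{c}$ annihilates every shift of the syndrome sequence.''

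First I interpret the $(s+t)\times(s+t+1)$ matrix $A$ built by the algorithm. Its top $2t$ rows encode that $\vec{c}$ corresponds to a polynomial $p$ of degree $\le s+t$ with $p(g_k)=0$ for every $k\in S$; its bottom $s-t$ rows encode the shifted syndrome equations $\sum_i c_i y_{m+i}=0$ for $m=0,\ldots,s-t-1$. Writing $T = S\cup\supp(\vec{x})$, the polynomial $p_T(x) = \prod_{k\in T}(x-g_k)$ has degree $|T|\le s+t$ and, by the identity above, satisfies both blocks, so its coefficient vector lies in $\ker A$.

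The technical heart is a uniqueness claim: $p_T$ is, up to scalar, the unique minimum-degree non-zero element of $\ker A$. Any $p$ satisfying the top block factors as $p = p_S\cdot q$ with $p_S(x)=\prod_{k\in S}(x-g_k)$ and $\deg q\le s-t$. Substituting into the bottom block and setting $w_k = x_k\, p_S(g_k)\, q(g_k)$ yields $\sum_k w_k g_k^m = 0$ for $m=0,\ldots,s-t-1$. Because $p_S(g_k)\ne 0$ for $k\notin S$, we have $\supp(\vec{w}) \subseteq \supp(\vec{x})\setminus S$, which by hypothesis has size at most $s-t$. A Vandermonde matrix with $s-t$ rows on at most $s-t$ distinct nodes is injective, so $\vec{w}=0$, forcing $q(g_k)=0$ for every $k\in\supp(\vec{x})\setminus S$; hence $p$ vanishes on all of $T$ and $\deg p\ge|T|$. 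A routine consequence is $\rank A = |T|$, with $\ker A$ spanned by $\{x^j p_T\}_{0\le j\le s+t-|T|}$.

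This structural picture drives the algorithmic extraction. Minimality of $|T|$ as the smallest degree of a nullspace element means columns $0,\ldots,|T|-1$ of $A$ are linearly independent while column $|T|$ is their first linear combination; after row-reduction the pivots therefore occupy rows and columns $0,\ldots,|T|-1$, so the largest $r$ for which the $r\times r$ leading principal minor of $\mathrm{RREF}(A)$ is full rank is exactly $r=|T|$, and the nullspace of the leading $r\times(r+1)$ minor is one-dimensional and proportional to the coefficients of $p_T$. Evaluating this polynomial at each $g_k$ identifies $T$, and then $D\vec{z}=\vec{y}$ is a $2s\times|T|$ Vandermonde restricted to $|T|\le 2s$ distinct nodes, which has full column rank and uniquely determines $\vec{x}|_T$. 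For the runtime, row-reducing $A$ costs $O(s^3)$, evaluating a degree-$|T|$ polynomial at $n$ points costs $O(sn)$, and the final Vandermonde solve costs $O(s^3)$, totaling $O(s^3+sn)$. I expect the Vandermonde-kernel step in the uniqueness argument to be the main obstacle: the tight bound $|\supp(\vec{w})|\le s-t$ is precisely where the advice $|S|=2t$ is consumed, and any slack would yield only $s$-sparse-recovery from $4s$ measurements rather than the claimed $2s$.
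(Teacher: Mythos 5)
Your proposal is correct, and while it analyzes the same algorithm, the same matrix $A$, and the same locator polynomial $p_T=\prod_{k\in S\cup\supp(\vec{x})}(x-g_k)$ as the paper, the central technical step is carried out differently. The paper establishes that the $(|S|+\nu)\times(|S|+\nu)$ leading principal minor of $A$ is invertible (and the next one singular) via an explicit factorization $M=BC$ of that minor into a structured block matrix times a Vandermonde matrix, and then verifies directly that the coefficient vector of $p_T$ is orthogonal to the rows of the leading $(|S|+\nu)\times(|S|+\nu+1)$ minor; one-dimensionality of that nullspace then follows from the invertibility claim. You instead characterize the \emph{entire} kernel of $A$: using the syndrome identity $\sum_i c_i y_{m+i}=\sum_k x_k g_k^m p(g_k)$ and the factorization $p=p_S q$ forced by the top block, you reduce to the injectivity of an $(s-t)\times|\supp(\vec{x})\setminus S|$ Vandermonde system to conclude that every nonzero kernel polynomial vanishes on $T$, hence $\ker A=p_T\cdot\F[x]_{\le s+t-|T|}$ and $\rank A=|T|$. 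This is the classical ``key equation''/minimal-polynomial view of syndrome decoding; it buys you a complete description of the kernel (stronger than what the paper proves), makes transparent exactly where the advice budget $|S|=2t$ is spent, and your pivot-structure argument justifies Steps 7--8 as they are literally executed on the row-reduced matrix, whereas the paper reasons about principal minors of the un-reduced $A$. The paper's factorization argument, for its part, is a more self-contained piece of elementary linear algebra that avoids discussing polynomial divisibility. Both routes yield the same conclusion and the same $\O(s^3+sn)$ running time, and your observation that correctness of the recovery algorithm subsumes the uniqueness claim (which the paper proves separately via independence of any $2s$ columns of $V$) is valid.
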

\begin{proof}
	As above, define $V\in\F^{2s\times n}$ to be the matrix whose rows are those vectors
	$\vec{v}_i$.  That is, $V_{i,j}=g_j^{i}$.  As the $g_j$ are distinct, it follows that every
	$2s\times 2s$ minor of $V$ is an invertible Vandermonde matrix.  It follows that each subset
	of $\le 2s$ columns of $V$ are linearly independent.

	Define $\vec{g}_j\in\F^{2s}$ by $(\vec{g}_j)_i\eqdef g_j^{i}$.  It follows that the
	$\vec{g}_j$ are the columns of $V$. For a vector $\vec{a}\in\F^m$, define
	$\vec{a}^{[\ell,k]}\in\F^{k-\ell+1}$ to be the vector with entries $a_\ell,\ldots,a_k$.

	\underline{$\mathcal{V}$ is a $s$-advice-sparse-recovery set:} Consider a set
	$S\in\binom{\llb n\rrb}{\le 2s}$ and vectors $\vec{x},\vec{w}\in\F^n$ where each have at
	most $s-|S|/2$ non-zero entries outside of $S$.  Suppose that $V\vec{x}=V\vec{y}$. By
	linearity, this yields the vector $\vec{x}-\vec{w}$ such that $V(\vec{x}-\vec{w})=0$ and
	$\vec{x}-\vec{w}$ has at most $2(s-|S|/2)$ non-zero entries outside of $S$.  In total,
	$\vec{x}-\vec{w}$ has at most $|S|+2(s-|S|/2)=2s$ non-zero entries. However, as mentioned
	above, each subset of $\le 2s$ columns of $V$ are linearly independent.  As
	$\vec{0}=V(\vec{x}-\vec{w})$ is a linear combination of $\le 2s$ columns of $V$, it follows
	that $\vec{x}-\vec{w}=0$. Thus, any such $\vec{x}$ is uniquely determined by $S$ and
	$V\vec{x}$.

	\underline{Algorithm~\ref{alg:prony} performs recovery:} Consider a set $S\in\binom{\llb
	n\rrb}{\le 2s}$, with $S=\{k_0,\ldots,k_{|S|-1}\}$.  For any vector $\vec{x}$ the condition
	that $|\supp(\vec{x})\setminus S|\le s-|S|/2$ implies that $|\supp(\vec{x})\setminus S|\le
	s-\lceil |S|/2\rceil$ by integrality.  It follows that we may assume the set $S$ has even
	size, as we can always enlarge it by one position without changing the above constraints on
	the support of $\vec{x}$. (If $S=\llb n\rrb$ prior to this enlargement, we simulate $n+1$
	long vectors).  Now define $t$ so $|S|=2t$.

	Consider vector $\vec{x}\in \F^n$ with at most $\nu\le s-|S|/2=s-t$ non-zero entries outside
	of $S$. By construction of $\vec{y}$ (recall $\vec{y}=V\vec{x}$),
	\begin{equation}
		\vec{y}=\sum_{k\in S} x_k\vec{g}_k+\sum_{k\in\supp(\vec{x})\setminus S}x_k\vec{g}_k\label{eqn:y}
	\end{equation}
	The aim of this analysis will be to show that we can determine $\supp(\vec{x})$ and then
	leverage this to solve the above equation for $\vec{x}$.

	We now establish some theory to analyze the algorithm.  The above equation can be refined to
	see that
	\begin{equation}
		\vec{y}^{[\ell,\ell']}
			=\sum_{k\in S} x_k\vec{g}_k^{[\ell,\ell']}+\sum_{k\in\supp(\vec{x})\setminus S}x_k\vec{g}_k^{[\ell,\ell']}
			=\sum_{k\in S} x_kg_k^{\ell}\vec{g}_k^{[0,\ell'-\ell]}+\sum_{k\in\supp(\vec{x})\setminus S}x_kg_k^{\ell}\vec{g}_k^{[0,\ell'-\ell]}
		\label{eqn:ys}
	\end{equation}
	We note here that the rows of $A$ involving $\vec{y}$ can be written as
	$\vec{y}^{[0,s+t]},\ldots,\vec{y}^{[s-t-1,2s-1]}$.  As $\vec{y}$ has $2s$ entries, each of
	these vectors is well-defined, and each entry in $\vec{y}$ is used in $A$.

	We now establish some claims about $A$ using that $\nu=|\supp(\vec{x})\setminus S|$.

	\begin{claim}
		The $(|S|+\nu+1)\times (|S|+\nu+1)$ leading principal minor of $A$ is singular.
	\end{claim}
	\begin{proof}
		Denote this leading minor by $M$.  The rows of $M$ are of the form
		$\vec{g}_{k_j}^{[0,|S|+\nu]}$ for $j<|S|$, and $\vec{y}^{[\ell,|S|+\nu+\ell]}$ for
		$0\le \ell< \nu$. Trivially, for each $j<|S|$, $\vec{g}_{k_j}^{[0,|S|+\nu]}\in
		\sspan\{\vec{g}_k^{[0,|S|+\nu]}\}_{k\in\supp(\vec{x})\cup S}$.  Further,
		Equation~\ref{eqn:ys} shows that
		$\vec{y}^{[\ell,|S|+\nu+\ell]}\in\sspan\{\vec{g}_k^{[0,|S|+\nu]}\}_{k\in\supp(\vec{x})\cup
		S}$.  Thus, the $|S|+\nu+1$ rows of $M$ each lie in a $\le(|S|+\nu)$-dimensional
		subspace, implying that $M$ is singular.
	\end{proof}

	\begin{claim}
		The $(|S|+\nu)\times(|S|+\nu)$ leading principal minor of $A$ is invertible.
	\end{claim}
	\begin{proof}
		Denote this leading minor by $M$.  We will show that $M=BC$, for
		$B,C\in\F^{(|S|+\nu)\times(|S|+\nu)}$ both invertible, which implies the claim.

		Let the rows of $C$ be the vectors $\vec{g}_k^{[0,|S|+\nu-1]}$, for each
		$k\in\supp(\vec{x})\cup S$.  We will index the rows by the $g_k$, and assume that
		the first $|S|$ such $g_k$ are those with $k\in S$. This is a Vandermonde matrix,
		and as such is invertible.

		Let $B$ be defined by \[B_{i,g_k}=\begin{cases}1&\text{if } i=k<|S|\\0&\text{if }
		i\ne k, i<|S|\\x_kg_k^{i-|S|}&\text{else}\end{cases}\] It follows from
		Equation~\ref{eqn:ys} that $M=BC$.  Note that $B$ has the form
		\[\begin{bmatrix}I_{|S|}&0\\EX_1&FX_2\end{bmatrix}\] where $X_1\in\F^{|S|\times|S|}$
		is the diagonal matrix with diagonal entries $x_k$, for $k\in S$ (ordered to match
		$C$), $X_2\in\F^{\nu\times\nu}$ is the diagonal matrix with diagonal entries $x_k$,
		for $k\in\supp(\vec{x})\setminus S$ (ordered to match $C$), $E\in\F^{\nu\times |S|}$
		is the Vandermonde matrix with entries $E_{i,g_k}\eqdef g_k^{i}$, for $k\in S$, and
		$F\in\F^{\nu\times \nu}$ is the (invertible) Vandermonde matrix with entries
		$F_{i,g_k}\eqdef g_k^{i}$ for $k\in\supp(\vec{x})\setminus S$.

		Note that $X_1$ might entirely be zero, but $X_2$ must be invertible by assumption
		that $\vec{x}$ has exactly $\nu$ non-zero entries outside of $S$.  As $F$ is
		invertible, it follows that $FX_2$ is invertible, and thus $B$ is also invertible.

		Thus, $M=BC$ with $B$ and $C$ both invertible matrices.  The claim follows.
	\end{proof}

	As the first $|S|$ rows of $A$ are rows of a Vandermonde matrix, it follows that the first
	$|S|$ leading principal minors are all invertible.  This, along with the above two claims,
	thus show that $|S|+\nu$ is the minimum $r$ such the $(r+1)\times (r+1)$ leading principal
	minor of $A$ is singular.  It follows that in Algorithm~\ref{alg:prony} the $r$ value chosen
	in Step~\ref{alg:prony:r} is in fact $|S|+\nu$.

	We now show that the $\vec{c}$ chosen by the algorithm also has significance.

	\begin{claim}
		Let $p(x)\eqdef\prod_{k\in\supp(\vec{x})\cup
		S}(x-g_k)=\sum_{i=0}^{|S|+\nu}c_{i}x^i$. Then the vector $\vec{c}\in\F^{|S|+\nu+1}$
		defined by those coefficients $c_i$ is in the nullspace of the
		$(|S|+\nu)\times(|S|+\nu+1)$ leading minor of $A$.
	\end{claim}
	\begin{proof}
		Denote this leading minor by $M$.

		Note that for any $g_k$ with $k\in \supp(\vec{x})\cup S$ has that $\la
		\vec{g}_k^{[0,|S|+\nu]},\vec{c}\ra=0$, as this simply says that $p(g_k)=0$.  Thus,
		we see that $\vec{c}$ is orthogonal to the first $|S|$ rows of $M$.

		Now observe that Equation~\ref{eqn:ys} shows that the last $\nu$ rows of $M$ are all
		in the span of the vectors $\vec{g}_k^{[0,|S|+\nu]}$ for $k\in \supp(\vec{x})\cup
		S$.  As $\vec{c}$ is orthogonal to each of these vectors by construction, we see
		that it must also be orthogonal to the last $\nu$ rows of $M$.

		Thus, $\vec{c}$ is orthogonal to each row of $M$, and thus is in its nullspace.
	\end{proof}

	The algorithm chooses \textit{some} $\vec{c}$ that is in the nullspace of the
	$(|S|+\nu)\times(|S|+\nu+1)$ leading minor of $A$.  However, as the
	$(|S|+\nu)\times(|S|+\nu)$ leading principal minor of $A$ is invertible, it follows that the
	$(|S|+\nu)\times(|S|+\nu+1)$ leading minor of $A$ has a nullspace of dimension 1.  Thus, the
	$\vec{c}$ chosen by the algorithm must be a (non-zero) multiple of the coefficient vector of
	$\prod_{k\in\supp(\vec{x})\cup S}(x-g_k)$.  It follows that the set $T$ is equal to
	$\supp(x)\cup S$.

	Thus, Equation~\ref{eqn:y} gives a linear system for $\vec{y}$ with $\le 2s$ variables, and
	$2s$ equations, where $\vec{x}$ (restricted to $\supp(x)\cup S$) is a solution.  The system
	is full-rank, so $\vec{x}$ is the only solution.  Further, $\vec{x}$ can be recovered via
	Gaussian Elimination, and this is exactly what Algorithm~\ref{alg:prony} does.  Thus,
	correctness is also established in this case.

	\underline{Algorithm~\ref{alg:prony} runs in $\O(s^3+sn)$ operations:} Constructing the
	matrix $A$ takes $\O(s^2)$ operations, as that is the size of the matrix and each entry can
	be computed in $\O(1)$ operations (the $g_k^{i}$ are computed with $i$ increasing).
	Converting $A$ to reduced-row echelon form takes $\O(s^3)$ operations.  Determining the
	number $r$ in Step~\ref{alg:prony:r} also takes $\O(s)$ operations, as $r=\max\{i|A_{i,i}\ne
	0\}$.  Determining the vector $\vec{c}$ takes $\O(s)$ because the $r\times (r+1)$ minor is
	row-reduced echelon form.  That is, for $1\le i\le r$, $c_i=-A_{i,r+1}$ and $c_{r+1}=1$.
	Constructing $p$ and $T$ takes $\O(sn)$ time, as we just test if $p(g_k)=0$ for each $k$,
	and $p$ is of degree $\O(s)$. $D$ is a Vandermonde matrix with at most $\O(s^2)$ entries,
	and so constructing $D$ takes $\O(s^2)$ steps.  Solving for $\vec{z}$ takes $\O(s^3)$ steps,
	and determining the final $\vec{x}$ takes $\O(n)$ steps.
\end{proof}

This theorem provides us with an $s$-advice-sparse-recovery set, using $2s$ measurements.  We will
now leverage this in the next subsection to get a full algorithm for low-rank-recovery.

\subsection{Low Rank Recovery}\label{sec:lrrtosparse}

In this subsection we describe how the problem of (exact, non-adaptive) $r$-low-rank-recovery
deterministically reduces to the problem of (exact, non-adaptive) $r$-advice-sparse-recovery.
We will first define a normal form for a matrix which we call \textit{$(<k)$-upper-echelon form},
which (recalling the notation of Section~\ref{sec:notation}) is roughly defined as saying that a
matrix $M$ has $M^{(< k)}$ in reduced row-echelon form.  We then show that for any matrix $M$ in
this form, the diagonal $M^{(k)}$ is sparse.  Thus, using sparse-recovery we can then recover this
diagonal. This process is then continued by using row-reduction to put $M$ in $(\le
k)$-upper-echelon form, and then recovering $M^{(k+1)}$ and so on.

The above process uses the sparse-recovery oracle in an adaptive way.  The algorithm we detail below
will actually use the sparse-recovery oracle non-adaptively.  The measurements made to the matrix
$M$ will be the sparse-recovery oracle applied to each $k$-diagonal.  While these diagonals are not
themselves sparse, we show that the row-reduction of $M$ (that makes $M$ into upper-echelon form)
acts such that we can simulate the adaptive measurements from the non-adaptive measurements by
computing the suitable corrections.

We now begin by describing some structural properties of matrices, which we will apply to understand
upper-echelon form.

\begin{definition}
	Let $M$ be an $n\times m$ matrix.  The entry $(i,j)$ is a \textbf{leading non-zero entry},
	if $M_{i,j}\ne0$ and $M_{i,j'}=0$ for $j'<j$.

	Denote $\lne(M)$ to be the set of all such leading non-zero entries.  If $S$ is a subset of
	entries in $M$, denote $\lne(S)\eqdef \lne(M)\cap S$.

	Denote $\lne_R(S)$ to be set containing the rows of the coordinates in $\lne(S)$, and denote
	$\lne_C(S)$ to be the multi-set containing the columns of the coordinates in $\lne(S)$.
\end{definition}

It is clear that each row can have at most one leading non-zero entry, and possibly none.  A column
could be associated with several leading non-zero entries.

\begin{definition}
	An $n\times m$ matrix $M$ is in \textbf{$(<k)$-upper-echelon form} if, for each
	$(i,j)\in\lne(M^{(<k)})$, $M_{i',j}=0$ for all $i<i'< k-j$.
\end{definition}

Note that a matrix is $(<k)$-upper-echelon if it is $(<k')$-upper-echelon and $k'\ge k$, and that every
matrix is vacuously in $(\le 0)$-upper-echelon form.

We now recall the following standard linear-algebraic fact about triangular systems, phrased in the
language of leading non-zero entries.

\begin{lemma}
	Let $M$ be an $n\times m$ matrix with all non-zero rows, such that $\lne_C(M)$ has no
	repetitions.  Then the rows of $M$ are linearly independent.
	\label{lem:tsi}
\end{lemma}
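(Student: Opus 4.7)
The plan is to prove this by the standard ``leftmost leading entry'' argument used for row-echelon form, suitably adapted to the hypothesis that $\lne_C(M)$ has no repetitions (rather than being strictly increasing as in true echelon form).

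First, I would set up the contrapositive in the form: suppose $\sum_{i} c_i M_i = \vec{0}$ for some scalars $c_i$ not all zero, where $M_i$ denotes the $i$-th row of $M$. Let $S \eqdef \{i : c_i \ne 0\}$, which is non-empty. Each row is non-zero by hypothesis, so each row has a well-defined leading non-zero entry; let $j_i$ denote the column of the leading non-zero entry of row $i$. Define $i^\star \in S$ so that $j_{i^\star} = \min_{i \in S} j_i$, and write $j^\star \eqdef j_{i^\star}$.

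Next, I would read off column $j^\star$ of the linear combination. For the row $i^\star$ itself, $M_{i^\star, j^\star} \ne 0$ by the definition of leading non-zero entry. For any other $i \in S \setminus \{i^\star\}$, the hypothesis that $\lne_C(M)$ has no repetitions gives $j_i \ne j^\star$, and by minimality of $j^\star$ over $S$ we have $j_i > j^\star$; hence $M_{i, j^\star} = 0$ since entries of row $i$ strictly to the left of its leading non-zero entry vanish. Therefore
\[
0 = \Bigl(\sum_{i} c_i M_i\Bigr)_{j^\star} = c_{i^\star} M_{i^\star, j^\star},
\]
which forces $c_{i^\star} = 0$, contradicting $i^\star \in S$.

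This is a straightforward argument and I do not anticipate a real obstacle; the only subtlety to be careful about is that the hypothesis is weaker than being in row-echelon form (the leading-entry columns need only be \emph{distinct}, not increasing in row order), but the proof only uses distinctness, via the minimality step. No use of the matrix being in any upper-echelon form is needed here, which is why this lemma can later be invoked cleanly as a ``triangular system'' fact in the analysis of $(<k)$-upper-echelon matrices.
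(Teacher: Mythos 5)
Your proof is correct and is essentially the paper's argument: both isolate the row whose leading non-zero entry sits in the leftmost column among the rows with non-zero coefficient, and observe that no other such row can contribute to that column. The paper phrases this by first permuting the rows so the leading columns are increasing, but that is only a cosmetic difference from your direct minimality argument.
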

\begin{proof}
	Denote the column of the leading non-zero entry of row $i$ by $j_i$.  Each row must have
	such a value as each row is non-zero.  As linear independence is invariant under
	permutation, we assume without loss of generality that the rows are ordered such that the
	$j_i$ are strictly increasing with $i$.  This is possible as the $j_i$ are assumed to be
	distinct. Write these rows as vectors $\vec{v}^{(i)}$. Now consider any non-trivial linear
	combination $\sum_i c_i\vec{v}^{(i)}$.  Pick ${i_0}$ to be the least number such that
	$c_{i_0}\ne 0$.  As the $j_i$ are strictly increasing, it follows that the $j_{i_0}$-th
	entry of $\vec{v}^{(i)}$ is zero for $i>i_0$.  Thus, we now expand out the $i_0$-th index of
	the above summation \[(\sum_i c_i\vec{v}^{(i)})_{j_{i_0}}=\sum_{i<i_0} c_i\cdot
	\vec{v}^{(i)}_{j_{i_0}}+c_{i_0}\vec{v}^{(i_0)}_{j_{i_0}}+\sum_{i_0<i}c_i\cdot\vec{v}^{(i)}_{j_{i_0}}=\sum_{i<i_0}
	0\cdot
	\vec{v}^{(i)}_{j_{i_0}}+c_{i_0}\vec{v}^{(i_0)}_{j_{i_0}}+\sum_{i_0<i}c_i\cdot0=c_{i_0}\vec{v}^{(i_0)}_{j_{i_0}}\ne
	0\] Thus we see that this linear combination is non-zero, and as this was any non-trivial
	linear combination it follows these rows are linearly independent.
\end{proof}

We now show that matrices in upper-echelon form cannot have many leading non-zero entries.

\begin{lemma}
	Let $M$ be an $n\times m$ matrix of rank $\le r$.  If $M$ is $(<k)$-upper-echelon, then
	$|\lne(M^{(< k)})|\le r$.  Further, $\lne_C(M^{(<k)})$ has no repetitions.
	\label{lem:uer}
\end{lemma}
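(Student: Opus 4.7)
The plan is to prove the two assertions in the reverse of the stated order, since the second assertion (no column repetitions) is essentially the definition of upper-echelon form, and it feeds directly into the first via Lemma~\ref{lem:tsi}.

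First I would handle the ``no repetitions'' claim. Suppose for contradiction that some column $j$ contains two distinct leading non-zero entries of $M^{(<k)}$, say $(i_1,j)$ and $(i_2,j)$ with $i_1<i_2$. Both lie in $M^{(<k)}$, so in particular $i_2+j<k$, i.e.\ $i_2<k-j$. Applying the $(<k)$-upper-echelon condition at the entry $(i_1,j)\in\lne(M^{(<k)})$ gives $M_{i',j}=0$ for every $i'$ with $i_1<i'<k-j$; in particular $M_{i_2,j}=0$, contradicting $(i_2,j)\in\lne(M)$. Hence $\lne_C(M^{(<k)})$ has no repetitions.

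Next I would deduce $|\lne(M^{(<k)})|\le r$. Every row of $M$ contains at most one leading non-zero entry of $M$, so $|\lne(M^{(<k)})|=|\lne_R(M^{(<k)})|$; call this common value $t$ and let the corresponding rows of $M$ be $\vec{v}^{(1)},\ldots,\vec{v}^{(t)}$. Each $\vec{v}^{(\ell)}$ is non-zero (it contains a leading non-zero entry), and the column indices of these leading entries are exactly the elements of $\lne_C(M^{(<k)})$, which by the previous paragraph has no repetitions. Lemma~\ref{lem:tsi} then applies and yields that $\vec{v}^{(1)},\ldots,\vec{v}^{(t)}$ are linearly independent rows of $M$. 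Hence $t\le\rank(M)\le r$, completing the proof.

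I do not expect any serious obstacle: the only subtlety is making sure that the set of rows used in Lemma~\ref{lem:tsi} is formed from the original rows of $M$ (with their \emph{own} leading non-zero entries, which are the entries in $\lne(M^{(<k)})$ by definition of $\lne(S)=\lne(M)\cap S$), rather than from the rows of the restricted array $M^{(<k)}$, whose ``leading'' columns could differ. Once this identification is made explicit, both parts are one-line deductions.
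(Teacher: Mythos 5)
Your proof is correct and follows essentially the same route as the paper: the $(<k)$-upper-echelon condition at a leading non-zero entry $(i,j)$ forces the later entries of column $j$ within $M^{(<k)}$ to vanish, giving no repetitions in $\lne_C(M^{(<k)})$, and then Lemma~\ref{lem:tsi} applied to the rows in $\lne_R(M^{(<k)})$ gives linear independence and the bound by $\rank(M)\le r$. The subtlety you flag about using the leading entries of the original rows of $M$ is handled correctly.
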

\begin{proof}
	Given $(i,j)\in\lne(M^{(<k)})$, $(<k)$-upper-echelon form implies that $M_{i',j}=0$ for any
	$i'$ with $i<i'<k-j$.  It follows that given two distinct entries $(i,j),(i',j)\in M^{(<
	k)}$ at most one can be a leading non-zero entry.  Thus we see that $\lne_C(M^{(< k)})$ has
	no repetitions.

	Lemma~\ref{lem:tsi} then implies that the rows in $\lne_R(M^{(< k)})$ are linearly
	independent.  Thus, $|\lne(M^{(< k)})|\le \rank(M)\le r$.
\end{proof}

The next lemma is the key insight of the algorithm.  It shows that, for any matrix in
$(<k)$-upper-echelon form, the $k$-diagonal must be sparse.  Further, the sparseness is bounded by
twice the rank of the matrix (the lemma presents a more refined statement).

\begin{lemma}
	Let $M$ be an $n\times m$ matrix with rank $\le r$, such that $M$ is in $(<k)$-upper-echelon
	form with $0\le k\le n+m-2$. Let $s\eqdef |\lne(M^{(<k)})|$, $I\eqdef \lne_R(M^{(<k)})$,
	$J\eqdef \lne_C(M^{(<k)})$.

	Then $M^{(k)}$ has $\le r-s$ non-zero entries with columns outside $S\eqdef (k-I)\cup J$,
	and thus $M^{(k)}$ is $(r+s)$-sparse.
	\label{lem:uesparse}
\end{lemma}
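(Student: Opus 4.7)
The plan is to produce, from the data of the lemma, a set of linearly independent rows of $M$ whose size forces the desired bound on the sparsity of $M^{(k)}$. By Lemma~\ref{lem:uer}, $|I|=|J|=s$ and the columns in $J$ are distinct, so $|S|=|(k-I)\cup J|\le 2s$. Since each entry of $M^{(k)}$ occupies a distinct column (namely $k-i$ for row $i$), the number of non-zero entries of $M^{(k)}$ lying in columns of $S$ is at most $|S|\le 2s$. It therefore suffices to show that the set
\[
I'\eqdef \{i' : M_{i',k-i'}\neq 0,\ i'\notin I,\ k-i'\notin J\}
\]
satisfies $|I'|\le r-s$, since then $M^{(k)}$ has at most $2s+(r-s)=r+s$ non-zero entries, establishing the sparsity claim as well.

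First I would make the key structural observation: for any row index $i\notin I$, every entry $(i,j)$ with $j<k-i$ must vanish. Indeed, such an entry lies in $M^{(<k)}$; if it were non-zero, then the leading non-zero entry of row $i$ would lie in some column $j^*\le j<k-i$, hence in $M^{(<k)}$, forcing $i\in I$. In particular, for each $i'\in I'$, row $i'$ is non-zero (because $M_{i',k-i'}\ne 0$) and its leading non-zero entry sits at column exactly $k-i'$.

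Now consider the rows indexed by $I\cup I'$. These are all non-zero rows: rows in $I$ by definition of $\lne$, and rows in $I'$ by the observation above. Their leading-non-zero columns form the multiset $J\sqcup\{k-i':i'\in I'\}$. The columns in $J$ are distinct (Lemma~\ref{lem:uer}); the columns $k-i'$ for $i'\in I'$ are distinct as the $i'$ are; and the two pieces are disjoint by the very definition of $I'$ (which excludes $k-i'\in J$). Hence the full multiset of leading-non-zero columns of the rows in $I\cup I'$ has no repetitions, so Lemma~\ref{lem:tsi} applies and these rows are linearly independent. Since $I\cap I'=\emptyset$ (as $I'$ excludes rows in $I$), we conclude $s+|I'|=|I\cup I'|\le \rank(M)\le r$, giving $|I'|\le r-s$ and completing the proof.

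The only conceptual step is choosing the right set of rows to pair up with $I$ via Lemma~\ref{lem:tsi}; the disjointness of $J$ from $\{k-i':i'\in I'\}$ is exactly what the definition of $S=(k-I)\cup J$ was engineered to ensure, and once one sees this the rest is just the triangular-system argument of Lemma~\ref{lem:tsi}. I expect no serious obstacle beyond bookkeeping.
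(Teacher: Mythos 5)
Your proof is correct and follows essentially the same route as the paper's: you identify the same set $I'$ (rows with a non-zero $k$-diagonal entry in a column outside $S$), show each such row's leading non-zero entry sits exactly at column $k-i'$, combine with $I$ whose leading columns lie in $J$, and invoke Lemma~\ref{lem:tsi} via distinctness of leading columns to get $|I\cup I'|\le\rank(M)\le r$. The only superficial difference is that you spell out the definition of $I'$ in set-builder notation rather than verbally; the structure of the argument is the same.
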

\begin{proof}
	Note that by Lemma~\ref{lem:uer} we have that $s\le r$, so that $r-s\ge 0$ and $r+s\le 2r$.

	Let $I'$ be the rows that contain non-zero entries in $M^{(k)}$, whose columns lie outside
	$S$.  We will show that the rows in $I\cup I'$ are linearly independent.  This will complete
	the claim as $|I'|\le \rank(M)-|I|\le r-s$, and observing that $|S|\le 2s$.

	Now consider the columns of the leading non-zero entries of the rows in $I'$.  Any row $i\in
	I$ intersects $M^{(k)}$ at column $k-i\in S$.  This means that row $i$ cannot contain a
	non-zero entry in $M^{(k)}$ with column outside of $S$, so $I$ and $I'$ are disjoint.

	Any row $i$ with a non-zero entry in $M^{(<k)}$ must have a leading non-zero entry in
	$M^{(<k)}$, and thus any such $i$ is contained in $I$.  Thus, as $I$ and $I'$ are disjoint,
	it follows that any row $i'\in I'$ only has zero entries within $M^{(<k)}$.  As such a row
	$i'$ has a non-zero entry on $M^{(k)}$, it follows that the leading non-zero entry of a row
	$i'\in I'$ is $(i',k-i')$.  This implies that the columns of the leading non-zero entries of
	the rows in $I'$ are distinct (and outside of $S$ by construction).

	The rows in $I$ have leading non-zero entries in $J\subseteq S$ and by Lemma~\ref{lem:uer},
	$J$ has no repetitions.  Thus, it follows that the rows $I\cup I'$ all have distinct columns
	for their leading non-zero entries, which, by Lemma~\ref{lem:tsi}, implies that these rows
	are linearly independent.  Invoking the rank bound, as mentioned above, completes the proof.
\end{proof}

This lemma motivates the following idea for low-rank reconstruction.  Iteratively, convert (using
row-reduction) the matrix into $(<k)$-upper-echelon form and then reconstruct, using any
sparse-recovery method, the $k$-th diagonal.  This is exactly the algorithm we will present.
However, to establish correctness, we need to first understand how to convert a matrix into
$(<k)$-upper-echelon form, even in situations when $M^{(\ge k)}$ is unknown.

To do this, we will use row-reduction, as implemented by left-multiplication by lower-triangular
matrices.  The following lemma shows that such multiplication can be computed on the partial
matrices $M^{(< k)}$.

\begin{lemma}
	Let $M$ be an $n\times m$ matrix, and $L$ be an $n\times n$ lower-triangular matrix.  Then
	$(LM)^{(< k)}$ is computable in $\O(\min(n,k)\min(m,k)k)$ arithmetic operations from $L$ and
	$M^{(< k)}$.
	\label{lem:kdiagcompute}
\end{lemma}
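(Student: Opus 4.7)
The plan is to exploit the lower-triangularity of $L$ to show that each entry $(LM)_{i,j}$ with $i+j<k$ can be written as a sum involving only entries of $M$ that lie in $M^{(<k)}$, and then bound the arithmetic cost of evaluating all such entries.

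First I would write out $(LM)_{i,j} = \sum_{\ell=0}^{n-1} L_{i,\ell} M_{\ell,j}$. Since $L$ is lower-triangular, $L_{i,\ell}=0$ for $\ell>i$, so the sum collapses to $(LM)_{i,j} = \sum_{\ell=0}^{i} L_{i,\ell} M_{\ell,j}$. The key observation is then: if $i+j<k$, then every index $\ell$ with $0\le \ell\le i$ satisfies $\ell+j \le i+j < k$, so $M_{\ell,j}$ lies in $M^{(<k)}$. Thus the formula computes $(LM)^{(<k)}$ using only the entries of $L$ and of $M^{(<k)}$, which establishes the computability claim.

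Next I would count operations. The entries to be computed are indexed by pairs $(i,j)$ with $0\le i<n$, $0\le j<m$, $i+j<k$; there are at most $\min(n,k)\cdot\min(m,k)$ such pairs since $i<\min(n,k)$ and $j<\min(m,k)$. For each such pair, the sum above has $i+1$ terms, each requiring one multiplication and one addition, so at most $\O(i+1)=\O(\min(n,k))$ operations, and also $\O(k)$ operations since $i<k$. Multiplying the number of entries by the per-entry cost gives the bound $\O(\min(n,k)\min(m,k)\cdot k)$ as claimed. (One could actually slightly sharpen this to $\O(\min(n,k)^2\min(m,k))$, which is consistent with the stated bound since $\min(n,k)\le k$.)

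There is no real obstacle here; the lemma is essentially a direct consequence of the structure of lower-triangular multiplication combined with the definition of the $(<k)$-diagonal region, and the proof is just unfolding definitions and counting. The only thing to be slightly careful about is checking that the index ranges of the sum indeed stay inside $M^{(<k)}$, which follows immediately from $\ell\le i$ and $i+j<k$.
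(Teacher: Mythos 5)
Your proof is correct and follows essentially the same argument as the paper: collapse the sum using lower-triangularity, note that $\ell\le i$ and $i+j<k$ keep all accessed entries of $M$ inside $M^{(<k)}$, and count $\O(\min(n,k)\min(m,k))$ entries each costing $\O(k)$ operations. Your sharpened bound of $\O(\min(n,k)^2\min(m,k))$ is a correct minor refinement but not needed.
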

\begin{proof}
	An entry $(LM)_{i,j}$, for $i+j< k$, is equal to $\sum_{l=0}^n L_{i,l}M_{l,j}$, which equals
	$\sum_{l=1}^i L_{i,l}M_{l,j}$ as $L$ is lower-triangular.  Further, this sum is computable
	from $L$ and the $(< k)$-diagonals of $M$ as $l+j\le i+j< k$.  The time bound is the obvious
	bound on computing each of $\O(\min(n,k)\min(m,k))$ sums of $\le k$ terms.
\end{proof}

We now establish a useful property on composing left-multiplication of special types of
lower-triangular matrices.

\begin{lemma}
	Let $L,L'$ be $n\times n$ invertible, lower-triangular matrices, with all $1$'s along the
	main diagonal.  Then $LL'$ is an invertible, lower-triangular matrix, with all $1$'s along
	the main diagonal,

	Further, if both $L-I_n$ and $L'-I_n$ only have non-zero entries in a subset $J$ of the
	columns, then $LL'-I_n$ also has this property.
	\label{lem:lt}
\end{lemma}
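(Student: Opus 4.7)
The plan is to verify the two parts separately, and both reduce to elementary matrix arithmetic.

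For the first part, I would simply compute the product entry-by-entry. To see $LL'$ is lower-triangular, for $i < j$ note that $(LL')_{i,j} = \sum_k L_{i,k} L'_{k,j}$, where $L_{i,k} = 0$ unless $k \le i$ and $L'_{k,j} = 0$ unless $k \ge j$; since $i < j$ these conditions are incompatible. For the diagonal entries, the same argument forces $k = i$, so $(LL')_{i,i} = L_{i,i} L'_{i,i} = 1$. Invertibility then follows immediately since the determinant of a triangular matrix is the product of its diagonal entries (which is $1$).

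For the second part, I would write $L = I_n + E$ and $L' = I_n + E'$, where by hypothesis $E$ and $E'$ only have non-zero entries in columns indexed by $J$. Then
\begin{equation*}
	LL' - I_n = E + E' + EE'.
\end{equation*}
The terms $E$ and $E'$ are supported on columns in $J$ by hypothesis, so it remains to check $EE'$. For $j \notin J$ and any $i$, $(EE')_{i,j} = \sum_k E_{i,k} E'_{k,j} = 0$, since every $E'_{k,j}$ factor vanishes. Hence $EE'$ is also supported on columns in $J$, and the sum inherits this support.

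I do not anticipate any obstacle here; both claims are straightforward from the definitions. The only minor subtlety is that the hypothesis ``invertible'' in the statement is actually redundant (a triangular matrix with unit diagonal is always invertible), but this does not affect the argument.
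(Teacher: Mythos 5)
Your proof is correct and follows essentially the same elementary entry-wise computation as the paper; the decomposition $LL'-I_n = E + E' + EE'$ is just a clean repackaging of the paper's direct expansion of $(LL')_{i,j}$, and your observation that invertibility is automatic is also right.
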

\begin{proof}
	That facts that $LL'$ is an invertible, lower-triangular matrix and has all $1$'s along the
	main diagonal, are each straightforward.

	We now prove the desired property of $LL'-I_n$.  Consider some entry $(i,j)$ in $LL'$, with
	$j\notin J$ and $i>j$.  It is then that
	\begin{align*}
		(LL')_{i,j}
			&=\sum_{k\in\llb n\rrb} L_{i,k}L'_{k,j}
			=\sum_{i\ge k\ge j} L_{i,k}L'_{k,j}
			=L_{i,i}L'_{i,j}+\sum_{i> k>j} L_{i,k}L'_{k,j}+L_{i,j}L'_{j,j}\\
			&=1\cdot L'_{i,j}+\sum_{i> k>j} L_{i,k}L'_{k,j}+L_{i,j}\cdot 1
	\end{align*}
	Observe that as $i>j$ and $j\notin J$, $L'_{i,j}=L'_{k,j}=L_{i,j}=0$ (for any $k>j$).  Thus,
	the above sum is zero.  Hence, the desired entries $(i,j)$ with $i>j$ and $j\notin J$ are
	zero, proving the claim.
\end{proof}

We now use these lemmas to analyze Algorithm~\ref{alg:makeue}, which gives a way to transform a
matrix in $(<k)$-upper-echelon into one which is $(\le k)$-upper-echelon, and does so efficiently.

\begin{algorithm}\caption{Transform a $(<k)$-upper-echelon matrix into $(\le k)$-upper-echelon form}
\label{alg:makeue}
\begin{algorithmic}[1]
	\Procedure{MakeUpperEchelon}{$M$,$n$,$m$,$k$}
		\State $L\leftarrow I_n$
		\ForAll{$(i,j)\in\lne(M^{(<k)})$}
			\State $L\leftarrow (I_n-\frac{M_{k-j,j}}{M_{i,j}}E_{k-j,i})\cdot L$
			\Comment{$M_{i,j}\ne 0$ as $(i,j)$ is leading non-zero entry in row $i$}
			\label{line:updateL}
		\EndFor
		\State \textbf{return} $L$
	\EndProcedure
\end{algorithmic}
\end{algorithm}

\begin{claim}
	Let $M$ be an $n\times m$ matrix of rank $\le r$, such that $M$ is in $(<k)$-upper-echelon
	form, for $0\le k\le n+m-2$.  Then the procedure \textsc{MakeUpperEchelon}$(M,n,m,k)$
	(Algorithm~\ref{alg:makeue}) runs in $\O(rn)$ time and returns an invertible $n\times n$
	lower-triangular matrix $L$ computed only from $M^{(\le k)}$, such that $LM$ is $(\le
	k)$-upper-echelon and $(LM)^{(<k)}=M^{(<k)}$.

	Also, $L$ is the product of $\le r$ elementary matrices and each main diagonal entry is
	equal to 1.

	Further, $L-I_n$ only has non-zero entries with columns in $\lne_R(M^{(<k)})$.
	\label{clm:makeue}
\end{claim}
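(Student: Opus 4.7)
The plan is to verify the claimed properties of $L$ one by one, interpreting each assignment $L \leftarrow (I_n - \frac{M_{k-j,j}}{M_{i,j}} E_{k-j,i}) \cdot L$ as composing an elementary row operation. I enumerate $\lne(M^{(<k)}) = \{(i_1,j_1), \ldots, (i_s, j_s)\}$ and write the corresponding factors as $L_t = I_n - c_t E_{k-j_t, i_t}$ with $c_t = M_{k-j_t, j_t}/M_{i_t,j_t}$, so that $L = L_s L_{s-1} \cdots L_1$. Several structural properties are then immediate: since $(i_t, j_t) \in \lne(M^{(<k)})$ forces $k - j_t > i_t$, each $L_t$ is invertible lower-triangular with unit diagonal and a single off-diagonal nonzero in column $i_t$; by Lemma~\ref{lem:lt} the product $L$ inherits these properties and has $L - I_n$ supported in columns contained in $\lne_R(M^{(<k)})$; by Lemma~\ref{lem:uer} we have $s \le r$, so $L$ is a product of at most $r$ elementary matrices. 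Moreover, the algorithm reads only the entries $M_{i_t, j_t}$ (in $M^{(<k)}$) and $M_{k-j_t, j_t}$ (on the $k$-diagonal) together with the leading-non-zero positions themselves, so $L$ depends only on $M^{(\le k)}$.

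The substantive content is to show $(LM)^{(<k)} = M^{(<k)}$ and $(LM)_{k-j_t, j_t} = 0$ for every $t$, from which the $(\le k)$-upper-echelon property of $LM$ follows. I would view $L$ acting on $M$ as applying the operations $L_1, \ldots, L_s$ to $M$ in that order. The critical combinatorial point is that the $j_t$ are distinct (Lemma~\ref{lem:uer}), so only operation $t$ can ever modify row $k-j_t$, and in particular rows outside $\{k-j_1, \ldots, k-j_s\}$ are untouched. The delicate case is when some $i_{t'}$ coincides with an earlier target $k-j_t$: operation $t$ modifies row $i_{t'}$ before operation $t'$ uses it. From $i_{t'} + j_{t'} < k$ and $i_{t'} = k - j_t$ we deduce $j_{t'} < j_t$, and since $(i_t, j_t)$ is the leading non-zero of row $i_t$, subtracting a multiple of row $i_t$ leaves every column $j \le j_{t'}$ of row $i_{t'}$ fixed. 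Hence when operation $t'$ executes, $M_{k-j_{t'}, j_{t'}}$ and $M_{i_{t'}, j_{t'}}$ still hold their original values — so the precomputed $c_{t'}$ is the correct multiplier and the operation zeroes position $(k-j_{t'}, j_{t'})$ — and by the same leading-non-zero argument the columns $j < j_{t'}$ of row $k-j_{t'}$ remain unchanged. This yields both claimed equalities. The $(\le k)$-upper-echelon property of $LM$ then follows by combining these with the $(<k)$-upper-echelon hypothesis on $M$, treating separately the old leading non-zeros in $\lne(M^{(<k)})$ (whose required zeros below come from $M$ itself together with the new zero at $(k-j_t, j_t)$) and any new leading non-zeros lying entirely on the $k$-diagonal (whose required vacancy range is empty).

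The runtime bound is straightforward: each left-multiplication by $L_t$ amounts to updating one row of the current $L$ in $\O(n)$ time, and there are $\le r$ such updates. The main obstacle is the middle step, because the algorithm computes the multipliers $c_t$ from the \emph{original} $M$ rather than from the running state of a standard Gaussian elimination; the argument that this is nevertheless correct rests entirely on the leading-non-zero structure — especially the inequality $j_{t'} < j_t$ forced whenever $i_{t'} = k-j_t$ — which conspires to make the relevant entries immune to the cascading effect of earlier operations.
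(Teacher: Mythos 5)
Your proof is correct and follows essentially the same route as the paper: decompose $L$ into elementary factors, invoke Lemma~\ref{lem:lt} and Lemma~\ref{lem:uer} for the structural and counting claims, and show that each operation leaves $(LM)^{(<k)}$ intact while zeroing the targeted $k$-diagonal entry. Your explicit tracking of which operations can touch which rows (via distinctness of the $j_t$ and the inequality $j_{t'}<j_t$ when $i_{t'}=k-j_t$) is just a finer-grained unrolling of the paper's inductive invariant $(LM)^{(<k)}=M^{(<k)}$, and it correctly resolves the one genuinely delicate point, namely that the multipliers are read from the original $M$ rather than the running matrix.
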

\begin{proof}
	\underline{$(LM)^{(<k)}=M^{(<k)}$:} We argue that the identity $(LM)^{(<k)}=M^{(<k)}$ is
	invariant.  As $L=I_n$ initially, the identity holds at the beginning of the algorithm.  We
	now proceed by induction.

	In each run of Line~\ref{line:updateL}, we add a multiple of row $i$ to row $k-j$ in $LM$,
	where $(i,j)\in\lne(M^{(<k)})$ and thus $i+j< k$.  Thus, row $i$ in $M$ has the first $j-1$
	entries being zero.  By induction on the identity $(LM)^{(<k)}=M^{(<k)}$, the first $j-1$
	entries in row $i$ of $LM$ are also zero when Line~\ref{line:updateL} is run.  It follows
	that the only action of this update to $(LM)^{(\le k)}$ is to set $(LM)_{k-j,j}=0$.  Thus,
	$(LM)^{(<k)}$ is unchanged, so $(LM)^{(<k)}=M^{(<k)}$ still holds.

	\underline{$LM$ has $(\le k)$-upper-echelon form:} As $(LM)^{(<k)}=M^{(<k)}$ throughout the
	algorithm, and $M$ is in $(<k)$-upper-echelon form, it follows that $LM$ is in
	$(<k)$-upper-echelon form at termination.  To show $LM$ is in $(\le k)$-upper-echelon form
	upon termination,  it suffices to show that $(LM)_{k-j,j}=0$ for all $j\in\lne_C(M^{(<k)})$.
	As running Line~\ref{line:updateL} has exactly this effect (and these updates are disjoint
	and idempotent, thus do not conflict), and this line is run for all
	$(i,j)\in\lne(M^{(<k)})$, it follows that $LM$ is in $(\le k)$-upper-echelon form on
	termination.

	\underline{$L$ computable from the $(\le k)$-diagonals of $M$:} This is straightforward, as
	each query to $M$ is within the $(\le k)$-diagonals.

	\underline{$L$ is the product of $\le r$ elementary matrices:} Each update to $L$ by
	Line~\ref{line:updateL} left-multiplies $L$ by an elementary matrix.  By
	Lemma~\ref{lem:uer}, $|\lne(M^{(<k)})|\le r$, so the loop of the algorithm is run at most
	$r$ times.

	\underline{Structure of $L$:} By construction,  $L$ is the product of matrices of the form
	$I_n+cE_{k-j,i}$, where $i+j<k$ and $(i,j)\in\lne(M^{(<k)})$.  Regardless of the value of
	$c$, such a matrix is invertible, lower-triangular, with main diagonal entries all 1, and
	all non-zero entries of $(I_n+cE_{k-j,i})-I_n$ have columns in $\lne_R(M^{(<k)})$.  By
	Lemma~\ref{lem:lt} it follows that $L$ also has these properties.

	\underline{Complexity:} Left-multiplication by an elementary matrix can be done in $\O(n)$
	steps, and by the above analysis, there are $\le r$ such multiplications.  Further, by
	storing the leading non-zero entries in each row, the pairs $(i,j)$ can be determined in
	$\O(n)$ time.  Thus the time is $\O(rn)$ overall.
\end{proof}

We now present the low-rank recovery algorithm, and its analysis.

\begin{algorithm}\caption{Reconstruct a matrix from inner-products $\{\la M,R\ra\}_{R\in\mathcal{R}_k,0\le k\le n+m-2}$}
\label{alg:lrr}
\begin{algorithmic}[1]
	\Procedure{LowRankRecovery}{$n$,$m$,$\{\la M,R\ra\}_{R\in\mathcal{R}_k,0\le k\le n+m-2}$}
		\State $L\leftarrow I_n$
		\State $N\leftarrow 0^{n\times m}$
		\State $P\leftarrow 0^{n\times m}$
		\For{$0\le k\le n+m-2$}\label{line:for}
			\State $A\leftarrow 0^{n\times m}$\label{line:forbegin}
			\State $A^{(k)}\leftarrow ((L-I_n)N)^{(k)}$\label{line:upa}
			\State $S\leftarrow \left(k-\lne_R((P^{(<k)}))\right)\cup \lne_C(P^{(<k)})$
				\label{line:s}
			\State $P^{(k)}\leftarrow \sr_k(\{\la M,R\ra+\la A,R\ra\}_{R\in\mathcal{R}_k},S)$
				\label{line:srk}
			\State $N^{(k)}\leftarrow P^{(k)}-A^{(k)}$ \label{line:n}
			\State $L_k\leftarrow \Call{MakeUpperEchelon}{P,n,m,k}$ \label{line:mue}
			\State $P^{(k)}\leftarrow (L_kP)^{(k)}$ \label{line:upp}\Comment{Update
			$\lne(P^{(\le k)})$}
			\State $L\leftarrow L_kL$\label{line:upl}
		\EndFor\label{line:endfor}
	\EndProcedure
\end{algorithmic}
\end{algorithm}

\begin{theorem}\label{thm:lrr to sparse}
	Let $m\ge n\ge r\ge 1$. For $0\le k\le n+m-2$, let $\mathcal{R}_k$ be sets of $n\times m$
	matrices such that
	\begin{enumerate}
		\item For $k'\ne k$, $R^{(k')}=0$ for $R\in\mathcal{R}_k$
		\item $\{R^{(k)}\}_{R\in\mathcal{R}_k}$ forms a
		$\min(r,k+1,(n+m)-(k+1))$-advice-sparse-recovery set.
	\end{enumerate}
	Then $\mathcal{R}=\bigcup_k \mathcal{R}_k$ is an $r$-low-rank-recovery set.

	If, for each $k$, the set $\{R^{(k)}\}_{R\in\mathcal{R}_k}$ has an
	$\min(r,k+1,(n+m)-(k+1))$-advice-sparse-recovery algorithm $\sr_k$ running in time $t_k$,
	then Algorithm~\ref{alg:lrr} performs $r$-low-rank-recovery for $\mathcal{R}$ in time
	$\O\left(rnm+\sum_{k=2}^{n+m} (t_k+n|\mathcal{R}_k|)\right)$.
\end{theorem}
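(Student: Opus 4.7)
The plan is to prove by induction on $k$ the following invariants at the top of iteration $k$: (i) $N^{(<k)} = M^{(<k)}$; (ii) $P^{(<k)} = (LM)^{(<k)}$; and (iii) $P$ is in $(<k)$-upper-echelon form. The base case $k=0$ holds trivially, and once the invariants are carried through the loop, we get $N=M$ and hence $M$ is uniquely recovered from the measurements $\{\la M,R\ra\}_{R\in\mathcal{R}}$, which proves that $\mathcal{R}$ is an $r$-low-rank-recovery set and that Algorithm~\ref{alg:lrr} performs the recovery.

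The central step is verifying that the oracle call on line~\ref{line:srk} returns $(LM)^{(k)}$. Because $L$ is lower triangular with unit diagonal, $L-I_n$ is strictly lower triangular, so $((L-I_n)N)^{(k)}$ depends only on $N^{(<k)}$; invariant~(i) then gives $A^{(k)}=((L-I_n)M)^{(k)}$. Since each $R\in\mathcal{R}_k$ is supported on the $k$-diagonal, a short expansion gives $\la M,R\ra+\la A,R\ra = \la LM,R\ra$. So the oracle is being handed the measurements of $(LM)^{(k)}$ against the advice-sparse-recovery set $\{R^{(k)}\}_{R\in\mathcal{R}_k}$. To invoke that oracle's guarantee one must verify the advice-sparsity hypothesis for the set $S$ on line~\ref{line:s}: $LM$ has rank $\le r$ and, by invariant~(iii), is $(<k)$-upper-echelon, so Lemma~\ref{lem:uesparse} bounds $|S|\le 2s$ and the number of non-zeros of $(LM)^{(k)}$ outside $S$ by $r-s$, where $s=|\lne(P^{(<k)})|$. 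Combined with Lemma~\ref{lem:uer} and the trivial bound that the $k$-diagonal itself has at most $\min(k+1,(n+m)-(k+1))$ entries, this matches the advice-sparse-recovery parameter $\min(r,k+1,(n+m)-(k+1))$, and the oracle correctly returns $P^{(k)}=(LM)^{(k)}$. Then $N^{(k)}=P^{(k)}-A^{(k)}=M^{(k)}$ re-establishes (i), and applying Claim~\ref{clm:makeue} to obtain $L_k$ with $(L_kP)^{(<k)}=P^{(<k)}$ and $L_kP$ in $(\le k)$-upper-echelon form re-establishes (ii) and (iii) for the next iteration.

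For the running time, each oracle call costs $t_k$, and the $|\mathcal{R}_k|$ adjusted measurements $\la M,R\ra+\la A,R\ra$ can each be formed in $O(n)$ time since $R$ has at most $n$ non-zeros, contributing $O(n|\mathcal{R}_k|)$ per iteration. The rest of the cost — maintaining $L$ and forming the corrections $A^{(k)}$ — is the delicate part. Here one uses Claim~\ref{clm:makeue} together with Lemma~\ref{lem:lt}: each $L_k-I_n$ has non-zero columns only in $\lne_R(P^{(<k)})$, so by induction the cumulative $L-I_n$ also has its non-zero columns confined to $\lne_R(P^{(<k)})$, a set of size at most $r$ (by Lemma~\ref{lem:uer}), and the total number of elementary row operations applied over all \textsc{MakeUpperEchelon} calls is bounded by the total number of leading non-zeros ever introduced, which is again $\le r$. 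Storing $L$ by its at-most-$r$ active columns lets us apply $L-I_n$ in $O(r)$ per entry, so computing $A^{(k)}$ on the $k$-diagonal costs $O(r\min(n,m))$ per iteration; summed over the $O(n+m)$ iterations this yields the $O(rnm)$ term, and an analogous amortization handles the $L$-updates. The main obstacle is precisely this bookkeeping: proving that $L$ can always be represented and applied in time proportional to its combinatorial support rather than its algebraic density. Once that is set up, the sparse-recovery reduction and invariant maintenance proceed by routine verification from the earlier lemmas.
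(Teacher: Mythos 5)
Your proposal follows essentially the same route as the paper: the same loop invariants (with $N$, $P$, $L$), the same key identity $(LM)^{(k)}=M^{(k)}+((L-I_n)N)^{(k)}$ justified via strict lower-triangularity of $L-I_n$, the same appeal to Lemma~\ref{lem:uesparse} and Lemma~\ref{lem:uer} to certify the advice-sparsity hypothesis, and the same per-iteration cost accounting. One small correction: your amortization claim that the total number of elementary row operations across all \textsc{MakeUpperEchelon} calls is $\le r$ is false (each call touches every \emph{current} leading non-zero entry, so the total is $\O(r(n+m))$), but this is harmless since the per-call bound of $\le r$ elementary matrices, each applied in $\O(n)$ time, already gives $\O(rn)$ per iteration and hence the claimed $\O(rnm)$ overall.
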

\begin{proof}
	We will first show that $\mathcal{R}$ is an $r$-low-rank-recovery set by showing that
	Algorithm~\ref{alg:lrr} performs recovery, assuming oracle access to
	$r$-advice-sparse-recovery oracles $\sr_k$.  We will then analyze the run-time.

	\begin{claim}
		The following invariants hold at Line~\ref{line:endfor}, at the end of the loop.
		\begin{enumerate}
			\item $N^{(\le k)}=M^{(\le k)}$\label{inv:m}
			\item $P^{(\le k)}=(LM)^{(\le k)}$\label{inv:lm}
			\item $P$ is in $(\le k)$-upper-echelon form\label{inv:ue}
			\item $L$ is lower-triangular, invertible , main diagonal is all 1's, and
			$L-I_n$ only has non-zero entries with columns in
			$\lne_R(P^{(<k)})$\label{inv:l}
		\end{enumerate}
	\end{claim}
	\begin{proof}
		The proof will be by induction.

		\underline{$k=0$:} The loop begins with $L=I_n$, $N=0_n$, $P=0_n$.  It follows that
		$A=0_n$ in this run of the loop, and that $S=\emptyset$.  Thus, $P^{(0)}$ is set to
		$\sr_0(\{\la M,R\ra\}_{R\in\mathcal{R}_0},\emptyset)$.  As $r\ge 1$, we get that
		$\mathcal{R}_0^{(0)}$ is a $1$-advice-sparse-recovery set and as $M^{(0)}$ has at
		most $1$ element, it follows that $\sr_0$ recovers it correctly and thus
		$P^{(0)}=M^{(0)}$ after Line~\ref{line:srk}.  As $A=0_n$ it follows that $N^{(\le
		0)}=M^{(\le 0)}$ also, satisfying Invariant~\ref{inv:m}.

		Now observe that the procedure \textsc{MakeUpperEchelon}, when run on $k=0$, will
		always return $I_n$.  Thus, $L_k$, and $L$, are both $I_n$ at the end of the loop,
		satisfying Invariant~\ref{inv:l}.  Invariant~\ref{inv:ue} is vacuously true as any
		matrix is in $1$-upper-echelon form.  Finally, using that $L=L_k=I_n$, we see that
		$P$ is unchanged after Line~\ref{line:srk} and so $P^{(\le 0)}=(LM)^{(\le 0)}$,
		satisfying Invariant~\ref{inv:lm}.

		\underline{$k>0$:} Using that the invariants held at $k-1$, we now establish them at
		$k$.  As $P^{(<k)}=(LM)^{(<k)}$ and $P$ is in $(<k)$-upper-echelon form, it follows
		that $LM$ is in $(<k)$-upper-echelon form.  By Lemma~\ref{lem:uesparse}, it follows
		$(LM)^{(k)}$ has at most $r-s/2$ non-zero entries with columns outside of
		$S=(k-\lne_R((LM)^{(<k)}))\cup \lne_C((LM)^{(<k)})$, where $s=|\lne((LM)^{(<k)})|$
		and $|S|\le 2s$.  However, using again that $P^{(<k)}=(LM)^{(<k)}$ it follows that
		$(LM)^{(k)}$ has at most $r-|S|/2$ non-zero entries with columns outside of $S$,
		where $S$ is as constructed in Line~\ref{line:s}.  As $(LM)^{(k)}$ has
		$\min(k+1,(n+m)-(k+1),n)$ non-zero entries total, and $\mathcal{R}_k$ is an
		$\min(r,k+1,(n+m)-(k+1))$-advice-sparse-recovery set, it follows (as $r\le n$) that
		$\sr_k(\{\la LM,R\ra\}_{R\in\mathcal{R}_k},S)$ successfully recovers $(LM)^{(k)}$.
		That is, if $r\ne \min(r,k+1,(n+m)-(k+1))$ then we have enough measurements to fully
		recover $(LM)^{(k)}$ regardless of its sparsity and the value of $S$(and the oracle
		will perform this recovery), and if $r=\min(r,k+1,(n+m)-(k+1))$ then we use the
		advice-sparse-recovery oracle.

		We now use the following claim to show how the $\{\la LM,R\ra\}$ can be computed.

		\begin{claim}
			At the beginning of the loop in Line~\ref{line:for},
			$(LM)^{(k)}=M^{(k)}+((L-I_n)N)^{(k)}$
		\end{claim}
		\begin{proof}
			As $LM=M+(L-I_n)M$, it is enough to show that
			$((L-I_n)M)^{(k)}=((L-I_n)N)^{(k)}$.

			By induction on the above invariants, $L$ is lower-triangular, with all 1's
			along the main diagonal, and $N^{(<k)}=M^{(<k)}$.  Thus,
			$(L-I_n)_{i,\ell}=0$ for $i\le \ell$, and $M_{l,j}=N_{\ell,j}$ for
			$\ell<k-j$. For any $j\le k$,
			\begin{align*}
				((L-I_n)M)_{k-j,j}
					&=\sum_{\ell\in\llb n\rrb} (L-I_n)_{k-j,\ell}M_{\ell,j}
					=\sum_{\ell<k-j} (L-I_n)_{k-j,\ell}M_{\ell,j}
					=\sum_{\ell<k-j} (L-I_n)_{k-j,\ell}N_{\ell,j}\\
					&=\sum_{\ell\in\llb n\rrb} (L-I_n)_{k-j,\ell}N_{\ell,j}
					=((L-I_n)N)_{k-j,j}
			\end{align*}
			Thus $((L-I_n)M)^{(k)}=((L-I_n)N)^{(k)}$, giving the claim.
		\end{proof}

		The above claim shows that at Line~\ref{line:srk} we have that $\la LM,R\ra=\la
		M,R\ra+\la A,R\ra$, for all $R\in\mathcal{R}_k$, using that $R^{(k')}=0$ for $k'\ne
		k$. This shows that Line~\ref{line:srk} correctly implements advice-sparse-recovery
		of $(LM)^{(k)}$, and thus sets $P^{(k)}$ to this value.  It follows that at the end
		of this line that $P^{(\le k)}=(LM)^{(\le k)}$.

		\underline{Invariant~\ref{inv:m}:} Using the identity proved in the above claim, and
		the just proven fact that $P^{(\le k)}=(LM)^{(\le k)}$ at the end of
		Line~\ref{line:srk}, it follows that at the end of Line~\ref{line:n} that
		$N^{(k)}=M^{(k)}$, and thus $N^{(\le k)}=M^{(\le k)}$.  As $N$ is not changed
		further, this establishes Invariant~\ref{inv:m}.

		\underline{Invariant~\ref{inv:ue}:} We now examine
		Lines~\ref{line:mue}--\ref{line:upl}.  As $P$ has only changed in its $k$-diagonal,
		it is still in $(<k)$-upper-echelon form. Thus, Line~\ref{line:mue} returns $L_k$
		such that $L_kP$ is in $(\le k)$-upper-echelon form, by Claim~\ref{clm:makeue}.
		Further $(L_kP)^{(\le k)}$ only differs from $P^{(\le k)}$ along the $k$-diagonal,
		so it follows that after the update in Line~\ref{line:upp} that $P$ is in $(\le
		k)$-upper-echelon form.  As $P$ is not further modified, this establishes
		Invariant~\ref{inv:ue}.

		\underline{Invariant~\ref{inv:lm}:} Further, as we take $L\leftarrow L_kL$ in
		Line~\ref{line:upl} and previously had that $P^{(\le k)}=(LM)^{(\le k)}$, it follows
		that at the end of Line~\ref{line:upl} we have that $P^{(\le k)}=(LM)^{(\le k)}$
		still, as both $P$ and $LM$ have been multiplied by $L_k$.  This establishes
		Invariant~\ref{inv:lm}.

		\underline{Invariant~\ref{inv:l}:} In Line~\ref{line:mue}, Claim~\ref{clm:makeue}
		shows that $L_k$ is a lower-triangular and invertible matrix, with main diagonal
		entries all 1's, and $L_k-I_n$ only has non-zero entries in columns
		$\lne_R(P^{(<k)})$.  As $P^{(<k)}$ is not modified further, this remains true at the
		end of the loop at Line~\ref{line:endfor}. By induction, at Line~\ref{line:for} we
		have that $L$ is lower-triangular, invertible, with main diagonal entries all 1's,
		and $L-I_n$ only has non-zero entries in columns $\lne_R(P^{(<(k-1))})$.  As
		$P^{(<(k-1))}$ remains unchanged throughout this iteration of the loop, this is also
		true at the beginning of Line~\ref{line:upl}.  By Lemma~\ref{lem:lt}, it follows
		that after Line~\ref{line:upl} $L$ still has the properties of being
		lower-triangular, invertible, main diagonal entries being 1's, and $L-I_n$ only has
		non-zero entries in $\lne_R(P^{(<k)})$. This establishes Invariant~\ref{inv:l}

		Thus, each of the invariants are established for this value of $k$ given that they
		hold for $k-1$, so the invariants hold for all $k$ by induction.
	\end{proof}

	The above claim shows that at the end of the algorithm, $N^{(\le k)}=M^{(\le k)}$ for
	$k=n+m-2$.  But this implies $N=M$, and thus $M$ is reconstructed successfully.

	\underline{Run-time Analysis:} We now bound the run-time of Algorithm~\ref{alg:lrr}. The
	steps outside the for-loop take $\O(nm)$, so it suffices to bound each step of the loop. We
	will show that each step of the loop takes $\O(rn+t_k+n|\mathcal{R}_k|)$ steps.  As there
	are $n+m$ such iterations of the loop, the quoted bound follows.

	We begin by noting that the algorithm will not recompute $\lne(P^{(<k)})$ at each stage.
	Instead, this will be maintained throughout the algorithm.  As each row of $P$ can have at
	most one leading non-zero entry, this is easily stored and indexed.  Further, as
	$P^{(<k)}=(LM)^{(<k)}$ and the rank bound on $M$ shows, via Lemma~\ref{lem:uer}, that
	$|\lne((LM)^{(<k)})|\le r$, it follows that if the set $\lne(P^{(<k)})$ is maintained as a
	linked list, that traversing it entirely takes $\O(r)$ time.

	Note that we do not need to modify $\lne(P^{(<k)})$ when running \textsc{MakeUpperEchelon},
	and can defer modification to after Line~\ref{line:upp}.  At that point $P^{(\le k)}$ has
	been determined, and can be used to compute $\lne(P^{(< (k+1))})=\lne(P^{(\le k)})$ in
	$\O(n)$ time.  Thus, $\lne(P^{(<k)})$ can be maintained within the quoted time bounds, and
	accessed as a $\O(r)$ sized linked list.

	We now analyze the lines of the loop. As written, Line~\ref{line:forbegin} takes
	$\Theta(nm)$ time, which is above the quoted run-time bounds.  However, one can observe that
	$A$ is only ever accessed at the values $A^{(k)}$, when noting that $R\in\mathcal{R}_k$ is
	only non-zero on its $k$-diagonal.  Thus, Line~\ref{line:forbegin} is actually superfluous
	and can be omitted.

	Line~\ref{line:upa} takes $\O(rn)$ steps. For, the above invariants show that $L-I_n$ only
	has non-zero entries in the columns $\lne_R(P^{(<k)})$, and as discussed above this set has
	at most $r$ elements.  Thus, each of the $\le n)$ elements of $A^{(k)}$ is the sum of $\le
	r$ elements of $N$.  Thus $A^{(k)}$ can be computed in $\O(rn)$ steps.

	Line~\ref{line:s} takes $\O(r)$ steps, as $\lne_R(P^{(<k)})$ is pre-computed.

	Line~\ref{line:srk} takes $\O(t_k+n|\mathcal{R}_k|)$ steps.  For, each inner product $\la
	A,R\ra$ takes $\O( n)$ steps (as each matrix is only non-zero on the $k$-diagonal, which has
	at most $n$ entries), and there are $|\mathcal{R}_k|$ such inner-products.  Running $\sr_k$
	takes $t_k$ steps, by definition.

	Line~\ref{line:n} takes $\O(n)$ steps, as the $k$-diagonal has at most this many entries.

	Line~\ref{line:mue} takes $\O(rn)$ steps by Claim~\ref{clm:makeue}.

	Lines~\ref{line:upp} takes $\O(rn)$ steps, for as used above, $L_k-I_n$ has only non-zero
	entries with columns in $\lne_R(P^{(<k)})$, so each entry in $(L_kP)^{(k)}$ is the sum of at
	most $r+1$ products of entries in $L_k$ and $P$, and these products are determined by
	$\lne_R(P^{(<k)})$.  As there are at most $n$ such entries, the bound follows.

	Line~\ref{line:upl} takes $\O(rn)$ steps.  This is because $L_k$, by Claim~\ref{clm:makeue},
	is the product of $\le r$ elementary matrices, and left-multiplication by an elementary
	matrix takes $\O(n)$ steps.  As \textsc{MakeUpperEchelon} computes $L_k$ as a product of
	elementary matrices, the computation of $L_kL$ can also use this decomposition and thus is
	compute in $\O(rn)$ steps.

	Thus, the entire loop runs in $\O(rn+t_k+n|\mathcal{R}_k|)$ steps, and there are at most
	$n+m$ iterations of the loop, giving the bound.
\end{proof}

We now apply this reduction to our hitting set $\mathcal{D}_{2r,n,m}'$, which embeds the
sparse-recovery measurements corresponding to the dual Reed-Solomon code.

\begin{corollary}\label{cor:diaglrr}
	Let $1\le r\le n/2$, $m\ge n\ge 1$. Then $\mathcal{D}_{2r,n,m}'$ (from
	Construction~\ref{constr:diag}) has
	\begin{enumerate}
		\item $|\mathcal{D}_{2r,n,m}'|=2(n+m-2r)r$ \label{cor:diaglrr:size}
		\item Each matrix in $\mathcal{D}_{2r,n,m}'$ is
		$n$-sparse.\label{cor:diaglrr:sparse}
		\item $\mathcal{D}_{2r,n,m}'$ is a $r$-low-rank-recovery set\label{cor:diaglrr:lrr}
		\item Algorithm~\ref{alg:lrr}, combined with Algorithm~\ref{alg:prony}, performs
		low-rank-recovery for $\mathcal{D}_{2r,n,m}'$ in time
		$\O(rnm+(n+m)r^3)$\label{cor:diaglrr:time}
	\end{enumerate}
\end{corollary}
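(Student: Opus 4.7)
The plan is to verify each of the four claims by stitching together earlier results rather than doing any fresh mathematical work. The size bound (\ref{cor:diaglrr:size}) and the sparsity claim (\ref{cor:diaglrr:sparse}) follow immediately from Theorem~\ref{thm:diagonal set for matrices} applied to the hitting set $\mathcal{D}_{2r,n,m}'$: substituting $2r$ for $r$ in the size formula $(n+m-r)r$ yields $2(n+m-2r)r$, and the sparsity claim was already established there (each $D_{k,\ell}$ is supported on a single $k$-diagonal, which has at most $n$ entries).

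For the recovery claim (\ref{cor:diaglrr:lrr}), I would first note that Theorem~\ref{thm:diagonal set for matrices} shows $\mathcal{D}_{2r,n,m}'$ is an improper hitting set for rank $\le 2r$ matrices. Then Lemma~\ref{lem:hittolrr}, which states that any hitting set for rank $\le 2r$ is automatically an $r$-low-rank-recovery set, gives the claim directly.

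The main work is the running-time bound (\ref{cor:diaglrr:time}), which requires invoking Theorem~\ref{thm:lrr to sparse}. First I would partition $\mathcal{D}_{2r,n,m}'$ by diagonal, setting $\mathcal{R}_k \eqdef \{D_{k,\ell}\}_{0\le \ell<\min(2r,k+1,(n+m)-(k+1))}$. By construction each $R\in\mathcal{R}_k$ vanishes off the $k$-diagonal. Next I would verify the advice-sparse-recovery hypothesis: restricted to the $k$-diagonal, the matrices in $\mathcal{R}_k$ have entries $g^{\ell j}$, which is precisely the Vandermonde structure of Theorem~\ref{thm:prony}, with $g_j = g^j$ distinct because $g$ has order $\ge m$. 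Since $|\mathcal{R}_k| = 2\min(r,k+1,(n+m)-(k+1))$, Theorem~\ref{thm:prony} gives a $\min(r,k+1,(n+m)-(k+1))$-advice-sparse-recovery algorithm running in time $t_k = \O(r^3+rn)$.

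Finally, plugging these parameters into the time bound from Theorem~\ref{thm:lrr to sparse} gives
\[
\O\!\left(rnm + \sum_{k=0}^{n+m-2}(t_k+n|\mathcal{R}_k|)\right)
= \O\!\left(rnm + (n+m)(r^3+rn) + n\cdot|\mathcal{D}_{2r,n,m}'|\right).
\]
Using $|\mathcal{D}_{2r,n,m}'| = 2(n+m-2r)r$ and $m\ge n$, both the $(n+m)rn$ term and the $n|\mathcal{D}_{2r,n,m}'|$ term are absorbed into $\O(rnm)$, giving the desired $\O(rnm+(n+m)r^3)$. No step is a genuine obstacle; the only subtlety worth flagging is ensuring that the advice-sparse-recovery parameter matches between $\mathcal{R}_k$ (which has size $2\min(r,k+1,(n+m)-(k+1))$) and the hypothesis of Theorem~\ref{thm:lrr to sparse} (which needs a $\min(r,k+1,(n+m)-(k+1))$-advice-sparse-recovery set, i.e.\ half as many measurements), which is exactly why the factor of $2$ appears in the definition of $\mathcal{D}_{2r,n,m}'$.
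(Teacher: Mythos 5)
Your overall strategy matches the paper's: split $\mathcal{D}_{2r,n,m}'$ into its per-diagonal pieces $\mathcal{R}_k$, verify that each piece is an advice-sparse-recovery set via its Vandermonde structure and Prony's method, and feed this into Theorem~\ref{thm:lrr to sparse}. Items (1) and (2) are indeed immediate from Theorem~\ref{thm:diagonal set for matrices}, and your shortcut for item (3) via Lemma~\ref{lem:hittolrr} (this is Corollary~\ref{cor:detinfotheoryrecovery}) is valid, though the paper instead gets (3) as a byproduct of the same Theorem~\ref{thm:lrr to sparse} invocation you need anyway for (4).

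The one real flaw is the identity $|\mathcal{R}_k| = 2\min(r,k+1,(n+m)-(k+1))$. By Construction~\ref{constr:diag}, $|\mathcal{R}_k| = \min(2r,\,k+1,\,(n+m)-(k+1))$, and these differ on the short diagonals: for instance $|\mathcal{R}_0| = 1$, not $2$. On such diagonals you cannot invoke Theorem~\ref{thm:prony} as stated, since it produces an $s$-advice-sparse-recovery set from exactly $2s$ measurements, and here the number of available measurements may be smaller than $2\min(r,k+1,(n+m)-(k+1))$ and may even be odd. The paper repairs this with a three-way case analysis: when $k+1 < 2r$ or $(n+m)-(k+1) < 2r$, the $k$-diagonal has exactly $|\mathcal{R}_k|$ entries and the constraint matrix is a square invertible Vandermonde matrix, so the entire diagonal is recovered by Gaussian elimination in $\O(r^3)$ time (which trivially yields advice-sparse-recovery for any advice set); only in the generic case $2r \le k+1,\,(n+m)-(k+1)$ is Prony's method actually needed. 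With that correction, your verification of the hypotheses of Theorem~\ref{thm:lrr to sparse} and your final time accounting go through as written.
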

\begin{proof}
	\underline{\eqref{cor:diaglrr:size}:}  This is by construction.

	\underline{\eqref{cor:diaglrr:sparse}:} Each matrix in $\mathcal{D}_{2r,n,m}'$ has its
	support contained in some $k$-diagonal, and each $k$-diagonal has at most $n$ elements.

	\underline{\eqref{cor:diaglrr:lrr}:} We will first show that the measurements that
	$\mathcal{D}_{2r,n,m}'$ performs on each $k$-diagonal comprise a
	$\min(2r,k+1,(n+m)-(k+1))$-advice-sparse-recovery set.

	First consider the case when $k+1<2r\le n$.  Then $\min(2r,k+1,(n+m)-(k+1))=k+1$, and
	$\mathcal{D}_{2r,n,m}'$ places $k+1$ constraints on this $k$-diagonal $M^{(k)}$, which has
	$k+1$ entries. The constraint matrix $V$ is of size $(k+1)\times (k+1)$ with
	$V_{\ell,j}=g^{\ell j}$.  As $g$ has order $\ge n$, the elements $1,g,\ldots,g^k$ are
	distinct.  So these constraints form an invertible Vandermonde system and so $M^{(k)}$
	(regardless of the rank of $M$) can be completely recovered from these measurements.  In
	particular, $V$ forms a $(k+1)$-advice-sparse-recovery set.  As the Vandermonde system can
	be inverted in $\O(k^3)=\O(r^3)$ time, we see that $(k+1)$-advice-sparse-recovery can be
	performed in this time.

	Similarly, now consider the case when $(n+m)-(k+1)<2r\le n$ (so it follows that $m\le k$).
	Then $\min(2r,k+1,(n+m)-(k+1))=(n+m)-(k+1)$, and $\mathcal{D}_{2r,n,m}'$ places
	$(n+m)-(k+1)$ constraints on this $k$-diagonal $M^{(k)}$, which has $(n+m)-(k+1)$ entries.
	The constraint matrix $V$ is of size $((n+m)-(k+1))\times ((n+m)-(k+1))$ with
	$V_{\ell,j}=g^{\ell (k-(m-1)+j)}$.  As $g$ has order $\ge n$, the elements
	$g^{k-(m-1)},g^{k-(m-1)+1},\ldots,g^{n-1}$ are distinct.  So these constraints form an
	invertible Vandermonde system and so $M^{(k)}$ (regardless of the rank of $M$) can be
	completely recovered from these measurements.  In particular, $V$ forms a
	$((n+m)-(k+1))$-advice-sparse-recovery set.  As the Vandermonde system can be inverted in
	$\O(((n+m)-(k+1))^3)=\O(r^3)$ time, we see that $((n+m)-(k+1))$-advice-sparse-recovery can
	be performed in this time.

	Now consider the general case when $2r\le k+1,(n+m)-(k+1)$.  Then
	$\min(2r,k+1,(n+m)-(k+1))=2r$, and $\mathcal{D}_{2r,n,m}'$ places $2r$ constraints on this
	$k$-diagonal $M^{(k)}$, which has $\min(k+1,n,(n+m)-(k+1))$ entries. The constraint matrix
	$V$ is of size $2r\times \min(k+1,n,(n+m)-(k+1))$ with $V_{\ell,j}=g^{\ell
	(\max(0,k-(m-1))+j)}$.  As $g$ has order $\ge n$, the elements
	\[g^{\max(0,k-(m-1))},g^{\max(0,k-(m-1))+1},\ldots,
	g^{\max(0,k-(m-1))+\min(k+1,n,(n+m)-(k+1))-1}\] are distinct.  Thus, it follows from
	Theorem~\ref{thm:prony} that $V$ is a $r$-advice-sparse-recovery set, and that recovery can
	be done in $\O(r^3+n)$ steps.

	Thus, by Theorem~\ref{thm:lrr to sparse}, it follows that $\mathcal{D}_{2r,n,m}'$ is a
	$r$-low-rank-recovery set.

	\underline{\eqref{cor:diaglrr:time}:}  By the analysis done for \eqref{cor:diaglrr:lrr}, we
	see that Theorem~\ref{thm:lrr to sparse} shows that Algorithm~\ref{alg:lrr} (along with the
	$r$-advice-sparse-recovery performed by Algorithm~\ref{alg:prony}) yields a
	$\O(rnm+(n+m)r^3)$-time recovery algorithm for $\mathcal{D}_{2r,n,m}'$.
\end{proof}

\begin{remark}
	We briefly note that for $r>n/2$ we have that $|\mathcal{D}_{2r,n,m}'|\ge nm$ (one cannot
	use the formula ``$|\mathcal{D}_{2r,n,m}'|=2(n+m-2r)r$'' here, but the bound
	$|\mathcal{D}_{2r,n,m}'|\le|\mathcal{D}_{2r,n,m}|=2(n+m-1)r$ is still valid).  Thus, for
	$r>n/2$ there is no gain from using $\mathcal{D}_{2r,n,m}'$ over the obvious $nm$
	low-rank-recovery set that queries each entry in the matrix.
\end{remark}

\begin{remark}\label{rmk:reprovediag}
	One can also use Algorithm~\ref{alg:lrr} to reprove Theorem~\ref{thm:diagonal set for
	matrices}, that is, to reprove that $\mathcal{D}_{r,n,m}$ is a hitting set (note that we use
	$r$ and not $2r$ here).  To do so, note that Lemma~\ref{lem:uesparse} shows that for a rank
	$\le r$ matrix $M$, if $M^{(<k)}=0$ then $M^{(k)}$ is $r$-sparse.

	Thus, if $\la M,\mathcal{D}_{r,n,m}\ra=\vec{0}$ then this implies that for each $k$, $\la
	M^{(k)},\mathcal{R}_k\ra=\vec{0}$, where $\mathcal{R}_k$ is the $r$-sparse-recovery set
	formed from the dual Reed-Solomon code.  So if $M^{(k)}$ is $r$-sparse then by the
	properties of $\mathcal{R}_k$ it must be that $M^{(k)}=\vec{0}$.

	Combining the two observations above, we see that $M^{(<k)}=0\implies M^{(k)}=\vec{0}$, and
	thus $M^{(<k)}=0\implies M^{(\le k)}=\vec{0}$.  Inducting on $k$ shows that $M=0_{n\times
	m}$.  Thus, if $M\ne 0$ and $M$ is rank $\le r$ then $\la M,\mathcal{D}_{r,n,m}\ra\ne
	\vec{0}$, showing that $\mathcal{D}_{r,n,m}$ is a hitting set.
\end{remark}

Given that $\mathcal{D}_{2r,n,m}'$ admits efficient low-rank-recovery, we can recall the above
results that show that these measurements are equivalent to the $\mathcal{B}_{2r,n,m}'$
measurements.  Thus, we also get that this second set admits efficient low-rank-recovery.

\begin{corollary}\label{cor:rank1lrr}
	Let $1\le r\le n/2$, $m\ge n\ge 1$. Then $\mathcal{B}_{2r,n,m}'$
	(from Construction~\ref{constr:better hitting for matrices}) has
	\begin{enumerate}
		\item $|\mathcal{B}_{2r,n,m}'|=2(n+m-2r)r$ \label{cor:rank1lrr:size}
		\item Each matrix in $\mathcal{B}_{2r,n,m}'$ is rank 1.\label{cor:rank1lrr:rank1}
		\item $\mathcal{B}_{2r,n,m}'$ is a $r$-low-rank-recovery set\label{cor:rank1lrr:lrr}
		\item Algorithm~\ref{alg:lrr}, combined with Algorithm~\ref{alg:prony}, performs
		low-rank-recovery for $\mathcal{B}_{2r,n,m}'$ in time
		$\O(rm^2+mr^3)$\label{cor:rank1lrr:time}
	\end{enumerate}
\end{corollary}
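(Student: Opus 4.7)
The plan is to handle items (1) and (2) directly from the construction, and reduce items (3) and (4) to the already-established Corollary~\ref{cor:diaglrr} for $\mathcal{D}_{2r,n,m}'$, by exhibiting an efficient basis-change between the measurements $\la M, \mathcal{B}_{2r,n,m}'\ra$ and $\la M, \mathcal{D}_{2r,n,m}'\ra$. The size claim is immediate since $|\mathcal{B}_{2r,n,m}'| = \sum_{\ell=0}^{2r-1}((n+m-1)-2\ell) = 2r(n+m-2r)$, as computed in Theorem~\ref{thm:better hitting for matrices}. Each $B'_{k,\ell}$ has entries $(B'_{k,\ell})_{i,j}=\alpha_k^i(g^\ell\alpha_k)^j$, which factors as an outer product and hence is rank~1.

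For the LRR property (item~3), Theorem~\ref{thm:better hitting for matrices} already gives $\sspan\mathcal{B}_{2r,n,m}' = \sspan\mathcal{B}_{2r,n,m}$, and Theorem~\ref{thm:diagonal set for matrices} gives $\sspan\mathcal{B}_{2r,n,m} = \sspan\mathcal{D}_{2r,n,m} = \sspan\mathcal{D}_{2r,n,m}'$. Consequently $\ker\mathcal{B}_{2r,n,m}' = \ker\mathcal{D}_{2r,n,m}'$, so Corollary~\ref{cor:diaglrr}.(3) immediately transfers: $\mathcal{B}_{2r,n,m}'$ is an $r$-low-rank-recovery set.

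For item (4), the strategy is to convert the evaluation-style measurements into the coefficient-style measurements used by Algorithm~\ref{alg:lrr}, and then invoke Corollary~\ref{cor:diaglrr}. That is, given $b_{k,\ell} \eqdef \la M, B'_{k,\ell}\ra = \hat{f}_M(\alpha_k, g^\ell\alpha_k)$, produce $d_{k,\ell} \eqdef \la M, D'_{k,\ell}\ra = \coeff_{x^k}(\hat{f}_M(x,g^\ell x))$ via the iterative procedure inside the proof of Theorem~\ref{thm:better hitting for matrices}. Process $\ell = 0,1,\ldots,2r-1$ in order: for $\ell=0$, Lagrange-interpolate $\hat{f}_M(x,x)$ from its $n+m-1$ evaluations; for $\ell\ge 1$, first deduce the $2\ell$ extreme coefficients $\coeff_{x^{k'}}(\hat{f}_M(x,g^\ell x))$ (those with $k'<\ell$ or $k'>(n+m-2)-\ell$) as linear combinations of $\coeff_{x^{k'}}(\hat{f}_M(x,g^{\ell'}x))$ for $\ell'<\ell$, using Vandermonde inverses on the $\le\ell$-dimensional row-spans that appear on short diagonals; then subtract the contribution of the now-known ``extreme'' part from each of the $(n+m-1)-2\ell$ remaining evaluations, divide by $\alpha_k^\ell$, and interpolate the residual polynomial $h(x)$ of degree $\le (n+m-2)-2\ell$. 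This recovers the inner coefficients and hence all of $d_{\cdot,\ell}$.

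A per-step accounting gives cost $\O(\ell^2)$ for the extreme coefficients, $\O(\ell(n+m))$ for subtracting them from the evaluations, and $\O((n+m-2\ell)^2)$ for the Lagrange interpolation of $h$; summing over $\ell\in\llb 2r\rrb$ yields $\O(r(n+m)^2 + r^2(n+m)) = \O(rm^2 + r^3)$ (absorbing the $\O(r^4)$ one-time precomputation of the small Vandermonde inverses into $\O(mr^3)$), which together with the $\O(rnm+(n+m)r^3) = \O(rm^2+mr^3)$ cost of Corollary~\ref{cor:diaglrr} gives the claimed total. The main obstacle is the bookkeeping: carefully organizing the iterative conversion so the Vandermonde inverses are shared across all $\ell$, the extreme-coefficient indexing stays consistent with $\mathcal{D}_{2r,n,m}'$ (which drops the diagonals forced by the dimension argument in Theorem~\ref{thm:diagonal set for matrices}), and no intermediate step exceeds the $\O(rm^2+mr^3)$ budget. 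No conceptual difficulty arises beyond this accounting.
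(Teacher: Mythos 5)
Your proposal is correct and follows essentially the same route as the paper: items (1) and (2) by construction, item (3) via the span equalities $\sspan\mathcal{B}_{2r,n,m}'=\sspan\mathcal{B}_{2r,n,m}=\sspan\mathcal{D}_{2r,n,m}=\sspan\mathcal{D}_{2r,n,m}'$ together with Corollary~\ref{cor:diaglrr}, and item (4) by using the iterative evaluation-to-coefficient conversion from the proof of Theorem~\ref{thm:better hitting for matrices} and then handing the $\mathcal{D}'$-measurements to Algorithm~\ref{alg:lrr}. Your runtime accounting is somewhat more detailed than the paper's (which simply charges $\O(m^2)$ per interpolation of $2r$ polynomials of degree $\le n+m$, giving $\O(rm^2)$ for the conversion), but the conclusion and the argument are the same.
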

\begin{proof}
	\underline{\eqref{cor:rank1lrr:size}:}  This is by construction.

	\underline{\eqref{cor:rank1lrr:rank1}:} This is also by construction.

	\underline{\eqref{cor:rank1lrr:lrr}:} By Theorem~\ref{thm:better hitting for matrices} and
	Theorem~\ref{thm:diagonal set for matrices} we see have that
	$\sspan\mathcal{D}_{r,n,m}'=\sspan\mathcal{B}_{r,n,m}'$.  In particular, the measurements
	$\la M,\mathcal{D}_{r,n,m}'\ra$ can be reconstructed from the measurements $\la
	M,\sspan\mathcal{B}_{r,n,m}'\ra$.  As the above corollary shows that $\mathcal{D}_{r,n,m}'$
	is $r$-low-rank-recovery set, it follows that $\mathcal{B}_{r,n,m}'$ is also.

	\underline{\eqref{cor:rank1lrr:time}:}  The analysis given in Theorem~\ref{thm:better
	hitting for matrices} gives an algorithm for reconstructing the measurements $\la
	M,\mathcal{D}_{r,n,m}'\ra$ from the measurements $\la M,\sspan\mathcal{B}_{r,n,m}'\ra$, and
	does so interpolating $r$ polynomials of degree $\le n+m$. As evaluations of these
	polynomials takes $\O(r)$ steps, and polynomial interpolation takes $\O(m^2)$ steps for
	polynomials of this degree, we see that we can complete this interpolation in
	$\O(rm^2+r^2m)=\O(rm^2)$ steps. Once the measurements $\la M,\mathcal{D}_{r,n,m}'\ra$ are
	computed, we can appeal to the above corollary.
\end{proof}

The above results only work over fields when we have an element $g$ of large order.  However, the
results of Subsection~\ref{sec:pit for tensors over small fields} show that we can simulate these
results over small fields.  Indeed, this is also the case here.

\begin{corollary}\label{cor:smalllrr}
	Let $m\ge n\ge r\ge 1$.  Over any field $\F$, there is an $\poly(m)$-explicit
	$r$-low-rank-recovery set for $n\times m$ matrices, which has size $\O(rm\lg m)$ and is such
	that each recovery matrix is $\O(n)$-sparse. There is also an $\poly(m)$-explicit
	$r$-low-rank-recovery set for $n\times m$ matrices, which has size $\O(rm\lg^2 m)$ and is
	such that each recovery matrix is rank 1.  Further, recovery from either of these
	low-rank-recovery sets can be performed in $\poly(m)$ time.
\end{corollary}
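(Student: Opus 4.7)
The plan is to combine the efficient low-rank-recovery constructions $\mathcal{D}_{2r,n,m}'$ and $\mathcal{B}_{2r,n,m}'$ (Corollaries~\ref{cor:diaglrr} and~\ref{cor:rank1lrr}) with the field-simulation machinery of Propositions~\ref{prop:simfieldimproper} and~\ref{prop:simfield}, in exact analogy with how Corollaries~\ref{cor:hitsmallmatriximproper} and~\ref{cor:hitsmallmatrix} upgrade the hitting-set results for matrices. If $\F$ already contains an element of multiplicative order $\ge m$, then Corollaries~\ref{cor:diaglrr} and~\ref{cor:rank1lrr} apply directly and the conclusion is immediate, so assume it does not. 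Following the proofs of Corollaries~\ref{cor:hitsmallmatriximproper} and~\ref{cor:hitsmallmatrix}, one builds in $\poly(m)$ time an extension $\K/\F$ of degree $k=\Theta(\lg m)$, defined by an explicit irreducible polynomial, such that $\K$ does contain such an element $g$.

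For the $n$-sparse construction, instantiate $\mathcal{D}_{2r,n,m}'$ over $\K$; by Corollary~\ref{cor:diaglrr} this is an $n$-sparse family of $\K$-matrices of size $\O(rm)$ from which rank $\le r$ matrices can be recovered. Applying Proposition~\ref{prop:simfieldimproper} produces a set $\tilde{\mathcal{D}}$ of $\F$-matrices that retains $n$-sparsity, has cardinality $k\cdot \O(rm)=\O(rm\lg m)$, and forms an improper hitting set for rank $\le 2r$ matrices over $\F$; Lemma~\ref{lem:hittolrr} then upgrades this to the desired $r$-low-rank-recovery property. For the rank-1 construction, the same argument is run with $\mathcal{B}_{2r,n,m}'$ in place of $\mathcal{D}_{2r,n,m}'$, but invoking Proposition~\ref{prop:simfield} rather than its improper counterpart, since that proposition is specifically designed to preserve the rank-1 structure. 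For matrices the degree is $d=2$, so the blowup is $k^d=\O(\lg^2 m)$, producing a rank-1 $r$-low-rank-recovery set of size $\O(rm\lg^2 m)$. In both cases each simulated matrix is produced in $\poly(m)$ time from the $\K$-matrix it simulates, via the $\F$-algebra isomorphism $\K\hookrightarrow\F^{k\times k}$ of Lemma~\ref{lem:multmap}, giving the claimed explicitness.

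The main point requiring care will be efficient recovery over $\F$, since all of the prior recovery machinery consumes $\K$-measurements. The plan is to show that the $\F$-measurements $\{\la T,\tilde H\ra\}$ determine the underlying $\K$-measurements $\{\la T,H\ra\}$ in $\poly(m)$ time, after which one invokes the $\O(rnm+(n+m)r^3)$-time algorithm of Algorithms~\ref{alg:lrr} and~\ref{alg:prony} directly over $\K$. For the sparse case this reconstruction is immediate from the proof of Proposition~\ref{prop:simfieldimproper}: the tuple $(\la T,\tilde H_0\ra,\ldots,\la T,\tilde H_{k-1}\ra)$ is exactly the coordinate representation of $\la T,H\ra\in\K$ in the standard $\F$-basis of $\K$, so each $\K$-measurement is recovered in $\O(\lg m)$ $\F$-operations. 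For the rank-1 case one uses the identity proved inside Proposition~\ref{prop:simfield}, namely $\mu(\la T,H\ra)_{\ell_0,\ell_d}=\sum_{\ell_1,\ldots,\ell_{d-1}}\la T,\tilde H_{\ell_0,\ldots,\ell_d}\ra$, which with $d=2$ expresses each entry of the matrix representation $M_{\la T,H\ra}\in\F^{k\times k}$ as a $k$-term sum of $\F$-measurements; recovering $\la T,H\ra\in\K$ from this matrix is then simply inverting the isomorphism of Lemma~\ref{lem:multmap}, costing $\O(k^2)=\O(\lg^2 m)$ $\F$-operations per measurement. Putting all of this together, every stage of the pipeline runs in $\poly(m)$ time, completing the proof.
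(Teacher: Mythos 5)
Your proposal is correct and follows essentially the same route as the paper: apply Proposition~\ref{prop:simfieldimproper} to $\mathcal{D}_{2r,n,m}'$ and Proposition~\ref{prop:simfield} to $\mathcal{B}_{2r,n,m}'$, and observe that the field-simulation step preserves not just the hitting-set property but efficient recoverability, since the $\K$-measurements $\la M,H\ra$ can be reconstructed in $\poly(m)$ time from the simulated $\F$-measurements and fed to the recovery algorithm over $\K$. The only nitpick is that in the rank-1 case $\tilde{\mathcal{H}}$ contains only the $\ell_d=0$ slice, so you recover just the first column $\la T,H\ra\gamma_0$ of $M_{\la T,H\ra}$ rather than the full matrix, but that column already determines $\la T,H\ra$, so the argument goes through unchanged.
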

\begin{proof}
	We begin by noting that both Proposition~\ref{prop:simfieldimproper} and
	Proposition~\ref{prop:simfield} preserve the property of being a low-rank-recovery set, not
	just that of being a hitting set. That is, each of these propositions take a $\K$-matrix $H$
	in the original low-rank-recovery set and construct some family of $\F$-matrices
	$\{\tilde{H}_{\ell,\ell'}\}_{\ell,\ell'}$ such that for any matrix $M$, $\la M,H\ra$ can be
	efficiently recovered from the sums $\{\sum_\ell \alpha_\ell\la
	M,\tilde{H}_\ell\ra\}_{\ell'}$, for some coefficients $\alpha_{\ell}\in\K$.  Thus the
	measurements $\la M,\mathcal{H}\ra$ are efficiently recoverable from the measurements $\la
	M,\mathcal{H}\ra$.

	Finally, appealing to the constructions of low-rank-recovery sets as given in
	Corollary~\ref{cor:diaglrr} (to which Proposition~\ref{prop:simfieldimproper} is applied)
	and Corollary~\ref{cor:rank1lrr} (to which Proposition~\ref{prop:simfield} is applied)
	completes the claim.
\end{proof}

\section{Rank-Metric Tensor codes}\label{sec:tensor codes}

We now discuss low-rank-recovery of tensors, for any $d$, and apply our results to the construction
of rank-metric codes.  We begin with showing that the matrix low-rank-recovery algorithm can be
extended to the $d>2$ case.

\begin{theorem}\label{thm:tensorlrr}
	Let $n,r\ge 1$ and $d\ge 2$.  Then $\mathcal{B}_{d,n,2r}$, as defined in
	Construction~\ref{construction:construction of hitting set for tensors}, has
	\begin{enumerate}
		\item $|\mathcal{B}_{d,n,2r}|\le\O(dn(2r)^{\O(\lg d)})$\label{thm:tensorlrr:size}
		\item \sloppy $\mathcal{B}_{d,n,2r}$ is an $r$-low-rank-recovery set, and recovery
		can be performed in time $\poly((2dn)^d,(2r)^{\O(\lg d)})$\label{thm:tensorlrr:lrr}
	\end{enumerate}
\end{theorem}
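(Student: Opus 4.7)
The size bound $|\mathcal{B}_{d,n,2r}| = dn(2r)^{\lceil \lg d\rceil}$ is immediate from Theorem~\ref{thm:hitting set for tensors}. For the information-theoretic LRR property, Theorem~\ref{thm:hitting set for tensors} tells us $\mathcal{B}_{d,n,2r}$ is a hitting set for rank $\le 2r$ tensors over $\F$, and then Lemma~\ref{lem:hittolrr} converts this directly into the claim that $\mathcal{B}_{d,n,2r}$ is an $r$-low-rank-recovery set. The remaining content is the efficient recovery algorithm.

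To design such an algorithm, I would mirror the recursive structure of Theorem~\ref{thm:d-variate reduction}. That theorem reduces a $d$-variate identity test to $r$ many $d/2$-variate identity tests via the chain \emph{reshape into bivariate} (Lemma~\ref{lem:reshape}), then \emph{bivariate variable reduction} (Corollary~\ref{cor:bivariate poly evaluation}), then \emph{unreshape back to $d/2$ variables}. My recovery algorithm would invert this chain at each level of recursion, using the matrix low-rank-recovery algorithm (Corollary~\ref{cor:rank1lrr}, via Corollary~\ref{cor:diaglrr}) in place of the bivariate identity test.

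Concretely, the algorithm proceeds recursively. \textbf{Base case} $d = 1$: $T$ is a vector and $\hat{f}_T$ is a univariate polynomial of degree $< n$, which the measurements in $\mathcal{B}_{1, n, 2r}$ directly interpolate. \textbf{Inductive step}: partition the measurements $\la T, \mathcal{B}_{d, n, 2r} \ra$ by the value of $\ell_1 \in \{0, \ldots, 2r-1\}$. Using the recursive identities for $L_{n, b}$ from Proposition~\ref{prop:lnbprop}, items~\eqref{prop:lnbprop:recur} and~\eqref{prop:lnbprop:slosh}, each such group of measurements is exactly the measurement set $\mathcal{B}_{d/2, 2n, 2r}$ applied to a rank-$\le r$ tensor $T_{\ell_1} \colon \llb 2n\rrb^{d/2} \to \F$, namely the unreshape of the univariate polynomial $\hat{f}_T$ evaluated along the curve determined by the $\ell_1$ shift. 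Recursively recover each $T_{\ell_1}$. Finally, view $\hat{f}_T$ at the top level (after reshape) as a bivariate polynomial $f(x_0, x_1)$ of rank $\le r$; the recovered $T_{\ell_1}$'s supply the values $f(x_0, g^{\ell_1} x_0)$ for $2r$ values of $\ell_1$, which as noted around Equation~\eqref{eqn:vmatrix} are precisely $\mathcal{D}$-style measurements of the coefficient matrix of $f$. Applying the matrix recovery of Corollary~\ref{cor:diaglrr} to this matrix yields $f$, and unreshaping yields $T$.

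The main technical obstacle is verifying that the partition step genuinely corresponds to $\mathcal{B}_{d/2, 2n, 2r}$ measurements on $T_{\ell_1}$; this reduces to an index-computation that mirrors the second claim in the inductive step of the proof of Theorem~\ref{thm:d-variate reduction}. Once that correspondence is established, correctness follows by induction, and the parameters follow by unrolling the recurrences $M(d, n) \le 2r \cdot M(d/2, 2n) + \poly((2dn)^d)$ for the number of measurements and $T(d, n) \le 2r \cdot T(d/2, 2n) + \poly((2dn)^d)$ for time. Their solutions are $\O(dn (2r)^{\lceil \lg d\rceil})$ and $\poly((2dn)^d, (2r)^{\lceil \lg d\rceil})$, matching the claimed bounds.
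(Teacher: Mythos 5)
Your proof is correct and follows essentially the same approach as the paper: interpolate the univariate polynomials produced by the measurements and then reverse the recursive reshape/bivariate-reduction steps of Theorem~\ref{thm:d-variate reduction}, using the matrix low-rank-recovery of Corollary~\ref{cor:diaglrr} at each merge step. The only small slip is that your recurrence for the number of measurements should not have the additive $\poly((2dn)^d)$ term (that term belongs only in the time recurrence); the measurement count $dn(2r)^{\lceil\lg d\rceil}$ is already explicit from Construction~\ref{construction:construction of hitting set for tensors}, as you note.
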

\begin{proof}
	\underline{\eqref{thm:tensorlrr:size}:} This is by construction.

	\underline{\eqref{thm:tensorlrr:lrr}:} The hitting set allows us to interpolate the
	polynomials stated in the hypothesis of Theorem~\ref{thm:d-variate reduction}.  Once we have
	the coefficients of this polynomial, we can undo the reductions used in the proof of
	Theorem~\ref{thm:d-variate reduction}.  That is, that proof uses Lemmas~\ref{lem:permute}
	and \ref{lem:reshape} to reshape polynomials by merging their variables.  This is clearly
	efficiently reversible.  More crucially, the proof uses the bivariate variable reduction of
	Theorem~\ref{thm:bivariate poly evaluation} for rank $\le r$ matrices, but when we take $2r$
	distinct powers of $g$.  However, Corollary~\ref{cor:diaglrr} shows that one can recover
	$\hat{f}_M(x,y)$ from the polynomials $\{\hat{f}_M(x,g^ix)\}_{i\in\llb 2r\rrb}$ in
	$\poly(\deg_x(\hat{f}_M),\deg_y(\hat{f}_M),r)$ steps.  As the degrees involved in
	Theorem~\ref{thm:d-variate reduction} are only up to $(2dn)^d$, this is within the stated
	time bounds. Thus, we can also reverse the bivariate variable reduction steps used in
	Theorem~\ref{thm:d-variate reduction}.  Combining these steps shows that we can fully
	recover the entire polynomial $\hat{f}_T(x_1,\ldots,x_d)$, which gives the tensor $T$.
\end{proof}

We next observe that, just as with Corollary~\ref{cor:smalllrr}, we can perform this
low-rank-recovery over small fields, when incurring a loss.

\begin{corollary}\label{cor:smalllrrtensor}
	Let $n,r\ge 1$ and $d\ge 2$.  Over any field $\F$, there is an $\poly((2nd)^d,r^{\O(\lg
	d)})$-explicit $r$-low-rank-recovery set for $\llb n\rrb^d$ tensors, which has size
	$\O(dn(2r)^{\O(\lg d)}\cdot(d\lg2dn)^d)$ and is such that each recovery tensor is rank 1.
	Further, there is an $\poly((2nd)^d,r^{\O(\lg d)})$-explicit $r$-low-rank-recovery set for
	$\llb n\rrb^d$ tensors, which has size $\O(dn(2r)^{\O(\lg d)}\cdot d\lg2dn)$.  Further,
	recovery from either of these low-rank-recovery sets can be performed in
	$\poly((2nd)^d,r^{\O(\lg d)})$ time.
\end{corollary}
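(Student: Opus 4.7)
The plan is to proceed in close analogy with Corollary~\ref{cor:smalllrr} for the matrix case, using Theorem~\ref{thm:tensorlrr} as the large-field starting point in place of the matrix low-rank-recovery constructions. Specifically, if $\F$ already contains an element $g$ of order at least $(2dn)^d$, then $\mathcal{B}_{d,n,2r}$ from Construction~\ref{construction:construction of hitting set for tensors} is already a rank-1 $r$-low-rank-recovery set with size $\O(dn(2r)^{\O(\lg d)})$, and Theorem~\ref{thm:tensorlrr} supplies the recovery algorithm running in $\poly((2dn)^d,r^{\O(\lg d)})$ time, which is already within the stated bounds (the $(d\lg 2dn)^d$ and $d\lg 2dn$ factors are absent in this case).

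If $\F$ is too small, I would construct an extension $\K$ of $\F$ with $\dim_\F\K=\Theta(d\lg(2dn))$ so that $|\K|\ge(2dn)^d$ and so that $\K$ contains an element of the required order. Such an extension is obtained by enumerating irreducible polynomials over $\F$ of this degree in $\poly((2dn)^d)$ time, within which the element $g$ is also found by enumeration. Over $\K$, the large-field construction yields a rank-1 $r$-low-rank-recovery set of size $\O(dn(2r)^{\O(\lg d)})$. To push this back down to $\F$ while preserving the rank-1 property, I would apply Proposition~\ref{prop:simfield}, which expands each rank-1 $\K$-tensor into $(\dim_\F\K)^d=(d\lg(2dn))^d$ rank-1 $\F$-tensors, giving the first claimed bound. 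For the sparse variant, I would instead apply Proposition~\ref{prop:simfieldimproper}, which inflates the size by only a single factor of $\dim_\F\K=d\lg(2dn)$, giving the second claimed bound.

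The key point that needs verification is that the simulation propositions preserve not just the information-theoretic recovery property, but also the efficiency of recovery. This is exactly the observation already exploited in the proof of Corollary~\ref{cor:smalllrr}: each of Proposition~\ref{prop:simfield} and Proposition~\ref{prop:simfieldimproper} builds $\F$-tensors $\{\tilde{H}_\ell\}_\ell$ from each $\K$-tensor $H$ in the original recovery set such that each $\K$-inner-product $\la T,H\ra$ can be reconstructed in $\poly(\dim_\F\K)$ $\F$-operations from the $\F$-inner-products $\{\la T,\tilde{H}_\ell\ra\}_\ell$ (this is immediate from the structure of the coordinate projections in the improper case, and from the embedding $M_{(\cdot)}:\K\to\F^{k\times k}$ together with the coordinate evaluation at $\vec{e}_0$ in the proper case). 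Composing this reconstruction with the large-field recovery algorithm from Theorem~\ref{thm:tensorlrr} yields an $\F$-recovery algorithm running in $\poly((2dn)^d,r^{\O(\lg d)})$ time, which absorbs the polynomial overhead from the field simulation.

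The main obstacle is really just bookkeeping: I would need to check that $\dim_\F\K=\Theta(d\lg(2dn))$ does not damage the asymptotic size or explicitness bounds, and to be explicit about how $\K$ is represented so that every arithmetic operation in $\K$ costs $\poly(d\lg(2dn))$ $\F$-operations. Neither step is substantive, but the cleanest presentation is likely to simply cite Proposition~\ref{prop:simfield} and Proposition~\ref{prop:simfieldimproper} and verify the arithmetic in the exponent bounds, exactly as done for matrices in Corollary~\ref{cor:smalllrr}.
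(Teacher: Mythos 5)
Your proposal is correct and follows essentially the same route as the paper: the paper's proof simply applies Propositions~\ref{prop:simfield} and \ref{prop:simfieldimproper} to the low-rank-recovery set of Theorem~\ref{thm:tensorlrr}, noting that all three are efficiently implementable, exactly as you describe. Your additional bookkeeping about the extension degree $\Theta(d\lg(2dn))$ and the preservation of efficient recovery matches the reasoning already spelled out for the matrix case in Corollary~\ref{cor:smalllrr}.
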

\begin{proof}
	Like Corollary~\ref{cor:smalllrr}, we apply Propositions~\ref{prop:simfield} and
	\ref{prop:simfieldimproper} to a low-rank-recovery set, where here we use the above set from
	Theorem~\ref{thm:tensorlrr}.  As Propositions~\ref{prop:simfield} and
	\ref{prop:simfieldimproper}, as well as Theorem~\ref{thm:tensorlrr}, are efficiently
	implementable, so are the resulting low-rank-recovery sets.
\end{proof}

We now apply these results to create error correcting codes over the rank-metric, which we now
define.  We will restrict our attention to linear codes in this work.

\begin{definition}
	A $[\llb n\rrb^d,k,r]_\F$ \textbf{rank-metric code} $\C$ is a $k$-dimensional subspace of
	$\F^{\llb n\rrb^d}$ (the space of $\llb n\rrb^d$ tensors) such that for all $T_1\ne
	T_2\in\mathcal{C}$, $\rank(T_1-T_2)\ge r$.  Denote $r$ as the \textbf{distance} of the code.

	An algorithm \dec\ \textbf{corrects $e$ errors} against $\C$ if for any $T\in\C$ and
	$E\in\F^{\llb n\rrb^d}$ with $\rank(E)\le e$ it is such that $\dec(T+E)=T$.
\end{definition}

Thus this is the natural definition for error-correcting codes when we use the rank-metric (notice
that rank-distance is in fact a metric) as the notion of distance.  As we are interested in linear
codes $T_1-T_2\in \C$ also, so an equivalent definition to the above would say that $r\le\rank(T)$
for all $0\ne T\in\C$.  Just as with the Hamming-metric, if we have a distance $2r+1$ code $\C$ then
it is information theoretically possible to decode up to $r$ errors.  The converse is shown below.

\begin{lemma}\label{lem:mindist}
	Let $\C$ be a $[\llb n\rrb^d,k,r']_\F$ rank-metric code that can correct up to $r$ errors.
	Then $r'\ge 2r+1$.
\end{lemma}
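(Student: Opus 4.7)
The plan is to prove this via the standard coding-theoretic contrapositive argument, adapted to the rank metric. Suppose for contradiction that $r' \le 2r$. Then by definition of the distance, there exist distinct codewords $T_1, T_2 \in \C$ with $\rank(T_1 - T_2) \le 2r$. The goal is to exhibit two rank-$\le r$ error tensors $E_1, E_2$ and a received word $T \in \F^{\llb n\rrb^d}$ such that both $T_1 + E_1 = T = T_2 + E_2$, which contradicts the existence of a decoder $\dec$ correcting up to $r$ errors (since $\dec(T)$ would have to equal both $T_1$ and $T_2$).

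The key step is a splitting lemma: any tensor $E$ of rank $\le 2r$ can be decomposed as $E = E_1 - E_2$ with $\rank(E_1), \rank(E_2) \le r$. This follows directly from Definition~\ref{definition:simpletensor}: writing $E = \sum_{\ell=1}^{2r} \otimes_{j=1}^d \vec{v}_{j,\ell}$ as a sum of at most $2r$ rank-1 tensors, one sets $E_1 \eqdef \sum_{\ell=1}^{r} \otimes_{j=1}^d \vec{v}_{j,\ell}$ and $E_2 \eqdef -\sum_{\ell=r+1}^{2r} \otimes_{j=1}^d \vec{v}_{j,\ell}$ (padding with zero summands if the rank is strictly smaller than $2r$), yielding $\rank(E_1), \rank(E_2) \le r$ and $E_1 - E_2 = E$.

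Applying this with $E \eqdef T_1 - T_2$ produces the desired $E_1, E_2$. Setting $T \eqdef T_1 - E_1 = T_2 - E_2$, we see that $T$ is obtained from the codeword $T_1$ by adding the error $-E_1$ (of rank $\le r$) and also from $T_2$ by adding $-E_2$ (of rank $\le r$). By the assumption that $\dec$ corrects up to $r$ errors, $\dec(T) = T_1$ and $\dec(T) = T_2$, contradicting $T_1 \ne T_2$. Hence $r' \ge 2r+1$.

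I expect no real obstacle here: the only content is the splitting lemma, which is immediate from the definition of tensor rank as a minimum number of rank-1 summands. Note that this argument is field-independent and does not use any structure of $\C$ beyond linearity (in fact, even linearity is not needed, only that the minimum pairwise rank-distance is $r'$).
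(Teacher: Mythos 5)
Your proof is correct and is essentially the same argument as the paper's: both decompose $T_1-T_2$ into $2r$ rank-$\le 1$ summands, split them into two halves of rank $\le r$ each, and exhibit a single received tensor that is rank-distance $\le r$ from both codewords, contradicting unique decodability. The only difference is presentational (you isolate the splitting step as a lemma), so no further comment is needed.
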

\begin{proof}
	Suppose not for contradiction.  Then there are two tensors $T_1\ne T_2\in \C$ such that
	$\rank(T_2-T_1)\le 2r$.  But then $T_2-T_1=S_1+\cdots+S_{2r}$, where these $S_i$ are all
	rank-1 (or rank-0) tensors.  Then it follows that $T_1+S_1+\cdots+S_{r}$ is $r$-close to
	both $T_1$ and $T_2$, which is impossible as the correctness of the decoding procedure
	indicates that there should be a unique tensor that $T_1+S_1+\cdots+S_{r}$ is $r$-close to.
\end{proof}

\begin{corollary}\label{cor:rankmetricmatrices}
	Let $\F$ be a field, $m\ge n\ge r\ge 1$.  Then there are $\poly(m)$-explicit rank-metric
	codes with $\poly(m)$-time decoding for up to $r$ errors, with parameters:
	\begin{enumerate}
		\item $[\llb n\rrb\times \llb m\rrb,nm-2(n+m-2r)r,2r+1]_\F$, if $|\F|>m$, and the
		parity checks on this code can be either all rank-1 matrices, or all $\O(n)$-sparse
		matrices.
		\item $[\llb n\rrb\times \llb m\rrb,nm-2(n+m-2r)r\cdot \O(\lg m),2r+1]_\F$, any
		$\F$, and the parity checks on this code are all $\O(n)$-sparse matrices.
		\item $[\llb n\rrb\times \llb m\rrb,nm-2(n+m-2r)r\cdot \O(\lg^2 m),2r+1]_\F$, any
		$\F$, and the parity checks on this code are all rank-1 matrices.
	\end{enumerate}
\end{corollary}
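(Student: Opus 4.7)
The plan is to instantiate the general correspondence between low-rank-recovery sets and rank-metric codes sketched in Section~\ref{sec:introrankcodes}, using as parity-check sets the explicit LRR constructions developed in Section~\ref{sec:explicit LRR}. Specifically, given any $r$-low-rank-recovery set $\mathcal{H}\subseteq \F^{\llb n\rrb\times\llb m\rrb}$, I would define the code
\[
\C \;\eqdef\; \{\, T\in\F^{\llb n\rrb\times\llb m\rrb} : \la T, H\ra = 0 \text{ for all } H\in\mathcal{H}\,\}.
\]
This is linear and has $\F$-dimension $nm - \dim\sspan(\mathcal{H})$. For the minimum-distance claim, I would observe: if $0\ne T\in\C$ had rank $\le 2r$, then writing $T = A - B$ with $A,B$ of rank $\le r$, the LRR property (which guarantees that rank $\le r$ matrices are distinguished by $\la\cdot,\mathcal{H}\ra$, and which uses the hitting-set-for-rank-$\le 2r$ property we get ``for free'' inside any $r$-LRR construction, as in Lemma~\ref{lem:mindist} and its preceding discussion) forces $A=B$, hence $T=0$. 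So the distance is at least $2r+1$.

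Next I would set up syndrome decoding. Given a received matrix $Y = T + E$ with $T\in\C$ and $\rank(E)\le r$, the measurements $\la Y,\mathcal{H}\ra = \la E,\mathcal{H}\ra$ are precisely the LRR measurements of $E$. Applying the LRR recovery algorithm to this syndrome vector returns $E$, and hence $T = Y - E$. Thus the decoding runtime inherits directly from the LRR runtime. For Part 1 (``$|\F|>m$''), I would take $\mathcal{H} = \mathcal{B}_{2r,n,m}'$ from Corollary~\ref{cor:rank1lrr} to get rank-1 parity checks, or $\mathcal{H} = \mathcal{D}_{2r,n,m}'$ from Corollary~\ref{cor:diaglrr} to get $n$-sparse parity checks; both are \emph{linearly independent} of size $2(n+m-2r)r$ and both have $\poly(m)$-time recovery, so $\dim\C = nm - 2(n+m-2r)r$ exactly.

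For Parts 2 and 3 (arbitrary $\F$), I would use the small-field LRR sets from Corollary~\ref{cor:smalllrr}, of sizes $\O(rm\lg m)$ ($n$-sparse version) and $\O(rm\lg^2 m)$ (rank-1 version). These are obtained from the large-field sets by the field-simulation Propositions~\ref{prop:simfieldimproper} and \ref{prop:simfield}, which replace each $\K$-measurement by $\O(\lg m)$ or $\O(\lg^2 m)$ measurements over $\F$; tracking this factor reproduces the quoted $2(n+m-2r)r\cdot\O(\lg m)$ and $2(n+m-2r)r\cdot\O(\lg^2 m)$ expressions. The LRR recovery algorithms remain $\poly(m)$-time, and since Corollary~\ref{cor:smalllrr} already guarantees $r$-low-rank-recovery over $\F$, the minimum-distance and decoding arguments above go through verbatim.

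The main technical point to take care of is that the bound $\dim\C = nm - |\mathcal{H}|$ requires $\mathcal{H}$ to be linearly independent over $\F$; otherwise one only gets $\dim\C \ge nm - |\mathcal{H}|$. For the large-field Part 1 this is handled because $\mathcal{B}_{2r,n,m}'$ and $\mathcal{D}_{2r,n,m}'$ were explicitly shown to be linearly independent in Theorems~\ref{thm:better hitting for matrices} and \ref{thm:diagonal set for matrices}. For Parts 2 and 3, the quoted dimension bound is of the form $nm - O(\cdot)$, so possibly discarding redundant parity checks (in $\poly(m)$ time by Gaussian elimination, preserving the $n$-sparse or rank-1 structure since we can keep a basis from the constructed family) suffices. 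I do not expect any genuine obstacle beyond this bookkeeping.
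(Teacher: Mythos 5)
Your proposal is correct and follows essentially the same route as the paper: define the code as the nullspace of the explicit LRR sets ($\mathcal{B}_{2r,n,m}'$, $\mathcal{D}_{2r,n,m}'$, and their small-field versions from Corollary~\ref{cor:smalllrr}), get the dimension from linear independence of the parity checks, and decode by applying the LRR recovery algorithm to the syndrome. The only cosmetic difference is that you argue the minimum distance directly from the uniqueness guarantee of low-rank recovery (splitting a rank $\le 2r$ codeword as a difference of two rank $\le r$ matrices), whereas the paper deduces it from the decoding capability via Lemma~\ref{lem:mindist}; these are equivalent, and your explicit attention to the linear-independence bookkeeping in Parts 2 and 3 is a fair (and harmless) refinement of what the paper leaves implicit.
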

\begin{proof}
	We first generically show how to define an $[nm,nm-|\mathcal{H}|,2r+1]_\F$ rank-metric code
	$\C$ from an $r$-low-rank-recovery set $\mathcal{H}$ and how to use the low-rank-recovery
	algorithm for $\mathcal{H}$ to decode $\C$ up to $r$ errors.  The corollary is then
	immediate by using the results of Corollaries~\ref{cor:rank1lrr}, \ref{cor:diaglrr},
	\ref{cor:smalllrr}, and invoking the efficiency of their low-rank-recovery.

	Define $\C$ to be the matrices in the nullspace of $\mathcal{H}$.  That is, $\C=\{M:\la
	M,\mathcal{C}\ra =0\}$.  It is clear that $\C$ is a subspace (and assuming that the matrices
	in $\mathcal{H}$ are linearly independent, which is true for the low-rank-recovery sets
	$\mathcal{D}_{2r,n,m}'$ and $\mathcal{B}_{2r,n,m}'$) and has dimension $nm-|\mathcal{H}|$.

	Now consider some $T\in\C$ and matrix $E$ with $\rank(E)\le r$.  Abusing notation, consider
	$T$ and $E$ as $nm$-long vectors, and $\mathcal{H}$ as a $|\mathcal{H}|\times nm$ matrix.
	It follows that $\mathcal{H}(T+E)=\mathcal{H}E$ as $T\in\C$.  As $\mathcal{H}$ is an
	$r$-low-rank-recovery set, it follows that we can recover $E$ from $\mathcal{H}E$, and thus
	can recover $T$, performing successful decoding of up to $r$ errors.  By
	Lemma~\ref{lem:mindist} we see that the minimum distance of this code is $\ge 2r+1$.
\end{proof}

We now separately state the result for tensors, which is proved exactly as the above corollary, but
using the relevant low-rank-recovery results for tensors.

\begin{corollary}\label{cor:rankmetrictensors}
	Let $\F$ be a field, $n,r\ge 1$ and $d\ge 2$.  Then there are $\poly((2nd)^d,(2r)^{\O(\lg
	d)})$-explicit rank-metric codes with $\poly((2nd)^d,(2r)^{\O(\lg d)})$-time decoding for up
	to $r$ errors, with parameters:
	\begin{enumerate}
		\item $[\llb n\rrb^d,n^d-dn(2r)^{\lceil \lg d\rceil},2r+1]_\F$, if $|\F|>(2nd)^d$,
		and the parity checks on this code are all rank-1 tensors,
		\item $[\llb n\rrb^d,n^d-dnr^{\lceil \lg d\rceil}\cdot\O(d\lg(2dn)),2r+1]_\F$, any
		$\F$,
		\item $[\llb n\rrb^d,n^d-dnr^{\lceil \lg d\rceil}\cdot\O((d\lg(2dn))^d),2r+1]_\F$,
		any $\F$, and the parity checks on this code are all rank-1 tensors.
	\end{enumerate}
\end{corollary}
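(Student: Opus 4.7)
The plan is to mimic exactly the argument used for \Cref{cor:rankmetricmatrices}, but invoke the tensor low-rank-recovery results (\Cref{thm:tensorlrr} and \Cref{cor:smalllrrtensor}) in place of the matrix ones. The three listed cases correspond to three different $r$-low-rank-recovery sets $\mathcal{H}$ for $\llb n\rrb^d$ tensors, each yielding a code whose parity-check set is $\mathcal{H}$.

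First I would fix the generic construction. Given any $r$-low-rank-recovery set $\mathcal{H} \subseteq \F^{\llb n\rrb^d}$ (abuse notation: the set defines an $|\mathcal{H}| \times n^d$ matrix, once each tensor is flattened to a vector), define
\[
\C \eqdef \{T \in \F^{\llb n\rrb^d} : \la T, H\ra = 0 \text{ for all } H \in \mathcal{H}\}.
\]
This is a linear subspace of dimension $n^d - \dim\sspan(\mathcal{H})$, so in particular $\dim \C \ge n^d - |\mathcal{H}|$. For $T \in \C$ and any error tensor $E$ with $\rank(E) \le r$, linearity gives $\la T+E, \mathcal{H}\ra = \la E,\mathcal{H}\ra$, so the low-rank-recovery algorithm applied to the received syndromes recovers $E$, and hence $T = (T+E)-E$. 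The distance bound $\ge 2r+1$ then follows from \Cref{lem:mindist}.

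Next I would instantiate $\mathcal{H}$ in each of the three cases. For (1), when $|\F| > (2nd)^d$, I take $\mathcal{H} = \mathcal{B}_{d,n,2r}$ from \Cref{construction:construction of hitting set for tensors}; by \Cref{thm:tensorlrr} this is an $r$-low-rank-recovery set of rank-$1$ tensors with size $dn(2r)^{\lceil \lg d\rceil}$, decoding in the claimed time. For (2) and (3), over an arbitrary $\F$, I take respectively the two $r$-low-rank-recovery sets produced by \Cref{cor:smalllrrtensor} (obtained from \Cref{prop:simfieldimproper} and \Cref{prop:simfield} applied to $\mathcal{B}_{d,n,2r}$ over a small extension $\K/\F$ with $\dim_\F \K = \Theta(d\lg(2dn))$); the size blow-up is $\O(d\lg(2dn))$ in case (2) and $\O((d\lg(2dn))^d)$ in case (3), with the rank-$1$ structure of the parity checks preserved only in case (3).

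The only thing that requires any real care is bookkeeping the dimension count: to get $\dim \C \ge n^d - |\mathcal{H}|$ I am just using the trivial bound and not needing independence of the parity checks, so there is no obstacle there. The explicitness and decoding time of each code inherit directly from the corresponding low-rank-recovery results, since the encoder only needs a basis of $\ker(\mathcal{H})$ (computable by Gaussian elimination in $\poly((nd)^d, r^{\lg d})$) and the decoder is literally the low-rank-recovery algorithm. Thus no step is a genuine obstacle; this is a clean packaging of results already established in the preceding sections.
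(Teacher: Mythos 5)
Your proposal is correct and follows essentially the same route as the paper: the paper proves this corollary by the identical nullspace-plus-syndrome-decoding argument given for Corollary~\ref{cor:rankmetricmatrices}, instantiated with the tensor low-rank-recovery sets of Theorem~\ref{thm:tensorlrr} and Corollary~\ref{cor:smalllrrtensor}. Your observation that the trivial bound $\dim\C\ge n^d-|\mathcal{H}|$ suffices (so linear independence of the parity checks is not needed for the stated parameters) is a fair and harmless simplification.
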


\section{Discussion}\label{sec:discuss}

We briefly discuss some directions for further research.

\paragraph{Reducing Noisy Low-Rank Recovery to Noisy Sparse Recovery} We showed in
Theorem~\ref{thm:lrr to sparse} that low-rank-recovery of matrices can be done using any
sparse-recovery oracle.  This reduction was for non-adaptive measurements, and was done in the
presence of no noise.  As much of the compressed sensing community is interested in the noisy case
(so $M$ is only close to rank $\le r$) the main open question of this work is whether the reduction
extends to the noisy case.

\paragraph{Smaller Hitting Sets} While the observations of Roth~\cite{Roth91} show that our hitting
set for matrices is optimal over algebraically closed fields, our results
(Corollary~\ref{cor:hitsmalltensor}) over tensors with $d>2$ are much larger than the existential
bounds of Lemma~\ref{lem:szhit}.  Can these hitting sets be improved to size $\O(\poly(d)nr^k)$ for
$k=\O(1)$?  As mentioned in the preliminaries (Lemma~\ref{lem:hittolbs}), any such hitting set with
$k<2$ would yield improved tensor-rank lower bounds (and thus circuit lower bounds) for odd $d$ such
as $d=3$.  However, as the best tensor-rank lower bounds for $d=3$ are $\Theta(n)$ and our hitting
set (over infinite fields) yields this bound (with a smaller constant), even improving our hitting
set for $d=3$ by constant factors could yield interesting new results.  Specifically, for $d=3$ can
one construct (say over infinite fields) a hitting set of size $\le nr^2/10$ for $\llb n\rrb^3$
tensors of rank $\le r$?

\paragraph{Better Variable Reduction} Theorem~\ref{thm:bivariate poly evaluation} shows that a
bivariate polynomial with bounded individual degrees can be identity tested by identity testing a
collection of univariate polynomials, where the size of this collection depends on the rank of
bivariate polynomial.  This naturally led to our hitting sets for matrices.  We generalized this to
$d$-variate polynomials in Theorem~\ref{thm:d-variate reduction}, but the collection of univariate
polynomials has a size with a much worse dependence on the tensor-rank of the $d$-variate polynomial
and is much less explicit.  Can the size of the collection be reduced, or can the explicitness of
this set be only polynomially larger than its size?  We note that according to
Lemma~\ref{lem:hittolbs} a more explicit hitting set will yield lower bounds on tensor rank, however
for tensors of high degrees such lower bounds are known \cite{NisanWigderson96}.

\paragraph{Large Field Simulation} The results of Section~\ref{sec:pit for tensors over small
fields} show that hitting sets (and LRR sets) that involve tensors over an extension
field imply hitting sets (and low-rank recovery sets) over the base field.  While
Proposition~\ref{prop:simfield} shows that we can preserve the rank-1 property of these tensors
while doing so, it introduces an $\exp(d)$ factor in the size of the hitting set.  Can this be
improved?

\section*{Acknowledgements}

We would like to thank Olgica Milenkovic for pointing us to the
low-rank recovery problem, and Madhu Sudan for some helpful
comments regarding decoding dual Reed-Solomon codes.

Part of this work was done while the first author was visiting Stanford University, as well as when
the second author was visiting the Bernoulli center at EPFL.

\newpage

\bibliographystyle{alpha}
\bibliography{bibliography}

\newpage

\appendix

\section{Cauchy-Binet Formula}\label{app: missing proofs}

For completeness we give the proof of the Cauchy-Binet formula here.

\begin{lemma}[Cauchy-Binet Formula]\label{lem: cauchy-binet}
	Let $m\ge n\ge 1$. Let $A\in\F^{n\times m}$, $B\in\F^{m\times n}$.  For $S\subseteq\llb
	m\rrb$, let $A_S$ be the $n\times |S|$ matrix formed from $A$ by taking the columns with
	indices in $S$. Let $B_S$ be defined analogously, but with rows.  Then
	\[\det(AB)=\sum_{S\in\binom{\llb m\rrb}{n}}\det(A_S)\det(B_S)\]
\end{lemma}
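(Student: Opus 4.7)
The plan is to prove Cauchy-Binet by direct computation via the Leibniz formula, grouping terms according to which $n$-element subset of columns of $A$ (equivalently rows of $B$) is being selected. First I would expand
\[\det(AB)=\sum_{\sigma\in S_n}\sgn(\sigma)\prod_{i=1}^n (AB)_{i,\sigma(i)},\]
substitute $(AB)_{i,j}=\sum_k A_{i,k}B_{k,j}$, and interchange the order of summation to rewrite this as
\[\sum_{(k_1,\ldots,k_n)\in\llb m\rrb^n}\left(\prod_{i=1}^n A_{i,k_i}\right)\sum_{\sigma\in S_n}\sgn(\sigma)\prod_{i=1}^n B_{k_i,\sigma(i)}.\]

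The key observation is that the inner sum over $\sigma$ is exactly the determinant of the $n\times n$ matrix whose $i$-th row is row $k_i$ of $B$. Hence whenever two of the $k_i$'s coincide, this matrix has a repeated row and the inner sum vanishes; only tuples $(k_1,\ldots,k_n)$ with pairwise distinct entries contribute. I would then partition these surviving tuples by the underlying set $S=\{k_1,\ldots,k_n\}\in\binom{\llb m\rrb}{n}$, writing each tuple as $k_i=s_{\tau(i)}$, where $s_1<\cdots<s_n$ is the sorted listing of $S$ and $\tau\in S_n$.

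For fixed $S$, the inner determinant becomes $\sgn(\tau)\det(B_S)$ by swapping rows back into sorted order, while the product $\prod_i A_{i,k_i}=\prod_i A_{i,s_{\tau(i)}}$ is one term in the Leibniz expansion of $\det(A_S)$ corresponding to the permutation $\tau$ (up to a sign I can track explicitly). Summing over $\tau\in S_n$ with the $\sgn(\tau)$ factor picked up from sorting $B$'s rows thus produces precisely $\det(A_S)\det(B_S)$, and then summing over $S$ yields the claim.

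The main obstacle is purely notational bookkeeping: carefully tracking the interaction between $\sgn(\sigma)$ and $\sgn(\tau)$ when the row indices of $B$ are permuted back into sorted order, and verifying that the resulting Leibniz-style sum on the $A$ side produces $\det(A_S)$ with the correct global sign. There is no conceptual difficulty here; an alternative route would be to observe that both sides are multilinear alternating in the rows of $A$ and columns of $B$ and to reduce to the case where these are standard basis vectors, but the direct Leibniz rearrangement seems shortest for this self-contained appendix lemma.
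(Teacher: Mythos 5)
Your proof is correct, but it takes a genuinely different route from the paper. You use the classical Leibniz expansion: write $\det(AB)$ as a sum over permutations, substitute $(AB)_{i,j}=\sum_k A_{i,k}B_{k,j}$, interchange summations, kill the tuples with repeated indices via the alternating property, and regroup the surviving tuples by their underlying set $S$, with the signs $\sgn(\sigma)$ and $\sgn(\tau)$ combining to give $\det(A_S)\det(B_S)$. This bookkeeping does close correctly: the inner sum over $\sigma$ is $\sgn(\tau)\det(B_S)$, and the outer sum over $\tau$ of $\sgn(\tau)\prod_i A_{i,s_{\tau(i)}}$ is exactly $\det(A_S)$, so no residual sign issues arise. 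The paper instead proves the identity by a polynomial argument: it inserts a diagonal matrix $C$ of variables $x_1,\ldots,x_m$ between $A$ and $B$, observes that $f(x_1,\ldots,x_m)=\det(ACB)$ is homogeneous of degree $n$, and extracts the coefficient of each multilinear monomial $\prod_{i\in S}x_i$ by setting the other variables to zero, which yields $\det(A_S)\det(B_S)$; evaluating at $(1,\ldots,1)$ then gives the formula. The paper's route avoids all permutation-sign bookkeeping and fits the polynomial-identity flavor of the rest of the paper, at the cost of working in a polynomial ring and invoking homogeneity of the determinant; your route is more elementary and self-contained but requires the careful sign tracking you flag. Either proof is acceptable here.
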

\begin{proof}
	Let $C$ be an $m\times m$ diagonal matrix with the variables $x_1,\ldots,x_m$ on the
	diagonal. Define the polynomial $f(x_1,\ldots,x_m)\eqdef\det(ACB)$, so that
	$f(1,\ldots,1)=\det(AB)$. Every entry of $ACB$ is a homogeneous linear function in
	$x_1,\ldots,x_m$, which implies (as the determinant is homogeneous of degree $n$) that $f$
	is homogeneous of degree $n$, or zero. Let $S\in\binom{\llb m\rrb}{n}$ and consider all
	monomials only containing variables in $\{x_i\mid i\in S\}$. Note that also consider
	monomials with individual degrees above 1.  Each monomial of degree $n$ (and thus each
	monomial with non-zero coefficient in $f$) must be associated with some such $S$.

	Define $\rho_S$ to be the vector of variables when the substitution $x_i\mapsto 0$ is
	performed for $i\notin S$.  It follows then that $f(\rho_S)=\det(A_S C_S B_S) =
	\det(A_S)\det(B_S)\cdot \prod_{i\in S}x_i$, where the last equality follows as $A_S,B_S$ and
	$C_S$ are all $n\times n$ matrices. By the above reasoning, this implies that the only
	monomials with non-zero coefficients in $f$ are monomials of the form $\prod_{i\in S}x_i$
	and such monomials have coefficient $\det(A_S)\det(B_S)$.  Thus $f= \sum_{S\in\binom{\llb
	m\rrb}{n}}\det(A_S)\det(B_S)\prod_{i\in S}x_i$, and so
	$\det(AB)=f(1,\ldots,1)=\sum_{S\in\binom{\llb m\rrb}{n}}\det(A_S)\det(B_S)$, yielding the
	claim.
\end{proof}

\end{document}